\definecolor{myurlcolor}{rgb}{0,0,0.7}
\definecolor{myrefcolor}{rgb}{0.1,0,0.9}
\newcommand{\SMLong}{Appendix}
\newcommand{\SM}{Appendix}
\newtheorem{theorem}{Theorem}
\newtheorem{lemma}{Lemma}
\renewcommand{\eqref}[1]{Eq.~(\ref{#1})} %
\def\app#1#2{%
  \mathrel{%
    \setbox0=\hbox{$#1\sim$}%
    \setbox2=\hbox{%
      \rlap{\hbox{$#1\propto$}}%
      \lower1.1\ht0\box0%
    }%
    \raise0.25\ht2\box2%
  }%
}
\newtheorem{fact}{\protect\factname}
\newenvironment{proof}[1][\protect\proofname]{\par
	\normalfont\topsep6\p@\@plus6\p@\relax
	\trivlist
	\itemindent\parindent
	\item[\hskip\labelsep\scshape #1]\ignorespaces
}{%
	\endtrivlist\@endpefalse
}
\providecommand{\proofname}{Proof}
\newtheorem{proposition}{Proposition}
\newcommand{\bra}[1]{\langle #1|}
\newcommand{\ket}[1]{|#1 \rangle}
\newcommand{\braket}[2]{\langle #1 \vert #2 \rangle}
\newcommand{\tr}{\mathrm{tr}}
\providecommand{\factname}{Fact}
\providecommand{\theoremname}{Theorem}
\providecommand{\claimname}{Claim}
\providecommand{\lemmaname}{Lemma}
\providecommand{\definitionname}{Definition}
\providecommand{\corollaryname}{Corollary}
\providecommand{\conjecturename}{Conjecture}
\newcommand\norm[1]{\left\lVert#1\right\rVert}
\definecolor{KB}{rgb}{0.4,0.3,0.9}
\definecolor{THc}{rgb}{0.9,0.3,0.2}
\definecolor{daxcolor}{rgb}{1,0,0.1}
\definecolor{nbcolor}{rgb}{0.2, 0.5, 0.5}
\newcommand{\revA}[1]{{#1}}
\newtheorem{definition}{\protect\definitionname}
\newtheorem{corollary}{\protect\corollaryname}
\def\d{\mathrm{d}}
\newcommand{\subfigimg}[3][,]{%
	\setbox1=\hbox{\includegraphics[#1]{#3}}%
	\leavevmode\rlap{\usebox1}%
	\rlap{\hspace*{2pt}\raisebox{\dimexpr\ht1-0.5\baselineskip}{{\bfseries \large\textsf{#2}}}}%
	\phantom{\usebox1}%
}
\newcommand{\sectionMain}[1]{
\let\oldaddcontentsline\addcontentsline%
\renewcommand{\addcontentsline}[3]{}%
\section{#1}
\let\addcontentsline\oldaddcontentsline
}
\newcommand{\Eset}[1]{\underset{#1}{\mathbb{E}}}
\newcommand{\prlsection}[1]{\section{#1}}
\newcommand{\ihpc}{Institute of High Performance Computing (IHPC), Agency for Science, Technology and Research (A*STAR), 1 Fusionopolis Way, $\#$16-16 Connexis, Singapore 138632, Republic of Singapore}
\newcommand{\sutd}{Science, Mathematics and Technology Cluster, Singapore University of Technology and Design, 8 Somapah Road, Singapore 487372, Singapore}
\newcommand{\qinc}{
A*STAR Quantum Innovation Centre (Q.InC), Institute of High Performance Computing (IHPC), Agency for Science, Technology and Research (A*STAR), 1 Fusionopolis Way, \#16-16 Connexis, Singapore 138632, Singapore}
\newcommand{\CQuERE}{Centre for Quantum Engineering, Research and Education, TCG CREST, Sector V, Salt Lake, Kolkata 700091, India \looseness=-1}
\newcommand{\IISER}{Department of Physical Sciences, Indian Institute of Science Education and Research (IISER) Mohali, Sector 81, SAS Nagar, Mohali, Punjab, 140306, India\looseness=-1}
\begin{document}

\title{Pseudorandom density matrices}

\author{Nikhil Bansal}
\email{nikhilbansaliiser@gmail.com}
\affiliation{\ihpc}
\affiliation{\IISER}

\author{Wai-Keong Mok}
\email{darielmok@caltech.edu}
\affiliation{Institute for Quantum Information and Matter, California Institute of Technology, Pasadena, CA 91125, USA}

\author{Kishor Bharti}
\email{kishor.bharti1@gmail.com}
\affiliation{\ihpc}
\affiliation{\qinc}
\affiliation{\CQuERE}
\affiliation{\sutd}

\author{Dax Enshan Koh}
\email{dax\_koh@ihpc.a-star.edu.sg}
\affiliation{\ihpc}
\affiliation{\qinc}
\affiliation{\sutd}

\author{Tobias Haug}
\email{tobias.haug@u.nus.edu}
\affiliation{Quantum Research Center, Technology Innovation Institute, Abu Dhabi, UAE}

\begin{abstract}
Pseudorandom states (PRSs) are state ensembles that cannot be efficiently distinguished from Haar random states. However, the definition of PRSs has been limited to pure states and lacks robustness against noise. 
Here, we introduce pseudorandom density matrices (PRDMs), ensembles of $n$-qubit states that are computationally indistinguishable from the generalized Hilbert-Schmidt ensemble (GHSE), which is constructed from $(n+m)$-qubit Haar random states with $m$ qubits traced out. For $m=0$, PRDMs are equivalent to PRSs, whereas for $m=\omega(\log n)$, PRDMs are computationally indistinguishable from the maximally mixed state. PRDMs with $m=\omega(\log n)$ are robust to unital noise channels and separated in terms of security from PRS.
PRDMs disguise valuable quantum resources, possessing near-maximal entanglement, magic and coherence, while being computationally indistinguishable from resource-free states. PRDMs exhibit a pseudoresource gap of $\Theta(n)$ vs $0$, surpassing previously found gaps. 
We also render EFI pairs, a fundamental cryptographic primitive, robust to strong mixed unitary noise.
Our work has major implications on quantum resource theory: We show that entanglement, magic and coherence cannot be efficiently tested, and that black-box resource distillation requires a superpolynomial number of copies. We also establish lower bounds on the purity needed for efficient testing and black-box distillation.
Finally, we introduce memoryless PRSs, a noise-robust notion of PRS which are indistinguishable to Haar random states for efficient algorithms without quantum memory, as well as noise-robust quantum money. Our work provides a comprehensive framework of pseudorandomness for mixed states, which yields powerful quantum cryptographic primitives and fundamental bounds on quantum resource theories.
\end{abstract}

\maketitle

 \let\oldaddcontentsline\addcontentsline%
\renewcommand{\addcontentsline}[3]{}%

\begin{figure*}[t]
	\centering	
	\subfigimg[width=0.75\textwidth]{}{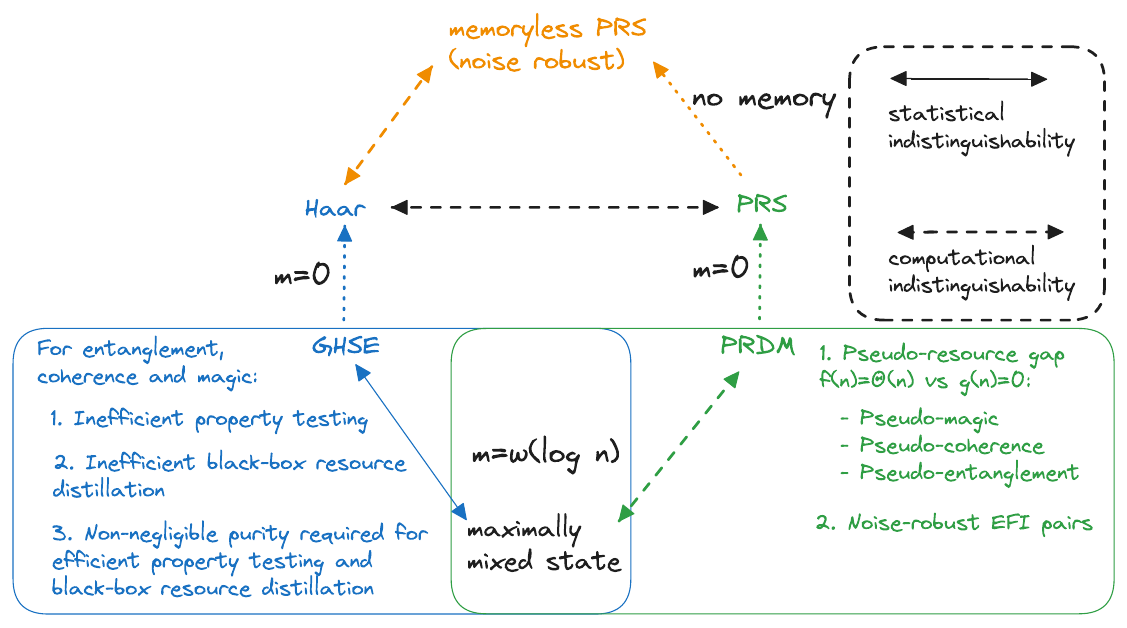}
	\caption{Overview of relationship between $n$-qubit GHSE, PRDM, Haar random states, PRS, and memoryless PRS depending on mixedness parameter $m$. Solid arrows indicate statistical indistinguishability, dashed lines are computational indistinguishability, and dotted line denote specializations of definitions. We show main results derived from statistical indistinguishability of GHSE and maximally mixed state for $m=\omega(\log n)$ (blue), as well as results due to computational indistinguishability between PRDM and maximally mixed state (green). 
 We also introduce memoryless PRS, which are noise robust and computationally indistinguishable from Haar random states for algorithms without access to quantum memory. 
	}
	\label{fig:PRDM}
\end{figure*}

\prlsection{Introduction} 
Research at the pairwise intersections of cryptography, quantum theory, information theory, and complexity theory has revealed many fascinating insights. These fields have evolved significantly over time, giving rise to new areas of study such as quantum information theory (arising from quantum theory and information theory), quantum complexity theory (arising from quantum theory and complexity theory), and modern cryptography (arising from cryptography and complexity theory).

In contrast, considering the comprehensive intersection of all four fields---cryptography, complexity theory, information theory, and quantum theory---has led to new developments. A key example is quantum pseudorandom states (PRSs), which are efficiently preparable state ensembles that are indistinguishable from Haar random states for any efficient quantum algorithm~\cite{ji2018pseudorandom}. 
Further developments includes the generation of certifiable true randomness~\cite{brakerski2021certifying}, pseudorandom unitaries~\cite{ji2018pseudorandom}, homogeneous space pseudorandomness~\cite{arvind2023quantumtugwarrandomness}, pseudorandom isometries~\cite{ananth2023pseudorandomisometry}, pseudorandom state scramblers~\cite{lu2023quantum}, computational entanglement theory~\cite{arnon2023computational} and new cryptographic principles under minimal assumptions~\cite{kretschmer2021quantum}. 
Similar to the emergence of modern cryptography from classical cryptography and complexity theory~\cite{katz2007introduction}, there is hope for establishing a new sub-field called ``modern quantum information theory'' (see \SM{}~\ref{sec:motivation}), where the computational aspects of objects from quantum information theory are rigorously explored. There is the potential for breakthrough similar to the one cryptography experienced in 1980s. This calls for the development of new primitives, which extend our understanding beyond current notions of pseudorandomness.

An example where pseudorandomness has led to novel insights into quantum information presents in the field of resource theories. Quantum information processing tasks rest on the availability of precious quantum resources such as entanglement, magic and coherence~\cite{chitambar2019quantum}.
It has been shown that PRSs allow one to hide quantum resources in plain sight and mask low resource states as highly complex ones~\cite{bouland2022quantum}. This counter-intuitive phenomena, dubbed \textit{pseudoresources}~\cite{haug2023pseudorandom}, has been established in different resource theories such as pseudoentanglement~\cite{bouland2022quantum}, pseudomagic~\cite{gu2023little}, and pseudocoherence~\cite{haug2023pseudorandom}. 
As a result, the existence of pseudoresources has imposed fundamental limits on testing whether a state contains quantum resources~\cite{montanaro2013survey,wright2016learn,haug2023pseudorandom}, and the distillation of resource-rich states from noisy states~\cite{bennett1996concentrating,regula2021fundamental,gu2023little}.

\revA{However, the concept of PRS is only well defined for pure states: When the PRS is subject to even the lowest amounts of noise during state preparation, it can be efficiently distinguished from Haar random states via the SWAP test~\cite{haug2023pseudorandom}.}
Evidently, the most general quantum state is not a pure state; instead, it is a convex combination of pure states, known as a mixed state or a density matrix. Mixedness arises naturally whenever one does not keep track of information about the state, for example when the state interacts with an uncontrolled environment, or when one randomizes the state preparation protocol. Finding a definition of pseudorandomness that is based on density matrices could generate notions of pseudorandomness that are robust to noise, establish a general theory of pseudoresources, and find improved bounds on property testing.

In this work, we provide a step towards shaping the aforementioned field of modern quantum information theory.
We introduce pseudorandom density matrices (PRDMs) as mixed-state generalization of PRS. PRDMs are efficiently preparable $n$-qubit state ensembles that are computationally indistinguishable from the mixed state ensemble obtained by tracing out $m$ qubits of $(n+m)$-qubit Haar random states, which we refer to as the generalized Hilbert-Schmidt ensemble (GHSE)~\cite{braunstein1996geometry,hall1998random,Zyczkowski_2001,hayden2006aspects,Zyczkowski_2011,sarkar2019bures}.
The GHSE corresponds to Haar random states for $m=0$, while for $m=\omega(\log n)$, we show that it is statistically indistinguishable from the maximally mixed state. Yet, the GHSE with $m$ scaling polylogarithmically with $n$ has near-maximal entanglement, magic and coherence. 
Similarly, PRDMs with mixedness parameter $m=0$ correspond to PRS and become computationally indistinguishable from the maximally mixed state for $m=\omega(\log n)$. In this regime, PRDMs become robust to unital noise channels, \revA{and are fundamentally separated in security against inefficient distinguishers.}
Surprisingly, while such PRDMs appear trivial to any efficient observer, they can have asymptotically maximal entanglement, coherence and magic.
We construct pseudoentangled, pseudocoherent and pseudomagic state ensembles, which are two ensembles that are computationally indistinguishable, yet possess a maximal gap in entanglement, magic and coherence of $\Theta(n)$ vs $0$, an improvement over previous constructions which were believed to be optimal~\cite{arnon2023computational,gu2023little,haug2023stabilizer}. The reason we are able to improve the pseudoresource gap bounds is that we consider general mixed states, while previous bounds only considered pure states. 
Furthermore, we establish new constructions for EFI pairs, which are statistically far yet computationally indistinguishable ensembles~\cite{brakerski2022computational} and serve as important cryptographic primitives~\cite{yan2022general,bartusek2021one,ananth2022cryptography,ananth2021concurrent}. We show that PRDMs and the maximally mixed state can form noise-robust EFI pairs, which remain EFI pairs even when subjected to mixed unitary noise channels, including local depolarizing noise up to a relatively high noise probability of $p\lesssim1/4$.

\revA{We show that
testing entanglement, magic or coherence is inefficient: Given an unknown state, testing whether it has a lot of resources or none requires a superpolynomial number of copies. This implies that entanglement, magic and coherence are in general not efficiently physically observable. In contrast, testing pure states requires only $O(1)$ copies. 
Additionally, we show that black-box distillation of entanglement, magic and coherence is inefficient. 
We also place lower bounds on the purity needed for efficient testing and black-box distillation, where at least a non-negligible amount of purity is necessary.}

\revA{We also show that even if PRS is not noise-robust, its applications can be made robust to noise. In particular, we provide a noise-robust private-key quantum money based on PRS. Even under noise, quantum banknotes cannot be counterfeited, yet are still accepted by the bank. As technical contribution, we propose a modified completeness amplification scheme based on Ref.~\cite{aaronson_quantum_money_2012_arxiv}, which now is secure against embezzling attacks that in the original scheme can be used to counterfeit banknotes.}

Finally, we define a weaker notion of PRS called memoryless PRSs, which are indistinguishable from Haar random states only for efficient algorithms without quantum memory. We show that memoryless PRS are robust to unital noise, contrary to general PRSs.

Our work generalizes the notion of PRS to mixed states, promising applications in cryptography and quantum resource theories. The main results of this paper and relationships between the concepts introduced are shown in Fig.~\ref{fig:PRDM}. Sample complexity of testing is summarized in Fig.~\ref{fig:testing}, and pseudoresource gaps of PRDMs and PRS in Table~\ref{tab:pseudoresource}.

\begin{figure}[t]
	\centering	
	\subfigimg[width=0.4\textwidth]{}{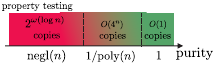}
	\caption{Copies needed to test whether a state $\rho$ contains $Q(\rho)=0$ or $Q(\rho)=\Theta(n)$ magic, coherence or entanglement, as function of purity $\text{tr}(\rho^2)$. We show testing is inefficient for negligible purity, while efficient protocols are known for pure states, and only inefficient tomography protocols have been known so for $1/\text{poly}(n)$ purity. 
	}
	\label{fig:testing}
\end{figure}

\begin{table}[htbp]\centering
\begin{tabular}{| c|c|c| }
    \hline    
  Resource $Q$  &  PRS $f(n)$ vs $g(n)$ & PRDM $f(n)$ vs $g(n)$\\
    \hline\hline
    Magic  & $\Theta(n)$ vs $\omega(\log n)$~\cite{gu2023little} &   $\Theta(n)$ vs $0$ [Thm.~\ref{thm:maximalgap}]\\ \hline
    Coherence  & $\Theta(n)$ vs $\omega(\log n)$~\cite{haug2023pseudorandom}&   $\Theta(n)$ vs $0$ [Thm.~\ref{thm:maximalgap}]\\ \hline
        Entanglement   &   $\Theta(n)$ vs $\omega(\log n)$~\cite{bouland2022quantum}&  $\Theta(n)$ vs $0$ [Thm.~\ref{thm:maximalgap}]  \\
        \hline
\end{tabular}
\caption{Pseudoresource gap $f(n)$ vs $g(n)$ for different resources $Q$ using pure PRS and mixed PRDM.  
}
\label{tab:pseudoresource}
\end{table}

\prlsection{Quantum resources} 
Performing specific non-trivial tasks in quantum information require quantum resources as a fuel to run the process.
To characterize the resource content of a given state $\rho$, quantum resource monotones $Q(\rho)$ have been defined~\cite{chitambar2019quantum}. Resource monotones are characterized by free operations $F_Q$ which cannot increase the resource, i.e.\ $Q(F_Q(\rho))\leq Q(\rho)$, and a set of free states $\sigma\in S_Q$ with $Q(\sigma)=0$. Intuitively, free states and operations are `easy' in the context of the resource and are readily available. To perform non-trivial tasks, one needs `expensive' resource states $\rho\notin S_Q$ which have $Q(\rho)>0$, or non-free operations which can increase the resource. We consider sub-additive resource monotones which are bounded as $0\leq Q \leq \Theta(n)$.

Depending on the task, different resource theories are relevant.
In fault-tolerant quantum computing, stabilizer states and Clifford operations are easy to perform~\cite{bravyi2005universal}. They constitute the free states and operations, while the key resource needed to perform universal quantum computation is called \textit{magic}. A commonly used magic monotone for qubits is the log-free robustness of magic~\cite{howard2017application,liu2022many}
\begin{equation}
    \text{LR}(\rho)=\log\Big(\min \vert c_\phi\vert \text{ s.t }  \rho=\sum_{\phi \in \text{STAB}}c_\phi \phi\Big)
\end{equation}
where $\text{STAB}$ is the set of all pure $n$-qubit stabilizer states. 
Another key resource in quantum information is coherence, which describes the degree in which the state is a superposition of computational basis states~\cite{baumgratz2014quantifying,streltsov2017colloquium}. The free operations are diagonal operations, while the free states are density matrices with only diagonal entries in the computational basis.
A commonly used coherence monotone is the relative entropy of coherence~\cite{baumgratz2014quantifying,streltsov2017colloquium}
\begin{equation}
    C(\rho) = S(\Delta[\rho]) - S(\rho)
\end{equation}
where $S(\rho)=-\text{tr}(\rho \log \rho)$ denotes the von Neumann entropy and  $\Delta[\rho]=\sum_i \ket{i}\bra{i}\rho\ket{i}\bra{i}$ is the fully dephasing channel applied on $\rho$, where $\{\ket{i}\}_i$ are the computational basis states.
Finally, for quantum communication tasks one requires entanglement as resource, while local operations and classical communication (LOCC) are free operations, and separable states are the free states. 
\revA{To characterize the entanglement between bipartition $A$ and $B$ we use the distillable entanglement~\cite{plenio2005introduction}
\begin{equation}
    E_\text{D}(\rho)=\text{sup}\left\{r : \lim_{s\rightarrow\infty} \left(\inf_{\Lambda}\Vert \Lambda(\rho^{\otimes s})-\Phi_2^{\otimes rs} \Vert_1\right)=0\right\}\,,
\end{equation}
which characterizes the number of Bell pairs $\Phi_2=\ket{\psi^+}\bra{\psi^+}$, $\ket{\psi^+}=\frac{1}{\sqrt{2}}(\ket{00}+\ket{11})$ that can be asymptotically distilled from $\rho$ via LOCC operations $\Lambda$ with respect to bipartition $A$, $B$. Our results also apply to entanglement cost $E_\text{C}$ which is an upper bound to the distillable entanglement $E_\text{C}\geq E_\text{D}$. }

\prlsection{Generalized Hilbert-Schmidt ensemble (GHSE)}
For pure quantum states, one can define a unique ensemble of random states, namely Haar random states: One draws pure states $\ket{\psi}\in \mathcal{S}(2^n)$ from the space of $n$-qubit quantum states $\mathcal{S}(2^n)$ according to the Haar measure $\mu_n$. 
However, general quantum states are not pure, but probabilistic mixtures of quantum states described by density matrices $\rho=\sum_j p_j\ket{\psi_j}\bra{\psi_j}$ with probability $p_j$ for state $\ket{\psi_j}$. 

We now study a random ensemble of quantum states that  interpolates between Haar random states and maximally mixed state $I_n/2^n$ via a mixedness parameter $m$, where $I_n$ is the identity matrix over $n$-qubits. In particular, we consider the random ensemble of mixed states induced by the partial trace of $m$ qubits over $(n+m)$-qubit Haar random states, which we refer to as the GHSE~\cite{braunstein1996geometry,hall1998random,Zyczkowski_2001,hayden2006aspects,Zyczkowski_2011,sarkar2019bures}:
\begin{definition}[Generalized Hilbert-Schmidt ensemble (GHSE) \SM{}~\ref{sec:HS}]\label{def:tracenensemble}
The $(n,m)$ GHSE is an ensemble of $n$-qubit states 
    \begin{equation}\label{eq:eta}
        \eta_{n,m}=\{\operatorname{tr}_{m}(\ket{\psi}\bra{\psi})\}_{\psi\in\mu_{n+m}}
    \end{equation}
    generated by tracing out $m$ qubits from $(n+m)$-qubit states drawn from the Haar measure $\mu_{n+m}$.
\end{definition}
The case $m=0$ corresponds to Haar random states, while $m=n$ corresponds to the Hilbert-Schmidt ensemble~\cite{Zyczkowski_2001}. 

Haar random states are known to contain near-maximal coherence~\cite{singh2016average}, magic~\cite{liu2019quantum} and entanglement~\cite{hayden2006aspects}.
Increasing mixedness by tracing out $m$ qubits in general reduces the quantumness of states. For example, the limit $m \to \infty$ yields an equal mixture of all quantum states, the maximally mixed state $I_n/2^n$, which has zero magic, coherence and entanglement.

We find that states drawn from the GHSE possess two key behaviors: First, they are statistically indistinguishable from the maximally mixed state when tracing out $m=\omega(\log n)$ qubits. Second, for $m=O(\text{polylog}(n))$, GHSE states remain highly resourceful in terms of magic, coherence and entanglement. Here, $\operatorname{polylog}(n)$ denotes a polynomial of the logarithm, e.g.  $(\log n)^c$ with $c \geq1$.
Thus, we find surprisingly that GHSE with $m=\text{polylog}(n)$ hold two properties which on first glance appear counter-intuitive: They are statistically indistinguishable from the maximally mixed state (which has no quantum resources), yet possess a large amount of quantum resources. 
This means that GHSE for $m=\text{polylog}(n)$ are effectively hiding quantum resources in plain sight, pretending to be trivial states while having near-maximal quantum resources.

We summarize the properties of GHSE in the following, while the proofs are deferred to the \SMLong{}:
\begin{theorem}[Properties of GHSE]\label{thm:propGHSE}
    The $(n,m)$ GHSE $\eta_{n,m}$ is statistically close to the maximally mixed state (\SM{}~\ref{sec:stat_indist}), with\begin{equation}\label{eq:indist_ghse}
        \operatorname{TD}\left(\Eset{\rho \in \eta_{n,m}}[\rho^{\otimes t}], \left(I_n/2^n\right)^{\otimes t}\right)=O\left(t^2/2^m\right) ,
    \end{equation}
    where $\operatorname{TD}$ denotes the trace distance. On average, states $\rho\in \eta_{n,m}$  have log-free robustness of magic (\SM{}~\ref{sec:magic})
    \begin{equation}
        \text{LR}(\rho)\geq n-m-2\log(n+m) -1\,,
    \end{equation}
     relative entropy of coherence (\SM{}~\ref{sec:coherence})
        \begin{equation}
        C(\rho)\geq n-m-1\,,
    \end{equation}
    and \revA{distillable entanglement between bipartition $n_A=\vert A\vert$ and $n_B=\vert B\vert$ with $n_A\leq n_B$ (\SM{}~\ref{sec:entanglement}, see also~\cite{hayden2006aspects})
     \begin{equation}
        E_\text{D}(\rho)\geq n_A-m-1\,.
     \end{equation}}
\end{theorem}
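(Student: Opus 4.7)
The plan is to prove the four claims in sequence, leveraging the explicit $t$-fold moment of Haar random pure states together with standard concentration and data-processing arguments for random mixed states.

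For the statistical indistinguishability bound Eq.~(\ref{eq:indist_ghse}), I would start from the identity
\begin{equation}
\Eset{\psi \sim \mu_{n+m}}[\ket{\psi}\bra{\psi}^{\otimes t}] = \frac{\sum_{\pi \in S_t} P_\pi}{d(d+1)\cdots(d+t-1)},
\end{equation}
with $d=2^{n+m}$, split each copy into an $n$-qubit kept part $A$ and an $m$-qubit traced part $B$, and use the factorization $P_\pi = P_\pi^A \otimes P_\pi^B$ together with $\operatorname{tr}_{B^{\otimes t}}(P_\pi^B) = 2^{m c(\pi)}$ (where $c(\pi)$ is the number of cycles of $\pi$) to obtain
\begin{equation}
\Eset{\rho \in \eta_{n,m}}[\rho^{\otimes t}] = \frac{\sum_{\pi \in S_t} 2^{m c(\pi)} P_\pi^A}{d(d+1)\cdots(d+t-1)}.
\end{equation}
The identity permutation $\pi=e$ is a scalar, and Taylor expanding $d(d+1)\cdots(d+t-1) = d^t[1 + \binom{t}{2}/d + O(t^4/d^2)]$ shows it reproduces $I_{nt}/2^{nt}$ up to an $O(t^2/2^{n+m})$ correction. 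Each non-identity term has trace norm at most $2^{m c(\pi)} \cdot 2^{nt}/d^t = 2^{-m(t-c(\pi))}$, so the dominant contribution comes from the $\binom{t}{2}$ transpositions (with $c(\pi)=t-1$) yielding $O(t^2/2^m)$, while cycle classes of higher deficit are suppressed by extra powers of $2^{-m}$. The triangle inequality then assembles the claimed bound.

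For the three resource bounds I would combine Haar concentration on the $(n+m)$-qubit purification with data-processing or entropic inequalities for the partial trace. Log-free robustness of magic follows by combining the known estimate $\text{LR}(\ket{\psi}) \geq (n+m) - 2\log(n+m) - 1$ for Haar random pure states (via stabilizer-fidelity concentration) with a partial-trace bound that incurs a loss of at most $m$ when lifting from the purification to $\rho$. For the relative entropy of coherence, Haar concentration gives $S(\Delta[\rho]) \geq n - o(1)$ on average, since the diagonal of $\rho$ concentrates around $1/2^n$, while $S(\rho) \leq m$ follows from $\operatorname{rank}(\rho) \leq 2^m$; together these yield $C(\rho) \geq n - m - 1$. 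For distillable entanglement with $n_A \leq n_B$, I would invoke the hashing inequality $E_\text{D}(\rho_{AB}) \geq S(\rho_A) - S(\rho_{AB})$ together with Hayden--Leung--Winter-type concentration~\cite{hayden2006aspects} giving $S(\rho_A) \geq n_A - O(1)$ and $S(\rho_{AB}) \leq m + O(1)$ in expectation.

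The main obstacle is obtaining the sharp $O(t^2/2^m)$ scaling in the indistinguishability step. A crude trace-norm estimate that treats all $t!$ non-identity permutations uniformly would give a useless bound; the correct analysis must sum contributions by cycle class, isolating the $\binom{t}{2}$ transpositions as the unique source of the leading $t^2/2^m$ term while showing the remaining classes are geometrically suppressed in powers of $2^{-m}$. Once this combinatorial accounting is in place, the identity-term Taylor remainder and the three resource-monotone lower bounds follow from now-standard Haar calculus and entropic inequalities.
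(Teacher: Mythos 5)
Your treatment of the trace-distance bound, the coherence bound, and the entanglement bound follows essentially the paper's route: the permutation-moment expansion with the transposition class dominating (the paper sums the non-identity permutations exactly via $\sum_{\pi\in S_t} d_B^{\operatorname{cycles}(\pi)}=(d_B+t-1)!/(d_B-1)!$ rather than organizing by cycle deficit, but the accounting is the same); the rank bound $S(\rho)\leq m$ plus near-uniformity of the diagonal (which the paper makes quantitative through $\mathbb{E}[\operatorname{tr}(\Delta[\rho]^2)]=(2^m+1)/(2^{n+m}+1)$ and $S\geq S_2$); and the hashing inequality with Page-type estimates for $S(\rho_A)$ and the rank bound for $S(\rho_{AB})$. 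These parts are fine.

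The genuine gap is in the magic bound. You invoke ``a partial-trace bound that incurs a loss of at most $m$'' to pass from $\mathrm{LR}(\ket{\psi})$ of the $(n+m)$-qubit purification to $\mathrm{LR}(\operatorname{tr}_m(\ket{\psi}\bra{\psi}))$, but no such general inequality is known or cited, and it is not obvious it is even true: monotonicity only gives the useless direction ($\mathrm{LR}$ cannot increase under partial trace, since tracing out qubits is a stabilizer operation), and nothing generic prevents tracing out $m$ qubits from destroying far more than $m$ units of log-robustness (e.g.\ reduced states of highly magical entangled states can be maximally mixed). Establishing a quantitative reverse bound is precisely the technical core of the paper's proof: it uses the dual linear program for the robustness with the explicit PSD witness $A=F_\text{STAB}(\ket{\psi})^{-1}\ket{\psi}\bra{\psi}$, notes that $\operatorname{tr}\bigl((\operatorname{tr}_m\ket{\psi}\bra{\psi})\otimes (I_m/2^m)\,\ket{\psi}\bra{\psi}\bigr)=2^{-m}\operatorname{tr}\bigl((\operatorname{tr}_m\ket{\psi}\bra{\psi})^2\bigr)\geq 2^{-2m}$, and thereby obtains $\mathrm{LR}(\operatorname{tr}_m\ket{\psi}\bra{\psi})\geq -\log F_\text{STAB}(\ket{\psi})-2m$, i.e.\ a loss of $2m$, not $m$; combined with the stabilizer-fidelity concentration for Haar states this yields exactly the stated $n-m-2\log(n+m)-1$. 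Your claimed loss of $m$ is both unsupported and stronger than what the natural witness argument delivers, so you need to either supply a proof of such a dimension bound for $\mathrm{LR}$ under partial trace or reproduce the dual-witness computation (the loss-$2m$ version suffices for the theorem because the purification has near-maximal min-relative entropy of magic $\approx n+m$).
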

We note that resource monotones are not unique. However, similar bounds can be found for most other practical definition of resource monotones. For example, the entanglement of formation and logarithmic negativity, which are upper bounds of $E_\text{D}$, have been shown to exhibit a similar scaling as the logarithmic negativity for the GHSE~\cite{hayden2006aspects,smith2006typical,montanaro2013survey,shapourian2021entanglement,bhosale2012entanglement}.

\prlsection{Pseudorandom density matrix (PRDM)} 
GHSE states  cannot be efficiently prepared in general, as Haar random states are known to be hard to prepare~\cite{nielsen2002quantum}. 
However, one can achieve efficient preparation by relaxing to a weaker notion of quantum randomness.
In particular, PRS have been recently proposed as a notion of quantum pseudorandomness that is efficiently preparable, but only computationally indistinguishable from Haar random states~\cite{ji2018pseudorandom}. Here, indistinguishability is defined from a computational perspective: There is no efficient quantum algorithm that can tell  PRS and Haar random states apart given a polynomial number of copies of the state. While computational pseudorandomness is weaker than statistical randomness, for practical applications  indistinguishability with respect to efficient quantum algorithms is already sufficient.

PRS has only been defined for pure states. We now generalize this concept to general density matrices. We define PRDM as efficiently preparable states that are computational indistinguishable from GHSE.
\begin{definition}[Pseudo-random density matrix (PRDM)]\label{def:PRDM}
    Let $\kappa=\operatorname{poly}(n)$ be the security parameter with keyspace $\mathcal{K}=\{0,1\}^{\kappa}$. A keyed family of $n$-qubit density matrices $\{\rho_{k,m}\}_{k \in \mathcal{K}}$ is defined as the pseudorandom density matrix (PRDM) ensemble with mixedness parameter $m$ if:
    \begin{enumerate}
        \item {Efficiently preparable}: There exists an efficient quantum algorithm $\mathcal{G}$ such that $\mathcal{G}(1^{\kappa}, k,m) = \rho_{k,m}$.
        \item {Computational Indistinguishability}: $t=\mathrm{poly}(n)$ copies of $\rho_{k,m}$ are computationally indistinguishable, i.e. for any quantum polynomial time adversary $\mathcal{A}$, from the GHSE $\eta_{n,m}$ of Def.~\ref{def:tracenensemble}
        \begin{equation}
            \Big{|}\Pr_{k \leftarrow \mathcal{K}}[\mathcal{A}(\rho_{k,m}^{\otimes t}) = 1] - \Pr_{\rho \leftarrow \eta_{n,m}}[\mathcal{A}(\rho^{\otimes t}) = 1]\Big{|} = \operatorname{negl}(n).
        \end{equation}
    \end{enumerate}
\end{definition}
Here, $\operatorname{negl}(\cdot)$ are functions that decay faster than any inverse polynomial (see \SM{}~\ref{sec:definitions}).
Depending on $m$, PRDMs are indistinguishable from different notions of randomness:
For $m=0$, PRDMs are computationally indistinguishable from Haar random states $\{\ket{\psi}\bra{\psi}\}_{\psi\in \mu_n}$ and thus equivalent to PRS. For $m=\omega(\log n)$, we find that PRDMs become indistinguishable from the maximally mixed state for any efficient observer:
\begin{theorem}[Computational indistinguishability of PRDMs from maximally mixed state]\label{thm:PRDM_indisting}
PRDMs $\rho_k\equiv \rho_{k,\omega(\log n)}$ with $t=\mathrm{poly}(n)$ copies and $m=\omega(\log n)$ are indistinguishable from the maximally mixed state $\sigma_n=I_n/2^{-n}$ for any efficient quantum algorithm $\mathcal{A}$,
  \begin{equation}
        \Big{|}\Pr_{k \leftarrow \mathcal{K}}[\mathcal{A}(\rho_k^{\otimes t}) = 1] - \Pr [\mathcal{A}((\sigma_n)^{\otimes t}) = 1]\Big{|} = \operatorname{negl}(n).
    \end{equation}
\end{theorem}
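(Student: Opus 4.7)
The plan is to prove this by a standard hybrid/triangle-inequality argument, chaining together computational indistinguishability of the PRDM from the GHSE (which holds by definition) with the statistical closeness of the GHSE to the maximally mixed state (which holds by Theorem~\ref{thm:propGHSE}, specifically \eqref{eq:indist_ghse}).

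Concretely, I would start by fixing an arbitrary polynomial-time quantum adversary $\mathcal{A}$ and an arbitrary polynomial $t=\mathrm{poly}(n)$. Introduce the intermediate distribution $\eta_{n,m}^{\otimes t}$ and apply the triangle inequality:
\begin{align*}
&\Big|\Pr_{k}[\mathcal{A}(\rho_k^{\otimes t})=1] - \Pr[\mathcal{A}(\sigma_n^{\otimes t})=1]\Big| \\
&\leq \Big|\Pr_{k}[\mathcal{A}(\rho_k^{\otimes t})=1] - \Pr_{\rho\leftarrow\eta_{n,m}}[\mathcal{A}(\rho^{\otimes t})=1]\Big| \\
&\quad + \Big|\Pr_{\rho\leftarrow\eta_{n,m}}[\mathcal{A}(\rho^{\otimes t})=1] - \Pr[\mathcal{A}(\sigma_n^{\otimes t})=1]\Big|.
\end{align*}
The first term is $\operatorname{negl}(n)$ directly by the computational indistinguishability clause of Def.~\ref{def:PRDM}. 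The second term I would bound by noting that it equals the distinguishing advantage of $\mathcal{A}$ between the averaged states $\E_{\rho\in\eta_{n,m}}[\rho^{\otimes t}]$ and $(I_n/2^n)^{\otimes t}$, which is at most their trace distance. By \eqref{eq:indist_ghse} this is $O(t^2/2^m)$.

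The only quantitative check is to verify that $O(t^2/2^m)$ is negligible under the hypotheses $t=\mathrm{poly}(n)$ and $m=\omega(\log n)$. Since $m=\omega(\log n)$ means $m/\log n\to\infty$, we have $2^m=n^{\omega(1)}$, growing faster than any polynomial, so $t^2/2^m=\mathrm{poly}(n)/n^{\omega(1)}=\operatorname{negl}(n)$. Summing two negligible functions remains negligible, which yields the claim.

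There is no substantive obstacle here: the theorem is essentially a corollary of Def.~\ref{def:PRDM} combined with Theorem~\ref{thm:propGHSE}, and the only subtlety to be careful about is the asymptotic manipulation of $\omega(\log n)$ in the exponent. I would emphasize in the write-up that statistical indistinguishability of the mixture $\E_{\rho\in\eta_{n,m}}[\rho^{\otimes t}]$ from $(I_n/2^n)^{\otimes t}$, rather than sample-wise closeness, is exactly the correct notion to plug into a single adversarial query receiving $t$ copies.
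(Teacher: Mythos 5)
Your proposal is correct and matches the paper's own argument: the paper likewise inserts the GHSE $\eta_{n,m}$ as a hybrid, bounds the first term by the computational-indistinguishability clause of Def.~\ref{def:PRDM}, bounds the second by the $O(t^2/2^m)$ trace-distance estimate of \eqref{eq:indist_ghse}, and concludes via the triangle inequality. Your additional check that $t^2/2^m=\operatorname{negl}(n)$ for $t=\mathrm{poly}(n)$ and $m=\omega(\log n)$ is exactly the asymptotic step the paper leaves implicit.
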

To see this, note that PRDM and GHSE are computationally indistinguishable by definition. Further, the maximally mixed state and GHSE are statistically indistinguishable for $m=\omega(\log n)$ due to~\eqref{eq:indist_ghse}. Then, Thm.~\ref{thm:PRDM_indisting} follows directly from the triangle inequality.

\revA{Next, we discuss the noise robustness of PRS and PRDM.
PRS are defined as being computationally indistinguishable from Haar. If the same noise channel is applied on both PRS and Haar random state (e.g. via the distinguishing protocol), due to the data-processing inequality both states remain of course indistinguishable. 
However, a more practical scenario is that only the PRS is subject to noise during its preparation. In this case, PRS are extremely fragile against noise~\cite{haug2023pseudorandom}. }
In particular, after applying depolarizing noise on only a single qubit, PRSs become distinguishable from Haar random states and thus are not PRS anymore. One can easily check that this susceptibility to noise carries over to PRDMs with small $m$.
In particular, for $m=O(\log n)$ PRDMs are not robust to noise as the SWAP test~\cite{barenco1997stabilization}, which requires only two copies, can efficiently distinguish the state before and after application of the noise channel due to non-negligible purity~\cite{haug2023pseudorandom}. 

As noise is ubiquitous in quantum systems, we would like to have pseudorandomness that can survive even with noise:
\begin{definition}[Noise-robust PRDM]\label{def:noise-robust}
    A PRDM is noise robust to channel $\Gamma(\cdot)$ if it remains a PRDM after application of the noise channel $\Gamma(\cdot)$ \revA{on the PRDM}, i.e. if $\{\rho_k\}_{k \in \mathcal{K}}$ is PRDM, then $\{\Gamma(\rho_k)\}_{k \in \mathcal{K}}$ is also PRDM.
\end{definition}
We now show that PRDMs with $m=\omega(\log n)$ are robust to arbitrary unital noise channels that can be efficiently implemented, which includes common noise models such as depolarizing noise:
\begin{theorem}
    PRDMs are robust to efficiently implementable unital noise channels, i.e. channels where the identity is the fixed point $\Gamma(I)=I$, if and only if $m=\omega(\log n)$. 
\end{theorem}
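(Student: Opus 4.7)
My plan is to prove the two directions of the biconditional separately, using Theorem~\ref{thm:PRDM_indisting} as the hinge for the forward direction and a purity-based distinguisher for the backward direction.

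For the forward direction---that $m=\omega(\log n)$ implies robustness---fix any efficiently implementable unital channel $\Gamma$. Efficient preparability of $\{\Gamma(\rho_{k,m})\}_k$ is immediate from composing the efficient preparation of $\rho_{k,m}$ with $\Gamma$. For computational indistinguishability from $\eta_{n,m}$, I would route the argument through the maximally mixed state $\sigma_n=I_n/2^n$. By linearity and the unital condition, $\Gamma(\sigma_n)=\sigma_n$. Suppose towards contradiction that an efficient adversary $\mathcal{A}$ distinguishes $\Gamma(\rho_{k,m})^{\otimes t}$ from $\sigma_n^{\otimes t}$ with non-negligible advantage. Then $\mathcal{A}'\defeq \mathcal{A}\circ\Gamma^{\otimes t}$ is still efficient and distinguishes $\rho_{k,m}^{\otimes t}$ from $\sigma_n^{\otimes t}$ with the same advantage, contradicting Theorem~\ref{thm:PRDM_indisting}. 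Combining this with the statistical closeness $\operatorname{TD}(\E[\rho^{\otimes t}],\sigma_n^{\otimes t})=O(t^2/2^m)=\operatorname{negl}(n)$ from Theorem~\ref{thm:propGHSE} via the triangle inequality then yields computational indistinguishability of $\Gamma(\rho_{k,m})^{\otimes t}$ from $\eta_{n,m}^{\otimes t}$, establishing PRDM-ness.

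For the backward direction---that $m$ not being $\omega(\log n)$ (so $m\leq C\log n$ for some constant $C$ on an infinite subsequence of $n$) implies non-robustness---I would exhibit one explicit efficient unital channel that destroys the PRDM property. A natural choice is the global fully depolarizing channel $\Gamma(\rho)=\operatorname{tr}(\rho)\,\sigma_n$, which is unital since $\Gamma(I)=I$ and efficient to implement by conjugating with a uniformly random $n$-qubit Pauli string. Then $\Gamma(\rho_{k,m})=\sigma_n$ deterministically, which has purity $2^{-n}$. On the other hand, a direct Haar integration using the swap trick gives $\E_{\rho\sim\eta_{n,m}}[\operatorname{tr}(\rho^2)]=(2^n+2^m)/(2^{n+m}+1)\geq \Omega(2^{-m})=\Omega(1/\operatorname{poly}(n))$ in this regime. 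The SWAP test on pairs of copies of the given state is efficient and accepts with probability $(1+\operatorname{tr}(\rho^2))/2$ for a fixed $\rho$; averaging over the ensemble gives a distinguishing gap of $\Omega(1/\operatorname{poly}(n))$, which can be boosted to constant advantage using $\operatorname{poly}(n)$ copies. Hence $\{\Gamma(\rho_{k,m})\}_k$ fails to be a PRDM.

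The main conceptual subtlety I expect is in the forward direction: one cannot reduce directly to the defining PRDM indistinguishability by applying $\Gamma^{\otimes t}$ to both sides of the distinguishing experiment, because $\Gamma$ applied to a sample from $\eta_{n,m}$ does not in general yield another sample from $\eta_{n,m}$. It is precisely for this reason that the unital fixed-point property $\Gamma(\sigma_n)=\sigma_n$, together with the auxiliary indistinguishability from the maximally mixed state given by Theorem~\ref{thm:PRDM_indisting}, are the right ingredients---rather than the defining indistinguishability from the GHSE itself.
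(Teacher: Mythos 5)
Your proof is correct and follows essentially the same route as the paper: the robustness direction is exactly the paper's argument (compose the adversary with $\Gamma^{\otimes t}$, use the fixed point $\Gamma(I_n/2^n)=I_n/2^n$ together with Theorem~\ref{thm:PRDM_indisting}, then return to the GHSE by the $O(t^2/2^m)$ statistical closeness), and the non-robustness direction is the paper's purity/SWAP-test argument, differing only in that you use the full Pauli twirl (complete depolarization) while the paper uses a weak global depolarizing channel with $p=\Omega(n^{-c})$, which yields the slightly stronger statement that even weak unital noise breaks PRDMs of non-negligible purity.
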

This follows from the computational indistinguishability from the maximally mixed state and the fixed point condition of unital channels as shown in \SM{}~\ref{sec:noise}.

\revA{How secure are PRDMs against attackers with superpolynomial computational power? For PRS (i.e. $m=0$), Ref.~\cite{kretschmer2021quantum} showed that one can sample-efficiently (i.e. using polynomial many copies of the state) distinguish PRS from Haar random states using classical shadows and an (inefficient) $\mathsf{PostBQP}$ oracle. Here, $\mathsf{PostBQP}$ is the class of quantum algorithms combined with post-selection, which is known to be equivalent to $\mathsf{PP}$~\cite{aaronson2005quantum}.
However, we find that PRDMs for $m=\omega(\log n)$ are robust to this $\mathsf{PostBQP}$ attack (see \SM{}~\ref{sec:kretschmer}), i.e. it requires superpolynomially many copies. Thus, there are attacks that break PRS yet do not work for PRDMs.
However, the opposite is not true: Any sample-efficient attack that can distinguish PRDM from GHSE can also distinguish PRS from Haar random states. %
Thus, there is a fundamental separation in security between PRS and PRDMs:
\begin{theorem}[Security separation between PRDM and PRS (see \SM{}~\ref{sec:kretschmer})]\label{thm:separation}
    Every sample-efficient algorithm that distinguishes PRDMs from GHSE also distinguishes PRSs from Haar random states, while there exist $\mathsf{PostBQP}$ oracles that sample-efficiently distinguish PRSs from Haar random states yet fail for PRDM and GHSE.
\end{theorem}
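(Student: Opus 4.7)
My plan is to prove the two directions of the separation separately.

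\emph{Forward direction (PRDM distinguisher $\Rightarrow$ PRS distinguisher).} I would exhibit a canonical PRDM construction from PRS and lift any PRDM-vs-GHSE distinguisher through it. Specifically, take any PRS family $\{|\psi_k\rangle\}_k$ on $n+m$ qubits and set $\rho_{k,m} \defeq \operatorname{tr}_m(|\psi_k\rangle\langle\psi_k|)$; by the data-processing inequality applied to the partial-trace channel, this ensemble is a valid PRDM (its distinguishability from $\eta_{n,m}$ is upper bounded by the PRS-vs-Haar distinguishability of $\{|\psi_k\rangle\}_k$ on $n+m$ qubits). Then, given any sample-efficient adversary $\mathcal{A}$ achieving non-negligible advantage on $t=\operatorname{poly}(n)$ copies of $\rho_{k,m}$ vs $\eta_{n,m}$, I would build $\mathcal{B}$ on $n+m$ qubits that, on input $t$ copies of either $|\psi_k\rangle\langle\psi_k|$ or a Haar random pure state, traces out $m$ qubits from each copy and feeds the result to $\mathcal{A}$. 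The partial trace is a physical channel implementable in polynomial time, so $\mathcal{B}$ remains sample-efficient, and its distinguishing advantage matches that of $\mathcal{A}$. This establishes the reduction.

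\emph{Reverse direction (PostBQP attack separates PRS from PRDM).} I would instantiate the oracle with the $\mathsf{PostBQP}$-enabled classical-shadow distinguisher of Ref.~\cite{kretschmer2021quantum}, which in polynomial copies detects the pure-state structure of a PRS by using the PostBQP (equivalently $\mathsf{PP}$) oracle to compute specific amplitude-sensitive functionals of the classical shadow data. To show this attack fails on PRDM with $m=\omega(\log n)$, I would combine two ingredients. First, by Thm.~\ref{thm:propGHSE} (eq.~\ref{eq:indist_ghse}), $\eta_{n,m}^{\otimes t}$ and $(I_n/2^n)^{\otimes t}$ are statistically indistinguishable to trace distance $O(t^2/2^m) = \operatorname{negl}(n)$, so \emph{any} algorithm (even with PostBQP oracle) produces statistically negligibly different output distributions on $t$ copies of $\eta_{n,m}$ and on $t$ copies of the maximally mixed state. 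Second, I would argue that Kretschmer's attack, applied to a PRDM, gives the same output distribution (up to negligible error) as on the maximally mixed state: the attack's signal is the concentration of shadow estimates around an underlying pure-state template, and the classical-shadow observables it uses on a PRDM have means and variances matching those on $I_n/2^n$ to exponential precision because the PRDM's expectation over $k$ is the maximally mixed state (up to a negligible term) and the PostBQP post-selection probabilities it invokes depend only on these statistics at $\operatorname{poly}(n)$ sample size. Combining the two, the attack outputs statistically close distributions on PRDM and on GHSE, failing the non-negligible advantage criterion.

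\emph{Main obstacle.} The reverse direction is the hard part. Statistical indistinguishability of $\eta_{n,m}$ from $I_n/2^n$ only handles the GHSE side of the attack; for the PRDM side, computational indistinguishability is too weak because PostBQP is not an efficient class. The argument must therefore reach inside the specific structure of the Kretschmer attack, showing that the functionals it computes from the classical shadow are determined (to negligible precision, on $t=\operatorname{poly}(n)$ copies) by low-order moments of the state that coincide between PRDM and the maximally mixed state. Formalizing this reduction from ``distinguishing advantage'' to ``low-order moment mismatch'' is where the bulk of the technical work lies, and it is what cleanly yields the $\Omega(2^{m}/\operatorname{poly}(n))$ sample complexity lower bound witnessing the claimed security separation.
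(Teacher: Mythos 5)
Your forward direction is exactly the paper's argument (Proposition~\ref{prop:PRDMPRS} in \SM{}~\ref{sec:kretschmer}): trace out $m$ qubits from the PRS/Haar input and feed the result to the PRDM-vs-GHSE distinguisher; nothing to add there.

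The reverse direction, however, has a genuine gap, and you have in effect flagged it yourself. Your plan is to show that the Kretschmer attack produces statistically close output distributions on PRDM and on the maximally mixed state, by arguing that the shadow functionals it computes are determined by low-order moments that coincide "because the PRDM's expectation over $k$ is the maximally mixed state." This conflates the ensemble average $\mathbb{E}_k[\rho_k]$ with what the attacker actually receives, namely $\mathbb{E}_k[\rho_{k}^{\otimes t}]$ for a fixed random key: this $t$-copy state is \emph{not} statistically close to $(I_n/2^n)^{\otimes t}$ (each $\rho_k$ has rank at most $2^m\ll 2^n$, so even a single copy distinguishes them information-theoretically given the right measurement), and computational indistinguishability is useless once a $\mathsf{PostBQP}$ oracle enters, as you note. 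So the "statistical closeness of outputs" route cannot be completed by moment matching alone; it would require exactly the attack-specific analysis you defer, and that deferred analysis is the entire content of the claim. The paper's proof (Proposition~\ref{prop:Kretschmer}) sidesteps all of this with a much more direct observation: the attack's decision statistic is the overlap $\operatorname{tr}(\rho\, O_i)$ with keyed observables $O_i=\rho_i$, and for a PRDM with $m=\omega(\log n)$ this overlap is at most the purity $\operatorname{tr}(\rho_k^2)=\operatorname{negl}(n)$ even at the correct key $i=k$, so detecting any signal requires estimating shadow overlaps to additive precision $\epsilon=\operatorname{negl}(n)$; the classical-shadow sample-complexity bound $T=\Omega(n\log^2 M/\epsilon^2)$ then forces superpolynomially many copies. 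Note also that both the paper and your sketch only rule out this specific attack (the paper explicitly leaves general $\mathsf{PostBQP}$ security of PRDMs open), so you should not aim for the stronger "all $\mathsf{PostBQP}$ attacks fail" formulation your second ingredient implicitly reaches for. To repair your write-up, replace the moment-matching step with the purity/precision argument above.
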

Finally, we note that whether PRDMs are generally secure against $\mathsf{PostBQP}$ attacks remains an open question.}

\revA{To efficiently construct PRDMs $\rho_{k,m}$, one can harness any existing construction of PRS: We take an $(n+m)$-qubit PRS $\ket{\psi_k}\bra{\psi_k}$, and trace out $m$ qubits:
\begin{equation}\label{eq:tracePRDM}
\rho_{k,m}=\operatorname{tr}_m(\ket{\psi_k}\bra{\psi_k})\,.
\end{equation}
This implies that PRDMs can be prepared with very short circuit depth: PRS (and pseudorandom unitaries) can be prepared in $O(\text{polylog}\,n)$ depth using one-dimensional local circuits, and $O(\text{polylog}\log n)$ depth assuming arbitrary connectivity~\cite{schuster2024randomunitariesextremelylow,ma2024construct}.} 
In the following, we now consider an explicit PRDM construction that we can use for applications. In particular, we construct a real-valued PRDM by taking partial trace of PRS constructed from binary phase states~\cite{ji2018pseudorandom,brakerski2019pseudo,bouland2022quantum, ananth2022pseudorandom}:
\begin{fact}[Binary phase PRDM]\label{thm:PRDMSconstruct}
The $n$-qubit binary phase PRDM $\rho_{k,m}=\operatorname{tr}_m(\ket{\psi_k}\bra{\psi_k})$ is efficiently prepared by tracing out $m$ qubits of the binary phase PRS of $n+m$ qubits $\ket{\psi_k}=2^{-(n+m)/2}\sum_{x\in \{0,1\}^{n+m}} (-1)^{f_k(x)}\ket{x}$~\cite{ji2018pseudorandom,brakerski2019pseudo,bouland2022quantum, ananth2022pseudorandom} with pseudorandom function $f_k(x): \{0,1\}^{n+m}\rightarrow\{0,1\}$, key $k\in\{0,1\}^{\kappa}$ and security parameter $\kappa=O(n+m)$. 
\end{fact}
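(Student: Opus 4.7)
The plan is to reduce the two requirements of Definition~\ref{def:PRDM} to the well-established security of the binary phase PRS on $n+m$ qubits. Since the underlying state $\ket{\psi_k}$ is, by the results of Refs.~\cite{ji2018pseudorandom,brakerski2019pseudo,bouland2022quantum,ananth2022pseudorandom}, an $(n+m)$-qubit PRS built from a post-quantum pseudorandom function $f_k$ with key length $\kappa = O(n+m) = \mathrm{poly}(n)$, both efficient preparability and computational indistinguishability should follow by viewing the partial trace as an efficient post-processing step.

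For efficient preparability, I would first recall the standard preparation circuit for the binary phase PRS: apply Hadamards to $n+m$ ancilla qubits initialized in $\ket{0}$, then coherently compute $f_k(x)$ into a fresh qubit, apply a controlled-$Z$, and uncompute $f_k(x)$, so that the result is $\ket{\psi_k}$ up to global phase. This circuit has size $\mathrm{poly}(n+m)=\mathrm{poly}(n)$ because $f_k$ is efficiently evaluable. To obtain $\rho_{k,m}$, the generator $\mathcal{G}(1^\kappa, k, m)$ then simply discards (traces out) any chosen $m$ of the $n+m$ qubits, which is a free operation in the circuit model. This establishes item 1 of Definition~\ref{def:PRDM}.

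For computational indistinguishability, I would use a standard reduction. Suppose there were a QPT adversary $\mathcal{A}$ and a non-negligible function $\varepsilon(n)$ such that
\begin{equation}
\Big|\Pr_{k}[\mathcal{A}(\rho_{k,m}^{\otimes t}) = 1] - \Pr_{\rho \leftarrow \eta_{n,m}}[\mathcal{A}(\rho^{\otimes t}) = 1]\Big| \geq \varepsilon(n).
\end{equation}
I would construct a distinguisher $\mathcal{B}$ against the binary phase PRS as follows: on input $t$ copies of an $(n+m)$-qubit state $\ket{\phi}$ (either $\ket{\psi_k}$ or $\ket{\phi} \leftarrow \mu_{n+m}$), $\mathcal{B}$ traces out the designated $m$ qubits in each copy and feeds the resulting $n$-qubit states into $\mathcal{A}$, returning $\mathcal{A}$'s output. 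By linearity of the partial trace, the induced ensembles are exactly $\rho_{k,m}^{\otimes t}$ and $(\operatorname{tr}_m \ketbra{\phi}{\phi})^{\otimes t}$ with $\ket{\phi} \leftarrow \mu_{n+m}$, which is $\eta_{n,m}^{\otimes t}$ by Definition~\ref{def:tracenensemble}. Hence $\mathcal{B}$ inherits the advantage $\varepsilon(n)$, contradicting the $t$-copy pseudorandomness of $\ket{\psi_k}$. Since partial trace is QPT, $\mathcal{B}$ is QPT whenever $\mathcal{A}$ is, and the reduction is tight copy-for-copy.

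The only real subtlety is ensuring that the PRS security statement invoked from the cited references is of the multi-copy, QPT-adversary form that Definition~\ref{def:PRDM} requires, and that the key length $\kappa$ and circuit size remain $\mathrm{poly}(n)$ after the relabeling $n \mapsto n+m$; this is immediate because $m \leq \mathrm{poly}(n)$ throughout. No new cryptographic assumption beyond the post-quantum one-way function used to instantiate $f_k$ is needed, so the main ``work'' is bookkeeping rather than a substantive argument.
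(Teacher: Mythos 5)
Your proposal is correct and follows essentially the same route the paper takes: the paper treats this Fact as immediate from the generic construction $\rho_{k,m}=\operatorname{tr}_m(\ket{\psi_k}\bra{\psi_k})$ of \eqref{eq:tracePRDM}, with exactly your partial-trace reduction (trace out $m$ qubits of the challenge state and feed the result to the PRDM distinguisher) spelled out in Proposition~\ref{prop:PRDMPRS} of \SM{}~\ref{sec:kretschmer}, and efficient preparability inherited from the standard poly-size circuit for binary phase PRS built from the quantum-secure PRF $f_k$. Your added bookkeeping about multi-copy PRS security and $\kappa=O(n+m)=\mathrm{poly}(n)$ matches the paper's reliance on the cited PRS constructions.
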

This construction of PRDM requires the existence of PRS, which in turn have been proven to exist assuming quantum-secure one-way functions exist~\cite{zhandry2012how}. However, it has been noted that PRS can exist under even weaker assumptions~\cite{kretschmer2021quantum}. \revA{It is also possible that PRDMs \revA{for $m=\omega(\log n)$} could be constructed without using PRS at all. If this is the case, then PRDMs could require even weaker assumptions than PRS.}

\prlsection{Pseudoresources} 
Can one efficiently mask the fact that a state contains quantum resources? Recently, pseudoresources have been proposed as efficiently preparable ensembles which are computationally indistinguishable, yet possess substantially different amount of resources~\cite{bouland2022quantum,gu2023little,haug2023pseudorandom}:
\begin{definition}[Pseudoresources]\label{def:pseudoresource}
Let $Q$ be a quantum resource monotone. A pseudoresource pair with gap $f(n)$ vs. $g(n)$ (where $f(n)>g(n)$) consists of two efficiently preparable state ensembles:
\begin{enumerate}
    \item a `high resource' ensemble of $n$-qubit quantum states $\{\rho_{k_1}\}$ such that $Q(\rho_{k_1})=
    f(n)$ with high probability over key ${k_1}$, and
    \item a `low resource' ensemble of $n$-qubit quantum states $\{\sigma_{k_2}\}$ such that $Q(\sigma_{k_2})=
    g(n)$ with high probability over key ${k_2}$,
\end{enumerate}
such that the two ensembles are computationally indistinguishable given $t=\mathrm{poly}(n)$ copies. %
\end{definition}
Pseudoresource ensembles allow one to efficiently generate states masquerading as high resource states with $f(n)$ resource, yet actually contain only $g(n)$ of a given resource.
For example, a pseudoresource gap of $f(n)=\Theta(n)$ vs $g(n)=\omega(\log n)$ has been found for pseudoentanglement~\cite{bouland2022quantum}, pseudomagic~\cite{gu2023little} and pseudocoherence~\cite{haug2023pseudorandom}. 
For pure states (i.e. PRSs) these pseudoresource gaps are indeed maximal. 
However, it turns out that for mixed states we can hide quantum resources even better: In particular, PRDMs as constructed via binary phase states have near-maximal entanglement, coherence and magic, yet are computationally indistinguishable from the maximally mixed states. Thus, these two ensembles have asymptotically maximal pseudoresource gaps:
\revA{\begin{theorem}[Optimal pseudoentanglement, pseudocoherence and pseudomagic gap]\label{thm:maximalgap}
The PRDM ensemble of Fact~\ref{thm:PRDMSconstruct} with $m=\operatorname{polylog}(n)$, and the maximally mixed state $I_n/2^n$, form a pseudoentanglement (\SM{}~\ref{sec:pseudoentanglement}),  pseudomagic (\SM{}~\ref{sec:pseudomagic}) and pseudocoherence (\SM{}~\ref{sec:pseudocoherence})  ensemble with gap $f(n)=\Theta(n)$ vs $g(n)=0$, which is asymptotically optimal. 
\end{theorem}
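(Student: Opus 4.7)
The plan is to decompose the theorem into three parts: (i) computational indistinguishability between the binary phase PRDM $\rho_{k,m}$ of Fact~\ref{thm:PRDMSconstruct} and the maximally mixed state $I_n/2^n$, (ii) the trivial resource content of $I_n/2^n$, and (iii) a high-probability lower bound of $\Theta(n)$ on the distillable entanglement, the log-free robustness of magic, and the relative entropy of coherence of $\rho_{k,m}$. Asymptotic optimality of the $\Theta(n)$ vs $0$ gap then follows at once, since each monotone is non-negative and upper bounded by $\Theta(n)$.

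Part (i) is free from Thm.~\ref{thm:PRDM_indisting}, which gives $\operatorname{negl}(n)$ indistinguishability of $\rho_{k,m}^{\otimes t}$ from $(I_n/2^n)^{\otimes t}$ for $t=\operatorname{poly}(n)$ whenever $m=\omega(\log n)$, and in particular for $m=\operatorname{polylog}(n)$. Part (ii) is trivial since $I_n/2^n$ is separable across every bipartition, diagonal in the computational basis, and an equal mixture of stabilizer states, so $E_D=C=\text{LR}=0$.

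Part (iii) is the technical core, and I would exploit the explicit structure $\ket{\psi_k}=2^{-(n+m)/2}\sum_x(-1)^{f_k(x)}\ket{x}$ of the binary phase PRS, handling each resource separately. For coherence the bound is deterministic: since $|(-1)^{f_k(x)}|^2=1$, the diagonal of $\rho_{k,m}$ in the computational basis equals $I_n/2^n$, so $S(\Delta[\rho_{k,m}])=n$; meanwhile $\rho_{k,m}$ has rank at most $2^m$ as the partial trace of an $(n+m)$-qubit pure state, giving $S(\rho_{k,m})\leq m$ and hence $C(\rho_{k,m})\geq n-m=\Theta(n)$. For entanglement across any bipartition with $n_A\leq n_B$, I would combine the Devetak--Winter hashing bound $E_D(\rho_{k,m})\geq S(\operatorname{tr}_A\rho_{k,m})-S(\rho_{k,m})$ with $S(\rho_{k,m})\leq m$ and a concentration argument on the random signs $(-1)^{f_k(x)}$ showing $S(\operatorname{tr}_A\rho_{k,m})\geq n_A-O(1)$ with high probability over $k$, paralleling the Haar argument behind Thm.~\ref{thm:propGHSE}. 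For magic, I would bound the stabilizer fidelity $F_\text{STAB}(\rho_{k,m})=\max_{\phi\in\text{STAB}}\operatorname{tr}(\phi\,\rho_{k,m})=\max_\phi\|(\bra{\phi}\otimes I_m)\ket{\psi_k}\|^2$ via a second-moment calculation over $f_k$ and a union bound over the $2^{O(n^2)}$ stabilizer states, showing $F_\text{STAB}(\rho_{k,m})\leq 2^{-n+O(\log n)}$ with high probability, and then invoke the standard inequality relating $\text{LR}$ to the stabilizer fidelity to conclude $\text{LR}(\rho_{k,m})=\Theta(n)$.

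The main obstacle will be the high-probability lower bounds on magic and entanglement: because $\text{LR}$ is a stabilizer monotone and partial trace is a free stabilizer operation, the known near-maximal magic of the pure binary phase PRS cannot be lifted directly to $\rho_{k,m}$, and must be argued afresh on the mixed state via stabilizer fidelity (or stabilizer rank); likewise, one must control the entropy of the specific reduced states of $\rho_{k,m}$ rather than merely invoking Haar averages. Once these probabilistic estimates are in place, asymptotic optimality of the gap is immediate, since $C\leq n$, $E_D\leq n/2$ and $\text{LR}\leq O(n)$, so no pseudoresource pair can asymptotically exceed $\Theta(n)$ vs $0$.
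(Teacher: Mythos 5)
Your overall decomposition (computational indistinguishability via Thm.~\ref{thm:PRDM_indisting}, zero resources for $I_n/2^n$, and high-probability $\Theta(n)$ lower bounds on the resources of the binary phase PRDM) is the same as the paper's, and your coherence and entanglement steps essentially coincide with it: the diagonal of $\rho_{k,m}$ is exactly $I_n/2^n$ and $S(\rho_{k,m})\leq m$, giving $C(\rho_{k,m})\geq n-m$ deterministically, while the hashing inequality plus a rank bound and a subsystem-purity estimate gives $E_\text{D}\geq n_A-m-O(1)$. For the latter, note that the paper does not re-derive the subsystem-entropy concentration for the phase states; it invokes the construction of Ref.~\cite{bouland2022quantum}, and your sketched ``concentration over the random signs'' still owes a justification that is valid when $f_k$ is only pseudorandom (the standard fix is to take $f_k$ from a family that is simultaneously pseudorandom and $O(1)$-wise independent, so the purity of $\rho_A$ can be averaged exactly and Markov's inequality applied).

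The genuine gap is in the magic part. First, your premise that the pure-state magic of $\ket{\psi_k}$ cannot be lifted to $\rho_{k,m}$ is exactly what the paper circumvents: using the dual linear program for the robustness with the rank-one witness $A=c\ket{\psi_k}\bra{\psi_k}$, one obtains $\text{LR}(\operatorname{tr}_m(\ket{\psi_k}\bra{\psi_k}))\geq -\log F_\text{STAB}(\ket{\psi_k})-2m$ (\SM{}~\ref{sec:magic}), so the stabilizer fidelity of the \emph{pre-trace pure state} does transfer to the mixed state at a cost of only $2m$. Second, your substitute --- bounding $F_\text{STAB}(\rho_{k,m})=\max_{\phi}\operatorname{tr}(\phi\,\rho_{k,m})$ by a second-moment calculation over $f_k$ together with a union bound over the $2^{\Theta(n^2)}$ stabilizer states --- does not go through as stated: a second-moment (Chebyshev) tail decays only polynomially and cannot beat a $2^{\Theta(n^2)}$ union bound, and the exponential (Hoeffding/Hanson--Wright type) concentration that would suffice requires truly random phases, which you do not have for a PRF and cannot import via computational indistinguishability because stabilizer fidelity is not an efficiently testable quantity. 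This is precisely why the paper (following Ref.~\cite{gu2023little}) avoids any union bound: it uses $F_\text{STAB}(\ket{\psi_k})\leq 2^{-M_2(\ket{\psi_k})/4}$, where $M_2$ is the stabilizer $2$-R\'enyi entropy; since $2^{-M_2}$ is a degree-$8$ polynomial in the phases, its mean is computable for $8$-wise independent $f_k$, and Markov's inequality yields $M_2(\ket{\psi_k})=\Theta(n+m)$ with overwhelming probability, hence $\text{LR}(\rho_{k,m})\geq\Theta(n)-2m=\Theta(n)$. To repair your argument, either adopt this route or supply a concentration bound for $\max_{\phi}\operatorname{tr}(\phi\,\rho_{k,m})$ that holds under the limited independence actually guaranteed by the construction.
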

We show the pseudoresource gaps in terms of distillable entanglement $E_\text{D}$, log-free robustness of magic $\text{LR}$, and relative entropy of coherence $C$. Note that the same gaps also apply for other good resource monotones, e.g. for the entanglement cost $E_\text{C}$.
We summarize the pseudoresource gaps for pure and mixed states in Table~\ref{tab:pseudoresource}.
We note that a similar pseudoentanglement gap for mixed states was also found in Ref.~\cite{goulao2024pseudo}. However, their result only applies to a more restrictive definition of pseudoentanglement, where the bipartition to compute the entanglement is fixed beforehand, whereas our result holds for randomly chosen bipartitions.
}

\prlsection{Noise-robust EFI pairs} 
We now consider another application of pseudorandomness:
EFI pairs are efficiently preparable ensembles which are statistical far, yet computational indistinguishable~\cite{brakerski2022computational}. They are an important cryptographic primitive for various applications, such as for bit commitment~\cite{yan2022general}, quantum oblivious transfer~\cite{bartusek2021one}, multiparty quantum computation~\cite{ananth2022cryptography}, and zero knowledge proofs~\cite{ananth2021concurrent}. EFI pairs can be constructed from PRS~\cite{brakerski2022computational} and single-copy PRS~\cite{morimae2022quantum}.
However, for practical purposes we would like  EFI pairs to still function under noise, while being constructed with weaker assumptions than quantum-secure one-way functions. Here, we show that one can construct noise-robust EFI pairs using PRDMs and the maximally mixed state:
\begin{theorem}[Noise-robust EFI pair]\label{thm:EFI}
    PRDMs constructed by tracing out $m$ qubits from $(n+m)$ qubit PRS with security parameter $\kappa=c(n+m)$ with $0<c<1$, and the maximally mixed state $I_n/2^n$ are EFI pairs when $\omega(\log n)< m<\frac{n}{2}(1-c)-\frac{1}{2}$. They remain EFI pairs after applying efficient mixed unitary channel $\Phi(\rho)=\sum_{i=1}^{r} p_i U_i \rho U_i^\dag$ with unitaries $U_i$ and probabilities $\{p_i\}_i$  whenever its Shannon entropy is bounded as
    \begin{equation}
        H(\{p_i\}_i)\leq n(1-c)-m-2\,.
    \end{equation}
\end{theorem}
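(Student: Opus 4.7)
To establish that the PRDM ensemble and $I_n/2^n$ form an EFI pair preserved under $\Phi$, I would verify three conditions in turn: efficient preparability of both ensembles, statistical farness (non-negligible trace distance given polynomially many copies), and computational indistinguishability. Conditions (i) and (iii) reduce cleanly to earlier results, so the main work lies in (ii) and in its preservation under noise.

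\textbf{No-noise case.} Efficient preparability of the PRDM is Fact~\ref{thm:PRDMSconstruct}, and $I_n/2^n$ is prepared trivially by discarding one half of $n$ Bell pairs. Computational indistinguishability is exactly the content of Theorem~\ref{thm:PRDM_indisting}, which applies since $m=\omega(\log n)$. For statistical farness I would use a rank-counting (inefficient) distinguisher: each $\rho_{k,m}$ is a partial trace of an $(n+m)$-qubit pure state, hence has rank at most $2^m$; with only $2^\kappa = 2^{c(n+m)}$ keys the union of supports across keys has dimension at most $2^{\kappa+m}$. Projecting onto this subspace accepts the PRDM ensemble with probability $1$ and $I_n/2^n$ with probability at most $2^{\kappa+m-n}$. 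The hypothesis $m<\tfrac{n}{2}(1-c)-\tfrac{1}{2}$ bounds this exit probability strictly below a constant $<1$, so the trace distance is bounded away from $0$.

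\textbf{Noise-robust case.} The unitality condition $\Phi(I_n/2^n)=I_n/2^n$ does the heavy lifting on both the preparability and indistinguishability sides. For computational indistinguishability, any efficient adversary distinguishing $\Phi(\rho_{k,m})^{\otimes t}$ from $(I_n/2^n)^{\otimes t}$ can be pre-composed with an application of $\Phi$ to each copy, which is efficient by assumption, producing an efficient distinguisher for $\rho_{k,m}^{\otimes t}$ versus $(I_n/2^n)^{\otimes t}$ and contradicting Theorem~\ref{thm:PRDM_indisting}. For statistical farness, the naive rank bound on $\Phi(\rho_{k,m})$ is $r\cdot 2^m$, giving only $\log r$ rather than the Shannon entropy $H$. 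To obtain the tighter bound I would invoke a smooth-support argument: fix a small $\epsilon$, select a subset $T$ of Kraus indices with $\sum_{i\in T}p_i \geq 1-\epsilon$ and $|T|\leq 2^{H+O(\log(1/\epsilon))}$, replace $\Phi$ by $\Phi_T=\sum_{i\in T}p_iU_i(\cdot)U_i^\dagger$ at a trace-distance cost $\epsilon$, and then project onto $\bigcup_k \mathrm{supp}(\Phi_T(\rho_{k,m}))$, a subspace of dimension at most $2^{\kappa+m+H+O(1)}$. The entropy bound $H\leq n(1-c)-m-2$ forces this dimension below $2^n$ by a constant factor, so the trace-distance gap survives the $\epsilon$ approximation.

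\textbf{Main obstacle.} The delicate step is the smooth-support bound that converts Shannon entropy $H$ into an effective support-size bound of order $2^H$ while keeping the approximation loss $\epsilon$ strictly smaller than the constant gap between the smoothed dimension $2^{\kappa+m+H+O(1)}$ and $2^n$. Balancing the trade-off between $|T|$ and $\epsilon$, and tracking the additive $O(1)$ constants so that they match the stated threshold $H\leq n(1-c)-m-2$, is the most technically fiddly part of the argument; the rest of the proof is conceptually routine once this is in place.
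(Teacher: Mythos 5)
Your noiseless part is correct, and it actually takes a different route from the paper: the paper proves statistical farness via the Fannes--Audenaert inequality, comparing $S(I_n/2^n)=n$ with $S(\nu_0)\leq \kappa+m$ for the key-averaged PRDM, whereas you project onto the union of supports of the $\rho_{k,m}$ (dimension at most $2^{\kappa+m}$), getting $T\geq 1-2^{\kappa+m-n}$, which is even stronger than the paper's $T\geq 1-(\kappa+m+1)/n$ in this parameter range. Your computational-indistinguishability reduction for the noisy case (pre-compose the adversary with the efficient unital channel $\Phi$, which fixes $I_n/2^n$, and invoke Thm.~\ref{thm:PRDM_indisting}) is exactly the paper's argument.

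The genuine gap is the ``smooth-support'' step in the noise-robust statistical farness. The lemma you need --- a set $T$ of Kraus indices with $\sum_{i\in T}p_i\geq 1-\epsilon$ and $|T|\leq 2^{H+O(\log(1/\epsilon))}$ --- is false for a single (one-shot) distribution; it is an AEP-type statement that only holds for i.i.d.\ sequences. One-shot, the size of a set capturing $1-\epsilon$ of the mass is governed by the smooth max-entropy, which can vastly exceed the Shannon entropy $H$; a Markov bound on surprisals only gives $|T|\leq 2^{H/\epsilon}$, which would force a threshold of order $\epsilon\, n$ rather than the stated $n(1-c)-m-2$. Concretely, take $p_1=1/2$ and spread the remaining mass uniformly over $2^{2K}$ unitaries with $K=n(1-c)-m-3$ (e.g.\ random Paulis indexed by $2K$ random bits, so the channel is efficient). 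Then $H=K+1\leq n(1-c)-m-2$ satisfies the hypothesis, but any index set of mass $\geq 3/4$ contains at least $2^{2K-1}$ elements, so your smoothed projector has dimension bound of order $2^{\kappa+m+2K}\approx 2^{n(2-c)-m}\gg 2^n$, which is vacuous, and no trace-distance gap follows. The paper avoids support counting altogether: for a mixed-unitary channel one has $S(\Phi(\nu_0))\leq S(\nu_0)+H(\{p_i\}_i)$ (each $U_i\nu_0U_i^\dagger$ has the same entropy as $\nu_0$, and mixing adds at most $H$), the maximally mixed state is a fixed point, and Fannes--Audenaert converts the entropy deficit $n-S(\Phi(\nu_0))$ directly into a lower bound on the trace distance, yielding exactly the stated condition $H(\{p_i\}_i)\leq n(1-c)-m-2$. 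To repair your proof you would need to replace the smooth-support step by this entropy argument (or weaken the theorem to a smooth-max-entropy condition on the noise).
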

The proof in \SM{}~\ref{sec:EFI} follows from the Fannes-Audenaert inequality~\cite{audenaert2007sharp}. Intuitively, the inequality can be interpreted as saying that two states with very different von Neumann entropies must also be far apart in trace distance. The maximally mixed state has an entropy of $n$, while for PRDMs it is upper bounded by $m+\kappa$. Thus, whenever $m+\kappa<n$, these states are far apart in trace distance~\cite{morimae2022quantum}, and remain distinguishable even after application of noise as long as the entropy remains sufficiently below that of the maximally mixed state. While our proof of noise-robustness is restricted to mixed unital noise channels, we believe it can be extended to arbitrary efficiently implementable unital channels.

Our result implies that EFI pairs from PRDMs are robust to many realistic noise models, such as dephasing, depolarizing or Pauli channels.
For noisy intermediate-scale quantum computers and quantum error correction models, the most commonly used noise model is local depolarizing noise $\Lambda_p(\rho)^{\otimes n}$ which acts on all $n$ qubits~\cite{bharti2021noisy,chen2023complexity}. Here, $\Lambda_p(\rho)=p/4\sum_{\alpha\in\{x,y,z\}}\sigma^\alpha \rho \sigma^\alpha+(1-3p/4)\rho$ is the local depolarizing channel, $\sigma^\alpha$ with $\alpha\in\{x,y,z\}$ Pauli operators, and $p$ the depolarizing probability. Here, we find noise robustness even for relatively high noise rates of $p \lesssim \frac{1}{4}$ (see SM~\ref{sec:EFI}):
\begin{corollary}[Noise-robust EFI pair against local depolarizing noise]
EFI pairs of Thm.~\ref{thm:EFI} remain EFI pairs after applying the local depolarizing channel on all $n$ qubits $\Lambda_p^{\otimes n}(\rho)$ whenever $H(\{1-3p/4,p/4,p/4,p/4\}) \leq (1-c) -m/n- 2/n$. In particular, for $m 
= \operatorname{polylog}(n)$ and $c=10^{-4}$, we have robustness for all $p < \frac{1}{4} -O(\operatorname{polylog}(n)/n)$.
\end{corollary}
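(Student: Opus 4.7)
The plan is to reduce the corollary to a direct application of Thm.~\ref{thm:EFI} by identifying $\Lambda_p^{\otimes n}$ as a mixed unitary channel and reading off its Shannon entropy. First I would observe that the single-qubit depolarizing channel is already expressed in mixed unitary form with unitaries $\{I, \sigma^x, \sigma^y, \sigma^z\}$ and probabilities $q = \{1-3p/4, p/4, p/4, p/4\}$. The tensor product $\Lambda_p^{\otimes n}$ is therefore also mixed unitary, with $4^n$ Kraus operators given by $n$-qubit Pauli strings and joint probabilities equal to the product distribution $q^{\times n}$. Since each Pauli string is an efficiently implementable unitary, the channel satisfies the efficiency hypothesis of Thm.~\ref{thm:EFI}.

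Next I would invoke additivity of Shannon entropy for product distributions to get
\begin{equation}
H(q^{\times n}) = n\, H(\{1-3p/4, p/4, p/4, p/4\}).
\end{equation}
Substituting this into the sufficient condition $H(\{p_i\}_i) \leq n(1-c) - m - 2$ from Thm.~\ref{thm:EFI} and dividing by $n$ immediately yields the stated per-qubit inequality $H(\{1-3p/4, p/4, p/4, p/4\}) \leq (1-c) - m/n - 2/n$, which establishes the first half of the corollary.

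For the explicit range $p < 1/4 - O(\operatorname{polylog}(n)/n)$ when $c = 10^{-4}$ and $m = \operatorname{polylog}(n)$, I would analyze $H(p) \equiv H(\{1-3p/4, p/4, p/4, p/4\})$ near $p=1/4$. The function $H(p)$ is smooth and strictly monotonically increasing on $[0,1]$, and a direct evaluation gives $H(1/4) = (13/16)\log_2(16/13) + (3/4)$, which is strictly below $1$ by a fixed constant (numerically $\approx 0.006$), comfortably beneath the limiting right-hand side $1-c = 0.9999$. Hence the critical value $p^*$ defined implicitly by $H(p^*) = (1-c) - m/n - 2/n$ lies slightly above $1/4$ for all sufficiently large $n$. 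A Lipschitz/Taylor argument around this region, using that $H'(p) = \Theta(1)$ near $p=1/4$, then shows that the $O(\operatorname{polylog}(n)/n)$ shift in the right-hand side translates into at most an $O(\operatorname{polylog}(n)/n)$ shift in the allowed $p$, giving the simplified statement $p < 1/4 - O(\operatorname{polylog}(n)/n)$.

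The only non-trivial step is tracking the sub-leading corrections cleanly: one must verify both that $H(1/4)$ is bounded away from $1$ by a constant larger than $c$, and that the derivative of $H$ is bounded away from zero in the relevant interval so that the inversion of $H$ is well-controlled. Both are elementary computations with the binary-entropy-like formula above, so the substantive work is entirely absorbed into Thm.~\ref{thm:EFI}; the corollary is essentially a concrete worked example specialized to the depolarizing channel.
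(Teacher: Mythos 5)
Your proposal is correct and follows essentially the same route as the paper: identify $\Lambda_p^{\otimes n}$ as a mixed unitary (Pauli) channel with product probabilities, use additivity to get $H(q^{\times n}) = n\,H(\{1-3p/4,p/4,p/4,p/4\})$, and plug into the entropy condition of Thm.~\ref{thm:EFI}. Your explicit evaluation $H(1/4)\approx 0.993$ together with the monotonicity and $H'(p)=\Theta(1)$ bound near $p=1/4$ simply spells out the final numerical inversion that the paper states without detail, and it checks out.
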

Further, we find that there is a fundamental connection between pseudoentanglement and construction of EFI pairs. In particular, EFI pairs have to be statistical far, which due to the Fannes-Audenaert inequality is easier to satisfy whenever the  PRS used to construct the PRDM has low entanglement. In particular, when constructing the PRDM by tracing $m$ qubits from pseudoentangled PRS such as introduced in Ref.~\cite{bouland2022quantum}, we can achieve EFI pairs for any $m=cn$ with $c>0$.

\revA{We note that for applications of EFI pairs, it is often desirable that they can be efficiently given key $k$~\cite{brakerski2022computational}. A recent work extended our construction of noise-robust EFI pairs from minimal assumptions to also include efficient verification~\cite{haug2024pseudorandom}.}

\prlsection{Property testing of resources}
Given an unknown object, are there experiments that can verify its properties? Property testing deals exactly with this question:
A property tester is a quantum algorithm that checks whether a given quantum state does not have a particular resource, or contains a lot of it (see Refs.~\cite{rubinfeld1996robust,goldreich1998property,buhrman2008quantum,montanaro2013survey} or SM~\ref{sec:definitions}).
Depending on the property, it has been known that the property tester may need a vastly different number of copies of the states to succeed~\cite{montanaro2013survey}. For example, testing whether the description of a quantum state has imaginary numbers requires exponentially many copies~\cite{haug2023pseudorandom}. In contrast, checking whether a quantum state is pure or highly mixed can be done efficiently with only $O(1)$ copies~\cite{barenco1997stabilization}.

Entanglement, magic and coherence are fundamental signatures of quantum complexity and strongly affect the structure of quantum states. Thus, it is natural to ask whether they can be tested efficiently. For pure states, efficient property testers using $O(1)$ copies are known for entanglement~\cite{ekert2002direct,bouland2022quantum}, magic~\cite{gross2021schur,haug2022scalable,haug2023efficient} and coherence~\cite{haug2023pseudorandom}.
Can efficient property testers also exist for general mixed states? 
Here, we answer this question in negative, showing that testing distillable entanglement $E_\text{D}$, log-free robustness of magic $\text{LR}$, and relative entropy of coherence $C$ is inefficient:
\begin{theorem}[Testing entanglement, coherence and magic is inefficient]\label{thm:inefficient_testing}
    For resource monotones $Q=\{E,\text{LR},C\}$, testing whether a given state $\rho$ has resource $Q(\rho)=0$  or $Q(\rho)=\Theta(n)$ requires $2^{\omega(\log n)}$ many copies .
\end{theorem}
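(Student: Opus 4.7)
The plan is to reduce the testing task to a statistical distinguishing problem between two explicit ensembles, one supported on states with $Q(\rho)=\Theta(n)$ and the other on states with $Q(\rho)=0$, and then use the statistical indistinguishability between the GHSE and the maximally mixed state (equation~(\ref{eq:indist_ghse})) to rule out any efficient tester, even information-theoretically. This is a stronger conclusion than mere computational hardness, and it is the natural route because the bound in~(\ref{eq:indist_ghse}) is statistical rather than computational.

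First, I would fix any mixedness $m=\omega(\log n)$ with $m=o(n)$ (for instance $m=\operatorname{polylog}(n)$), and take the ``high resource'' ensemble to be the GHSE $\eta_{n,m}$ and the ``low resource'' ensemble to be the single state $I_n/2^n$. For $Q\in\{E_\text{D},\text{LR},C\}$, Theorem~\ref{thm:propGHSE} gives expected resource at least $n-m-O(\log n)=\Theta(n)$ on the balanced bipartition (for $E_\text{D}$), and similarly for $\text{LR}$ and $C$. Since each of these monotones is bounded above by $O(n)$, a standard Markov-type argument applied to the deficit $O(n)-Q(\rho)$ promotes the average bound to a high-probability statement: with probability at least a constant (say $1/2$), a state drawn from $\eta_{n,m}$ satisfies $Q(\rho)=\Theta(n)$. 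The maximally mixed state trivially has $Q(I_n/2^n)=0$ for all three resources.

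Next, suppose toward contradiction that a tester $T$ using only $t=\operatorname{poly}(n)$ copies correctly distinguishes $Q(\rho)=0$ from $Q(\rho)=\Theta(n)$ with bias at least a constant. Feeding $T$ a state drawn from $\eta_{n,m}$ produces output ``$\Theta(n)$'' with probability bounded below by a positive constant (using the previous paragraph), while feeding $T$ the maximally mixed state produces ``$\Theta(n)$'' with at most constant small probability. Hence $T$ distinguishes $\Eset{\rho\in\eta_{n,m}}[\rho^{\otimes t}]$ from $(I_n/2^n)^{\otimes t}$ with constant advantage, which by~(\ref{eq:indist_ghse}) forces $O(t^2/2^m)=\Omega(1)$, i.e.\ $t=2^{\Omega(m)}=2^{\omega(\log n)}$, contradicting $t=\operatorname{poly}(n)$. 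This yields the claimed lower bound.

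The main obstacle is the concentration step: Theorem~\ref{thm:propGHSE} is stated in expectation, whereas property testing requires a guarantee that the drawn state actually lies in the ``yes'' set with non-negligible probability. This is where one must be slightly careful in choosing and upper-bounding each monotone---the distillable entanglement and coherence cases are immediate from Markov on $n-Q(\rho)$, but the log-free robustness of magic requires using $\text{LR}(\rho)\leq n$ for $n$-qubit states, so the same deficit trick works. Once concentration is established for each of the three monotones, the remainder of the argument is a direct information-theoretic contradiction with~(\ref{eq:indist_ghse}) and does not rely on any cryptographic assumption.
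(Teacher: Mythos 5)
Your reduction is the same one the paper uses: take the GHSE $\eta_{n,m}$ with $m=\operatorname{polylog}(n)$ as the high-resource ensemble and $I_n/2^n$ as the zero-resource state, and conclude from the statistical bound \eqref{eq:indist_ghse} via a Helstrom-type argument that any tester consistent with Def.~\ref{def:prop-tester} would need $t=2^{\Omega(m)}=2^{\omega(\log n)}$ copies. Where you differ is the step that turns the expectation bounds of Thm.~\ref{thm:propGHSE} into a statement holding for typical draws: the paper invokes measure-concentration results specific to each monotone (the stabilizer-fidelity concentration for $\text{LR}$, L\'evy's lemma for $C$, and Lipschitz concentration of subsystem purity for $E_\text{D}$), obtaining the resource lower bound with overwhelming probability, whereas you use a Markov/deficit argument based only on boundedness of the monotone. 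That is more elementary and perfectly adequate here, since the trace distance in \eqref{eq:indist_ghse} is negligible and even an inverse-polynomial distinguishing advantage gives the contradiction; the paper's concentration buys exponentially small failure probability, which is not needed for this theorem (though it is used elsewhere, e.g.\ for the pseudoresource constructions).

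Two quantitative points should be tightened. First, "with probability at least a constant (say $1/2$)" is exactly the borderline value at which your contradiction evaporates: a $2/3$-vs-$1/3$ tester fed the GHSE is only guaranteed to output "high" with probability $\le 1-2p/3$, so against the $\ge 2/3$ acceptance on $I_n/2^n$ the guaranteed advantage is $2p/3-1/3$, which is $0$ at $p=1/2$. You need $p$ bounded away from $1/2$ (or an amplified tester). Fortunately your own Markov bound delivers $p=1-o(1)$ for $\text{LR}$ and $C$, since the deficit $n-Q$ has expectation $m+O(\log n)=o(n)$. Second, for $E_\text{D}$ the deficit should be taken against the correct ceiling $n_A$ (using $E_\text{D}\le n_A$ on a bipartition with $n_A=\Theta(n)$), not against $n$; Markov on $n-E_\text{D}$ with a balanced cut only yields a constant $p\approx 1/3$, while Markov on $n_A-E_\text{D}$, whose expectation is $m+1$, again gives $p=1-o(1)$. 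With these adjustments the argument closes and matches the paper's conclusion.
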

We show this result in \SM{}~\ref{sec:magic}, \ref{sec:coherence}, \ref{sec:entanglement} by contradiction: If a property tester using polynomial copies exist, then it could efficiently distinguish the GHSE with $m=\text{polylog}(n)$ and $Q(\rho)=\Theta(n)$ from the maximally mixed state with $Q(I_n/2^n)=0$, which contradicts the indistinguishability shown in Thm.~\ref{thm:propGHSE}. We note that inefficiency of property testing from a computational point of view can be similarly established using our pseudoresource ensembles.
We note that the inefficiency of testing entanglement was already pointed out in Ref.~\cite{montanaro2013survey} regarding the entanglement of formation.

Crucially, we can identify purity as the resource that makes property testing hard: Pure states $\text{tr}(\rho^2)=1$ can be efficiently tested, while testing highly mixed states $\text{tr}(\rho^2)=\text{negl}(n)$ is inefficient. Thus, one requires at least inverse polynomial purity to efficiently test states:
\begin{corollary}[Lower bound on purity for efficient testing of entanglement, coherence and magic (\SM{}~\ref{sec:magic}, \ref{sec:coherence}, \ref{sec:entanglement})]\label{thm:puritybound}
    Efficient testing for resource $Q=\{E,\text{LR},C\}$ requires at least a non-negligible purity $\text{tr}(\rho^2)=\Omega(n^{-c})$ with $c>0$. 
\end{corollary}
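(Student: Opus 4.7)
The plan is to establish the corollary by a contrapositive argument that reduces efficient testing at low purity to distinguishing the generalized Hilbert-Schmidt ensemble from the maximally mixed state, a task already shown to be infeasible by Theorem~\ref{thm:propGHSE}. First I would suppose, for contradiction, that there exists an efficient property tester for some resource $Q \in \{E_\text{D}, \text{LR}, C\}$ that succeeds on inputs with purity as small as any inverse polynomial. By the definition of property testing, this tester must distinguish states with $Q(\rho) = 0$ from states with $Q(\rho) = \Theta(n)$ using only $t = \operatorname{poly}(n)$ copies.

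Next I would instantiate the two regimes using already-analyzed objects. The low-resource state is taken to be $I_n/2^n$, which has $Q(I_n/2^n) = 0$ and purity $2^{-n}$. The high-resource ensemble is $\eta_{n,m}$ with $m = \omega(\log n)$ and $m = \operatorname{polylog}(n)$, which by Theorem~\ref{thm:propGHSE} satisfies $Q(\rho) = \Theta(n)$ on typical draws. A standard calculation using Page's formula for the reduced density matrix of a Haar-random pure state gives
\begin{equation}
\Eset{\rho \sim \eta_{n,m}}[\operatorname{tr}(\rho^2)] = \frac{2^n + 2^m}{2^{n+m}+1} \approx 2^{-m},
\end{equation}
which is negligible in $n$ for the chosen $m$. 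A concentration or Markov-type argument then upgrades this expectation bound to the statement that with overwhelming probability the GHSE state has purity within a constant factor of $2^{-m}$, hence also negligible.

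Putting the pieces together, the assumed tester would output ``high resource'' on typical draws of $\eta_{n,m}$ and ``low resource'' on $I_n/2^n$, yielding an efficient distinguisher using $\operatorname{poly}(n)$ copies. However, the statistical indistinguishability bound $\operatorname{TD}(\Eset{\rho}[\rho^{\otimes t}], (I_n/2^n)^{\otimes t}) = O(t^2/2^m)$ from \eqref{eq:indist_ghse} is negligible for any $t = \operatorname{poly}(n)$ once $m = \omega(\log n)$, contradicting correctness of the tester. Taking the contrapositive, any efficient tester requires the input purity to satisfy $\operatorname{tr}(\rho^2) = \Omega(n^{-c})$ for some constant $c > 0$, proving the corollary.

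The main obstacle will be lifting the ``expected purity $\approx 2^{-m}$'' computation to an ``overwhelmingly likely purity at most $n^{-c}$ for every $c$'' statement, since Theorem~\ref{thm:propGHSE} controls resource measures only in expectation. This should follow from Levy's lemma applied to the Lipschitz function $\ket{\psi} \mapsto \operatorname{tr}(\operatorname{tr}_m(\ketbra{\psi}{\psi})^2)$ on the $(n+m)$-qubit sphere, but one must verify that the Lipschitz constant is dimension-independent enough to yield exponentially sharp concentration. A secondary subtlety is quantifying ``requires'': the conclusion should be read as saying that for any purported efficient tester and any $c > 0$, one can exhibit a sufficiently mixed GHSE input on which the tester fails, which is exactly what the above reduction produces.
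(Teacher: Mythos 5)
Your proposal is correct and matches the paper's route: the paper derives this corollary in one line from the testing-inefficiency theorems, whose proofs are exactly your reduction, namely that the hard instances (the GHSE with $m=\operatorname{polylog}(n)$, of purity $\approx 2^{-m}$, versus the maximally mixed state) have negligible purity yet are statistically indistinguishable with $\operatorname{poly}(n)$ copies by \eqref{eq:indist_ghse}. Your added purity computation and the Markov/L\'evy concentration step are consistent with (and slightly more explicit than) what the paper states, so there is no substantive difference in approach.
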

For the sample complexity for purity $\text{tr}(\rho^2)=1/\text{poly}(n)$, the best known upper bounds are from tomography as $O(4^n)$~\cite{haah2016sample,o2016efficient}. Note that similarly one finds that efficient testing requires $\text{rank}(\rho)=O(\text{poly}(n))$.
In Fig.~\ref{fig:testing}, we summarize the currently known copies complexities of property testing for magic, coherence and entanglement as function of purity.

\prlsection{Black-box resource distillation}
Another key task in quantum information is to generate special resource states that are useful for information processing.
For example, quantum teleportation needs entangled Bell states~\cite{bennett1996concentrating,bennett1996purification,bennett1996mixed}, universal quantum computing uses magical $T$-states~\cite{bravyi2005universal} and creating quantum superpositions requires coherence~\cite{winter2016operational,fang2018probabilistic,regula2018one}. However, generating these resource states directly is often difficult. This necessitates resource distillation, which takes in many copies of a noisy resource state, and applies the free operations of the resource theory to prepare one noise-free resource state.

Commonly, resource distillation is considered for the case where the noisy input state is known beforehand. Indeed, for classes of mixed states limits on distillation have been shown~\cite{krastanov2019optimized,fang2020no,marvian2020coherence,fang2022no}. However, often noise and input states are not well characterized. Yet, if we are guaranteed that these noisy states are resourceful, can one still extract noise-free resource states from them? 
Such an algorithm that distills resources from unknown, but resourceful states is called black-box resource distillation~\cite{gu2023little}:
\begin{definition}[Black-box resource distillation]\label{def:blackbox}
A black-box resource distillation algorithm $\mathcal{D}_Q$ uses the free operations of resource theory $Q$ on arbitrary input state $\rho$, which is guaranteed to contain resource $Q(\rho)\geq Q_\text{in}$. $\mathcal{D}_Q$ prepares pure resource state $\ket{\psi}$ with $Q(\ket{\psi})\geq Q_\text{out}$, where $\ket{\psi}$ can depend on $\rho$.
\end{definition}
Here, we show that black-box resource distillation of entanglement, magic and coherence is inefficient:
\begin{theorem}[Inefficiency of black-box resource distillation]\label{thm:distill}
Any black-box resource distillation algorithm $\mathcal{D}_Q$ for resource monotones $Q=\{E,\text{LR},C\}$ with $Q_\text{in}=\Theta(n)$, $Q_\text{out}=\Omega(n^{-c})$ and $c>0$, requires a superpolynomial number of input states $\rho$.
\end{theorem}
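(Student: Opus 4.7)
The plan is to reduce the black-box distillation task to distinguishing PRDMs from the maximally mixed state, which is precluded by Thm.~\ref{thm:PRDM_indisting}. Suppose for contradiction that for some resource $Q\in\{E_\text{D},\text{LR},C\}$ there is a black-box distiller $\mathcal{D}_Q$ that succeeds on inputs with $Q_\text{in}=\Theta(n)$ and produces a pure output with $Q_\text{out}=\Omega(n^{-c})$ using only $N=\operatorname{poly}(n)$ copies of the input.

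First, I would instantiate the distiller on two carefully chosen input ensembles. On the one hand, feed $\mathcal{D}_Q$ the PRDM ensemble $\{\rho_{k,m}\}$ of Fact~\ref{thm:PRDMSconstruct} with $m=\operatorname{polylog}(n)$: by Thm.~\ref{thm:propGHSE} together with the computational indistinguishability of PRDMs from the GHSE, with high probability over $k$ the state $\rho_{k,m}$ meets the promise $Q(\rho_{k,m})=\Theta(n)\geq Q_\text{in}$, so the output $\ket{\psi_k}\defeq\mathcal{D}_Q(\rho_{k,m}^{\otimes N})$ satisfies $Q(\ket{\psi_k})\geq Q_\text{out}$. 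On the other hand, feed $\mathcal{D}_Q$ copies of $\sigma_n=I_n/2^n$. For every one of the three theories, $\sigma_n$ is a free state (it is separable across any cut, a uniform mixture of stabilizer states, and diagonal in the computational basis), and since $\mathcal{D}_Q$ uses only free operations of $Q$, the resulting pure output $\ket{\varphi}\defeq\mathcal{D}_Q(\sigma_n^{\otimes N})$ must again be free, i.e.\ $Q(\ket{\varphi})=0$.

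Next, I would compose $\mathcal{D}_Q$ with the known $O(1)$-copy pure-state property testers for entanglement, magic and coherence~\cite{ekert2002direct,bouland2022quantum,gross2021schur,haug2022scalable,haug2023efficient,haug2023pseudorandom} to distinguish the two possible outputs $\ket{\psi_k}$ vs $\ket{\varphi}$. Running $\mathcal{D}_Q$ a polynomial number of times to produce multiple i.i.d.\ copies of its output and amplifying the tester's gap, one obtains an efficient quantum algorithm that, given $\operatorname{poly}(n)$ copies of an unknown state promised to be either a PRDM or $\sigma_n$, outputs the correct label with non-negligible advantage. This directly contradicts Thm.~\ref{thm:PRDM_indisting}, completing the proof.

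The main obstacle is making the second stage of the distinguisher quantitative: the pure-state tester must separate $Q=0$ from $Q\geq Q_\text{out}=\Omega(n^{-c})$, and in general the $O(1)$-copy testers cited above are phrased for constant (or $\Omega(n)$) gaps. I expect this to be handled by running the tester on $\operatorname{poly}(n^{c})$ copies of the distilled output and using standard completeness/soundness amplification, which preserves efficiency. A second mild subtlety is that for $Q=E_\text{D}$ the bipartition must be fixed; since by Thm.~\ref{thm:propGHSE} GHSE-derived PRDMs have $\Theta(n)$ distillable entanglement across essentially any balanced cut, one can simply fix the cut used by $\mathcal{D}_Q$ and by the tester to the same partition. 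Finally, Cor.~\ref{thm:puritybound} (the purity lower bound) would follow by the same reduction upon noting that any distiller restricted to inputs of purity $\operatorname{negl}(n)$ would still be fooled by the PRDM/$\sigma_n$ pair.
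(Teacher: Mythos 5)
There is a genuine gap, on two counts. First, your reduction targets the wrong notion of indistinguishability. The theorem is a sample-complexity statement: by Def.~\ref{def:blackbox} a black-box distiller $\mathcal{D}_Q$ is only required to use free operations, not to be computationally efficient, and the claim is that \emph{any} such distiller needs superpolynomially many input copies. Your contradiction is with Thm.~\ref{thm:PRDM_indisting}, which is computational indistinguishability; but the distinguisher you build runs $\mathcal{D}_Q$ as a subroutine, so it is only a polynomial-time algorithm if $\mathcal{D}_Q$ happens to be efficient. For an inefficient (but sample-efficient) distiller there is no contradiction with Thm.~\ref{thm:PRDM_indisting} at all, and even in the efficient case you would only conclude ``no efficient distiller with polynomially many copies,'' which is weaker than the stated result. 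The paper instead feeds the distiller the GHSE $\eta_{n,m}$ with $m=\operatorname{polylog}(n)$ versus the maximally mixed state and invokes the \emph{statistical} indistinguishability bound \eqref{eq:indist_ghse} of Thm.~\ref{thm:propGHSE}: since the combined procedure (distill, then run the $O(1)$-copy pure-state tester on the output) would distinguish two ensembles whose $t$-copy trace distance is $O(t^2/2^m)$, it must consume superpolynomially many copies regardless of its computational power.

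Second, the step where you argue that the PRDM $\rho_{k,m}$ of Fact~\ref{thm:PRDMSconstruct} satisfies the promise $Q(\rho_{k,m})=\Theta(n)$ ``by Thm.~\ref{thm:propGHSE} together with the computational indistinguishability of PRDMs from the GHSE'' is not valid: resource monotones are not preserved under computational indistinguishability—this is precisely the pseudoresource phenomenon, and the maximally mixed state itself is computationally indistinguishable from the GHSE for $m=\omega(\log n)$ while having zero resources. If you want to run the argument on the PRDM ensemble you must prove its resource content directly for the explicit construction, as the paper does separately in the pseudomagic, pseudocoherence and pseudoentanglement appendices (\SM{}~\ref{sec:pseudomagic}, \ref{sec:pseudocoherence}, \ref{sec:pseudoentanglement}); that route yields the computational variant the paper mentions in passing, not Thm.~\ref{thm:distill} as stated. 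The remainder of your argument—monotonicity forcing a free output on $I_n/2^n$, the $O(1)$-copy testers on the pure outputs, and fixing the bipartition for $E_\text{D}$—matches the paper's proof and is fine once the input ensemble is switched to the GHSE.
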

This is proven by contradiction in \SM{}~\ref{sec:magic}, \ref{sec:coherence}, \ref{sec:entanglement}: if black-box distillation algorithms exist, they could be used for efficient property testing which contradicts Thm.~\ref{thm:inefficient_testing}. This argument can be extended even to probabilistic black-box distillation algorithms, which prepare the resource state with non-negligible probability. 

As we have done for property testing, we can also place a lower bound on the purity required for efficient black-box resource distillation:
\begin{corollary}[Purity is necessary for black-box distillation]\label{thm:distill_purity}
Any efficient black-box resource distillation $D_Q$ with $Q=\{E,\text{LR}, C\}$ requires input states $\rho$ with purity $\text{tr}(\rho^2)=\Omega(n^{-c})$ to prepare non-trivial resource states.  
\end{corollary}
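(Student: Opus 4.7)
The plan is to reduce this corollary to the purity lower bound for property testing (Corollary~\ref{thm:puritybound}), by essentially the same contradiction argument that proves Theorem~\ref{thm:distill}, but keeping track of purity rather than sample complexity.

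First, I would suppose for contradiction that there exists an efficient black-box distillation algorithm $\mathcal{D}_Q$ for $Q \in \{E_\text{D}, \text{LR}, C\}$ that accepts input states $\rho$ of negligible purity $\operatorname{tr}(\rho^2) = \negl(n)$ satisfying $Q(\rho) \geq Q_{\text{in}} = \Theta(n)$, and produces a pure output $\ket{\psi}$ with $Q(\ket{\psi}) = \Omega(n^{-c})$ using only $\operatorname{poly}(n)$ copies. I would then build an efficient tester distinguishing $Q(\rho) = 0$ from $Q(\rho) = \Theta(n)$ on such low-purity inputs: given $\operatorname{poly}(n)$ copies of $\rho$, run $\mathcal{D}_Q$ on a block of copies to obtain $\sigma$, and apply the known efficient pure-state property tester for resource $Q$ on $\sigma$.

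For correctness, I would split into two cases. If $Q(\rho) = \Theta(n)$, the promise of $\mathcal{D}_Q$ is satisfied, so $\sigma$ is pure with $Q(\sigma) = \Omega(n^{-c})$ and the pure-state tester accepts with high probability. If $Q(\rho) = 0$, then $\rho$ is a free state for each of the three resource theories (separable across the chosen bipartition, stabilizer-mixture, or diagonal in the computational basis). Because black-box distillation is defined to use only free operations $F_Q$, and free operations preserve the free set, the output $\sigma = F_Q(\rho)$ is itself a free state with $Q(\sigma) = 0$, so the tester rejects. The whole procedure uses $\operatorname{poly}(n)$ copies of a negligible-purity state, which contradicts Corollary~\ref{thm:puritybound}; hence no such $\mathcal{D}_Q$ exists, proving that efficient black-box distillation of non-trivial resources requires $\operatorname{tr}(\rho^2) = \Omega(n^{-c})$.

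The main obstacle I expect is making sure the pure-state subroutine can distinguish $Q(\sigma) = \Omega(n^{-c})$ (inverse polynomial) from $Q(\sigma) = 0$, not just the standard $\Theta(n)$ vs.\ $0$ gap. For magic and coherence this should be straightforward, since the $O(1)$-copy testers (Bell sampling against stabilizers, computational basis measurements against incoherent states) detect any nonzero deviation from the free set, which can be amplified with $\operatorname{poly}(n)$ repetitions to achieve inverse-polynomial sensitivity. For distillable entanglement a SWAP test across the target bipartition, amplified by $\operatorname{poly}(n)$ copies, plays the analogous role. Once this amplification step is in place, the reduction goes through and the purity bound follows immediately.
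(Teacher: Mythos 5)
Your reduction is essentially the paper's own argument: the corollary is obtained by noting that the hard instances in the inefficiency-of-distillation proof (GHSE with $m=\omega(\log n)$ versus the maximally mixed state) have negligible purity, and that a distiller combined with an efficient test on its output, plus monotonicity of $Q$ under free operations, would distinguish these ensembles, contradicting the testing/indistinguishability lower bound. The only detail the paper makes explicit that you leave implicit is that the free output produced from the zero-resource input may be \emph{mixed}, which it handles by adding a SWAP (purity) test alongside the pure-state resource test — a minor patch that fits naturally into your amplification step.
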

Similarly, efficient black-box distillation requires input states with at most $\text{rank}(\rho)=O(\text{poly}(n))$.

\prlsection{Memoryless PRS}
Finally, we revisit the question of noise-robustness of PRS. So far, PRS and PRDMs were defined via indistinguishability with respect to arbitrary efficient algorithms. However, in many practical scenarios the distinguishing algorithm has even stronger limitations on its computational power. We show that such restricted algorithms allow one to define a weaker notion of PRS that is noise robust.
In particular, we propose memoryless PRS, which are indistinguishable from Haar random states for any efficient algorithms which have no access to quantum memory~\cite{aharonov2021quantum,chen2022exponential}. Algorithms without quantum memory can only perform measurements on a  single copy at a time, and do not have a register to store quantum information between measurements. However, they can adaptively choose the measurement depending on the previous measurement outcomes. 
\begin{theorem}[Noise robustness of memoryless PRS (\SM{}~\ref{Pseudorandomness without quantum memory})]\label{thm:memorylessPRS}
    Algorithms without quantum memory require superpolynomial number of copies to distinguish memoryless PRS subject to unital noise channels and Haar random states.
\end{theorem}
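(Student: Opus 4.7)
The plan is to reduce noise robustness to the defining indistinguishability of memoryless PRS by pushing the noise channel onto the measurement side via duality, leaving one remaining ``noisy vs.\ clean Haar'' comparison as the technical core.

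First, I would model any efficient memoryless distinguisher $\mathcal{A}$ on $t=\mathrm{poly}(n)$ copies as an adaptive decision tree whose nodes are single-copy POVMs $\{M_{i,j}\}_j$ and whose edges are labeled by outcomes stored in classical memory. The output distribution of $\mathcal{A}$ on an input ensemble $\{\sigma\}$ is fully determined by the branch probabilities $\tr(\sigma\,M_{i,j})$, so the full distinguishing task reduces to controlling such linear functionals when the input is either $\{\Gamma(\rho_k)\}_k$ or $\{\ket{\psi}\bra{\psi}\}_{\psi\sim\mu_n}$.

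Second, I would invoke the adjoint identity $\tr(\Gamma(\rho)\,M)=\tr(\rho\,\Gamma^{\dag}(M))$. Since $\Gamma$ is unital CPTP, $\Gamma^{\dag}$ is also CPTP and unital, so $\{\Gamma^{\dag}(M_{i,j})\}_j$ is still a valid POVM; equivalently, $\mathcal{A}$ acting on $\Gamma(\rho)$ is realized by the composite algorithm $\mathcal{A}\circ\Gamma^{\otimes t}$ that pre-applies $\Gamma$ to each copy using a transient ancilla which is immediately discarded, preserving the no-quantum-memory property. By the triangle inequality, the advantage of $\mathcal{A}$ on $(\Gamma(\rho_k),\ket{\psi}\bra{\psi})$ is bounded by the advantage of $\mathcal{A}\circ\Gamma^{\otimes t}$ on $(\rho_k,\ket{\psi}\bra{\psi})$, plus the advantage of $\mathcal{A}$ on $(\Gamma(\ket{\psi}\bra{\psi}),\ket{\psi}\bra{\psi})$. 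The first term is $\operatorname{negl}(n)$ by the defining property of memoryless PRS applied to the memoryless algorithm $\mathcal{A}\circ\Gamma^{\otimes t}$.

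The main obstacle will be bounding the second term, namely that clean and noisy Haar ensembles are indistinguishable to memoryless algorithms with polynomially many copies. Unitality gives matching single-copy marginals $\mathbb{E}_\psi[\Gamma(\ket{\psi}\bra{\psi})]=\Gamma(I_n/2^n)=I_n/2^n=\mathbb{E}_\psi[\ket{\psi}\bra{\psi}]$, yet adaptive single-copy branching can in principle access higher-order ensemble moments such as purity, which do differ between clean and noisy Haar; this is precisely where the SWAP test fails, since purity estimation requires quantum memory. To close this gap I would express each branch probability as a linear functional of the $t$-copy ensemble moment, compare $\mathbb{E}_\psi[\Gamma(\ket{\psi}\bra{\psi})^{\otimes t}]=\Gamma^{\otimes t}(\Pi_{\mathrm{sym}}^{(t)})/\binom{2^n+t-1}{t}$ with $\mathbb{E}_\psi[\ket{\psi}\bra{\psi}^{\otimes t}]=\Pi_{\mathrm{sym}}^{(t)}/\binom{2^n+t-1}{t}$, and exploit the fact that memoryless (adaptive product) POVM elements lie in a restricted subspace on which these two symmetric-subspace moments act nearly identically; quantitatively, I expect a Weingarten / symmetric-subspace calculation to yield a trace-norm mismatch of $O(\mathrm{poly}(t)/2^n)=\operatorname{negl}(n)$ for $t=\mathrm{poly}(n)$, completing the argument.
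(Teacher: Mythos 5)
Your high-level architecture mirrors the paper's hybrid argument: model the memoryless distinguisher as an adaptive tree of single-copy POVMs, absorb the (efficiently implementable) unital channel into the algorithm so that the noisy-PRS-versus-noisy-Haar term is killed by the memoryless-PRS definition applied to $\mathcal{A}\circ\Gamma^{\otimes t}$ (this is the paper's Hybrid $H_1$ vs $H_2$ step, and your observation that $\Gamma^\dagger$ is again CPTP and unital, so $\{\Gamma^\dagger(M_{i,j})\}_j$ is a POVM, is correct), and reduce everything to showing that noisy Haar and clean Haar ensembles are indistinguishable without quantum memory. The genuine gap is precisely in that last reduction, which you yourself flag as the technical core but then settle with an expectation rather than an argument: you claim a Weingarten/symmetric-subspace computation should give a ``trace-norm mismatch of $O(\mathrm{poly}(t)/2^n)$'' between $\E_\psi[\Gamma(\ket{\psi}\bra{\psi})^{\otimes t}]$ and $\E_\psi[\ket{\psi}\bra{\psi}^{\otimes t}]$ on the relevant restricted class. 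Unrestricted, this is simply false --- already for $t=2$ and strong unital noise the trace distance between these moment operators is $\Theta(1)$ (the SWAP test sees it), so the entire burden lies in how the restriction to adaptive product measurements is exploited; and a per-leaf operator bound cannot just be summed, since the decision tree has exponentially many leaves.

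The missing idea, and the route the paper actually takes, is to compare each ensemble to the maximally mixed state rather than to each other, because the maximally mixed state's leaf probabilities factorize exactly, which is what makes a one-sided likelihood-ratio (Le Cam) bound over the adaptive tree tractable. Concretely, the paper shows $\E_\psi\,p^{\Gamma(\psi)}(\ell)/p^{I/2^n}(\ell)\geq 1-O(T^2/2^n)$ for every leaf $\ell$, using unitality so that $\Gamma^\dagger$ maps the rank-one POVM elements to convex mixtures of pure states, the Haar $T$-th moment $\propto\sum_{\pi\in S_T}\hat{\pi}$, and the bound $\sum_{\pi}\operatorname{tr}[\hat\pi(\otimes_i\ket{\psi_i}\bra{\psi_i})]\geq 1$; it then invokes the Chen--Cotler--Huang--Li result that clean Haar versus maximally mixed requires $\Omega(2^{n/2})$ copies without quantum memory, and concludes noisy Haar $\approx$ clean Haar by the triangle inequality. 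Your direct two-ensemble comparison avoids this intermediary, but then neither leaf distribution factorizes and the likelihood-ratio trick has no obvious analogue, so as written the crucial lemma is unproven. If you route your final step through the maximally mixed state --- which is exactly where unitality buys you both the fixed point and the factorizing reference distribution --- your proposal collapses onto the paper's proof; otherwise you would need to supply a genuinely new adaptive-measurement lower-bound argument, which the sketch does not contain. (A minor additional point: your first reduction needs $\Gamma$ to be efficiently implementable so that $\mathcal{A}\circ\Gamma^{\otimes t}$ is an admissible distinguisher, matching the assumption in the paper's computational version of the theorem.)
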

We give the formal definition of memoryless PRS and the proof of Thm.~\ref{thm:memorylessPRS} in \SM{}~\ref{Pseudorandomness without quantum memory} using the techniques of Ref.~\cite{chen2022exponential}. In particular, without quantum memory one cannot distinguish Haar random states and maximally mixed state, which we use to establish noise robustness. Memoryless PRS can be prepared by the same state preparation algorithms as for PRS. However, potentially easier state constructions with non-trivial quantum resources may exist.
No access to quantum memory can naturally occur in different scenarios of quantum information processing. Here, we give three explicit examples: First, in a distributed communication scenario, each agent receives only one copy of the state and can only communicate classical information with other agents. 
Second, in  near-term quantum computing, the number of available qubits is often limited. In the case where the state has almost as many qubits as supported by the near-term quantum computer, then it only can perform memoryless algorithms.
Third, we can also consider the case in which the distinguishing algorithm receives one copy of the memoryless PRS at a time, with a time interval longer than the coherence time of the quantum register storing the state. These correspond to algorithms without quantum memory. 
In these scenarios, our weaker notion of memoryless PRS is already sufficient for practical applications, which could potentially be easier to prepare in experiments than PRS or PRDMs.
Note that any memoryless PRS is also a single-copy PRS, where the distinguisher algorithm is given only a single copy the state~\cite{morimae2022quantum,chen2024power}, while there are single-copy PRS which are not memoryless PRS (see \SM{}~\ref{Pseudorandomness without quantum memory}).

Finally, in \SM{}~\ref{Pseudorandomness with noisy quantum memory}, we also show that when the distinguisher algorithm has access to bad quantum memory, i.e. quantum memory subject to noise, this does not yield any notion of noise-robust PRS.

\revA{\prlsection{Noise-robust quantum money}
The standard definition of PRS is not noise-robust, however does this imply that  applications of PRS do not work when subject to noise? 
An application of PRS is private-key quantum money~\cite{ji2018pseudorandom}. Here, each quantum banknote is a PRS issues by the bank. As PRS are indistinguishable from Haar random states, a counterfeiter cannot create more banknotes, yet the bank can efficiently verify the banknotes as it knows to the secret key. 
However, does this scheme still work when the banknote is subject to noise?

We find that the original quantum-money based on PRS by Ref.~\cite{ji2018pseudorandom} has negligible soundness error (i.e. banknotes cannot be counterfeited) even when subject to noise.
However, the scheme by Ref.~\cite{ji2018pseudorandom} has a completeness error that increases with noise, i.e. the bank is likely to reject quantum banknotes with increasing noise. This happens because the bank verifies the banknote by a projection, which with noise is likely to fail.

To solve this, we propose a noise-robust private-key quantum money based on PRS which is combined with an improved completeness amplification scheme (see SM~\ref{sec:quantummoney}).
Completeness amplification was original proposed in Ref.~\cite{aaronson_quantum_money_2012_arxiv} to preserve the value of quantum money by composing banknotes of multiple smaller banknotes. However, we find that the original completeness amplification scheme as described in the arXiv version of Ref.~\cite{aaronson_quantum_money_2012_arxiv} has a critical vulnerability: By embezzling smaller banknotes from multiple composite banknotes, one can create a counterfeit composite banknote. This is possible since in the original scheme, every smaller banknote has its own serial number, allowing one to extract valid small banknotes from the composite while preserving its value. 

We propose a modified completeness amplification protocol that is secure against such attacks. We achieve this by assigning a single serial number $s$   to the composite banknote given as
\begin{equation}
    \$=(s,\ket{\psi_{q_1}},\ket{\psi_{q_2}},\dots,\ket{\psi_{q_L}})\,,
\end{equation}
where $\ket{\psi_{q_i}}$ are in total $L=\text{poly}(n)$ PRS and $q_1,\dots,q_L=f_k(s)$ are generated from a quantum-secure pseudorandom function.
Our banknotes retain their value under noise, with a completeness error $\epsilon=\text{negl}(n)$ for any noise channel $\Gamma(.)$ which has fidelity $F(\Gamma(\ket{\psi_{q_i}}))>1/2+1/\text{poly}(n)$ for all $i=1,\dots,L$. 
We find security against aforementioned embezzling attack and attacks involving single banknotes.
While we do not provide a proof of security against all possible attacks involving many banknotes, we believe generic security is likely to hold and leave the formal proof as an open problem.
}

\prlsection{Discussion} 
We have introduced PRDMs as ensembles of quantum states which are indistinguishable from the GHSE by any efficient algorithm, generalizing the concept of PRS to mixed states. Here, the GHSE is the ensemble induced by the $m$-qubit partial trace over $n+m$ qubit Haar random states. 
PRDMs with $m=0$ correspond to PRSs, while PRDMs with $m=\omega(\log n)$ are indistinguishable from the maximally mixed states for any efficient algorithm. 
Further, we show that PRDMs with $m=\omega(\log n)$ are robust to arbitrary unital noise, in contrast to PRS.

PRDMs are a non-trivial generalization as for $m=\omega(\log n)$ they appear to any efficient observer as trivial states, yet can contain a near-maximal amount of entanglement, magic and coherence. \revA{These resources can only extracted when knowing the key of the PRDM.}
With PRDMs, we can completely hide quantum resources in plain sight, where we find a class of \revA{pseudoentangled}, pseudomagic and pseudocoherent ensembles with maximal resource gap of $f(n)=\Theta(n)$ vs $g(n)=0$. 
Pseudoresource ensembles based on PRDMs possess larger gaps than the ones based on PRS, which are limited to $f(n)=\Theta(n)$ vs $g(n)=\omega(\log n)$~\cite{bouland2022quantum,haug2023pseudorandom,gu2023little}. For pure states, one can only hide resources up to an amount of $\omega(\log n)$ due to the existence of efficient property testing algorithms, while PRDMs allow one to hide the asymptotically maximal possible amount of resources.

We show that PRDMs yield noise-robust EFI pairs~\cite{brakerski2022computational}, which remain EFI pairs even when mixed unitary channels are applied. This includes local depolarizing noise, the most commonly used noise model for noisy-intermediate scale quantum computers~\cite{bharti2021noisy,chen2023complexity} and quantum error correction. 
Curiously, EFI pairs can tolerate error rates of up to $p\approx 1/4$, which is higher than the corresponding threshold of surface codes~\cite{bombin2012strong}. Thus, EFI pairs can potentially exist even in noise regimes where quantum error correction fails.
Further, high noise robustness opens up potential application of EFI pairs on near-term quantum computers~\cite{bharti2021noisy}. 
Our proof of noise-robustness is limited to PRDMs with security parameter $\kappa<n$, which future work could improve upon.  The concept of a quantum one-way state generator, similar to the classical one-way function, has been proposed recently~\cite{morimae2022one} and shown to be equivalent to quantum commitments~\cite{batra2024commitments}. Furthermore, quantum commitments are known to be equivalent to EFI~\cite{brakerski2022computational,yan2022general}. Since PRDMs yield noise robust EFI pairs, it would be interesting to explore if the noise robustness can be translated to quantum one way state generators and quantum commitment schemes.

Our work contributes to several new understandings on learning properties of quantum states. Cryptography and machine learning are often seen as opposites~\cite{rivest1991cryptography,kearns1994cryptographic}: cryptography hides patterns, while machine learning aims to reveal them. This connection between cryptography and machine learning can establish fundamental bounds on property testing. 
For pure states, pseudoresources imply that one can efficiently test whether a state has $O(1)$ or $\omega(\log n)$ of a given resource~\cite{bouland2022quantum, gu2023little, haug2023pseudorandom}, where efficient testing for entanglement~\cite{ekert2002direct,bouland2022quantum}, magic~\cite{gross2021schur,haug2022scalable,haug2023efficient} and coherence~\cite{haug2023pseudorandom} has been demonstrated.
However, we find that there is a fundamental difference whether one tests pure state or general mixed states: For mixed states testing resources is not efficient, as it requires superpolynomially many copies. This follows from the existence of resource-rich mixed states which are indistinguishable from the resource-free maximally mixed state.
Entanglement, magic and coherence are the most fundamental resources that separate the quantum and classical world, and directly impact most applications of quantum technologies. Yet, it turns out that these quantum resources are in general not physically observable even when having access to polynomial number of copies and performing arbitrary measurements. 
We show that purity is a fundamental condition needed for testing: States with negligible purity cannot be efficiently tested, while efficient tests are known for pure states. It remains an open question whether efficient tests can exist for inverse polynomial purity. It would also be interesting to explore testing of physical many-body quantum systems such as Gibbs states.

\revA{We show that black-box distillation, i.e. distilling pure resource states from noisy states, is inefficient: %
Entanglement, magic and coherence are inefficient for black-box distillation, requiring a superpolynomial number of input states.}
This implies that efficient distillation of these resources always requires specific knowledge about the input state.  In particular, we show that to efficiently perform black box distillation, one must guarantee that the state has non-negligible purity.

We note that the definition of random mixed states is not unique, in contrast to pure states~\cite{Zyczkowski_2011, Zyczkowski_2001,sarkar2019bures}. 
\revA{For example, as an alternative to GHSE, one could consider the ensemble of random $n$-qubit mixed states $\rho$ with $\text{rank}(\rho)=2^m$ and identical eigenvalues. However, it has been shown that this  ensemble is statistically indistinguishable from the GHSE~\cite{haug2024pseudorandom}. Thus, both ensembles yield the same definition of PRDM.
}

\revA{While PRS require negligible noise, we show that applications of PRS can still function even under noise. In particular, we propose a noise-robust private key quantum money schemes based on PRS~\cite{ji2018pseudorandom} via completeness amplification~\cite{aaronson_quantum_money_2012_arxiv}. The quantum money is secure and is accepted by the bank as long as the noisy PRS has at least 50\% fidelity with the noise-free PRS. 
As a technical contribution of importance beyond our work, we show that the completeness amplification scheme of Ref.~\cite{aaronson_quantum_money_2012_arxiv} can be broken by an embezzling attack, while our modified scheme is secure against such attacks (\SM{}~\ref{sec:quantummoney}). Our modified scheme is also relevant for quantum bitcoins~\cite{jogenfors2019quantum} and other quantum money schemes~\cite{sano2022quantum}.}
We also define memoryless PRS, which are secure to observers without quantum memory. Restrictions to observers without quantum memory can be found in communication scenarios or noisy quantum computers with limited qubit number. We show that these memoryless PRS are robust to unital noise.

\revA{We also prove a separation in security between PRS and PRDM, where PRDMs are secure to a strictly larger class of (inefficient) attacks than PRS. PRS are known to be vulnerable to $\mathsf{PostBQP}$ attacks~\cite{kretschmer2021quantum}, whether such attacks also exist for PRDMs is left as an open question. }

\revA{We note that a follow-up work introduced the notion of verifiable PRDM (VPRDM). They are PRDMs where the correct state preparation can be efficiently verified~\cite{haug2024pseudorandom}, a useful property for many applications, such as encryption, authentication, or hiding quantum resources. For completeness, we give a short review on VPRDMs in Appendix~\ref{sec:VPRDM}. }

Future work can generalize our work to noisy pseudorandom unitaries~\cite{ji2018pseudorandom} and isometries~\cite{ananth2023pseudorandomisometry}. \revA{It would also be interesting to study potential applications of GSHE and PRDM for $m=O(\log n)$.}
Further, we leave proving the noise-robustness of EFI pairs to arbitrary unital noise, and improving the bounds on purity for property testing and black-box distillation as open problems. Finally, we believe PRDM and memoryless pseudorandomness can yield promising applications in cryptography, learning theory, resource theory and experimental demonstrations.

\begin{acknowledgments}
KB thanks Rahul Jain for interesting discussions.
This research is supported by A*STAR C230917003.
NB acknowledges INSPIRE-SHE scholarship by DST, India. The Institute for Quantum Information and Matter is an NSF Physics Frontiers Center. DEK acknowledges funding support from the Agency for Science, Technology and Research (A*STAR) Central Research Fund (CRF) Award.

\end{acknowledgments}

\bibliography{noise}

\let\addcontentsline\oldaddcontentsline

\appendix

\onecolumngrid
\newpage 

\setcounter{secnumdepth}{2}
\setcounter{equation}{0}
\setcounter{figure}{0}
\setcounter{section}{0}

\renewcommand{\thesection}{\Alph{section}}
\renewcommand{\thesubsection}{\arabic{subsection}}
\renewcommand*{\theHsection}{\thesection}

\clearpage
\begin{center}

\textbf{\large \SMLong{}}
\end{center}
\setcounter{equation}{0}
\setcounter{figure}{0}
\setcounter{table}{0}

\makeatletter

\renewcommand{\thefigure}{S\arabic{figure}}

We provide proofs and additional details supporting the claims in the main text.

\makeatletter
\@starttoc{toc}

\makeatother

\section{Extended motivation and outlook}\label{sec:motivation}

Research at the intersection of cryptography, quantum theory, information theory, and complexity theory has the potential to reveal many fascinating insights. Over time, these fields have evolved, leading to development of new fields such as quantum information (from quantum theory and information theory), quantum complexity theory (from quantum theory and complexity theory), and modern cryptography (from cryptography and complexity theory).

Historically, cryptography was considered an art. In 1948, Shannon introduced the concept of perfect secrecy~\cite{shannon1949communication}, showing that a one-time pad could achieve it. In his groundbreaking paper, ``A Mathematical Theory of Communication''~\cite{shannon1948mathematical}, Shannon proved that a perfectly secure communication system requires a pre-agreed random key bit for every bit of the message to ensure privacy. For secret communication, parties needed to share a key that adversaries must not know.  By the 1980s, researchers realized that an encryption scheme could still be secure if it leaked only a negligible amount of information to an attacker with limited computational power. This led to the concept of computational security~\cite{diffiehellman_1976, goldwasser1982probabilistic}, which considers the practical limits on an attacker's computing power and allows for a small chance of failure. This approach has become the standard for defining cryptographic security. The shift from information-theoretic security to computational security enabled public-key cryptography~\cite{diffiehellman_1976}, allowing secure communication between parties who have never met. Modern cybersecurity relies heavily on computational security, with computational complexity being central to ensuring security.

In the 1990s, the combination of quantum theory and information theory gave birth to the field of quantum information theory. Recently, the integration of cryptography, complexity theory, information theory, and quantum theory has led to new insights, such as the generation of certifiable true randomness~\cite{brakerski2021certifying}, quantum pseudorandom states and unitaries~\cite{ji2018pseudorandom}, homogeneous space pseudorandomness~\cite{arvind2023quantumtugwarrandomness}, pseudorandom isometries~\cite{ananth2023pseudorandomisometry}, pseudorandom state scramblers~\cite{lu2023quantum}, computational entanglement theory~\cite{arnon2023computational} and new cryptographic principles under minimal assumptions~\cite{kretschmer2021quantum}. These advancements have introduced pseudoresources like pseudoentanglement, pseudomagic, and pseudoimaginarity~\cite{bouland2022quantum, gu2023little, haug2023pseudorandom}. At this point, there is hope for establishing a new sub-field called ``modern quantum information theory,'' where the computational aspects of objects from quantum information theory are rigorously explored.  There is a potential of similar breakthrough as cryptography went through in 1980s. This calls for the development of new primitives, such as pseudorandom density matrices and pseudorandom channels, which extend our understanding beyond pseudorandom states, unitaries, isometries~\cite{ananth2023pseudorandomisometry} and scramblers~\cite{lu2023quantum}. Our work can be seen as a step toward realizing the vision of shaping this new field, viz.\ modern quantum information theory.

Finally, our work on pseudorandom density matrices contributes to several new insights in quantum property testing. Property testing algorithms can serve as a preliminary step towards learning, allowing for the efficient selection of a hypothesis class for further learning. Cryptography and machine learning are often seen as opposites~\cite{rivest1991cryptography,kearns1994cryptographic}: cryptography hides patterns, while machine learning aims to reveal them. This connection between cryptography and machine learning leads to improved understanding of property testing for quantum properties such as magic, coherence, and entanglement. Previous results focused solely on pure states as in Refs.~\cite{bouland2022quantum, gu2023little, haug2023pseudorandom} which turn out to be special cases of our results on general mixed states. In particular, there is a fundamental difference whether one tests a pure or state, or a general mixed states. While for pure states one can efficiently test whether a state has $\Omega(1)$ or $\omega(\log n)$ of a given resource, for general mixed states testing is inefficient for any amount of the resource.

\section{Definitions}\label{sec:definitions}
Negligible functions $\text{negl}(n)$ can be defined as follows:
\begin{definition}[Negligible function]
    A \revA{positive real-valued function $\mu:\mathbb \mathbb{N} \to \mathbb R$} is negligible if and only if $\forall c \in \mathbb{N}$, $\exists n_0 \in \mathbb{N}$ such that $\forall n > n_0$, $\mu(n) < n^{-c}$.
\end{definition}

A property tester $\mathcal{A}_Q$ regarding a property $Q$ has to fulfill two conditions~\cite{rubinfeld1996robust,goldreich1998property,buhrman2008quantum,montanaro2013survey}: Completeness and soundness. Completeness demands that the tester accepts with high probability if the state has the property within a threshold $\beta$. Soundness implies that the tester rejects with high probability if the state exceeds a threshold value $\delta$ of the property.

\begin{definition}[Property tester]\label{def:prop-tester}
An algorithm $\mathcal{A}_Q$ is a  tester for property $Q$ using $t=t(n,\delta,\beta)$ copies if, given $t$ copies of $n$-qubit quantum state $\rho$, constants $\beta>0$ and $\delta>\beta$, when the following holds:
\begin{itemize}
\item (Completeness) If $Q(\rho)\le \beta$, then
\begin{align}
\Pr[\mathcal{A}_Q\text{ accepts given }\rho^{\otimes t}]\geq\frac{2}{3}.
\end{align}
\item (Soundness) If $Q(\rho)\ge \delta$, then 
\begin{align}
\Pr[\mathcal{A}_Q\text{ accepts given $\rho^{\otimes t}$}]\leq\frac{1}{3}.
\end{align}
\end{itemize}
\end{definition}

\section{Generalized Hilbert-Schmidt ensemble}\label{sec:HS}
For pure states, the Haar measure is the unique left and right-invariant measure. In contrast, for mixed states there  is no universally agreed-upon measure on the set of density matrices. Several candidate measures have been studied~\cite{Zyczkowski_2011, Zyczkowski_2001,sarkar2019bures}, based on different induced measures. 
A common choice is to generate random density matrices by tracing out qubits from Haar random states. We call this the generalized Hilbert-Schmidt ensemble (GHSE):  
\begin{definition}[Generalized Hilbert-Schmidt ensemble (GHSE)]
    Let $\mu_{n+m}$ be the Haar measure on pure states of $n+m$ qubits, then $\eta_{n,m}$ is the induced measure on random density matrices whose states can be generated as:
    \begin{equation}
        \eta_{n,m}=\{\operatorname{tr}_m(\ket{\psi}\bra{\psi})\}_{\ket{\psi}\in\mu_{n+m}}.
    \end{equation}
\end{definition}
The GHSE is invariant under arbitrary unitary transformations, i.e. $\eta_{n,m}=\{U\operatorname{tr}_m(\ket{\psi}\bra{\psi})U^\dagger\}_{\ket{\psi}\in\mu_{n+m}}$~\cite{Nechita_2007}.
We note that the GHSE can also be generated by drawing random matrices from the Ginibre ensemble:
\begin{fact}[\cite{Nechita_2007}]
    Let $X$ be a $2^n \times 2^m$ matrix with entries independently and identically distributed according to $\mathcal{N}_{\mathbb{C}}(0, 1)$ (complex Gaussian). Then, then the matrix
    \begin{equation}
        \rho = \frac{XX^{\dagger}}{\tr XX^{\dagger}}
    \end{equation}
    is distributed according to $\eta_{n, m}$.
\end{fact}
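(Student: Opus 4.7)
The plan is to exploit the natural vec/mat isomorphism between the bipartite Hilbert space $\mathbb{C}^{2^n}\otimes\mathbb{C}^{2^m}$ and the space of $2^n\times 2^m$ complex matrices. First, I would expand a pure state $\ket{\psi}\in\mathbb{C}^{2^n}\otimes\mathbb{C}^{2^m}$ in the product basis as $\ket{\psi}=\sum_{i,j}M_{ij}\ket{i}\ket{j}$, identifying it with a matrix $M$. A direct computation then shows $\operatorname{tr}_m(\ket{\psi}\bra{\psi})=MM^\dagger$, and the normalization $\braket{\psi}{\psi}=1$ translates to $\operatorname{tr}(MM^\dagger)=\|M\|_F^2=1$. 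Crucially, this identification is an isometry between $\mathbb{C}^{2^{n+m}}$ (with the Euclidean inner product) and the matrix space (with the Hilbert--Schmidt inner product).

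Next, I would invoke the standard fact that a Haar random unit vector in $\mathbb{C}^{N}$ with $N=2^{n+m}$ can be sampled by drawing $N$ i.i.d.\ $\mathcal{N}_{\mathbb{C}}(0,1)$ entries and normalizing the resulting vector to unit length. This is because the i.i.d.\ complex Gaussian measure on $\mathbb{C}^N$ is unitarily invariant, and hence, after decomposing into radial and angular parts, its angular component coincides with the uniform (Haar) measure on the unit sphere. Transporting this through the isomorphism above, I can sample $\ket{\psi}\in\mu_{n+m}$ by taking a matrix $X\in\mathbb{C}^{2^n\times 2^m}$ with i.i.d.\ $\mathcal{N}_{\mathbb{C}}(0,1)$ entries and setting $\ket{\psi}=\sum_{i,j}(X_{ij}/\|X\|_F)\,\ket{i}\ket{j}$, since the entrywise i.i.d.\ Gaussian distribution on the matrix space is just the coordinate representation of the i.i.d.\ Gaussian distribution on the vector space.

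Combining the two observations, the partial trace computation yields
\begin{equation}
    \operatorname{tr}_m(\ket{\psi}\bra{\psi})=\frac{XX^\dagger}{\|X\|_F^2}=\frac{XX^\dagger}{\operatorname{tr}(XX^\dagger)},
\end{equation}
which is precisely the Ginibre construction of the fact. Since this holds for a Haar-distributed $\ket{\psi}$, the resulting density matrix is distributed according to $\eta_{n,m}$ by Def.~\ref{def:tracenensemble}, completing the argument.

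The only real obstacle is notational rather than conceptual: one must be careful that the vec/mat identification is consistent with the ordering of tensor factors so that tracing out the last $m$ qubits gives $MM^\dagger$ and not $M^\dagger M$ or a transposed variant, and one must verify that the entrywise i.i.d.\ complex Gaussian measure on $\mathbb{C}^{2^n\times 2^m}$ is indeed the pushforward of the i.i.d.\ complex Gaussian on $\mathbb{C}^{2^{n+m}}$ under the chosen isomorphism. Once these bookkeeping points are settled, the result reduces to the unitary invariance of the complex Gaussian together with the definition of the partial trace.
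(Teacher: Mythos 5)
Your argument is correct and complete: the paper itself states this fact without proof (citing Nechita), and your vec/mat identification $\ket{\psi}=\sum_{i,j}X_{ij}\ket{i}\ket{j}/\|X\|_F$, together with the unitary invariance of the i.i.d.\ complex Gaussian measure and the computation $\operatorname{tr}_m(\ket{\psi}\bra{\psi})=XX^\dagger/\operatorname{tr}(XX^\dagger)$, is exactly the standard argument underlying the cited result. The only cosmetic addition worth making is the remark that $X\neq 0$ almost surely, so the normalization is well defined; your bookkeeping on which tensor factor is traced (rows of size $2^n$ kept, columns of size $2^m$ traced, giving $XX^\dagger$ rather than $X^\dagger X$) is already handled correctly.
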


A special case of the GHSE is the Hilbert-Schmidt ensemble based on the Hilbert-Schmidt metric
\begin{equation}
    D_\text{HS}(\rho, \sigma) = [\tr(\rho - \sigma)^2]^{1/2}.
\end{equation}
This metric defines a product measure in the space of density matrices, consisting of two parts: the eigenvalues part and the eigenvectors part. The eigenvector distribution is just the Haar distribution.
The uniform random ensemble  $\mathcal{G}_\text{HS}$ induced by the Hilbert-Schmidt measure can be constructed by tracing out $n$ qubits from a $2n$ qubit Haar random state~\cite{Zyczkowski_2001,Zyczkowski_2011}
\begin{equation}
    \mathcal{G}_\text{HS}=\eta_{2n,n}\,.
\end{equation}

\section{Indistinguishability of maximally mixed state and generalized Hilbert-Schmidt ensemble} \label{sec:stat_indist}

Here, we show that the GHSE is statistically indistinguishable from the maximally mixed state.

\begin{theorem}[$t$-copy indistinguishability of GHSE from maximally mixed state] \label{def:stat_indisting_sup}
The trace distance between $t$ copies of maximally mixed state $I_n/2^n$ and $t$ copies of a state drawn from the GHSE $\eta_{n,m}$ is $O\left(t^2/2^m\right)$. For $t=\operatorname{poly}(n)$ and $m=\omega(\log n)$, the two ensembles are indistinguishable as the trace distance is $\operatorname{negl}(n)$.
\end{theorem}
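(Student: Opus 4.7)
The plan is to derive a closed-form expression for $\E_{\rho \sim \eta_{n,m}}[\rho^{\otimes t}]$ as a linear combination of permutation operators, and then bound its trace-norm distance from $(I_n/2^n)^{\otimes t}$ term by term via the triangle inequality. Throughout I set $d = 2^n$, $d' = 2^m$, and $D = dd'$.

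First I would invoke the standard $t$-copy Haar formula $\E_{\psi \sim \mu_{n+m}}[\ket{\psi}\bra{\psi}^{\otimes t}] = (D^{(t)})^{-1}\sum_{\pi \in S_t} P_\pi^{(D)}$, where $D^{(t)} = D(D+1)\cdots(D+t-1)$ is the rising factorial and $P_\pi^{(D)}$ is the permutation operator on $(\mathbb{C}^D)^{\otimes t}$. Under the factorization $\mathbb{C}^D \cong \mathbb{C}^d \otimes \mathbb{C}^{d'}$ the permutation splits as $P_\pi^{(D)} = P_\pi^{(d)} \otimes P_\pi^{(d')}$, so partially tracing the $m$-qubit subsystem on every one of the $t$ copies yields $\operatorname{tr}[P_\pi^{(d')}] = (d')^{c(\pi)}$, where $c(\pi)$ is the number of cycles of $\pi$. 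This produces the closed form
\begin{equation}
\E_{\rho \sim \eta_{n,m}}[\rho^{\otimes t}] = \frac{1}{D^{(t)}}\sum_{\pi \in S_t}(d')^{c(\pi)}\, P_\pi^{(d)}\,,
\end{equation}
and separating the identity-permutation piece (of coefficient $d'^t/D^{(t)}$) from the rest directly matches the structure of $\sigma^{\otimes t} = d^{-t}\, P_e^{(d)}$.

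Next I would apply the triangle inequality together with the unitary bound $\|P_\pi^{(d)}\|_1 = d^t$, and use the classical identity $\sum_{\pi \in S_t} x^{c(\pi)} = x^{(t)}$ to reduce the overall trace-norm bound to $|D^t/D^{(t)} - 1| + (d'^{(t)} - d'^t)/d'^t$. The elementary estimate $x^{(t)}/x^t = \prod_{i=0}^{t-1}(1+i/x) \leq \exp(t(t-1)/(2x)) \leq 1 + t^2/x$ (valid for $t^2 \leq x$), applied separately with $x = D$ and $x = d'$, yields a total trace-norm bound of $O(t^2/d') = O(t^2/2^m)$, and hence the claimed $\operatorname{TD} = O(t^2/2^m)$. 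Statistical indistinguishability for $t = \operatorname{poly}(n)$ and $m = \omega(\log n)$ is then immediate since $2^m$ is superpolynomial. The main quantitative subtlety is that the unitary bound $\|P_\pi^{(d)}\|_1 = d^t$ is otherwise quite lossy: the required suppression factor $1/d'$ does not come from any single permutation but emerges only after the rising-factorial identity $\sum_\pi (d')^{c(\pi)} = d'^{(t)}$ is used, which effectively collects the $O(t^2)$ near-identity permutations (those with $c(\pi) = t-1$) that dominate the deviation from the maximally mixed state.
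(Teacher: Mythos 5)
Your proposal is correct and follows essentially the same route as the paper: the same Haar $t$-th moment formula giving $\frac{1}{D^{(t)}}\sum_{\pi}(d')^{c(\pi)}P_\pi^{(d)}$ after the partial trace, the same split into identity versus non-identity permutations, the triangle inequality with $\|P_\pi^{(d)}\|_1=d^t$, and the rising-factorial identity $\sum_\pi (d')^{c(\pi)}=d'^{(t)}$ to obtain the $O(t^2/2^m)$ bound. The only difference is cosmetic: you control the resulting coefficients with the clean product bound $x^{(t)}/x^t\le \exp\bigl(t(t-1)/(2x)\bigr)$, whereas the paper expands them asymptotically with explicit $O(t^4/x^2)$ error terms.
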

\begin{proof}
Let us consider a bipartite pure state $\ket{\psi}_{AB}$ with subsystem dimensions $d_A$ and $d_B$ respectively. We construct the $t$-th moment of the GHSE as
\begin{equation}
    \rho_{\text{GHS}}^{(t)} = \mathbb{E}_{\psi \in \text{Haar}(d_A d_B)} \left[ (\text{tr}_B(\ket{\psi}\bra{\psi}))^{\otimes t} \right] = \frac{(d_A d_B - 1)!}{(d_A d_B + t - 1)!} \sum_{\pi \in S_t} d_B^{\text{cycles}(\pi)} \hat{\pi}_A.
\end{equation}
$S_t$ is the symmetric group of degree $t$, and $\hat{\pi}_A$ is the permutation unitary operator acting on the $t$ copies of subsystem $A$ associated with the element $\pi$ of the symmetric group. The function $\text{cycles}(\pi)$ counts the number of cycles in the permutation $\pi \in S_t$. In our context of the GHSE $\eta_{n,m}$, $d_A = 2^n$ and $d_B = 2^m$. The trace distance $\text{TD}(x,y)=\frac{1}{2}\Vert x-y\Vert_1$ between $\rho_{\text{GHS}}^{(t)}$ and the maximally mixed state is
\begin{equation}
\begin{split}
    \text{TD}&\left(\rho_{\text{GHS}}^{(t)}, \frac{I_A^{\otimes t}}{(d_A)^t}\right) = \text{TD}\left( \frac{1}{(d_A d_B)^t} \left(1 - \frac{t(t-1)}{2d_A d_B} + O\left(\frac{t^4}{d_A^2 d_B^2}\right) \right) \sum_{\pi \in S_t} d_B^{\text{cycles}(\pi)} \hat{\pi}_A , \frac{I_A^{\otimes t}}{(d_A)^t} \right) \\&= \frac{1}{2}\left\Vert\frac{1}{(d_A)^t}\left(-\frac{t(t-1)}{2d_Ad_B} + O\left(\frac{t^4}{d^2_Ad_B^2} \right)\right) I_A^{\otimes t} + \frac{1}{(d_A d_B)^t}\left(1-\frac{t(t-1)}{2d_A d_B} + O\left(\frac{t^4}{d_A^2 d_B^2} \right)\right) \sum_{\substack{\pi \in S_t \\ \pi \neq I}} d_B^{\text{cycles}(\pi)} \hat{\pi}_A \right\Vert_1  \\&\leq \frac{t(t-1)}{2d_Ad_B} + O\left(\frac{t^4}{d_A^2d_B^2}\right) + \frac{1}{(d_B)^t}\left(1-\frac{t(t-1)}{2d_A d_B} + O\left(\frac{t^4}{d_A^2 d_B^2} \right)\right) \sum_{\substack{\pi \in S_t \\ \pi \neq I}} d_B^{\text{cycles}(\pi)} \\&=\frac{t(t-1)}{2d_Ad_B} + O\left(\frac{t^4}{d_A^2d_B^2}\right) + \frac{1}{(d_B)^t}\left(1-\frac{t(t-1)}{2d_A d_B} + O\left(\frac{t^4}{d_A^2 d_B^2} \right)\right) \left( \frac{(d_B+t-1)!}{(d_B-1)!} - (d_B)^t \right) \\&= \frac{t(t-1)}{2d_Ad_B} + O\left(\frac{t^4}{d_A^2d_B^2}\right) + \left(1-\frac{t(t-1)}{2d_A d_B} + O\left(\frac{t^4}{d_A^2 d_B^2} \right)\right) \left(\frac{t(t-1)}{2 d_B} + O\left(\frac{t^4}{d_B^2}\right)\right) \\&= \frac{t(t-1)}{2d_B}\left(1 + \frac{1}{2d_A}\right) + O\left(\frac{t^4}{d_B^2}\right).
\end{split}
\end{equation}
In the second line, we split the sum over $S_t$ into the identity permutation and a sum over non-identity permutations. In the third line, we used the triangle inequality and the trace norm $\norm{I_A^{\otimes t}}_1 = \norm{\hat{\pi}_A}_1 = (d_A)^t$ since $\hat{\pi}_A$ is unitary. Substituting $d_A = 2^{n}$ and $d_B = 2^m$ gives the desired bound $O(t^2 / 2^m)$ for the trace distance.
\end{proof}
This result implies that if $t = \mathrm{poly}(n)$, then $m = \omega(\log n)$ suffices for the PRDM to be computationally indistinguishable from the maximally mixed state.

One can also consider a different notion of random mixed states which yields similar results: In particular, one can consider random states $\rho$ sampled from the space of $n$-qubit states with fixed rank $\text{rank}(\rho)=m$. This ensemble has been shown to be indistinguishable from the maximally mixed state for $m=\omega(\log n)$ in Ref.~\cite{childs2007weak} via the quantum collision problem, as well as in Ref.~\cite{wright2016learn} as the rank testing problem.

\section{Magic}\label{sec:magic}
Here, we show that the robustness of magic, a magic monotone, cannot be efficiently tested, as well as results regarding black-box magic state distillation. 

The robustness of magic $R(\rho)$ for a given state $\rho$ is defined via a linear optimization program~\cite{howard2017application,liu2022many}
\begin{equation}
R(\rho)=\min \vert c_\phi\vert \text{ s.t }  \rho=\sum_{\phi \in \text{STAB}}c_\phi \phi,
\end{equation}
where $\phi$ is from the set of $n$-qubit pure stabilizer states $\text{STAB}$.
Note that $R(\phi)=1$ for convex combination of stabilizer states, and else $R(\rho)>1$. Its upper bound is $R(\rho)\leq 2^n$. 

The sub-additive version of the robustness of magic is the log-free robustness of magic~\cite{liu2022many}
\begin{equation}
    \text{LR}(\rho)=\log\Big(\min \vert c_\phi\vert \text{ s.t }  \rho=\sum_{\phi \in \text{STAB}}c_\phi \phi\Big)
\end{equation}
with $0\leq \text{LR}(\rho)\leq n$.

First, we bound the magic of the GHSE. 
\begin{theorem}[Magic of GHSE]\label{thm:magic_tune}
Each state $\rho\in \eta_{n,m}$ of the GHSE $\eta_{n,m}$ has the log-free robustness $\text{LR}(\rho)\geq n-m-2\log(n) -1$ with overwhelming probability.
\end{theorem}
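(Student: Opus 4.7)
My strategy is to lower bound the robustness of magic via the dual LP characterization, relating it to the stabilizer fidelity $F_S(\rho)=\max_{\phi\in\text{STAB}}\langle\phi\vert\rho\vert\phi\rangle$ and the purity $\operatorname{tr}(\rho^2)$. First I would establish the deterministic inequality
\begin{equation}
R(\rho)\geq \frac{\operatorname{tr}(\rho^2)}{F_S(\rho)}.
\end{equation}
This follows from LP duality: $R(\rho)=\max\{\operatorname{tr}(W\rho):W=W^\dagger,\,\vert\operatorname{tr}(W\phi)\vert\leq 1\ \forall\phi\in\text{STAB}\}$. Taking the feasible witness $W=\rho/F_S(\rho)$ gives $\operatorname{tr}(W\rho)=\operatorname{tr}(\rho^2)/F_S(\rho)$, with $|\operatorname{tr}(W\phi)|\leq 1$ by the definition of $F_S$. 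Taking logarithms yields $\text{LR}(\rho)\geq \log\operatorname{tr}(\rho^2)-\log F_S(\rho)$, which reduces the problem to bounding these two quantities for $\rho\in\eta_{n,m}$.

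Next I would upper-bound $F_S(\rho)$. Writing $\rho=\operatorname{tr}_m(\ket{\Psi}\bra{\Psi})$ and expanding in a stabilizer basis $\{\ket{b}\}$ of the traced-out $m$ qubits, $\langle\phi_A\vert\rho\vert\phi_A\rangle=\sum_b|\langle\phi_A\otimes b\vert\Psi\rangle|^2\leq 2^m F_S(\ket{\Psi})$, since each $\ket{\phi_A\otimes b}$ is an $(n+m)$-qubit stabilizer state. For Haar random $\ket{\Psi}$ and a fixed stabilizer state $\ket{\sigma}$, $|\langle\sigma\vert\Psi\rangle|^2\sim\text{Beta}(1,2^{n+m}-1)$, so $\Pr[|\langle\sigma\vert\Psi\rangle|^2\geq x]\leq e^{-(2^{n+m}-1)x}$. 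A union bound over the $2^{O((n+m)^2)}$ stabilizer states on $n+m$ qubits gives $F_S(\ket{\Psi})\leq O((n+m)^2/2^{n+m})$ with overwhelming probability, hence
\begin{equation}
F_S(\rho)\leq O\!\left(\frac{(n+m)^2}{2^n}\right)\quad\text{w.o.p.}
\end{equation}

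For the purity, I would invoke Page's theorem and its concentration: $\mathbb{E}[\operatorname{tr}(\rho^2)]=(2^n+2^m)/(2^{n+m}+1)\geq 1/2^{m+1}$, and Levy's lemma applied to the $O(1)$-Lipschitz function $\ket{\Psi}\mapsto\operatorname{tr}[\operatorname{tr}_m(\ket{\Psi}\bra{\Psi})^2]$ yields $\operatorname{tr}(\rho^2)\geq\Omega(1/2^m)$ with overwhelming probability. Combining the two bounds inside the LP inequality,
\begin{equation}
\text{LR}(\rho)\geq \log\operatorname{tr}(\rho^2)-\log F_S(\rho)\geq n-m-2\log(n+m)-O(1),
\end{equation}
which, absorbing $\log(n+m)\leq \log n+O(1)$ for $m\leq\operatorname{poly}(n)$, gives the stated bound $\text{LR}(\rho)\geq n-m-2\log n-1$.

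The main obstacle is the Haar concentration step: the stabilizer state count $2^{\Theta((n+m)^2)}$ is quadratically exponential, so the per-state tail probability must be strong enough that the union bound still leaves exponentially small failure probability. The Beta-distribution tail is sharp enough to beat this count with the choice $x=\Theta((n+m)^2/2^{n+m})$, but any slack would blow up the $2\log(n+m)$ penalty. A secondary technical point is carefully combining the two overwhelming-probability events (small $F_S(\rho)$ and non-negligible purity) via the union bound so that the final statement holds for almost all $\rho\in\eta_{n,m}$ simultaneously.
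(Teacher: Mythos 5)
Your proof is correct, and at its core it follows the same route as the paper: a feasible witness in the dual LP for the robustness of magic, a reduction to the stabilizer fidelity of the Haar purification, and the concentration bound $F_{\mathrm{STAB}}(\ket{\Psi})\lesssim (n+m)^2 2^{-(n+m)}$ (which the paper cites rather than reproves via the Beta-tail plus union bound as you do; both are fine, and your union-bound accounting correctly identifies the stabilizer-state count as the source of the $2\log(n+m)$ penalty). The packaging differs in one step: the paper lifts to $n+m$ qubits, uses $R(\rho)\geq R(\rho\otimes I_m/2^m)$ and takes the purification $c\ket{\Psi}\bra{\Psi}$ as the witness, whereas you stay on $n$ qubits with $W=\rho/F_S(\rho)$ and pass to the purification through $F_S(\rho)\leq 2^m F_{\mathrm{STAB}}(\ket{\Psi})$; both give the identical intermediate bound $R(\rho)\geq \operatorname{tr}(\rho^2)\,2^{-m}/F_{\mathrm{STAB}}(\ket{\Psi})\geq 2^{-2m}/F_{\mathrm{STAB}}(\ket{\Psi})$, so the final estimates coincide (note the paper's own proof also ends with $n-m-2\log(n+m)-1$, matching your version before you absorb $\log(n+m)$ into $\log n$). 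One small simplification: your Page-plus-Levy argument for the purity is an unnecessary detour, and its concentration degrades when $m$ approaches $n$; the deterministic bound $\operatorname{tr}(\rho^2)\geq 1/\operatorname{rank}(\rho)\geq 2^{-m}$ (rank at most $2^m$ after tracing out $m$ qubits) is what the paper uses, holds for all $m$, and also removes the need to union-bound two probabilistic events — only the $F_{\mathrm{STAB}}$ concentration is probabilistic.
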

\begin{proof}

We can regard the well known dual of the linear program that computes the robustness magic~\cite{liu2022many}
\begin{equation}\label{eq:Robustnessdual}
R(\rho)=\max_A \text{tr}(\rho A) \text{ s.t. } \vert \text{tr}(A \phi) \vert\leq1\, \forall \phi \in\text{STAB}
\end{equation}
where $A$ are Hermitian matrices.
Instead of a maximization to get the exact $R(\rho)$, we now want to only find a lower bound. For this, it is sufficient to find one $A$ which satisfies the constraints of~\eqref{eq:Robustnessdual}.
Thus, we have
\begin{equation}
R(\rho)\geq \text{tr}(\rho A) \text{ s.t. } \vert \text{tr}(A \phi) \vert\leq1\, \forall \phi \in\text{STAB}\,.
\end{equation}
Next, we note that the constraint condition can also be written as bound of a maximization over all pure stabilizer states
\begin{equation}\label{eq:Robustnessdualopt}
R(\rho)\geq\text{tr}(\rho A) \text{ s.t. } \max_{\phi \in\text{STAB}}\vert \text{tr}(A \phi) \vert\leq1.
\end{equation}
Next, we note that the robustness of magic does not increase when taking tensor products with Clifford states.
We now tensor with $m$-qubit maximally mixed state $I_m/2^m$ to get
\begin{equation}\label{eq:Robustnessdualopt2}
R(\rho)\geq R(\rho\otimes I_m/2^m A) \text{ s.t. } \max_{\phi \in\text{STAB}}\vert \text{tr}(A \phi) \vert\leq1 ,
\end{equation}
where now $A$ is an $n+m$ qubit operator and the maximization is taken over all $n+m$ qubit Clifford states.

Now, we regard $n$-qubit states from the GHSE $\eta_{n,m}$ which are of the form
$\text{tr}_m(\ket{\psi}\bra{\psi})$ where $\ket{\psi}$ is an $(n+m)$-qubit state.
Further, we regard the $n+m$ qubit Hermitian operator $A=c\ket{\psi}\bra{\psi}$. Here, $A$ is a rank 1 projector with coefficient $c>0$. This simplifies the constraint considerably as we can drop the absolute value on the constraint as $A$ has only positive eigenvalues. 
This allows us to connect the maximization to another magic monotone
\begin{equation}
   \max_{\phi \in\text{STAB}} \text{tr}(c\ket{\psi}\bra{\psi} \phi)=c F_\text{STAB}(\ket{\psi})
\end{equation}
where $F_\text{STAB}(\ket{\psi})=\text{max}_{\ket{\phi}\in\text{STAB}}\vert \braket{\psi}{\phi}\vert^2$ is the stabilizer fidelity which is the fidelity with the closest stabilizer state.
We have
\begin{equation}
R(\text{tr}_m(\ket{\psi}\bra{\psi}))\geq c\,\text{tr}(\text{tr}_m(\ket{\psi}\bra{\psi}) \otimes I_m/2^m \ket{\psi}\bra{\psi}) \text{ s.t. } c F_\text{STAB}(\ket{\psi})\leq1\,.
\end{equation}
Now, we note that the constraint can be rewritten into $c\leq (F_\text{STAB}(\ket{\psi}))^{-1}$.
Thus, we can simplify our bound as follows:
\begin{equation}
R(\text{tr}_m(\ket{\psi}\bra{\psi}))\geq 
2^{-m} F_\text{STAB}(\ket{\psi})^{-1}\text{tr}(\text{tr}_m(\ket{\psi}\bra{\psi}) \otimes I_m \ket{\psi}\bra{\psi}).
\end{equation}
Now, we regard the term $\text{tr}(\text{tr}_m(\ket{\psi}\bra{\psi}) \otimes I_m \ket{\psi}\bra{\psi})$ closer. 
We can rewrite this as
\begin{gather}
   \text{tr}(\text{tr}_m(\ket{\psi}\bra{\psi}) \otimes I_m \ket{\psi}\bra{\psi})=\text{tr}(\text{tr}_m(\ket{\psi}\bra{\psi})\otimes  (\sum_{i} \ket{i}\bra{i}) \ket{\psi}\bra{\psi})=\\
   \text{tr}(\text{tr}_m(\ket{\psi}\bra{\psi})\otimes  \sum_{i}( \bra{i} \ket{\psi}\bra{\psi} \ket{i}))=\text{tr}(\text{tr}_m(\ket{\psi}\bra{\psi})^2)\geq 2^{-m}\,.
\end{gather}
where the last inequality follows from the fact that $\text{tr}_m(\ket{\psi}\bra{\psi})$ is a at most rank $m$ operator with trace $1$.

Thus, the  robustness of magic for $\rho$ is bounded as
\begin{equation}
R(\text{tr}_m(\ket{\psi}\bra{\psi}))\geq 2^{-2m}F_\text{STAB}(\ket{\psi})^{-1} \,.
\end{equation}
and similarly the log-free robustness of magic
\begin{equation}\label{eq:logfreebound}
\text{LR}(\text{tr}_m(\ket{\psi}\bra{\psi}))\geq -\log(F_\text{STAB}(\ket{\psi}))-2m \,,
\end{equation}
where we note that $D_\text{min}=-\log(F_\text{STAB})$ is the min-relative entropy of magic.
Thus, we transformed the problem of finding a lower bound on the log-free robustness for a mixed state into finding the stabilizer fidelity $F_\text{STAB}$ for a pure state, which is a much simpler problem.

Now, we regard the case where $\ket{\psi}$ is a $n+m$ qubit Haar random state.
Here, it is known that $F_\text{STAB}(\ket{\psi})$ concentrates when $\ket{\psi}$ is drawn from Haar measure $\mu$ over $n+m$-qubits~\cite{liu2022many}
\begin{equation}\label{eq:FSTABconcentrate}
    \Pr_{\ket{\psi}\leftarrow\mu}[ F_\text{STAB}(\ket{\psi}) >\epsilon]< \exp(0.54 (n+m)^2 -2^{n+m}\epsilon),
\end{equation}
which is valid for $n+m\geq 6$.
For concentration, we need the right hand side to scale as $\exp(-(n+m)^2)$ or faster to zero, thus we have
$\epsilon\geq 1.54 (n+m)^2 2^{-n-m}$, where we make the simplified choice of $\epsilon= 2(n+m)^2 2^{-n-m}$.
\eqref{eq:FSTABconcentrate} implies that we have 
\begin{equation}\label{eq:FSTABconcentrate2}
    \Pr_{\ket{\psi}\leftarrow\mu}[ F_\text{STAB}(\ket{\psi}) >2(n+m)^2 2^{-n-m}]< \exp(-(n+m)^2).
\end{equation}
In particular, with overwhelming probability for Haar random states $\ket{\psi}$ we have 
\begin{equation}
    F_\text{STAB}(\ket{\psi})\leq 2(n+m)^2 2^{-n-m}\,
\end{equation}
and
\begin{equation}
    D_\text{min}(\ket{\psi})\geq n+m -2\log(n+m)-1\,.
\end{equation}
We now consider the GHSE $\eta_{n,m}=\{\text{tr}_m(\ket{\psi}\bra{\psi})\}_{\ket{\psi}\in \mu}$ where $\ket{\psi}$ is drawn from the Haar measure $\mu$. 
The average log-free robustness of GHSE is given by
\begin{gather}
\Eset{\rho \in \eta_{n,m}}[\text{LR}(\rho)]=\int_{\psi\in\mu}\text{d}\psi \text{LR}(\text{tr}_m(\ket{\psi}\bra{\psi}))\geq -2m-\int_{\ket{\psi}\in\mu}\text{d}\psi\,\log(F_\text{STAB}(\ket{\psi})).
\end{gather}
Using Jensen's inequality $-\int_\psi\text{d}\psi\, \log(x)\geq  -\log(\int_\psi\text{d}\psi\, x)$ and~\eqref{eq:FSTABconcentrate2}, we now get
\begin{gather}
\Eset{\rho \in \eta_{n,m}}[\text{LR}(\rho)]\geq -2m-\log\left(\int_{\ket{\psi}\in\mu}\text{d}\psi\, F_\text{STAB}(\ket{\psi})\right)\geq n-m-2\log(n+m) -1\,.
\end{gather}
In particular, by choosing $m=\text{polylog}(n)$ , we have
$\Eset{\rho \in S}[\text{LR}(\rho)]=\Theta(n)$.
\end{proof}

\subsection{Magic testing}

\begin{theorem}[Magic cannot be efficiently tested for mixed states]\label{thm:no_test_magic_sup}
Any tester $\mathcal{A}_{\mathcal{M}}$ according to Def.~\ref{def:prop-tester} of log-free robustness of magic $\text{LR}$  requires superpolynomial number of copies for $\delta = \Theta(n)$, $\beta=0$ to test $n$-qubit states. 
\end{theorem}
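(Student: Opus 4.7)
The plan is a straightforward contradiction argument, combining the statistical indistinguishability of the GHSE and the maximally mixed state (Thm.~\ref{def:stat_indisting_sup}) with the magic lower bound on GHSE states (Thm.~\ref{thm:magic_tune}). Suppose for contradiction that a tester $\mathcal{A}_{\mathcal{M}}$ exists using $t = \mathrm{poly}(n)$ copies and satisfying the completeness and soundness conditions of Def.~\ref{def:prop-tester} for $\beta = 0$ and some $\delta = \Theta(n)$.

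First, I would instantiate the tester on the maximally mixed state. Since $I_n/2^n = 2^{-n}\sum_x \ketbra{x}{x}$ is a uniform convex combination of computational-basis stabilizer states, one has $\text{LR}(I_n/2^n) = 0 \leq \beta$, so by completeness $\mathcal{A}_{\mathcal{M}}$ accepts $(I_n/2^n)^{\otimes t}$ with probability at least $2/3$. Second, I would feed the tester with $\rho^{\otimes t}$ for $\rho$ drawn from $\eta_{n,m}$ with $m = \mathrm{polylog}(n)$. By Thm.~\ref{thm:magic_tune} (and the overwhelming-probability concentration used in its proof via the stabilizer fidelity bound), with probability $1 - \mathrm{negl}(n)$ over $\rho$ one has $\text{LR}(\rho) \geq n - m - 2\log(n+m) - 1 \geq \delta$. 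Conditioning on this typical event and applying soundness, $\mathcal{A}_{\mathcal{M}}$ rejects with probability at least $2/3 - \mathrm{negl}(n)$; therefore the expected acceptance probability on $\rho \in \eta_{n,m}$ is at most $1/3 + \mathrm{negl}(n)$.

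The contradiction then comes from Thm.~\ref{def:stat_indisting_sup}: for $t = \mathrm{poly}(n)$ and $m = \omega(\log n)$,
\begin{equation}
\mathrm{TD}\!\left(\Eset{\rho \in \eta_{n,m}}[\rho^{\otimes t}],\; (I_n/2^n)^{\otimes t}\right) = O(t^2/2^m) = \mathrm{negl}(n).
\end{equation}
Because any measurement outcome probability (in particular the accept probability of $\mathcal{A}_{\mathcal{M}}$) can differ by at most the trace distance of the input states, the two acceptance probabilities above must agree up to $\mathrm{negl}(n)$. This contradicts the $\Omega(1)$ gap derived in the previous paragraph, so no such $\mathrm{poly}(n)$-copy tester exists, forcing a copy complexity of $2^{\omega(\log n)}$.

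The only genuinely delicate step will be the second one: I need to convert the \emph{probabilistic} statement $\text{LR}(\rho) \geq \Theta(n)$ with overwhelming probability from Thm.~\ref{thm:magic_tune} into a statement about the tester's behavior \emph{on average} over the GHSE, without double-counting the failure probability. This is handled cleanly by splitting the expectation over $\rho \in \eta_{n,m}$ into the typical event (where soundness applies) and its complement (which contributes at most $\mathrm{negl}(n)$ to any probability), so the averaged acceptance probability on the GHSE is still bounded away from the acceptance probability on $I_n/2^n$ by $\Omega(1)$. Nothing else in the argument is more than a triangle-inequality/operational-indistinguishability invocation.
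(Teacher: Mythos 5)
Your proposal is correct and follows essentially the same route as the paper's proof: compare the GHSE with $m=\mathrm{polylog}(n)$ (near-maximal $\text{LR}$ by concentration) against the maximally mixed state ($\text{LR}=0$), and use the $O(t^2/2^m)$ trace-distance bound to show any $\mathrm{poly}(n)$-copy tester would contradict statistical indistinguishability. Your explicit handling of the atypical low-magic event is a slightly more careful rendering of a step the paper treats implicitly, but it is the same argument, with the operational trace-distance bound playing the role of the paper's Helstrom-bound invocation.
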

\begin{proof}
The  proof idea is that there the GHSE with $m=\text{polylog}(n)$ has near-maximal magic, while being indistinguishable from maximally mixed states with zero magic. 

We now consider the GHSE ensemble. 
The average log-free robustness of ensemble $\eta_{n,m}$ is lower bounded by
\begin{gather}
\Eset{\rho \in \eta_{n,m}}[\text{LR}(\rho)]\geq n-m-2\log(n+m) -1\,,
\end{gather}
where due to concentration this bound is satisfied for nearly all states with exponentially high probability.

Now, we choose $m=\text{polylog}(n)$ and $\eta\equiv \eta_{n,\text{polylog}(n)}$, where we have
$\Eset{\rho \in \eta}[\text{LR}(\rho)]=\Theta(n)$ with overwhelming probability.  

As second ensemble, we choose the maximally mixed state $I_n/2^n$ with $\text{LR}(I_n/2^n)=0$.

Now, $\eta$ is indistinguishable from the maximally mixed state
\begin{equation}\label{eq:TDdistancemagic}
    \text{TD}\left(\underset{\rho\leftarrow \eta}{\mathbb{E}}\big[\rho^{\otimes t}\big],(I_n/2^n)^{\otimes t}\right) = O\left(\frac{t^2}{2^m}\right).
\end{equation}
The two ensembles can be distinguished with probability $P_\text{discr}$ only if they are sufficiently far in TD distance as given by the Helstrom bound~\cite{HOLEVO1973337, Helstrom1969-ny,bae2015quantum}
\begin{equation}\label{eq:helstrom2}
    P_\text{discr}(\rho,\sigma)=\frac{1}{2}+\text{TD}(\rho,\sigma)\,.
\end{equation}
In particular, any algorithm trying to distinguish two ensembles with $P_\text{discr}\ge2/3$ requires at least $\text{TD}(\rho ,\sigma)\ge1/6$.

Now, we want to distinguish $\eta$ and $I_n/2^n$, which have $\text{LR}=\Theta(n)$ and $\text{LR}=0$ log-free robustness of magic respectively with overwhelming probability, which implies $\delta=\Theta(n)$ and $\beta=0$. 
~\eqref{eq:TDdistancemagic} implies that any algorithm, which includes property testers for log-free robustness of magic, must use $t=\Omega(2^{\text{polylog}(n)/2})$ copies to distinguish the two ensembles with non-negligible probability. 

\end{proof}

\begin{corollary}[Lower bound on purity for efficient testing of magic]\label{thm:test_purity_magic_sup}
Using $t=\mathrm{poly}(n)$ copies of $n$-qubit state $\rho$, testing whether $\rho$ has $\text{LR}(\rho)=0$ or $\text{LR}(\rho)=\Theta(n)$  requires purity $\text{tr}(\rho^2)=\Omega(n^{-c})$ with $c>0$. 
\end{corollary}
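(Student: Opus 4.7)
The plan is to prove the contrapositive by contradiction: assume some tester $\mathcal{A}_\mathcal{M}$ uses $t=\mathrm{poly}(n)$ copies and correctly distinguishes $\text{LR}(\rho)=0$ from $\text{LR}(\rho)=\Theta(n)$ whenever the input purity is merely $\mathrm{negl}(n)$, then derive a contradiction with the indistinguishability established in Theorem~\ref{def:stat_indisting_sup}.

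First, I would feed the tester with states drawn from $\eta_{n,m}$ for $m=\omega(\log n)$ (for instance $m=\mathrm{polylog}(n)$). By Theorem~\ref{thm:magic_tune}, a typical $\rho \leftarrow \eta_{n,m}$ has $\text{LR}(\rho)=\Theta(n)$ with overwhelming probability, so the tester must accept such $\rho$ with probability at least $2/3$. On the other side, the maximally mixed state $I_n/2^n$ satisfies $\text{LR}(I_n/2^n)=0$, so the tester must reject it with probability at least $2/3$. This defines a polynomial-copy algorithm distinguishing $\eta_{n,m}$ from $(I_n/2^n)$ with constant bias, which contradicts the bound $\operatorname{TD}(\mathbb{E}_{\rho\in \eta_{n,m}}[\rho^{\otimes t}],(I_n/2^n)^{\otimes t})=O(t^2/2^m)=\mathrm{negl}(n)$ from Theorem~\ref{def:stat_indisting_sup} together with the Helstrom bound~\eqref{eq:helstrom2}.

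The one ingredient needed to close the loop is to verify that the GHSE states indeed lie in the regime on which the hypothetical tester is assumed to work, i.e.\ that their purity is negligible. This follows from the standard moment formula $\mathbb{E}_{\ket{\psi}\leftarrow \mu_{n+m}}[\operatorname{tr}(\rho^2)] = (2^n+2^m)/(2^{n+m}+1) = O(2^{-m}) = \mathrm{negl}(n)$ for $m=\omega(\log n)$, combined with Markov's inequality (or Levy concentration) to ensure that all but a negligible fraction of GHSE samples have purity $\mathrm{negl}(n)$. Since the soundness and completeness of the tester need only hold up to constant error, averaging over the key and the random input preserves the $\Theta(1)$ distinguishing bias.

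The main subtlety I anticipate is making sure the tester's guarantee is applied uniformly on a sufficiently large high-probability subset of $\eta_{n,m}$: the tester is assumed to work on any individual input with the stated purity, but the contradiction uses the expectation over the ensemble. A union bound over the negligible-probability tails of the purity concentration and the robustness concentration from Theorem~\ref{thm:magic_tune} handles this cleanly, so the argument reduces to a triangle inequality between (i) tester behaviour on $\eta_{n,m}$, (ii) tester behaviour on $I_n/2^n$, and (iii) the statistical indistinguishability of the two ensembles under $\mathrm{poly}(n)$ copies. Hence the assumed purity tolerance $\mathrm{negl}(n)$ is impossible, so any efficient tester requires $\operatorname{tr}(\rho^2)=\Omega(n^{-c})$ for some constant $c>0$.
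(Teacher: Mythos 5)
Your argument is correct and is essentially the paper's own proof: the paper derives this corollary directly from Theorem~\ref{thm:no_test_magic_sup}, whose hard instances are exactly the GHSE with $m=\mathrm{polylog}(n)$ (high $\text{LR}$, negligible purity) versus the maximally mixed state, distinguished only up to trace distance $O(t^2/2^m)$ via the Helstrom bound. Your only addition is the explicit verification that GHSE states have negligible purity (via the moment formula and concentration), which the paper leaves implicit; the minor swap of accept/reject roles relative to Def.~\ref{def:prop-tester} does not affect the constant distinguishing bias.
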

The lower bound follows from the inefficiency of testing of magic for negligible purity states as shown in Thm.~\ref{thm:no_test_magic_sup}. 
We note that for pure states with $\text{tr}(\rho^2)=1$ explicit tests of magic with $t=O(1)$ are known~\cite{haug2022scalable,haug2023efficient}.

\subsection{Black-box magic state distillation}

\begin{theorem}[Inefficiency of black-box magic state distillation]\label{thm:magic_distill}
Any algorithm that uses stabilizer operations and $t=\mathrm{poly}(n)$ copies of
an arbitrary input state $\rho$ with $\text{LR}(\rho)=\Theta(n)$ can only synthesize trivial pure states $\ket{\psi}$ with $\text{LR}(\ket{\psi})=\operatorname{negl}(n)$.
\end{theorem}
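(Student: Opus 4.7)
The plan is to argue by contradiction, reducing black-box magic distillation to efficient magic property testing of mixed states, which was already ruled out in Theorem~\ref{thm:no_test_magic_sup}. Suppose, towards contradiction, that there exists an algorithm $\mathcal{D}$ using stabilizer operations and $t=\operatorname{poly}(n)$ copies of an unknown input $\rho$ with $\text{LR}(\rho)=\Theta(n)$ that outputs a pure state $\ket{\psi}$ with non-negligible $\text{LR}(\ket{\psi})=\Omega(n^{-c})$ for some constant $c>0$.

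From $\mathcal{D}$ I would build a property tester $\mathcal{A}_\mathcal{M}$ for the log-free robustness of magic on mixed states as follows. Given $\operatorname{poly}(n)$ copies of an unknown $n$-qubit state $\sigma$, first run $\mathcal{D}$ on a polynomial batch to produce a candidate pure state $\ket{\psi}$. Then apply an efficient \emph{pure-state} magic tester, e.g.\ the stabilizer Rényi entropy estimator of Refs.~\cite{haug2022scalable,haug2023efficient}, using only $O(1)$ further copies of $\ket{\psi}$ to decide whether $\text{LR}(\ket{\psi}) \geq \Omega(n^{-c})$ or $\text{LR}(\ket{\psi}) = 0$. Accept iff the pure-state tester detects non-negligible magic. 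By the assumed success of $\mathcal{D}$, on inputs with $\text{LR}(\sigma)=\Theta(n)$ the tester accepts with high probability, giving completeness. For soundness, note that $\text{LR}$ is monotone under stabilizer operations: any pure state obtained from a stabilizer-free input ($\text{LR}(\sigma)=0$, e.g.\ a mixture of stabilizer states, in particular the maximally mixed state) via stabilizer operations has $\text{LR}=0$, so the tester rejects. Standard amplification by repetition boosts the success probability to $2/3$ while preserving polynomial sample complexity.

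This construction yields a property tester for $\text{LR}$ with thresholds $\beta=0$ and $\delta=\Theta(n)$ using $\operatorname{poly}(n)$ copies, directly contradicting Theorem~\ref{thm:no_test_magic_sup}. Hence no such distillation algorithm $\mathcal{D}$ can exist, and any efficient black-box stabilizer distillation of $\rho$ with $\text{LR}(\rho)=\Theta(n)$ must output pure states with $\text{LR}(\ket{\psi})=\operatorname{negl}(n)$.

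The step I expect to require the most care is the soundness side: I need a concrete ``low-magic'' input that the tester must reject, and I want to use precisely the indistinguishability of the GHSE $\eta_{n,\operatorname{polylog}(n)}$ from $I_n/2^n$ that powers Theorem~\ref{thm:no_test_magic_sup}. Concretely, I would instantiate $\sigma$ on the ``high'' side as a random draw from $\eta_{n,\operatorname{polylog}(n)}$ (which by Theorem~\ref{thm:magic_tune} has $\text{LR}=\Theta(n)$ with overwhelming probability) and on the ``low'' side as $I_n/2^n$ (which has $\text{LR}=0$, so stabilizer operations can only produce $\text{LR}=0$ outputs). The statistical indistinguishability bound $O(t^2/2^m)$ from Theorem~\ref{def:stat_indisting_sup} then forces the combined tester's two acceptance probabilities to agree up to a negligible amount, contradicting the completeness/soundness gap and closing the argument. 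A minor technical point to handle cleanly is that $\mathcal{D}$ may be probabilistic and produce $\ket{\psi}$ only with non-negligible probability; in that case I would have the tester reject on ``failure'' events and note that the remaining non-negligible success probability still yields a polynomial-copy distinguisher, which is enough for contradiction.
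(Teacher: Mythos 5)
Your proposal is correct and follows essentially the same route as the paper's proof: assume the distiller exists, feed it either a GHSE state with $m=\operatorname{polylog}(n)$ (high $\text{LR}$) or the maximally mixed state (whose stabilizer-operation outputs must remain magic-free by monotonicity), distinguish the outputs with the efficient pure-state magic testers of Refs.~\cite{haug2022scalable,haug2023efficient}, and contradict the statistical indistinguishability underlying Theorem~\ref{thm:no_test_magic_sup}. The only difference is presentational—you package the reduction explicitly as a property tester and handle probabilistic distillers, which the paper leaves implicit.
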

\begin{proof}
\revA{The proof idea follows the proof for inefficiency of property testing of magic, where for completeness we again provide in detail:} 
Lets assume there exists a magic state distillation algorithm $\mathcal{A}_\text{M}$ which uses stabilizer operations to turn $t=\mathrm{poly}(n)$ copies of $\rho$ with $\text{LR}(\rho)=\Theta(n)$ into a single magical state $\ket{\psi}$ with $\text{LR}(\ket{\psi})=\Omega(n^{-c})$ with $c>0$. 
The same algorithm $\mathcal{A}_\text{M}$ applied to a state $\rho'$ with $\text{LR}(\rho')=0$ can only yield some stabilizer state $\sigma\in\text{STAB}$ as stabilizer operations cannot increase magic. 

We now draw $\rho$ from the GHSE with $m=\omega(\log n)$ which has $\text{LR}(\rho)=\Theta(n)$, while $\rho'=I/2^{n}$ is the maximally mixed state with $\text{LR}(\rho')=0$. 
GHSE and maximally mixed state are indistinguishable for any algorithm with $t=\mathrm{poly}(n)$ copies. 
However, the proposed magic state distillation algorithm $\mathcal{A}_\text{M}$ could be used to distinguish both ensembles, as the distilled magic state $\ket{\psi}$ can be efficiently distinguished from stabilizer states~\cite{haug2022scalable,haug2023efficient}. 
Thus, from contradiction the proposed magic state distillation algorithm $\mathcal{A}_\text{M}$ cannot exist.
\end{proof}

We note this result significantly improves the previously known bound by Ref.~\cite{gu2023little} which stated that the copies are bounded by $t=\Omega(\text{LR}(\ket{\psi})/\log^{1+c}(\text{LR}(\rho)))$.

\begin{corollary}[Lower bound on purity for black-box magic state distillation]\label{thm:magic_distill_purity}
For black-box magic state distillation of state $\ket{\psi}$ with $\text{LR}(\ket{\psi})=\Omega(n^{-c})$ with $c>0$ using $t=\mathrm{poly}(n)$ copies of input state $\rho$, one requires purity $\text{tr}(\rho^2)=\Omega(n^{-c})$.
\end{corollary}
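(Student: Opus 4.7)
The plan is to proceed by contradiction, closely paralleling the argument used for Thm.~\ref{thm:magic_distill} but calibrating the hypothesis to the purity of the input rather than to its existence. Suppose, for the sake of contradiction, that there exists an efficient black-box magic distillation algorithm $\mathcal{D}_M$ that uses $t = \mathrm{poly}(n)$ copies of any input $\rho$ with $\text{LR}(\rho) = \Theta(n)$ and outputs a pure state $\ket{\psi}$ with $\text{LR}(\ket{\psi}) = \Omega(n^{-c})$, while succeeding on some family of inputs with negligible purity $\text{tr}(\rho^2) = \operatorname{negl}(n)$. The black-box assumption forces $\mathcal{D}_M$ to work on every such input, so in particular it must succeed on the GHSE $\eta_{n,m}$ with $m = \omega(\log n)$, which by Thm.~\ref{thm:magic_tune} has $\text{LR} = \Theta(n)$ with overwhelming probability and purity $\text{tr}(\rho^2) = 2^{-m} = \operatorname{negl}(n)$.

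First I would analyze the two output distributions of $\mathcal{D}_M$: the one obtained from the GHSE and the one obtained from the maximally mixed state $I_n/2^n$. Since stabilizer operations cannot increase LR and $\text{LR}(I_n/2^n) = 0$, the latter output is a (mixture of) stabilizer states with $\text{LR} = 0$. The former, by assumption, is a pure state with $\text{LR}(\ket{\psi}) = \Omega(n^{-c})$. By Thm.~\ref{def:stat_indisting_sup}, GHSE and $I_n/2^n$ are statistically indistinguishable given $\mathrm{poly}(n)$ copies, and the data-processing inequality applied to the quantum channel $\mathcal{D}_M$ propagates this indistinguishability to the outputs.

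The contradiction comes by post-processing the output with an efficient magic tester tailored to pure states, such as the stabilizer Rényi entropy test of Refs.~\cite{haug2022scalable,haug2023efficient}, possibly re-running $\mathcal{D}_M$ polynomially many times on fresh input batches to amplify confidence. Distinguishing pure states with a small but non-negligible amount of magic from stabilizer states is sample-efficient, so the composed procedure distinguishes the GHSE from $I_n/2^n$ efficiently, contradicting Thm.~\ref{def:stat_indisting_sup}. Therefore, no efficient black-box distillation can succeed on negligible-purity inputs while outputting $\ket{\psi}$ with $\text{LR}(\ket{\psi}) = \Omega(n^{-c})$, so the input must satisfy $\text{tr}(\rho^2) = \Omega(n^{-c'})$ for some $c' > 0$.

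The main obstacle I anticipate is the amplification step: the output LR is only guaranteed to be $\Omega(n^{-c})$, which is vanishing, and existing pure-state magic testers are typically framed for constant gaps. To close this, I would leverage the fact that the distillation algorithm can be invoked $\mathrm{poly}(n)$ times on independent batches of inputs, yielding $\mathrm{poly}(n)$ fresh copies of the pure output, and then invoke a magic tester with sufficient sensitivity; alternatively, one can use the equivalent corollary-level reduction to Thm.~\ref{thm:no_test_magic_sup}, since efficient black-box distillation directly implies efficient property testing by testing the output. This latter route avoids the amplification subtlety entirely and is the cleanest way to complete the argument.
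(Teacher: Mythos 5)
Your proposal is correct and follows essentially the same route as the paper: the paper's proof simply observes that the contradiction argument of Theorem~\ref{thm:magic_distill} was already carried out with ensembles (the GHSE at $m=\omega(\log n)$ and the maximally mixed state) of negligible purity $\operatorname{tr}(\rho^2)=\operatorname{negl}(n)$, so efficient black-box distillation must require non-negligible purity. You spell out that same reduction explicitly (and even flag the sensitivity issue in testing an output with only $\Omega(n^{-c})$ magic, which the paper glosses over), so there is no substantive difference in approach.
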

This lower bound on purity follows directly from the proof of Theorem~\ref{thm:magic_distill} which was demonstrated using state ensembles with purity $\text{tr}(\rho^2)=\operatorname{negl}(n)$. 

\section{Coherence}\label{sec:coherence}

Here, we show that coherence cannot be efficiently tested, as well as results regarding black-box distillation of coherence.
We regard the relative entropy of coherence which is a coherence monotone~\cite{baumgratz2014quantifying, streltsov2017colloquium}
\begin{equation}
    C(\rho) = S(\Delta[\rho]) - S(\rho)
\end{equation}
where $S(\rho)=-\text{tr}(\rho \log \rho)$ denotes the von Neumann entropy of the state $\rho$.  $\Delta[\rho]=\sum_i \ket{i}\bra{i}\rho\ket{i}\bra{i}$ corresponds to the state $\rho$ with only diagonal elements left, which is equivalent to applying the completely dephasing channel to $\rho$.

Now, we calculate the lower bound on the relative entropy of coherence for the GHSE:

\begin{lemma}[Expectation and variance of relative entropy of coherence of GHSE]\label{lem:coherencePRDM}
    The expected relative entropy of coherence of GHSE $\eta_{n,m}=\{\operatorname{tr}_m(\ket{\psi}\bra{\psi})\}_{\ket{\psi}\in\mu_{n+m}}$ is lower bounded as
    \begin{equation}
        \Eset{\rho \leftarrow \eta_{n,m}} [C(\rho)]\geq n -m-1\,.
    \end{equation}
    
\end{lemma}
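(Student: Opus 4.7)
The plan is to decompose $C(\rho)=S(\Delta[\rho])-S(\rho)$ and bound each piece separately for $\rho=\operatorname{tr}_m(\ket{\psi}\bra{\psi})$ with $\ket{\psi}$ Haar random on $n+m$ qubits. For the subtractive term, I would use the trivial rank bound: since $\rho$ is obtained by tracing out $m$ qubits of a pure state, $\operatorname{rank}(\rho)\le 2^m$ and hence $S(\rho)\le m$ deterministically, so $\mathbb{E}[S(\rho)]\le m$. This reduces the problem to showing $\mathbb{E}[S(\Delta[\rho])]\ge n-1$.

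For the additive term, I would pass to a classical chain-rule argument. Let $q_{xy}=|\langle xy|\psi\rangle|^2$ be the joint distribution obtained by measuring all $n+m$ qubits of $\ket{\psi}$ in the computational basis, and let $p_x=\sum_y q_{xy}$ be the marginal on the first $n$ qubits. By construction $p_x=(\Delta[\rho])_{xx}$, so $S(\Delta[\rho])=H(p)$. The chain rule gives $H(q)=H(p)+H(q\mid p)$, and since the conditional distribution of $y$ given $x$ is supported on $2^m$ outcomes, $H(q\mid p)\le m$. Therefore $S(\Delta[\rho])=H(p)\ge H(q)-m$, and it suffices to show $\mathbb{E}[H(q)]\ge (n+m)-1$.

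To lower bound $\mathbb{E}[H(q)]$, I would use the standard inequality $H(q)\ge H_2(q)=-\log\sum_{xy}q_{xy}^2$ together with Jensen's inequality applied to the concave function $-\log$. For a Haar random pure state on dimension $d=2^{n+m}$, the Weingarten/Dirichlet calculation gives $\mathbb{E}[|\psi_{xy}|^4]=2/(d(d+1))$, so $\mathbb{E}[\sum_{xy}q_{xy}^2]=2/(d+1)$. Jensen then yields
\begin{equation}
\mathbb{E}[H(q)]\ge -\log\!\left(\tfrac{2}{d+1}\right)=\log\!\left(\tfrac{d+1}{2}\right)\ge (n+m)-1.
\end{equation}
Combining the three bounds gives $\mathbb{E}[C(\rho)]\ge ((n+m)-1)-m-m=n-m-1$, as claimed. (Note that when $m>n$ the bound is vacuous since $C\ge 0$ always, and the rank bound $S(\rho)\le \min(n,m)$ keeps the argument valid.)

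The main subtlety is numerical: one must verify the classical chain-rule bound $H(q|p)\le m$ is tight enough to leave room for the $-1$ constant, and that the Jensen step on the collision entropy indeed yields the clean $(n+m)-1$ bound rather than something weaker such as $\log d-\log\log d$. Using $H_2$ rather than $H_{\min}$ is the crucial choice that avoids an extra $\log\log d$ loss. Everything else is bookkeeping; the concentration statement implicit in the lemma title would then follow by standard Levy-type concentration for Lipschitz functions on the unit sphere, but the expectation bound itself needs only the above three ingredients.
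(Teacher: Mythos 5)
Your proof is correct and follows essentially the same route as the paper: decompose $C(\rho)=S(\Delta[\rho])-S(\rho)$, bound $S(\rho)\le m$ by the rank of the reduced state, and lower-bound $\mathbb{E}[S(\Delta[\rho])]$ by a collision-entropy (R\'enyi-2) bound combined with Jensen and a Haar fourth-moment computation. The only substantive difference is in that last step: the paper integrates directly to get $\mathbb{E}[\operatorname{tr}(\Delta[\rho]^2)]=(2^m+1)/(2^{n+m}+1)$ via the Weingarten formula for four matrix elements, whereas you compute the simpler full-basis collision probability $\mathbb{E}[\sum_{xy}q_{xy}^2]=2/(2^{n+m}+1)$ and transfer to the $n$-qubit marginal with the classical chain rule $H(p)\ge H(q)-m$; both yield $\mathbb{E}[S(\Delta[\rho])]\ge n-1$ and hence the same final bound $n-m-1$, with your version avoiding the cross-term Haar integral at the cost of a chain-rule step. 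One cosmetic slip: $-\log$ is convex, not concave, though the Jensen direction you actually use, $\mathbb{E}[-\log X]\ge-\log\mathbb{E}[X]$, is the correct one.
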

\begin{proof}
    The average relative entropy of coherence is given by $\Eset{\rho \leftarrow \eta_{n,m}} [C(\rho)] = \Eset{\rho  \leftarrow \eta_{n,m}}[S(\Delta[\rho])] - \Eset{\rho \leftarrow \eta_{n,m}}[S(\rho)]$.

    To find a lower bound on $\Eset{\rho \leftarrow \eta_{n,m}} [C(\rho)]$, we require a lower bound on $\Eset{\rho  \leftarrow \eta_{n,m}}[S(\Delta[\rho])]$ and an upper bound on $\Eset{\rho \leftarrow \eta_{n,m}}[S(\rho)]$.

    First, as upper bound we find 
    \begin{equation}\label{eq:coherenceUpper}
        \Eset{\rho \leftarrow \eta_{n,m}}[S(\rho)]\leq m.
    \end{equation}
    This is easy to see by regarding $\rho=\operatorname{tr}_m(\ket{\psi}\bra{\psi})$ with  arbitrary $n+m$ qubit state $\ket{\psi}$.
    We recall that the von Neumann entropy is upper bounded by the rank 
    \begin{equation}\label{eq:neumannrank}
        \log(\text{rank}(\rho))\geq S(\rho)\,.
    \end{equation}
    After the partial trace over $m$ qubits,  we have $m\geq\log(\text{rank}(\rho))$, which immediately gives us~\eqref{eq:coherenceUpper}.
    
    Now, we find a lower bound on $\Eset{\rho  \leftarrow \eta_{n,m}}[S(\Delta[\rho])]$. 
    The lower bound on average von Neumann entropy of the diagonal $\Delta[\rho]$ for $\rho \in\eta_{n,m}$ can be bounded using monotonicity of $\alpha$-R\'enyi entropy: 
    \begin{equation}
    S(\Delta[\rho]) = -\tr(\Delta[\rho] \log (\Delta[\rho])) \geq \frac{1}{1-\alpha} \log( \tr (\Delta[\rho]^{\alpha} )), \hspace{3ex} \alpha \geq 2
    \end{equation}
    For $\alpha = 2$ and using Jensen's inequality for the function $g(x) = -\log x$, we get,
    \begin{equation}
        \Eset{\rho \leftarrow \eta_{n,m}} [S(\Delta[\rho])] \geq - \log( \Eset{\rho \leftarrow \eta_{n,m}} \tr (\Delta[\rho]^2)).
    \end{equation}
    Now, we can expand $U = \sum_{i, j}^{2^{n+m} - 1} U_{i,j} |i\rangle \langle j|$ to get,
    \begin{equation}
        \begin{split}
            \rho &= \tr_m[U|0\rangle \langle 0| U^{\dagger}] , \hspace{3ex} U \leftarrow \mu_{n+m} \\
            &= \tr_m \Big[\sum_{i_1, k_1}^{2^n - 1}\sum_{i_2, k_2}^{2^m - 1} U_{i_1 i_2, 0} U^{*}_{k_1 k_2, 0} |i_1 i_2\rangle \langle k_1 k_2|\Big].
        \end{split}
    \end{equation}
    Then, it is easy to observe that the diagonal part can be obtained as:
    \begin{equation}
        \Delta[\rho] = \sum_{i}^{2^n - 1}\sum_{j}^{2^m - 1}U_{ij, 0}U^{*}_{ij, 0} |i \rangle \langle i|.
    \end{equation}
    Now, we can calculate the expectation of purity of the above state,
    \begin{equation}
        \Eset{U \leftarrow \eta_{n+m}}[ \tr \Delta[\rho]^2] = \sum_{i}^{2^n - 1}\sum_{j, l}^{2^m - 1} \int_U \d\mu_{n+m} U_{ij, 0}U^{*}_{ij, 0}U_{il, 0}U^{*}_{il, 0}.
    \end{equation}
    Now, using the identity~\cite{puchala2017symbolic},
    \begin{equation}
        \begin{split}
            \int \d\mu_{n+m} U_{i_1 j_1} U_{i_2 j_2} U_{i_1^{\prime} j_1^{\prime}}^* U_{i_2^{\prime} j_2^{\prime}}^*= & \frac{1}{d^2-1}\left(\delta_{i_1 i_1^{\prime}} \delta_{i_2 i_2^{\prime}} \delta_{j_1 j_1^{\prime}} \delta_{j_2 j_2^{\prime}}+\delta_{i_1 i_2^{\prime}} \delta_{i_2 i_1^{\prime}} \delta_{j_1 j_2^{\prime}} \delta_{j_2 j_1^{\prime}}\right)\\
            & -\frac{1}{d\left(d^2-1\right)}\left(\delta_{i_1 i_1^{\prime}} \delta_{i_2 i_2^{\prime}} \delta_{j_1 j_2^{\prime}} \delta_{j_2 j_1^{\prime}}+\delta_{i_1 i_2^{\prime}} \delta_{i_2 i_1^{\prime}} \delta_{j_1 j_1^{\prime}} \delta_{j_2 j_2^{\prime}}\right),
        \end{split}
    \end{equation}
    with $d = 2^{n+m}$, we get:
    \begin{equation}
        \Eset{U \leftarrow \eta_{n+m}}[ \tr \Delta[\rho]^2] = \sum_{i}^{2^n - 1}\sum_{j, l}^{2^m - 1} \Big[\frac{1}{d^2 - 1}(1 + \delta_{jl}) - \frac{1}{d(d^2 - 1)}(1 + \delta_{jl})\Big] = \frac{2^m+1}{2^{n+m}+1}.
    \end{equation}
    Thus, from Jensen's inequality we get $\Eset{U \leftarrow \eta_{n+m}} [S(\Delta[\rho])] \geq -\log \frac{2^m+1}{2^{n+m}+1}$. Using this and \eqref{eq:coherenceUpper}, we get:
    \begin{equation}\label{lower bound of expected relative entropy of coherence}
        \Eset{U \leftarrow \eta_{n+m}} [C(\rho)] \geq \log (2^{n+m} + 1) - \log (2^m +1) - m \geq n-m -1.
    \end{equation}
    for $m = \text{poly}\log (n)$.
    
\end{proof}

Further, the relative entropy of coherence concentrates around its mean for GHSE~\cite{Zhang_2017}. This follows directly from L\'evy's lemma and the fact that the relative entropy of coherence is Lipschitz continuous. In particular, we have the Lipschitz function $f: \mathbb{S}^a \rightarrow \mathbb{R}$, defined as $f(\psi_{AB}) = S(\Delta[\rho_A]) - S(\rho_A)$ with $\rho_A = \tr_B [|\psi_{AB}\rangle\langle\psi_{AB}|]$ and $a = 2d_Ad_B - 1$ where in our case $d_A=2^n$, $d_B=2^m$. The Lipschitz constant for relative entropy of coherence for GHSE with $d_A \geq 3$ is shown to be $2\sqrt{8}\log d_A$ ~\cite{Zhang_2017}. Thus, we get, from L\'evy's lemma,
\begin{equation}\label{eq:levy lemma for GHSE coherence}
    \Pr_{\rho\leftarrow \eta_{n,m}} \left[\left | C(\rho) - \Eset{\rho' \leftarrow \eta_{n,m}} [C(\rho')]\right | > \epsilon \right] \leq 2 \exp\left(-\frac{2^{n+m} \epsilon^2}{144\pi^3n^2\log 2 }\right)\,.
\end{equation}
Now, using Lemma~\ref{lem:coherencePRDM}, we get 
\begin{equation}\label{eq:concentrationCoherence}
    \Pr_{\rho\leftarrow \eta_{n,m}} \left[ C(\rho)< n-m-1- \epsilon \right] \leq 2 \exp\left(-\frac{2^{n+m} \epsilon^2}{144\pi^3n^2\log 2 }\right)\,.
\end{equation}

\subsection{Coherence testing}

\begin{theorem}[Coherence testing is inefficient for mixed states]\label{thm:coherence testing} Any tester $\mathcal{A}_{\mathcal{C}}$ of relative entropy of coherence $C$ according to Def. \ref{def:prop-tester} requires superpolynomially many copies to test coherence with $\delta =\Theta(n)$ and $\beta =0$ for $n$-qubit states. 
\end{theorem}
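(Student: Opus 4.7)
The plan is to mirror the proof of Theorem~\ref{thm:no_test_magic_sup} by exhibiting two state ensembles that are statistically indistinguishable yet have extremal coherence values, so that any property tester for coherence would need to distinguish them and must therefore use superpolynomially many copies.

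First I would fix the parameter $m=\operatorname{polylog}(n)$ (e.g.\ $m=\log^2 n$) so that $m=\omega(\log n)$, and take the two ensembles to be the GHSE $\eta \equiv \eta_{n,m}$ on one side and the single maximally mixed state $I_n/2^n$ on the other. By Lemma~\ref{lem:coherencePRDM} we have $\mathbb{E}_{\rho \leftarrow \eta}[C(\rho)] \geq n-m-1$, and the concentration bound~\eqref{eq:concentrationCoherence} with a choice $\epsilon = o(n)$ (e.g.\ $\epsilon = n/2$) gives that with overwhelming probability a state drawn from $\eta$ has $C(\rho) = \Theta(n)$. On the other side, $\Delta[I_n/2^n]=I_n/2^n$ immediately yields $C(I_n/2^n) = S(I_n/2^n) - S(I_n/2^n) = 0$. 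This sets up the coherence gap $\delta=\Theta(n)$ vs $\beta=0$ demanded by the statement.

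Next I would invoke Theorem~\ref{def:stat_indisting_sup} to get the trace-distance bound
\begin{equation}
\operatorname{TD}\!\left(\Eset{\rho\leftarrow \eta}[\rho^{\otimes t}],\,(I_n/2^n)^{\otimes t}\right) = O\!\left(\frac{t^2}{2^m}\right),
\end{equation}
which is $\operatorname{negl}(n)$ whenever $t=\operatorname{poly}(n)$ and $m=\omega(\log n)$. Combined with the Helstrom bound~\eqref{eq:helstrom2}, this shows that no algorithm taking $t=\operatorname{poly}(n)$ copies can distinguish the two ensembles with non-negligible bias. A hypothetical tester $\mathcal{A}_{\mathcal{C}}$ satisfying Def.~\ref{def:prop-tester} with $\beta=0$, $\delta=\Theta(n)$ would accept with probability $\geq 2/3$ on $I_n/2^n$ and accept with probability $\leq 1/3$ on nearly every $\rho\in\eta$, yielding a distinguisher with constant advantage; this contradicts the trace-distance bound unless $t=\Omega(2^{m/2})=2^{\omega(\log n)}$.

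The only mild obstacle I anticipate is bookkeeping the overwhelming-probability event: the tester must reject \emph{every} state in some measure-$1-\operatorname{negl}(n)$ subset of $\eta$, and one has to fold that negligible failure probability into the distinguishing advantage together with the $1/6$ slack from the Helstrom bound. This is handled by a union bound using~\eqref{eq:concentrationCoherence}, and is otherwise routine. The conclusion is then Theorem~\ref{thm:coherence testing}, and by the same purity-sensitivity argument used for magic one obtains the corresponding corollary that efficient coherence testing requires $\operatorname{tr}(\rho^2)=\Omega(n^{-c})$.
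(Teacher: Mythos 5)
Your proposal is correct and follows essentially the same route as the paper's proof: fix $m=\operatorname{polylog}(n)$, use Lemma~\ref{lem:coherencePRDM} together with the concentration bound to get $C=\Theta(n)$ for the GHSE versus $C=0$ for the maximally mixed state, and then combine the $O(t^2/2^m)$ trace-distance bound of Theorem~\ref{def:stat_indisting_sup} with the Helstrom bound to conclude any tester needs $2^{\omega(\log n)}$ copies. The only cosmetic difference is your choice of $\epsilon$ in the concentration inequality (the paper takes $\epsilon=1$, and note your $\epsilon=n/2$ is $\Theta(n)$ rather than $o(n)$, which is harmless), plus your explicit union-bound bookkeeping, which the paper leaves implicit.
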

\begin{proof}
    We now set $m = \text{polylog}(n)$ and use~\eqref{eq:concentrationCoherence} with $\epsilon=1$. Then, it follows that the coherence of the GHSE ensemble $\eta_{n,\text{polylog}(n)}$ is $\Theta(n)$ with overwhelming probability.
    The trace distance between GHSE with $m = \text{polylog}(n)$, $t=\mathrm{poly}(n)$ and maximally mixed state with zero coherence is negligible as shown by Theorem~\ref{def:stat_indisting_sup}.
    The two ensembles can be distinguished with probability $P_\text{discr}$ only if they are sufficiently far in TD distance as given by the Helstrom-Holevo bound~\cite{HOLEVO1973337, Helstrom1969-ny,bae2015quantum}
\begin{equation}\label{eq:helstrom}
    P_\text{discr}(\rho,\sigma)=\frac{1}{2}+\text{TD}(\rho,\sigma)\,.
\end{equation}
In particular, any algorithm trying to distinguish two ensembles with $P_\text{discr}\ge2/3$ requires at least $\text{TD}(\rho ,\sigma)\ge1/6$. Now, distinguishing GHSE for $m=\text{polylog}(n)$ and maximally mixed state with probability at least 2/3, requires $t = \Omega(2^{\text{polylog}(n)})$ copies for any possible algorithm. This implies that any coherence tester must also use at least a superpolynomial number of copies.
\end{proof}

\begin{corollary}[Lower bound on purity for efficient testing of coherence]\label{thm:test_purity_coherence_sup}
Using $t=\mathrm{poly}(n)$ copies of $n$-qubit state $\rho$, testing whether $\rho$ has $C(\rho)=0$ or $C(\rho)=\Theta(n)$  requires purity $\text{tr}(\rho^2)=\Omega(n^{-c})$ with $c>0$. 
\end{corollary}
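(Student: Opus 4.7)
The plan is to establish the lower bound by contrapositive, mirroring the structure used for the magic corollary (Corollary on purity lower bound for magic). Concretely, I would assume for contradiction that there exists an efficient property tester $\mathcal{A}_C$ that, using $t=\mathrm{poly}(n)$ copies, distinguishes $C(\rho)=0$ from $C(\rho)=\Theta(n)$ for arbitrary $n$-qubit states whose purity is allowed to be negligible, and then use $\mathcal{A}_C$ to contradict the inefficiency result already established in Thm.~\ref{thm:coherence testing}.

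The first step is to identify two state ensembles which (a) both have negligible purity, and (b) sit on opposite sides of the coherence threshold. For the low-resource side I take the maximally mixed state $I_n/2^n$, which has $C(I_n/2^n)=0$ and purity $2^{-n}=\operatorname{negl}(n)$. For the high-resource side I take the GHSE $\eta_{n,m}$ with $m=\operatorname{polylog}(n)$ (choosing, e.g., $m=(\log n)^2$ so that $2^{-m}=\operatorname{negl}(n)$). A short calculation using the standard Haar integral gives $\mathbb{E}_{\rho\leftarrow\eta_{n,m}}[\operatorname{tr}(\rho^2)]=(2^n+2^m)/(2^{n+m}+1)=O(2^{-m})=\operatorname{negl}(n)$, and by Markov's inequality the purity stays $\operatorname{negl}(n)$ with overwhelming probability. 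Simultaneously, by the concentration statement in~\eqref{eq:concentrationCoherence}, the coherence satisfies $C(\rho)=\Theta(n)$ for the same $m$ with overwhelming probability over $\rho\leftarrow\eta_{n,m}$.

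The next step is to run the assumed tester $\mathcal{A}_C$ on both ensembles. Since both input families have negligible purity, the hypothetical tester is applicable to each, and by its completeness/soundness guarantees (Def.~\ref{def:prop-tester}) it accepts $I_n/2^n$ and rejects (with high probability) samples from $\eta_{n,m}$, yielding an efficient distinguisher. However, Thm.~\ref{def:stat_indisting_sup} gives $\operatorname{TD}(\mathbb{E}_{\rho\leftarrow\eta_{n,m}}[\rho^{\otimes t}],(I_n/2^n)^{\otimes t})=O(t^2/2^m)=\operatorname{negl}(n)$, and the Helstrom bound~\eqref{eq:helstrom} then rules out any such distinguisher using $\mathrm{poly}(n)$ copies. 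Contradiction. Taking the contrapositive, every efficient tester distinguishing $C(\rho)=0$ from $C(\rho)=\Theta(n)$ must restrict its inputs to states with $\operatorname{tr}(\rho^2)=\Omega(n^{-c})$ for some $c>0$, which is the claim.

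The main obstacle is purely bookkeeping rather than conceptual: one must check that the GHSE really does have negligible purity with high probability (not merely in expectation) for $m$ just barely superlogarithmic, and that the coherence concentration in~\eqref{eq:concentrationCoherence} is compatible with the same choice of $m$. Both are satisfied by choosing, say, $m=(\log n)^2$, which simultaneously makes $2^{-m}$ negligible and keeps the concentration bound exponentially small.
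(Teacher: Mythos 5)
Your proposal is correct and follows essentially the same route as the paper: the paper derives the corollary directly from Thm.~\ref{thm:coherence testing}, whose hard instances (the GHSE with $m=\operatorname{polylog}(n)$ versus the maximally mixed state) both have negligible purity, which is exactly the contradiction argument you spell out. Your added bookkeeping (explicit purity estimate for the GHSE and compatibility of the concentration bound with $m=(\log n)^2$) is a harmless elaboration of the same argument.
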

The lower bound follows from the inefficiency of testing of coherence for negligible purity states as shown in Thm.~\ref{thm:coherence testing}. 
We note that for pure states with $\text{tr}(\rho^2)=1$ explicit tests of coherence with $t=O(1)$ are known~\cite{haug2023pseudorandom}.

\subsection{Black-box coherence distillation}

\begin{theorem}[Inefficiency of black-box coherence state distillation]\label{thm:coherence_distill}
Any algorithm that uses incoherent operations and $t=\mathrm{poly}(n)$ copies of
an arbitrary input state $\rho$ with $C(\rho)=\Theta(n)$ can only synthesize trivial pure states $\ket{\psi}$  with $C(\ket{\psi})=\operatorname{negl}(n)$.
\end{theorem}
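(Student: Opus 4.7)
The plan is to follow the same contradiction template that was used for Theorem~\ref{thm:magic_distill} on magic distillation, swapping in the coherence resource theory. Suppose for contradiction that there exists an efficient black-box coherence distillation algorithm $\mathcal{A}_C$ that, using only incoherent operations and $t=\mathrm{poly}(n)$ copies of an arbitrary input state $\rho$ with $C(\rho)=\Theta(n)$, outputs a pure state $\ket{\psi}$ with $C(\ket{\psi})=\Omega(n^{-c})$ for some constant $c>0$. The strategy is to use $\mathcal{A}_C$ as a distinguisher between two ensembles that are known to be statistically indistinguishable, thereby deriving a contradiction.

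First, I would choose the GHSE $\eta_{n,m}$ with $m=\operatorname{polylog}(n)$ as the ``high-coherence'' input ensemble. By Lemma~\ref{lem:coherencePRDM} together with the L\'evy-type concentration bound \eqref{eq:concentrationCoherence}, samples $\rho\leftarrow\eta_{n,m}$ satisfy $C(\rho)=\Theta(n)$ with overwhelming probability, so $\mathcal{A}_C$ produces a pure state with coherence at least $\Omega(n^{-c})$. Next, I would run $\mathcal{A}_C$ on copies of the maximally mixed state $I_n/2^n$, for which $C(I_n/2^n)=0$. Because incoherent operations cannot generate coherence from an incoherent input, the resulting pure output must be a computational basis state $\ket{i}$ with $C(\ket{i})=0$. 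Thus the output distributions on the two inputs are separated by a non-negligible coherence gap.

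To turn this gap into an efficient distinguisher, I would compose $\mathcal{A}_C$ with the known $O(1)$-copy pure-state coherence tester of Ref.~\cite{haug2023pseudorandom}, which can certify whether a pure state has $C=0$ or $C\geq\Omega(n^{-c})$ with constant advantage. The composed algorithm distinguishes $\eta_{n,\operatorname{polylog}(n)}$ from $I_n/2^n$ using $\mathrm{poly}(n)$ copies in total, contradicting the statistical indistinguishability established in Theorem~\ref{def:stat_indisting_sup}, which forces any distinguisher to use $2^{\omega(\log n)}$ copies. Hence $\mathcal{A}_C$ cannot exist, giving the claim. The same argument extends to probabilistic distillation that succeeds with non-negligible probability, by conditioning on the success flag before applying the pure-state coherence tester.

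The main obstacle I expect is verifying the second step cleanly: one must ensure that on input $I_n/2^n$, the pure-state output of an incoherent instrument is genuinely a basis state rather than a superposition with hidden coherence. This requires invoking the structural fact that incoherent operations (MIO/IO) preserve the set of diagonal density matrices, so that any pure output conditioned on classical measurement records is necessarily incoherent. A secondary subtlety is handling definitions of ``incoherent operations'' that differ across the literature; the argument is cleanest for MIO (maximally incoherent operations), which map diagonal inputs to diagonal outputs, and the claim then transfers to any sub-class such as IO or SIO used to define the distillation task.
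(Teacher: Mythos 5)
Your proposal follows the same contradiction template as the paper: run the hypothetical distiller on the GHSE with $m=\operatorname{polylog}(n)$ (near-maximal $C$ by Lemma~\ref{lem:coherencePRDM} and the concentration bound) versus on $I_n/2^n$, use monotonicity of $C$ under incoherent operations to bound the output in the second case, and then convert the coherence gap of the outputs into a distinguisher that contradicts Theorem~\ref{def:stat_indisting_sup}. This is essentially the paper's argument.

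However, there is one step where your version is incomplete. You assert that on input $I_n/2^n$ the algorithm outputs a \emph{pure} basis state, and you then apply the $O(1)$-copy pure-state coherence tester of Ref.~\cite{haug2023pseudorandom} on both branches. But the black-box distillation guarantee (pure output with $C(\ket{\psi})\geq Q_{\text{out}}$) only holds for inputs satisfying the promise $C(\rho)=\Theta(n)$; the maximally mixed state violates the promise, so the only thing monotonicity gives you is that the output $\sigma$ is incoherent, and it may well be mixed (for instance the algorithm could simply output its input). In that case the pure-state tester's soundness guarantee does not directly apply, and your composed distinguisher is not yet shown to have constant advantage. The paper closes exactly this case by splitting into two subcases: if $\sigma$ is pure, use the pure-state coherence test; if $\sigma$ is a mixed incoherent state, distinguish it from the pure distilled state $\ket{\psi}$ by a SWAP test (purity estimation). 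Adding this second branch (or any other argument that handles mixed incoherent outputs, e.g.\ a combined purity-plus-coherence check) repairs the gap; the rest of your argument, including the extension to probabilistic distillation by conditioning on the success flag, matches the paper.
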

\begin{proof}
\revA{Our proof follows the proof for inefficiency of property testing}: 
Let us assume there exists a coherence distillation algorithm $\mathcal{A}_\text{C}$ which uses incoherent operations to turn $t=\mathrm{poly}(n)$ copies of $\rho$ with $C(\rho)=\Theta(n)$ into a single coherent state $\ket{\psi}$ with $C(\ket{\psi})=\Omega(n^{-c})$ with $c>0$. 
The same algorithm $\mathcal{A}_\text{C}$ applied to a state $\rho'$ with $C(\rho')=0$ can only yield some incoherent state $\sigma$ with $C(\sigma)=0$ as incoherent operations cannot increase the relative entropy of coherence as it is a monotone. 

We now either draw $\rho$ from a GHSE with $m=\omega(\log n)$ which has $C(\rho)=\Theta(n)$, or the maximally mixed state $\rho'=I_n/2^n$ with $C(\rho')=0$. 
GHSE and maximally mixed state are statistically indistinguishable for $t=\mathrm{poly}(n)$ copies. 
However, the proposed coherence state distillation algorithm $\mathcal{A}_\text{C}$ could be used to distinguish both ensembles, as the distilled coherent state $\ket{\psi}$ from $\rho$ can be efficiently distinguished from the incoherent output $\sigma$ generated from $\rho'$~\cite{haug2023pseudorandom}. In particular, if $\sigma$ is pure, one can test that it is not coherent efficiently~\cite{haug2023pseudorandom}. 
If $\sigma$ is a mixed incoherent state, one can use the SWAP test to distinguish it from $\ket{\psi}$.
Thus, from contradiction the proposed coherence distillation algorithm $\mathcal{A}_\text{C}$ cannot exist.
\end{proof}

\begin{corollary}[Lower bound on purity for black-box coherence distillation]\label{thm:coherence_distill_purity}
Black-box coherence distillation of state $\ket{\psi}$ with $C(\ket{\psi})=\Omega(n^{-c})$ with $c>0$ and $t=\mathrm{poly}(n)$ copies of input state $\rho$ requires purity $\text{tr}(\rho^2)=\Omega(n^{-c})$.
\end{corollary}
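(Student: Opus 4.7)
The plan is to observe that the lower bound on purity arises as an immediate byproduct of the contradiction argument already used in the proof of Theorem~\ref{thm:coherence_distill}. There, the obstruction to efficient black-box coherence distillation was derived by pitting the GHSE $\eta_{n,m}$ with $m=\omega(\log n)$ (which satisfies $C(\rho)=\Theta(n)$ with overwhelming probability by~\eqref{eq:concentrationCoherence}) against the maximally mixed state $I_n/2^n$ (which has $C=0$). The reason these two ensembles are interchangeable under any efficient algorithm is precisely the statistical indistinguishability established in Theorem~\ref{def:stat_indisting_sup}, which in turn rests on the fact that both ensembles have negligible purity: directly, $\text{tr}((I_n/2^n)^2) = 2^{-n}$; and for the GHSE, the standard formula $\mathbb{E}_{\psi\leftarrow\mu_{n+m}}[\text{tr}(\text{tr}_m\ket{\psi}\bra{\psi})^2] = (2^n+2^m)/(2^{n+m}+1)$ yields average purity $O(2^{-m})=\operatorname{negl}(n)$ when $m=\omega(\log n)$.

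The proof then proceeds by contraposition. Suppose an efficient black-box coherence distillation algorithm $\mathcal{A}_\text{C}$ existed that accepted inputs of purity $\text{tr}(\rho^2)=\operatorname{negl}(n)$ and produced outputs with $C(\ket{\psi})=\Omega(n^{-c})$. Both the GHSE with $m=\omega(\log n)$ and the maximally mixed state satisfy the negligible-purity condition; the former additionally satisfies the coherence promise with overwhelming probability, while the latter forces $\mathcal{A}_\text{C}$ to output an incoherent state (pure or mixed), since incoherent operations cannot create coherence. Using the SWAP test or the pure-state coherence tester of Ref.~\cite{haug2023pseudorandom} on the outputs then distinguishes the two input ensembles, contradicting Theorem~\ref{def:stat_indisting_sup}. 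Hence any efficient black-box coherence distillation scheme must require inputs with $\text{tr}(\rho^2)=\Omega(n^{-c})$.

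The main (and essentially only) obstacle beyond what has already been done in Theorem~\ref{thm:coherence_distill} is verifying that the hard ensemble used in that theorem indeed falls in the negligible-purity regime. This is immediate from the Haar purity formula above, so no new technical ingredient is needed; the corollary really is just the contrapositive reading of the earlier theorem, restated so as to expose purity as the quantitative bottleneck for efficient black-box distillation of coherence.
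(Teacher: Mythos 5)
Your proposal is correct and follows essentially the same route as the paper: the paper's proof of this corollary is exactly the observation that the hard ensembles in the proof of Theorem~\ref{thm:coherence_distill} (the GHSE with $m=\omega(\log n)$ and the maximally mixed state) have purity $\operatorname{negl}(n)$, so the inefficiency result read contrapositively yields the purity lower bound. Your explicit check of the GHSE average purity via the Haar formula is a harmless (and correct) addition, but no new ingredient beyond the paper's argument.
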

This lower bound on purity follows directly from the proof of Theorem~\ref{thm:coherence_distill} which was proven using state ensembles with purity $\text{tr}(\rho^2)=\operatorname{negl}(n)$.

\section{Entanglement}\label{sec:entanglement}
Here, we show that entanglement as measured by the distillable entanglement $E_\text{D}$ cannot be efficiently tested, as well as results regarding black-box entanglement distillation. 
While we concentrate on $E_\text{D}$, note that the same results also apply for the logarithmic negativity and entanglement of formation~\cite{eisert1999comparison,vidal2002computable,plenio2005logarithmic}.

\revA{
First, we compute the distillable entanglement of the GHSE
\begin{theorem}[Distillable entanglement of GHSE]\label{thm:entanglement_tune}
Each state $\rho\in \eta_{n,m}$ of the GHSE $\eta_{n,m}$ has the distillable entanglement between bipartition $n_A=\vert A\vert$ and $n_B=\vert B\vert$ with $n_A \leq n_B$, $n\geq m$
        \begin{align*}
         \mathbb{E}_{\rho\in\eta_{n,m}}[E_D(\rho)] \geq n_A - m - 1
        \end{align*}
    For $m=\mathrm{polylog}(n)$, we have $\mathbb{E}_{\rho\in\eta_{n,m}}[E_D(\rho)]=\Theta(n)$. 
\end{theorem}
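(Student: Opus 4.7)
The plan is to reduce the distillable entanglement to a difference of von Neumann entropies via the hashing inequality of Devetak--Winter, and then evaluate both expectations using Page's theorem. Concretely, the one-way hashing bound gives
\begin{equation*}
E_\text{D}(\rho_{AB}) \geq S(\rho_B) - S(\rho_{AB})
\end{equation*}
pointwise, so by linearity of expectation it suffices to bound $\mathbb{E}[S(\rho_B)]$ from below and $\mathbb{E}[S(\rho_{AB})]$ from above, where the expectation is taken over the $(n+m)$-qubit purification $\ket{\psi}\in\mu_{n+m}$ from Def.~\ref{def:tracenensemble}.

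The upper bound on $\mathbb{E}[S(\rho_{AB})]$ is immediate: since $\rho_{AB}=\operatorname{tr}_E(\ket{\psi}\bra{\psi})$ with $|E|=m$, we have $\operatorname{rank}(\rho_{AB})\leq 2^m$, hence $S(\rho_{AB})\leq m$ deterministically, exactly as in the coherence argument of Lemma~\ref{lem:coherencePRDM}. For the lower bound on $\mathbb{E}[S(\rho_B)]$, I would view $\ket{\psi}$ as a pure state across the cut $B \,|\, A\cup E$ with subsystem dimensions $2^{n_B}$ and $2^{n_A+m}$, and apply Page's formula for the expected entropy of a reduced density matrix of a Haar random pure state. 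In the regime $n_B\leq n_A+m$ this yields $\mathbb{E}[S(\rho_B)]\geq n_B-2^{n_B-n_A-m-1}\geq n_B-1$; in the complementary regime $n_B>n_A+m$ one uses the Schmidt identity $S(\rho_B)=S(\rho_{AE})$ on the purification and applies Page's formula on the smaller side $A\cup E$ to obtain $\mathbb{E}[S(\rho_B)]\geq n_A+m-1$.

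Combining both pieces and using the hypothesis $n_A\leq n_B$ gives $\mathbb{E}[E_\text{D}(\rho)]\geq \min(n_B,\,n_A+m)-m-1\geq n_A-m-1$ in either regime, which is the claim. Specializing to $m=\operatorname{polylog}(n)$ and any bipartition with $n_A=\Theta(n)$ then yields $\mathbb{E}[E_\text{D}(\rho)]=\Theta(n)$. I do not anticipate a serious obstacle: the hashing inequality and Page's formula are off-the-shelf, and the only care needed is in tracking which side of the cut is smaller so that Page's formula is applied in its valid range. The key structural point is that the bound is extracted via the coherent information rather than through a general entanglement measure, which is what lets the argument speak directly about $E_\text{D}$ (and upward to $E_\text{C}$) and mirrors the generic-entanglement computation of Ref.~\cite{hayden2006aspects}.
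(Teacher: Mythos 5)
Your proposal is correct, and it shares the paper's skeleton: both start from the Devetak--Winter hashing bound and both dispatch $S(\rho_{AB})$ deterministically via $\operatorname{rank}(\rho_{AB})\leq 2^m$. Where you diverge is in the remaining entropy term. The paper bounds the \emph{smaller} side, $E_\text{D}\geq S(\rho_A)-S(\rho)$, then relaxes the von Neumann entropy to the R\'enyi-2 entropy, $S(\rho_A)\geq-\log\operatorname{tr}(\rho_A^2)$, computes $\mathbb{E}[\operatorname{tr}(\rho_A^2)]=\bigl(2^{n_A}+2^{n+m-n_A}\bigr)/\bigl(2^{n+m}+1\bigr)$ by elementary second-moment Haar integration, and finishes with Jensen's inequality; this needs only the swap-trick purity formula and is self-contained. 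You instead bound the \emph{larger} side, $E_\text{D}\geq S(\rho_B)-S(\rho_{AB})$ (equally valid, since the coherent information in either communication direction lower-bounds two-way $E_\text{D}$), and invoke Page's theorem on the cut $B\,|\,A\cup E$ with a case split on which factor is smaller, using $S(\rho_B)=S(\rho_{AE})$ when $n_B>n_A+m$. Your route leans on a stronger off-the-shelf result but avoids the purity computation and the Jensen step entirely, and the case analysis cleanly yields $\mathbb{E}[S(\rho_B)]\geq\min(n_B,n_A+m)-1$, hence $\mathbb{E}[E_\text{D}]\geq n_A-m-1$ after using $n_A\leq n_B$. The only point to watch is the additive constant: crude versions of Page's estimate can cost slightly more than one bit when the two sides have comparable dimension, but the exact Page mean (correction at most $d_1/(2d_2\ln 2)\le 1/(2\ln2)$ bits for $d_1\le d_2$) keeps you within the stated $-1$, and in any case the asymptotic conclusion $\Theta(n)$ for $m=\operatorname{polylog}(n)$ and $n_A=\Theta(n)$ is unaffected.
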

\begin{proof}
    To show this result, we use the hashing inequality~\cite{devetak2005distillation,plenio2005introduction} which states
    \begin{equation}
        E_\text{D}(\rho)\geq S(\rho_A)-S(\rho)\,.
    \end{equation}
    A further relaxation of the bound gives us
        \begin{equation}\label{eq:entSimple}
        E_\text{D}(\rho)\geq -\log(\text{tr}(\rho_A^2))-\log(\text{rank}(\rho))\,.
    \end{equation}
    Now, we one can easily see that $\log(\text{rank}(\rho))=m$. Further, since we can treat the subsystem $\rho_A$ as obtained from $n+m$ Haar random state by partial trace on $n_B + m$ qubits, one finds that 
    \begin{equation}
        \mathbb{E}_{\rho\in\eta_{n,m}}[\text{tr}(\rho_A^2)]=\frac{2^{n_A}+2^{n+m-n_A}}{2^{n+m}+1}\geq \frac{1}{2}(2^{n_A-n-m}+2^{-n_A})
    \end{equation}
    via Haar integration~\cite{zanardi2000entangling}. As the Lipschitz constant of the purity is constant, this bound is fulfilled with overwhelming probability according to Levy's lemma.
    Using convexity, we have
\begin{equation}
    \mathbb{E}_{\rho\in\eta_{n,m}}[-\log(\text{tr}(\rho_A^2))]\leq -\log(\text{tr}(\mathbb{E}_{\rho\in\eta_{n,m}}[\rho_A^2]))\leq n_A-1\,.
\end{equation}
    Thus, we get on average
    \begin{equation}
    \mathbb{E}_{\rho\in\eta_{n,m}}[E_D(\rho)] \geq n_A - m - 1\,.
    \end{equation}
    When $m = \text{polylog}(n)$, we have that $E_D(\rho) = \Theta(n_A)$.
    
\end{proof}
}

\subsection{Entanglement testing}

\begin{theorem}[Entanglement cannot be efficiently tested for mixed states]\label{thm:no_test_entanglement_sup}
Any tester $\mathcal{A}_{\mathcal{E}}$ according to Def.~\ref{def:prop-tester} of distillable entanglement $E_\text{D}(\rho)$ requires a superpolynomial number of copies for $\delta = \Theta(n)$, $\beta=0$ to test $n$-qubit states.
\end{theorem}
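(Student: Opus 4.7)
The plan is to mirror the arguments used for magic (Thm.~\ref{thm:no_test_magic_sup}) and coherence (Thm.~\ref{thm:coherence testing}), exploiting the fact that the GHSE with $m=\mathrm{polylog}(n)$ is simultaneously highly entangled and statistically indistinguishable from the maximally mixed state, while the maximally mixed state is separable.

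First I would fix a balanced bipartition $A\mid B$ with $n_A=n_B=n/2$ (any fixed bipartition with $n_A=\Theta(n)$ will do). For this bipartition, the maximally mixed state factorizes as $I_n/2^n=(I/2)^{\otimes n}$, which is a product state, so $E_\text{D}(I_n/2^n)=0$. On the other hand, by Thm.~\ref{thm:entanglement_tune}, states $\rho$ drawn from the GHSE $\eta_{n,m}$ with $m=\mathrm{polylog}(n)$ satisfy
\begin{equation}
    \mathbb{E}_{\rho\in\eta_{n,m}}[E_\text{D}(\rho)]\geq n_A-m-1=\Theta(n).
\end{equation}
I would then invoke concentration (Levy's lemma applied to the hashing lower bound $E_\text{D}(\rho)\geq S(\rho_A)-S(\rho)$, whose ingredients are Lipschitz functions of the purifying state) to conclude $E_\text{D}(\rho)=\Theta(n)$ with overwhelming probability over $\rho\leftarrow\eta_{n,m}$. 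Thus, with overwhelming probability the two ensembles realize $\delta=\Theta(n)$ and $\beta=0$ respectively.

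The contradiction step is now standard. Assume a property tester $\mathcal{A}_\mathcal{E}$ exists using $t=\mathrm{poly}(n)$ copies. Then on input $\rho^{\otimes t}$ with $\rho\leftarrow\eta_{n,m}$, $\mathcal{A}_\mathcal{E}$ accepts with probability $\geq 2/3$, whereas on $(I_n/2^n)^{\otimes t}$ it accepts with probability $\leq 1/3$. Combined with the Helstrom--Holevo bound (Eq.~\eqref{eq:helstrom}), this forces
\begin{equation}
    \operatorname{TD}\!\left(\mathbb{E}_{\rho\leftarrow\eta_{n,m}}[\rho^{\otimes t}],\,(I_n/2^n)^{\otimes t}\right)\geq \frac{1}{6}.
\end{equation}
However, Thm.~\ref{def:stat_indisting_sup} gives $\operatorname{TD}=O(t^2/2^m)=\mathrm{negl}(n)$ for $m=\omega(\log n)$ and $t=\mathrm{poly}(n)$, which is a contradiction. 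Therefore $t$ must be superpolynomial.

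The only subtle step is the concentration argument guaranteeing that a typical GHSE state (and not just the mean) has entanglement $\Theta(n)$; I expect this to follow either from the Lipschitz continuity of $S(\rho_A)-S(\rho)$ on the purifying Haar sphere (as used for coherence in \eqref{eq:levy lemma for GHSE coherence}) or from known concentration of the purity $\operatorname{tr}(\rho_A^2)$ feeding into the relaxed bound \eqref{eq:entSimple}. Everything else is a direct transcription of the magic/coherence inefficiency proofs.
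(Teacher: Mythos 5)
Your proposal is correct and follows essentially the same route as the paper's proof: both contrast the GHSE with $m=\mathrm{polylog}(n)$ (entanglement $\Theta(n)$ by Thm.~\ref{thm:entanglement_tune}, with the concentration handled there via Levy's lemma on the purity) against the separable maximally mixed state, and then derive the superpolynomial copy lower bound from the $O(t^2/2^m)$ trace-distance bound of Thm.~\ref{def:stat_indisting_sup} together with the Helstrom--Holevo bound. The extra care you take about typicality versus the mean is already implicit in the paper's treatment, so no gap remains.
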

\begin{proof}
This follows directly by regarding two ensembles which are statistically indistinguishable, yet have widely different entanglement.
In particular, the maximally mixed state has $E_\text{D}(I_n/2^n)=0$, while the GHSE $\rho\in\eta_{n,m}$ for $m=\text{polylog}(n)$ has $E_\text{D}(\rho)=\Theta(n)$. 

Any algorithm, including property testers for entanglement, require $\Omega(2^{\mathrm{polylog}(n)/2})$ copies to distinguish those two ensembles due to Thm.~\ref{def:stat_indisting_sup}.
\end{proof}

\begin{corollary}[Lower bound on purity for efficient testing of entanglement]\label{thm:test_purity_entanglement_sup}
Using $t=\mathrm{poly}(n)$ copies of $n$-qubit state $\rho$, testing whether $\rho$ has $E_\text{D}(\rho)=0$ or $E_\text{D}(\rho)=\Theta(n)$  requires purity $\text{tr}(\rho^2)=\Omega(n^{-c})$ with $c>0$. 
\end{corollary}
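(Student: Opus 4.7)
My plan is to derive this lower bound as a direct contrapositive of the inefficiency statement in Theorem~\ref{thm:no_test_entanglement_sup}, mirroring the purity corollaries already proven for magic and coherence (Cor.~\ref{thm:test_purity_magic_sup} and Cor.~\ref{thm:test_purity_coherence_sup}). The key observation is that the hard instances used to rule out efficient testing, namely the maximally mixed state $I_n/2^n$ and the GHSE $\eta_{n,m}$ with $m=\mathrm{polylog}(n)$, both have negligible purity. So if one could efficiently test entanglement for states of negligible purity, one could distinguish these two ensembles, contradicting Theorem~\ref{def:stat_indisting_sup}.

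More concretely, I would proceed as follows. First, record that the maximally mixed state trivially satisfies $\mathrm{tr}((I_n/2^n)^2)=2^{-n}=\mathrm{negl}(n)$, and that for $\rho\leftarrow\eta_{n,m}$ one has the standard Haar identity
\begin{equation}
\mathbb{E}_{\rho\leftarrow\eta_{n,m}}[\mathrm{tr}(\rho^2)]=\frac{2^n+2^m}{2^{n+m}+1}\leq 2^{-m}+2^{-n},
\end{equation}
so for $m=\omega(\log n)$ the purity is negligible on average, and by Levy's lemma concentrates around this value (the purity is Lipschitz in the underlying pure state with constant Lipschitz constant). By Theorem~\ref{thm:entanglement_tune}, for $m=\mathrm{polylog}(n)$ these GHSE states satisfy $E_\text{D}(\rho)=\Theta(n)$ with overwhelming probability, while the maximally mixed state has $E_\text{D}=0$.

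Now suppose, for contradiction, that an efficient tester $\mathcal{A}_{\mathcal{E}}$ with $\delta=\Theta(n)$, $\beta=0$ existed using $t=\mathrm{poly}(n)$ copies whenever the input purity were merely $\mathrm{negl}(n)$. Feeding it $\rho\leftarrow\eta_{n,\mathrm{polylog}(n)}$ it would accept with probability $\geq 2/3$, while on $I_n/2^n$ it would accept with probability $\leq 1/3$. This would distinguish the two ensembles with bias $\geq 1/6$, contradicting Theorem~\ref{def:stat_indisting_sup} which gives trace distance $O(t^2/2^{\mathrm{polylog}(n)})=\mathrm{negl}(n)$ combined with the Helstrom-Holevo bound. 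Hence any efficient tester must fail on at least one of these negligible-purity instances, forcing efficient testing to require $\mathrm{tr}(\rho^2)=\Omega(n^{-c})$ for some $c>0$.

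I do not anticipate significant obstacles: the argument is purely a contrapositive packaging of results already established in the excerpt, and the only quantitative ingredient not already proven is the negligible-purity bound on GHSE, which follows from a single Haar second-moment calculation plus Levy concentration. The mildly delicate point, as in the magic and coherence cases, is to note that the purity is a property of the \emph{input} to the tester rather than of any transformed state, so the contrapositive genuinely yields a lower bound on the required input purity for efficient testing.
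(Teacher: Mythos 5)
Your proposal is correct and follows essentially the same route as the paper: the corollary is obtained as the contrapositive of Theorem~\ref{thm:no_test_entanglement_sup}, whose hard instances (the maximally mixed state and the GHSE with $m=\mathrm{polylog}(n)$) both have negligible purity, so efficient testing is only possible when $\mathrm{tr}(\rho^2)=\Omega(n^{-c})$. Your added second-moment/concentration computation for the GHSE purity is a correct (if implicit in the paper) detail, not a deviation in approach.
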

The lower bound follows from the inefficiency of testing of entanglement for negligible purity states as shown in Thm.~\ref{thm:no_test_entanglement_sup}. 
We note that for pure states with $\text{tr}(\rho^2)=1$ explicit tests of entanglement with $t=O(1)$ are known~\cite{ekert2002direct,bouland2022quantum}.

\subsection{Black-box entanglement distillation}

\begin{theorem}[Inefficiency of black-box entanglement distillation]\label{thm:entanglement_distill}
Any algorithm that uses LOCC operations and $t=\mathrm{poly}(n)$ copies of
an arbitrary input state $\rho$ with $E_\text{D}(\rho)=\Theta(n)$ can only synthesize trivial pure states $\ket{\psi}$ with $E_\text{D}(\ket{\psi})=\operatorname{negl}(n)$.
\end{theorem}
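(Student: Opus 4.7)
The plan is to prove this by contradiction, mirroring the arguments given for magic (Thm.~\ref{thm:magic_distill}) and coherence (Thm.~\ref{thm:coherence_distill}). Suppose there is an LOCC algorithm $\mathcal{A}_\text{E}$ that, given $t=\operatorname{poly}(n)$ copies of any input $\rho$ guaranteed to satisfy $E_\text{D}(\rho)=\Theta(n)$, outputs a pure state $\ket{\psi}$ with $E_\text{D}(\ket{\psi})=\Omega(n^{-c})$ for some $c>0$. I will feed $\mathcal{A}_\text{E}$ with two different ensembles and derive a contradiction with Thm.~\ref{def:stat_indisting_sup}.

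First I would pick a bipartition $(A,B)$ with both $n_A,n_B=\Theta(n)$, and take the two ensembles to be: (i) $\rho$ drawn from the GHSE $\eta_{n,m}$ with $m=\operatorname{polylog}(n)$, which by Thm.~\ref{thm:entanglement_tune} has $E_\text{D}(\rho)=\Theta(n)$ with overwhelming probability across this bipartition; and (ii) $\rho'=I_n/2^n$, the maximally mixed state, which is separable and therefore has $E_\text{D}(\rho')=0$. Running $\mathcal{A}_\text{E}$ on ensemble (i) produces, by assumption, a pure state $\ket{\psi}$ with non-negligible distillable entanglement, whereas running the same LOCC operations on ensemble (ii) can only yield a separable output state $\sigma$ (with $E_\text{D}(\sigma)=0$), since LOCC operations cannot generate entanglement from a separable input.

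Next I would argue that the two outputs are efficiently distinguishable with $O(1)$ further copies. Since $\ket{\psi}$ is pure and non-trivially entangled, an efficient pure-state entanglement tester (e.g.\ the protocol of Refs.~\cite{ekert2002direct,bouland2022quantum} that estimates subsystem purity via the SWAP test) accepts $\ket{\psi}$ with high probability; in contrast, any separable $\sigma$ has subsystem purity bounded in a way inconsistent with non-trivial pure-state entanglement, so the tester rejects. Concatenating $\mathcal{A}_\text{E}$ with this tester yields an efficient procedure that distinguishes $\eta_{n,\operatorname{polylog}(n)}^{\otimes t}$ from $(I_n/2^n)^{\otimes t}$ with non-negligible bias using $\operatorname{poly}(n)$ copies, contradicting the $O(t^2/2^m)=\operatorname{negl}(n)$ trace-distance bound of Thm.~\ref{def:stat_indisting_sup}. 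Hence $\mathcal{A}_\text{E}$ cannot exist.

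The main technical obstacle I anticipate is the distinguishing step on the output: one must ensure that the distinguisher works simultaneously against every possible separable $\sigma$ produced by $\mathcal{A}_\text{E}$ on the maximally mixed state (not just against a fixed one), and that it reliably accepts pure entangled $\ket{\psi}$ irrespective of which entangled state $\mathcal{A}_\text{E}$ chooses to output on GHSE inputs. Using a subsystem-purity based test handles both cases uniformly: pure entangled $\ket{\psi}$ must have $\operatorname{tr}(\psi_A^2)<1-\Omega(n^{-c})$ whenever $E_\text{D}(\ket{\psi})=\Omega(n^{-c})$, while separable $\sigma$ with the additional promise that the global state is pure would force $\operatorname{tr}(\sigma_A^2)=1$, giving a clean gap. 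A minor subtlety is that $\mathcal{A}_\text{E}$ is defined only when the promise $E_\text{D}(\rho)=\Theta(n)$ holds, so on input $I_n/2^n$ its behavior is unconstrained; but monotonicity of $E_\text{D}$ under LOCC still forces the output to be separable, which is all we need.
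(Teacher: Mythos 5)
Your proposal is correct and follows essentially the same route as the paper: assume the distiller exists, run it on the GHSE with $m=\operatorname{polylog}(n)$ versus the maximally mixed state, invoke LOCC monotonicity to force a separable output in the second case, and use efficient testing of the pure distilled output to contradict the statistical indistinguishability of Thm.~\ref{def:stat_indisting_sup}. The one wrinkle you flag (the separable output $\sigma$ need not be pure, so a subsystem-purity test alone does not give a clean gap) is handled exactly as the paper does for coherence: first a global SWAP test separates a mixed $\sigma$ from the pure $\ket{\psi}$, and only in the pure case does one fall back on the subsystem-purity/entanglement test of Refs.~\cite{ekert2002direct,bouland2022quantum}.
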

\begin{proof}
\revA{The proof idea follows the proof for inefficiency of property testing}:
Lets assume there exists a entanglement distillation algorithm $\mathcal{A}_\text{E}$ which uses LOCC operations to turn $t=\mathrm{poly}(n)$ copies of state $\rho$ with $E_\text{D}(\rho)=\Theta(n)$ into a single entangled state $\ket{\psi}$ with $E_\text{D}(\ket{\psi})=\Omega(n^{-c})$ with $c>0$. 
The same algorithm $\mathcal{A}_\text{E}$ applied to a state $\rho'$ with $E_\text{D}(\rho')=0$ can only yield some separable state $\sigma\in\text{SEP}$ as LOCC operations cannot increase entanglement. 

We now draw $\rho$ from the GHSE with $m=\omega(\log n)$ which has $E_\text{D}(\rho)=\Theta(n)$, while $\rho'=I/2^{n}$ is the maximally mixed state with $E_\text{D}(\rho')=0$. 
GHSE and maximally mixed state are indistinguishable for any algorithm with $t=\mathrm{poly}(n)$ copies. 
However, the proposed entanglement distillation algorithm $\mathcal{A}_\text{E}$ could be used to distinguish both ensembles, as the distilled  state $\ket{\psi}$ can be efficiently distinguished from non-entangled states~\cite{ekert2002direct}. 

Thus, from contradiction the proposed entanglement distillation algorithm $\mathcal{A}_\text{E}$ cannot exist.
\end{proof}

\begin{corollary}[Lower bound on purity for black-box entanglement distillation]\label{thm:entanglement_distill_purity}
For black-box entanglement distillation of state $\ket{\psi}$ with $E_\text{D}(\ket{\psi})=\Omega(n^{-c})$ with $c>0$ using $t=\mathrm{poly}(n)$ copies of input state $\rho$, one requires purity $\text{tr}(\rho^2)=\Omega(n^{-c})$.
\end{corollary}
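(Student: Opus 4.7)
The plan is to prove this corollary by contrapositive, mirroring the strategy used for the analogous purity lower bounds in the magic and coherence sections. Concretely, I would assume for contradiction that there exists an efficient black-box entanglement distillation algorithm $\mathcal{A}_\text{E}$ that accepts an input state $\rho$ with purity $\text{tr}(\rho^2)=\operatorname{negl}(n)$ and $E_\text{D}(\rho)=\Theta(n)$, uses $t=\operatorname{poly}(n)$ copies, and outputs a pure state $\ket{\psi}$ with $E_\text{D}(\ket{\psi})=\Omega(n^{-c})$. My goal is then to exhibit a concrete ensemble that realizes this negligible-purity, high-entanglement regime and that also satisfies the hypotheses of the indistinguishability theorem, so the existence of $\mathcal{A}_\text{E}$ contradicts Theorem~\ref{def:stat_indisting_sup}.

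First, I would instantiate $\rho$ as a sample from the GHSE $\eta_{n,m}$ with $m=\omega(\log n)$. Using Haar integration it follows that the expected purity is
\begin{equation}
\mathbb{E}_{\rho\leftarrow\eta_{n,m}}[\text{tr}(\rho^2)] = \frac{2^n + 2^m}{2^{n+m}+1} = O(2^{-m}) = \operatorname{negl}(n),
\end{equation}
with concentration around the mean by a standard Lévy-type argument for the Lipschitz purity function. Meanwhile, by Theorem~\ref{thm:entanglement_tune}, $E_\text{D}(\rho)=\Theta(n)$ with overwhelming probability for any balanced bipartition. Thus GHSE with $m=\omega(\log n)$ simultaneously has negligible purity and linear distillable entanglement, matching the putative input regime of $\mathcal{A}_\text{E}$.

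Next, I would compare this to applying $\mathcal{A}_\text{E}$ to the maximally mixed state $I_n/2^n$, for which $E_\text{D}(I_n/2^n)=0$. Since LOCC operations cannot create entanglement from a separable state, the output must be separable. By Theorem~\ref{def:stat_indisting_sup}, the two ensembles $\eta_{n,m}$ and $I_n/2^n$ are statistically indistinguishable with $t=\operatorname{poly}(n)$ copies whenever $m=\omega(\log n)$. But the promised $\mathcal{A}_\text{E}$ yields a pure state with non-negligible distillable entanglement on the first ensemble and a separable state on the second, and these outputs can be efficiently distinguished by the pure-state entanglement testers of Refs.~\cite{ekert2002direct,bouland2022quantum}. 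Composing these testers with $\mathcal{A}_\text{E}$ gives an efficient distinguisher for GHSE vs.\ maximally mixed state, contradicting Theorem~\ref{def:stat_indisting_sup}. Hence $\text{tr}(\rho^2)=\Omega(n^{-c})$ is necessary.

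I expect the only subtle point to be ensuring that the output of $\mathcal{A}_\text{E}$ on the maximally mixed input is reliably testable as separable in a way that composes cleanly with the distinguisher; since the output is guaranteed pure in the definition of black-box distillation (Def.~\ref{def:blackbox}), the pure-state entanglement test suffices and no additional argument (e.g.\ a SWAP-test fallback as in the coherence case) is needed. Everything else is a direct adaptation of the proof of Theorem~\ref{thm:entanglement_distill}, with the added observation that the witness ensemble $\eta_{n,\omega(\log n)}$ has negligible purity, which is what upgrades the inefficiency statement into the claimed lower bound on $\text{tr}(\rho^2)$.
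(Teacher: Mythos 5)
Your proposal is correct and matches the paper's argument: the paper proves this corollary by simply observing that the contradiction in Theorem~\ref{thm:entanglement_distill} is witnessed by the GHSE with $m=\omega(\log n)$, an ensemble with $\text{tr}(\rho^2)=\operatorname{negl}(n)$ and $E_\text{D}=\Theta(n)$ that is statistically indistinguishable from the maximally mixed state, exactly as you argue. Your explicit Haar-integration bound on the purity and the remark about the output on the mixed-state branch are elaborations of the same route rather than a different proof.
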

This lower bound on purity follows directly from the proof of Theorem~\ref{thm:entanglement_distill} which was demonstrated using a state ensembles with purity $\text{tr}(\rho^2)=\operatorname{negl}(n)$.

\section{Noise robustness of PRDM}\label{sec:noise}
A PRDM is noise robust to a noise channel if it remains PRDM after application of the channel. Here, we restrict ourselves to unital noise channels, i.e.\ noise channels where the maximally mixed state is a fixed point. Such noise channels encompass many noise models in quantum information, including depolarizing noise, Pauli channels,  and dephasing noise. 

First, we show that PRDMs with $m=O(\log n)$ are not robust to unital noise channels. In particular, any PRDM with non-neglible purity $\text{tr}(\rho^2)$ cannot be noise robust. Then, we show that PRDMs with $m=\omega(\log n)$ are robust to unital noise.

\begin{lemma}\label{general PRDM noise robustness}
    PRDMs with non-negligible purity $\text{tr}(\rho^2)=\Omega(n^{-c})$ with $c>0$ are not noise robust to unital channels. This includes all PRDMs with $m=O(\log n)$. 
\end{lemma}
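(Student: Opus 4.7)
The plan is to use the SWAP test as an efficient two-copy distinguisher that is sensitive to purity, combined with a strong unital noise channel that drives the PRDM's purity down to the negligible value $2^{-n}$, while the target GHSE $\eta_{n,m}$ retains non-negligible purity. The natural choice of channel is the global depolarizing channel $\Gamma(\rho)=I_n/2^n$, which is unital by construction and efficiently implementable, e.g., by applying a uniformly random $n$-qubit Pauli string (a Pauli twirl over all $4^n$ Paulis). Recall that the SWAP test on two copies of a state $\rho$ outputs a Bernoulli with bias $\tfrac{1}{2}\text{tr}(\rho^2)$, so any two ensembles whose expected purities differ by $1/\text{poly}(n)$ can be distinguished using $\text{poly}(n)$ copies.

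First I would compute the expected purity of the GHSE via standard Haar integration, obtaining $\mathbb{E}_{\rho\sim\eta_{n,m}}[\text{tr}(\rho^2)]=(2^n+2^m)/(2^{n+m}+1)=\Theta(2^{-m})$. Next I would argue that if $\{\rho_k\}_k$ is a PRDM with mixedness parameter $m$ and non-negligible expected purity $\Omega(n^{-c})$, then by computational indistinguishability from $\eta_{n,m}$ (which the SWAP test would itself break otherwise), the GHSE must also have non-negligible purity, which forces $m=O(\log n)$. This simultaneously handles the second clause of the lemma: every PRDM with $m=O(\log n)$ inherits non-negligible purity from its target GHSE and therefore falls under the hypothesis of the first clause.

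The final step is the contradiction. Applying $\Gamma$ to the PRDM yields $\Gamma(\rho_k)=I_n/2^n$ for every $k$, so the SWAP test on $\Gamma(\rho_k)^{\otimes 2}$ produces a Bernoulli with bias $2^{-n-1}$, whereas on two fresh samples from $\eta_{n,m}$ it produces a Bernoulli with bias $\Theta(2^{-m})$. The gap between these biases is $\Omega(1/\text{poly}(n))$, so $\text{poly}(n)$ repetitions of the SWAP test efficiently distinguish $\{\Gamma(\rho_k)\}_k$ from $\eta_{n,m}$, violating Def.~\ref{def:noise-robust}. The only points that require care are verifying that the Pauli-twirl implementation of $\Gamma$ is a legitimate efficient quantum algorithm (immediate, since it needs only $O(n)$ random bits and $O(n)$ Pauli gates), and confirming the asymptotic tightness of the $\Theta(2^{-m})$ scaling for the GHSE purity; neither presents a real obstacle. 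A more delicate version of the statement, in which one restricts to, say, local depolarizing noise at small rate, would demand a finer estimate of how much the purity drops per qubit, but for the present lemma the global depolarizing channel suffices.
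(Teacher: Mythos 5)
Your proof is correct and follows essentially the same route as the paper: a SWAP-test purity distinguisher combined with a global depolarizing (unital, efficiently implementable) channel, with the PRDM/GHSE purity link supplied by computational indistinguishability. The only minor differences are that the paper uses weak depolarizing noise of strength $p=\Omega(n^{-c})$, detecting the purity drop $(p-p^2/2)\,\mathrm{tr}(\rho^2)$, rather than your full depolarization to $I_n/2^n$, and it justifies the $m=O(\log n)$ clause via the rank bound $\mathrm{tr}(\rho_k^2)\geq 2^{-m}$ instead of your (equally valid, arguably cleaner) indistinguishability argument.
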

\begin{proof}
    We use the global depolarisation channel as example of a unital noise model
    \begin{equation}
        \Lambda_p(\rho) = (1-p)\rho + \frac{p}{2^n}I .
    \end{equation}
    Now, it is easy to see that the expected purity of the globally depolarised state is:
    \begin{equation}
        \tr (\Lambda_p(\rho)^2) = (1-p)^2 \tr \rho^2 + \operatorname{negl}(n).
    \end{equation}
    Now, for noise probability $p = \Omega(n^{-c})$, we can compare the purity of ensembles $\eta_{n, m}$ and the ensemble $\mathcal{E}_{n, m}$ of globally depolarised $\eta_{n, m}$ states using SWAP test~\cite{barenco1997stabilization, garcia2013swap, beckey2023multipartite} to get,
    \begin{equation}
        \left |\Eset{\rho \leftarrow \eta_{n, m}}[\Pr_{\text{SWAP}}(\rho)] - \Eset{\rho \leftarrow \mathcal{E}_{n, m}}[\Pr_{\text{SWAP}}(\Lambda_p(\rho_k)] \right | = \left |(p - p^2/2) \Eset{\rho \leftarrow \eta_{n, m}} [\tr \rho^2 ]\right | = \Omega(n^{-c}\Eset{\rho \leftarrow \eta_{n, m}} [\tr \rho^2])\,.
    \end{equation}
    Thus, the SWAP test only fails to distinguish ensembles $\mathcal{E}_{n, m}$ and $\eta_{n, m}$ efficiently iff $\Eset{\rho \leftarrow \eta_{n, m}} [\tr \rho^2] = \operatorname{negl}(n)$. Thus negligible expected purity is a necessary condition for PRDMs to be noise-robust. For PRDMs, it is easy to see that $\text{tr}(\rho_k^2)\geq 2^{-m}$. Thus, for $m=O(\log n)$ PRDMs have non-negligible purity, and thus are not robust to noise. 
\end{proof}

Next, we show that PRDMs with $m=\omega(\log n)$ are robust to unital noise:
\begin{theorem}[PRDMs are robust to unital noise]
    PRDMs with $m=\omega(\log n)$ are robust under efficiently implementable unital noise channels, i.e. channels where the identity is the fixed point $\Gamma(I)=I$. In particular, if $\{\rho_k\}_k$ is PRDM, then $\{\Gamma(\rho_k)\}_k$ is also PRDM. 
\end{theorem}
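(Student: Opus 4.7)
The plan is to reduce noise robustness for PRDMs with $m = \omega(\log n)$ to the already-established computational indistinguishability from the maximally mixed state (Theorem~\ref{thm:PRDM_indisting}), exploiting the fact that the maximally mixed state is a fixed point of any unital channel. The key observation is that applying an efficiently implementable channel cannot break computational indistinguishability, because a distinguisher for the post-channel ensembles can be composed with the channel to yield a distinguisher for the pre-channel ensembles.

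Concretely, I would proceed in the following steps. First, invoke Theorem~\ref{thm:PRDM_indisting}: for $m = \omega(\log n)$, $\{\rho_k\}_k$ is computationally indistinguishable from $\sigma_n = I_n/2^n$ for any efficient quantum adversary given $t = \mathrm{poly}(n)$ copies. Second, use the unital property: since $\Gamma(I_n/2^n) = I_n/2^n$, we have $\Gamma(\rho_k)^{\otimes t}$ compared against $\Gamma^{\otimes t}(\sigma_n^{\otimes t}) = \sigma_n^{\otimes t}$. Third, do the reduction: suppose there were an efficient adversary $\mathcal{A}$ with
\begin{equation}
\Big|\Pr_{k}[\mathcal{A}(\Gamma(\rho_k)^{\otimes t}) = 1] - \Pr[\mathcal{A}(\sigma_n^{\otimes t}) = 1]\Big| = 1/\mathrm{poly}(n).
\end{equation}
Define $\mathcal{A}'$ which, on input $\tau^{\otimes t}$, first applies $\Gamma$ to each copy (possible since $\Gamma$ is efficiently implementable) and then runs $\mathcal{A}$. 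Then $\mathcal{A}'$ is efficient and has the same distinguishing advantage between $\rho_k^{\otimes t}$ and $\sigma_n^{\otimes t}$, contradicting Theorem~\ref{thm:PRDM_indisting}. Thus $\Gamma(\rho_k)$ is computationally indistinguishable from $\sigma_n$.

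Finally, combine with the statistical closeness $\mathrm{TD}(\mathbb{E}_{\rho \in \eta_{n,m}}[\rho^{\otimes t}], \sigma_n^{\otimes t}) = O(t^2/2^m) = \mathrm{negl}(n)$ from Eq.~(\ref{eq:indist_ghse}) via the triangle inequality to conclude that $\Gamma(\rho_k)$ is computationally indistinguishable from the GHSE $\eta_{n,m}$. Since $\Gamma$ is efficient and $\rho_k$ is efficiently preparable, $\Gamma(\rho_k)$ is also efficiently preparable, so $\{\Gamma(\rho_k)\}_k$ satisfies both clauses of Definition~\ref{def:PRDM} and is a PRDM.

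The only subtle point — which I do not expect to be a serious obstacle — is verifying that the reduction step truly yields a polynomial-time adversary; this relies on ``efficiently implementable'' meaning that $\Gamma$ admits a $\mathrm{poly}(n)$-size quantum circuit (possibly with ancillas traced out), so applying $\Gamma$ to each of $t = \mathrm{poly}(n)$ copies adds only polynomial overhead. The converse direction of the ``if and only if'' is already handled by Lemma~\ref{general PRDM noise robustness}, which rules out noise robustness for $m = O(\log n)$ via the SWAP test on purity, so nothing more is required there.
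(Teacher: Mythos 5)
Your proposal is correct and follows essentially the same route as the paper: compose the would-be distinguisher with the efficiently implementable channel $\Gamma$ to reduce to the computational indistinguishability of $\rho_k$ from $I_n/2^n$ (Theorem~\ref{thm:PRDM_indisting}), using unitality so the maximally mixed state is unchanged. Your write-up is in fact slightly more complete than the paper's, since you explicitly add the triangle-inequality step back to the GHSE and check efficient preparability of $\Gamma(\rho_k)$, which the paper leaves implicit.
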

\begin{proof}
PRDMs are indistinguishable from the maximally mixed state for any efficient algorithm for $m=\omega(\log n)$
  \begin{equation}
        \Big{|}\Pr_{k \leftarrow \mathcal{K}}[\mathcal{A}(\rho_k^{\otimes t}) = 1] - \Pr [\mathcal{A}((I_{n}/2^{n})^{\otimes t}) = 1]\Big{|} = \operatorname{negl}(n)\,.
    \end{equation}
In particular, let us define the efficient algorithm $\mathcal{A}(\Gamma(\rho))$ to which the indistinguishability must also apply. 
Now, we use $\mathcal{A}$ and apply the the fixed point condition $\Gamma(I)=I$ for the maximally mixed state
  \begin{equation}
        \Big{|}\Pr_{k \leftarrow \mathcal{K}}[\mathcal{A}(\Gamma(\rho_k)^{\otimes t}) = 1] - \Pr [\mathcal{A}((I_{n}/2^{n})^{\otimes t}) = 1]\Big{|} = \operatorname{negl}(n)\,, 
    \end{equation}
which is also indistinguishable. 
\end{proof}

\revA{
\section{Separation in security between PRDM and PRS}\label{sec:kretschmer}

Ref.~\cite{kretschmer2021quantum} showed that PRS and Haar random states can be sample-efficiently distinguished using classical shadows, where the algorithm requires a $\mathsf{PostBQP}$ oracle and post-selection. Here, sample efficient means that the algorithm uses $t=\text{poly}(n)$ copies of the state.
The idea of the attack is to use classical shadows to check whether the given state has high overlap with keyed states $\rho_k$ from the PRDM.
We now show that this attack does not work for PRDMs with $m=\omega(\log n)$: 

\begin{proposition}
\label{prop:Kretschmer}[Security of PRDMs against Kretschmer's attack]
    To distinguish PRDM $\rho_k$ with $m=\omega(\log n)$ from the GHSE using observable $O_k = \rho_k$ as shown in Ref.~\cite{kretschmer2021quantum}  requires super-polynomially many copies.
\end{proposition}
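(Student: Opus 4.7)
The plan is to analyze the signal-to-noise ratio of Kretschmer's classical-shadow-based attack when the observable $O_k=\rho_k$ is the reduced state of a PRDM rather than a pure PRS, and show that the distinguishing advantage becomes negligible for $m=\omega(\log n)$.

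First, I would recall what the attack measures. Given $t=\mathrm{poly}(n)$ copies of the unknown state $\tau$, classical shadows produce an unbiased estimator $\hat X$ of $\operatorname{tr}(O_k\tau)$. The original attack distinguishes PRS ($\tau=\rho_k$) from Haar ($\tau\sim\mu_n$) by comparing $\operatorname{tr}(\rho_k^2)=1$ with $\mathbb{E}_{\tau\sim\mu_n}\operatorname{tr}(\rho_k\tau)=2^{-n}$, yielding a gap of essentially $1$. I would then compute the analogous quantities for PRDMs: since $\rho_k=\operatorname{tr}_m(\ket{\psi_k}\bra{\psi_k})$ with $\ket{\psi_k}$ behaving like a Haar random $(n+m)$-qubit state (by the PRS property, up to negligible error), the standard Haar-average purity formula gives
\begin{equation}
\mathbb{E}_k\!\left[\operatorname{tr}(\rho_k^2)\right]=\frac{2^n+2^m}{2^{n+m}+1}=O(2^{-m}),
\end{equation}
while by the $t=2$ case of Thm.~\ref{def:stat_indisting_sup} the corresponding ``wrong key'' expectation $\mathbb{E}_{k,\,\sigma\in\eta_{n,m}}\operatorname{tr}(\rho_k\sigma)$ differs from $2^{-n}$ by at most $O(2^{-m})$. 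Hence the signal gap is $O(2^{-m})$, which is negligible whenever $m=\omega(\log n)$.

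Next I would convert this statistical gap into a sample-complexity lower bound. Any estimator of $\operatorname{tr}(O_k\tau)$ built from $t$ copies has standard deviation at least $\Omega(1/\sqrt{t})$ in the worst case, so distinguishing means that differ by $\epsilon=O(2^{-m})$ requires $t=\Omega(2^{2m})=2^{\omega(\log n)}$ samples. Since the attack adaptively ranges over keys $k\in\{0,1\}^\kappa$, I would invoke a union bound over the $2^\kappa=\mathrm{poly}(n)$ candidate observables, which only worsens the required $t$ by a polynomial factor and thus does not affect the superpolynomial conclusion. Formally, this amounts to bounding the total variation between the distributions on shadow outcomes under the PRDM hypothesis and under the GHSE hypothesis, which is controlled by $\operatorname{TD}(\mathbb{E}_k\rho_k^{\otimes t},\mathbb{E}_{\sigma\in\eta_{n,m}}\sigma^{\otimes t})$, itself negligible by computational indistinguishability of PRDMs and by Thm.~\ref{def:stat_indisting_sup} again.

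The main obstacle I anticipate is handling the variance of the shadow estimator carefully: in principle the attack could exploit higher moments of the shadow distribution or correlated measurements across copies. To close this, I would argue that \emph{any} function of $t$ classical-shadow snapshots is a function of the $t$-copy state, and invoke the data-processing inequality together with the negligible trace distance between $\mathbb{E}_k\rho_k^{\otimes t}$ and $\mathbb{E}_{\sigma\in\eta_{n,m}}\sigma^{\otimes t}$ to conclude that no post-processing (even with a $\mathsf{PostBQP}$ oracle) can achieve non-negligible distinguishing advantage from $\mathrm{poly}(n)$ samples. This reduces the proposition to the already-established statistical indistinguishability of Thm.~\ref{def:stat_indisting_sup}, now used against an unbounded but sample-limited adversary.
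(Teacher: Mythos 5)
Your first two paragraphs are essentially the paper's own argument: the attack's signal $\operatorname{tr}(O_k\tau)$ with $O_k=\rho_k$ is $\operatorname{negl}(n)$ under \emph{both} hypotheses because the PRDM has negligible purity $\operatorname{tr}(\rho_k^2)=O(2^{-m})$ for $m=\omega(\log n)$, so the overlap must be resolved to negligible additive precision $\epsilon$, and the $\Omega(1/\epsilon^2)$-type sample-complexity lower bound for shadow-based overlap estimation then forces superpolynomially many copies (the paper invokes the lower bound $T=\Omega(n\log(M)^2/\epsilon^2)$ for classical shadows rather than your informal $\Omega(1/\sqrt{t})$ standard-deviation argument, but the route is the same). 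For the proposition as stated --- which only concerns this specific observable-estimation attack --- that part of your proposal is adequate.

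The final paragraph, however, contains a genuine error. You claim that the total variation of the attacker's view is controlled by $\operatorname{TD}\bigl(\mathbb{E}_k\rho_k^{\otimes t},\ \mathbb{E}_{\sigma\in\eta_{n,m}}\sigma^{\otimes t}\bigr)$, and that this quantity is ``negligible by computational indistinguishability of PRDMs and by Thm.~\ref{def:stat_indisting_sup}.'' Computational indistinguishability does \emph{not} bound trace distance, and it gives you nothing against the inefficient ($\mathsf{PostBQP}$-oracle) adversary you are trying to rule out; Thm.~\ref{def:stat_indisting_sup} bounds the distance between the \emph{GHSE} and the maximally mixed state, not between the keyed PRDM ensemble and the GHSE. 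In fact the claim is false in general: the paper's EFI construction (Appendix~\ref{sec:EFI}) exploits precisely the fact that $\mathbb{E}_k\rho_k$ can be statistically \emph{far} (trace distance $\Omega(1)$) from the maximally mixed state when $\kappa+m<n$, even though the ensembles are computationally indistinguishable. If your data-processing argument worked, it would prove security of PRDMs against \emph{every} sample-efficient unbounded-computation attack, which the paper explicitly leaves as an open question (only the reduction in the other direction, Proposition~\ref{prop:PRDMPRS}, is established). So you should either drop that paragraph and restrict the conclusion to the specific overlap-estimation attack, or find a genuinely different argument for the stronger statement --- the triangle-inequality-through-trace-distance shortcut does not exist here.
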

\begin{proof}
    The attack computes overlaps $\text{tr}(\rho_k O_i)$ where observable $O_i=\rho_i$ is a state from the PRDM. 
    For $\omega(\log n)$, we have $\text{tr}(\rho_k \rho_i)=\operatorname{negl}(n)$ for any $i$, including the correct state $i=k$. This is because $\rho_k$ has negligible purity, i.e. $\text{tr}(\rho_k^2)=\operatorname{negl}(n)$.
    Thus to measure the overlap, the additive precision required will be $\epsilon = \operatorname{negl}(n)$. Now,  classical shadows~\cite{lowe2022lower} have a lower bound on sample complexity $T = \Omega(n \log(M)^2/\epsilon^2)$~\cite{buadescu2021improved} where $M$ is the number of observables to be measured. As $\epsilon=\operatorname{negl}(n)$, we get the number of samples as $T = \Omega(1/\operatorname{negl}(n))$.
\end{proof}

Next, any sample-efficient algorithm that distinguishes PRDMs and GHSE also can distinguish PRS and Haar random states. Note that we do not make any assumptions about the computational complexity of the algorithm, and only demand sample efficiency, i.e. needing at most $t=\text{poly}(n)$ copies of the state:

\begin{proposition}[PRDM attack implies PRS attack]\label{prop:PRDMPRS}
    Any sample-efficient algorithm that distinguishes PRDM and GHSE also sample-efficiently distinguishes PRS from Haar random states.
\end{proposition}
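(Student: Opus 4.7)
The plan is to exhibit an explicit reduction that turns any sample-efficient distinguisher for PRDM vs GHSE into one for PRS vs Haar, exploiting the fact that the natural PRDM construction of Fact~\ref{thm:PRDMSconstruct} is obtained by partially tracing a PRS. First I would fix the PRDM family $\{\rho_{k,m}\}_{k\in\mathcal{K}}$ defined via~\eqref{eq:tracePRDM} from an $(n+m)$-qubit PRS ensemble $\{\ket{\psi_k}\}_{k\in\mathcal{K}}$. By hypothesis, any sample-efficient $\mathcal{A}$ that distinguishes PRDMs from GHSE in particular distinguishes this partial-trace PRDM family from $\eta_{n,m}$ using $t=\operatorname{poly}(n)$ copies.

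Next I build the reduction $\mathcal{B}$. Given $t$ copies of an unknown $(n+m)$-qubit pure state $\ket{\phi}$, drawn either from the PRS ensemble or from the Haar measure $\mu_{n+m}$, the algorithm $\mathcal{B}$ discards $m$ designated qubits in each copy to produce $t$ copies of the $n$-qubit state $\operatorname{tr}_m(\ket{\phi}\bra{\phi})$, then runs $\mathcal{A}$ on these. Partial trace is trivially implementable and preserves the sample count, so $\mathcal{B}$ is sample-efficient whenever $\mathcal{A}$ is.

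It remains to verify that the induced ensembles seen by $\mathcal{A}$ are exactly the PRDM and GHSE ensembles: if $\ket{\phi}=\ket{\psi_k}$ with $k\leftarrow\mathcal{K}$, then $\operatorname{tr}_m(\ket{\psi_k}\bra{\psi_k})=\rho_{k,m}$ by construction, whereas if $\ket{\phi}\leftarrow\mu_{n+m}$, then $\operatorname{tr}_m(\ket{\phi}\bra{\phi})$ is distributed exactly according to $\eta_{n,m}$ by Definition~\ref{def:tracenensemble}. Therefore
\begin{equation*}
\bigl|\Pr_{k\leftarrow\mathcal{K}}[\mathcal{B}(\ket{\psi_k}^{\otimes t})=1]-\Pr_{\ket{\phi}\leftarrow\mu_{n+m}}[\mathcal{B}(\ket{\phi}^{\otimes t})=1]\bigr|
\end{equation*}
is equal to the PRDM-vs-GHSE distinguishing advantage of $\mathcal{A}$, which is non-negligible by assumption. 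Hence $\mathcal{B}$ sample-efficiently distinguishes PRS from Haar random states on $n+m$ qubits, and since $m=\omega(\log n)$ satisfies $n+m=\operatorname{poly}(n)$, this yields a PRS attack in the original security parameter.

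The only subtlety is that the statement quantifies over all PRDMs rather than the specific partial-trace construction; this is handled by noting that the partial-trace construction is itself a valid PRDM family (whenever PRSs exist), so a generic PRDM distinguisher must in particular break this instance, and the reduction above applies. No genuinely hard step is expected — partial trace is the obvious data-processing map connecting the two security games, and the argument is essentially syntactic.
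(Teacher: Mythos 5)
Your reduction is exactly the paper's own argument: construct the PRDM by tracing out $m$ qubits of a PRS (and the GHSE by tracing out $m$ qubits of a Haar random state), so a sample-efficient PRDM-vs-GHSE distinguisher, applied after the partial trace, distinguishes PRS from Haar with the same number of copies. The proposal is correct and adds only the (valid) clarification about quantifying over all PRDM families, which the paper leaves implicit.
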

\begin{proof}
    PRDMs can be efficiently constructed from PRS by partial trace as shown in~\eqref{eq:tracePRDM}. Further, GHSE can be constructed the same way via partial tracing of Haar random states. Assuming a sample-efficient algorithm $\mathcal{A}(\rho^{\otimes t})$ with $t=\text{poly}(n)$ that distinguishes PRDMs for GHSE, this algorithm also distinguishes PRS from Haar random states. In particular, one first traces out $m$ qubits, then gives it to $\mathcal{A}$ to distinguish.
\end{proof}
By combining Proposition~\ref{prop:Kretschmer} and Proposition~\ref{prop:PRDMPRS}, we immediately get a separation in security between PRDM and PRS:
\begin{theorem}[Security separation between PRDM and PRS (see \SM{}~\ref{sec:kretschmer})]\label{thm:separation_sup}
    Every sample-efficient algorithm that distinguishes PRDMs from GHSE also distinguishes PRSs from Haar random states, while there exist $\mathsf{PostBQP}$ oracles that sample-efficiently distinguish PRSs from Haar random states yet fail for PRDM and GHSE.
\end{theorem}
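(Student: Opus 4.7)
The plan is to prove the separation by combining the two propositions immediately preceding the theorem, each handling one direction of the asymmetry.

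For the first direction, I would argue by reduction. Suppose $\mathcal{A}$ is a sample-efficient distinguisher between PRDM $\{\rho_{k,m}\}$ and GHSE $\eta_{n,m}$ using $t=\mathrm{poly}(n)$ copies. Construct a distinguisher $\mathcal{A}'$ for $(n+m)$-qubit PRS versus Haar random states as follows: on input $t$ copies of an $(n+m)$-qubit state $\ket{\phi}$, apply $\operatorname{tr}_m$ to each copy (an efficient operation, hence sample-preserving) and feed the resulting $n$-qubit mixed states to $\mathcal{A}$. By construction, if $\ket{\phi}$ is drawn from a PRS ensemble, the reduced states are distributed as the PRDM via \eqref{eq:tracePRDM}; if $\ket{\phi}$ is Haar random, the reduced states are distributed as $\eta_{n,m}$. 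Thus $\mathcal{A}'$ inherits the non-negligible distinguishing advantage of $\mathcal{A}$, which is Proposition~\ref{prop:PRDMPRS}.

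For the second direction, I would invoke Kretschmer's result from Ref.~\cite{kretschmer2021quantum}: there exists a sample-efficient $\mathsf{PostBQP}$-oracle algorithm that distinguishes any PRS from Haar random states, by computing shadow estimates of the overlap observable $O_k=\ket{\psi_k}\bra{\psi_k}$ for candidate keys $k$ and using post-selection to select a key whose overlap is close to $1$. Then I would show that the analogous attack fails for PRDM versus GHSE when $m=\omega(\log n)$. The observable would be $O_k=\rho_{k,m}$, but now even on the correct key the overlap is $\operatorname{tr}(\rho_{k,m}^2)\leq 2^{-m}=\operatorname{negl}(n)$, while on GHSE states it is also negligible by the indistinguishability in \eqref{eq:indist_ghse}. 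Distinguishing these two negligible quantities at additive precision $\operatorname{negl}(n)$ via classical shadows requires $\Omega(n\log(M)^2/\epsilon^2)$ samples by the lower bound of Ref.~\cite{buadescu2021improved}, which is super-polynomial. This is Proposition~\ref{prop:Kretschmer}, and combined with the preceding reduction it establishes the full separation.

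The main subtlety, and what I would highlight as the core of the argument, is the asymmetry in the roles of purity: the PRS attack fundamentally relies on the rank-one observable $\ket{\psi_k}\bra{\psi_k}$ having constant overlap with itself, whereas the natural PRDM analogue collapses because partial tracing destroys purity to the negligible level $2^{-m}$. One should check that no alternative choice of shadow observable circumvents this; here I would note (without a full proof) that the entire class of attacks captured by the sample-complexity lower bound of Ref.~\cite{buadescu2021improved} requires inverse-precision scaling, so the barrier is generic rather than specific to the overlap observable. The statement of the theorem only claims existence of a $\mathsf{PostBQP}$ oracle separating the two, which is already given by this specific attack, so a generic no-go for PRDMs is not required for the theorem itself.
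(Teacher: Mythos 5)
Your proposal matches the paper's own argument: the first direction is exactly Proposition~\ref{prop:PRDMPRS} (partial trace reduction from a PRDM/GHSE distinguisher to a PRS/Haar distinguisher), and the second is exactly Proposition~\ref{prop:Kretschmer} (Kretschmer's shadow-plus-$\mathsf{PostBQP}$ attack succeeds for PRS but fails for PRDM with $m=\omega(\log n)$ because $\operatorname{tr}(\rho_{k}^2)=\operatorname{negl}(n)$ forces negligible additive precision, which the shadow sample-complexity lower bound of Ref.~\cite{buadescu2021improved} rules out sample-efficiently). The approach and key estimates are essentially identical to the paper's, so the proposal is correct.
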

}

\section{Pseudomagic}\label{sec:pseudomagic}

Here, we explicitly construct a PRDM ensemble which we call the pseudomagic ensemble: This pseudomagic ensemble consists of two computationally indistinguishable ensembles with different log-free robustness of magic $\text{LR}(\rho)$. In particular, the first ensemble has high magic, while the second ensemble has zero magic. 
The gap in terms of magic between the two ensembles is maximal  $f(n)=\Theta(n)$ vs $g(n)=0$.
First, we define pseudomagic state ensemble:

\begin{definition}[Pseudomagic]
A \emph{pseudomagic state ensemble} with gap $f(n)$ vs. $g(n)$ consists of two ensembles of $n$-qubit states $\rho_k$ and $\sigma_k$, indexed by a secret key $k \in \mathcal{K}$, $k\in\{0,1\}^{\mathrm{poly}(n)}$ with the following properties:

\begin{enumerate}
\item \emph{Efficient Preparation}: Given $k$, $\rho_k$ (or $\sigma_k$, respectively) is efficiently preparable by a uniform, poly-sized quantum circuit.

\item \emph{Pseudomagic}: With probability $\geq 1 - \frac{1}{\mathrm{poly}(n)}$ over the choice of $k$, the log-free robustness of magic $\text{LR}(\rho)=\log(\min \vert c_\phi\vert \text{ s.t }  \rho=\sum_{\phi \in \text{STAB}}c_\phi \phi)$ for $\rho_k$ (or $\sigma_k$, respectively) is $\Theta(f(n))$ (or $\Theta(g(n))$, respectively).

\item \emph{Indistinguishability}: For any polynomial $p(n)$, no poly-time quantum algorithm can distinguish between the ensembles of $\mathrm{poly}(n)$ copies 
with more than negligible probability. That is, for any poly-time quantum algorithm $A$, we have that
\[\left| \Pr_{k \gets \mathcal{K}} [A(\rho_k^{\otimes \mathrm{poly}(n)}) = 1] - \Pr_{k \gets \mathcal{K}} [A(\sigma_k^{\otimes \mathrm{poly}(n)}) = 1] \right| = \operatorname{negl}(n)\,.\]
\end{enumerate}
\end{definition}

\begin{theorem}[Maximal pseudomagic]
There are pseudomagic state ensembles with pseudomagic gap $f(n)=\Theta(n)$ vs $g(n)=0$. 
\end{theorem}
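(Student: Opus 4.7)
The plan is to exhibit the two ensembles explicitly using the binary phase PRDM construction from Fact~\ref{thm:PRDMSconstruct} as the high-magic ensemble, and the maximally mixed state $\sigma_k := I_n/2^n$ (which is trivially keyed, i.e.\ independent of $k$) as the zero-magic ensemble. Both are efficiently preparable: the PRDM by tracing out $m$ qubits of a binary phase PRS on $n+m$ qubits, and $I_n/2^n$ by preparing $n$ maximally mixed qubits. So condition 1 of the definition is immediate. I would fix $m = \operatorname{polylog}(n)$ throughout, which is $\omega(\log n)$ as required.

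Next I would verify the magic bounds. For the zero-magic side, $I_n/2^n$ is a uniform mixture of computational basis states, hence a convex combination of stabilizer states, so $\text{LR}(I_n/2^n)=0$. For the high-magic side, by Theorem~\ref{thm:propGHSE} (more precisely Theorem~\ref{thm:magic_tune}) every state drawn from $\eta_{n,m}$ has $\text{LR}(\rho)\ge n-m-2\log(n+m)-1 = \Theta(n)$ with overwhelming probability. I would then transfer this bound from GHSE to the PRDM $\rho_{k,m}$ by a standard argument: if with non-negligible probability over $k$ the PRDM had $\text{LR}(\rho_{k,m})=o(n)$, one could build an efficient distinguisher between the PRDM and GHSE ensembles by estimating any quantity that separates them (e.g.\ the expectation of a stabilizer-based witness), contradicting Definition~\ref{def:PRDM}. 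Hence with probability $\ge 1-\operatorname{negl}(n)$ over $k$, $\text{LR}(\rho_{k,m})=\Theta(n)$, which is the pseudomagic condition 2 with $f(n)=\Theta(n)$.

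Condition 3, computational indistinguishability, follows from the triangle inequality: PRDMs are by definition computationally indistinguishable from the GHSE $\eta_{n,m}$, and by~\eqref{eq:indist_ghse} the GHSE is statistically (hence computationally) $O(t^2/2^m)$-close to $(I_n/2^n)^{\otimes t}$, which is $\operatorname{negl}(n)$ for $t=\operatorname{poly}(n)$ and $m=\omega(\log n)$. This is exactly Theorem~\ref{thm:PRDM_indisting} applied to the binary phase PRDM. Composing the two bounds yields the required negligible distinguishing advantage against any poly-time adversary.

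The only nontrivial step is transferring the Haar-average magic bound (valid for GHSE) to the PRDM ensemble itself; I expect this to be the main conceptual obstacle, since $\text{LR}$ is not an efficiently measurable quantity. The clean way to resolve it is via a contrapositive argument using a property tester or monotone-based witness: any polynomial-sized discrepancy in the distribution of $\text{LR}(\rho_{k,m})$ versus $\text{LR}(\rho)$ for $\rho\in\eta_{n,m}$ would give a $\operatorname{poly}(n)$-time distinguisher between the two ensembles, contradicting pseudorandomness. Since both ensembles live over the same domain of density matrices and the GHSE concentrates sharply around $\text{LR}=\Theta(n)$, this reduction gives the claimed high-magic property of the PRDM with overwhelming probability and completes the proof.
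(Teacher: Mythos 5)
Your setup (binary phase PRDM with $m=\operatorname{polylog}(n)$ as the high-magic ensemble, $I_n/2^n$ as the zero-magic ensemble, efficient preparation, and indistinguishability via the triangle inequality through the GHSE as in Thm.~\ref{thm:PRDM_indisting}) matches the paper. However, the step you flag as the main obstacle is where the argument genuinely breaks: you cannot transfer the GHSE magic bound to the PRDM by a contrapositive ``build a distinguisher'' reduction. Computational indistinguishability from $\eta_{n,m}$ does not constrain the value of $\text{LR}$ of the keyed states, because $\text{LR}$ is not an efficiently estimable quantity for highly mixed states --- indeed, the very theorem you are proving (and Thm.~\ref{thm:inefficient_testing}) asserts that ensembles with magic $\Theta(n)$ and $0$ can be computationally indistinguishable. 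A concrete counterexample to your transfer step: for $m=\omega(\log n)$ the maximally mixed state itself satisfies Definition~\ref{def:PRDM} (it is statistically close to the GHSE), yet has $\text{LR}=0$. So ``PRDM with $m=\operatorname{polylog}(n)$'' does not by itself imply high magic, and no efficient ``stabilizer-based witness'' separating the distributions of $\text{LR}$ can exist in general without contradicting the pseudomagic phenomenon itself.

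The paper instead proves the magic lower bound \emph{directly for the specific construction}: from the dual of the robustness linear program one gets $\text{LR}(\operatorname{tr}_m(\ket{\psi}\bra{\psi}))\geq -\log F_{\text{STAB}}(\ket{\psi})-2m$ (\eqref{eq:logfreebound}), then bounds the stabilizer fidelity by the stabilizer $2$-R\'enyi entropy, $F_{\text{STAB}}(\ket{\psi})\leq 2^{-\frac{1}{4}M_2(\ket{\psi})}$~\cite{haug2023stabilizer,leone2021renyi}, and finally invokes the result of Ref.~\cite{gu2023little} that binary phase states built from (8-wise independent) pseudorandom functions have $M_2(\ket{\psi_k})=\Theta(n+m)$ with overwhelming probability. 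This yields $\text{LR}(\rho_{k,m})\geq \frac{1}{4}M_2(\ket{\psi_k})-2m=\Theta(n)$ for $m=\operatorname{polylog}(n)$, which is the property your reduction was supposed to supply. To repair your proof you must replace the indistinguishability-based transfer with such a direct, construction-specific bound (or any other unconditional argument about the particular keyed states); the remaining parts of your proposal can then stand as written.
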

\begin{proof}
Our second ensemble is the maximally mixed state $\{I_n/2^n\}$ with zero magic $\text{LR}=g(n)=0$. We note that the maximally mixed state can be efficiently prepared via $n$ Bell pairs $\ket{\Phi}=[\frac{1}{\sqrt{2}}(\ket{00}+\ket{11})]^{\otimes n}$, and tracing out the last qubit of each Bell pair, i.e. $I_n/2^n=\text{tr}_n(\ket{\Phi}\bra{\Phi})$. 

As our first ensemble, we construct the PRDMs $\chi_{n,m}=\{\text{tr}_m(\ket{\psi_k} \bra{\psi_k})\}_{k \leftarrow \mathcal{K}}$ with $m=\text{polylog}(n)$ which we define as $\chi_n\equiv \chi_{n,\text{polylog}(n)}$.
Here, we have the $n+m$ qubit PRS construction via the binary phase state $\ket{\psi_k}=2^{-(n+m)/2}\sum_x (-1)^{f_k(x)}\ket{x}$ with pseudorandom function $f_k(x): \{0,1\}^{n+m}\rightarrow\{0,1\}$ by Ref.~\cite{ji2018pseudorandom,brakerski2019pseudo,bouland2022quantum, ananth2022pseudorandom}. 
Now, we need to compute the magic of this PRDM. 

We can use the same formalism as for computing the log-free robustness for the GHSE, i.e.~\eqref{eq:logfreebound}. 
Further, we now use the stabilizer R\'enyi entropy instead of the stabilizer fidelity. In particular, we have~\cite{haug2023stabilizer} 
\begin{equation}
    F_\text{STAB}(\ket{\psi})\leq 2^{-\frac{1}{4}M_2(\ket{\psi})}\,,
\end{equation}
where we have the $2$-R\'enyi stabilizer entropy~\cite{leone2021renyi}
\begin{equation}
    M_2(\ket{\psi})=-\log\left(2^{-n}\sum_{P\in\mathcal{P}_{n+m}} \bra{\psi}P\ket{\psi}^4\right)
\end{equation}
where $\mathcal{P}_{n+m}$ is the set of unsigned tensor products of Pauli operators with $\vert \mathcal{P}_n\vert=4^{n+m}$.
Thus, we can write
\begin{equation}\label{eq:logfreebound2}
\text{LR}(\text{tr}_m(\ket{\psi}\bra{\psi}))\geq -\log(F_\text{STAB}(\ket{\psi}))-2m \geq 
\frac{1}{4}M_2(\ket{\psi})-2m\,.
\end{equation}
For the $2$-R\'enyi stabilizer entropy, a lower bound was found for the binary phase state in Ref.~\cite{gu2023little}. 
In particular, when the pseudorandom function is sampled randomly from the ensemble of 8-wise independent pseudorandom functions, we have $M_2(\ket{\psi_k})=\Theta(n+m)$ overwhelming probability~\cite{gu2023little}. 
Thus, the average log-free robustness of magic for the PRDM is given by
\begin{equation}
\Eset{k}[\text{tr}_m(\ket{\psi_k}\bra{\psi_k})]=\Theta(n)-2m\,.
\end{equation}
In particular, for $m=\text{polylog}(n)$ we have
\begin{equation}
\Eset{k}[\text{tr}_m(\ket{\psi_k}\bra{\psi_k})]=f(n)=\Theta(n)\,.
\end{equation}

\end{proof}

\section{Pseudocoherence}\label{sec:pseudocoherence}

Here we explicitly construct a special PRDM ensemble which we call the pseudocoherence ensemble: This pseudocoherence ensemble consists of two computationally indistinguishable ensembles with different relative entropy of coherence $C(\rho)$. In particular, the first ensemble has high coherence, while the second ensemble has zero coherence. 
The gap in terms of coherence between the two ensembles is maximal  $f(n)=\Theta(n)$ vs $g(n)=0$.
First, we define pseudocoherent state ensemble:

\begin{definition}[Pseudocoherent State Ensemble]\label{def:pseudocoherent ensemble}
A \emph{pseudocoherent state ensemble} with gap $f(n)$ vs. $g(n)$ consists of two ensembles of $n$-qubit states $\rho_k$ and $\sigma_k$, indexed by a secret key $k \in \mathcal{K}$, $k\in\{0,1\}^{\mathrm{poly}(n)}$ with the following properties:

\begin{enumerate}
\item \emph{Efficient Preparation}: Given $k$, $\rho_k$ (or $\sigma_k$, respectively) is efficiently preparable by a uniform, poly-sized quantum circuit.

\item \emph{Pseudocoherence}: With probability $\geq 1 - \frac{1}{\mathrm{poly}(n)}$ over the choice of $k$, the relative entropy of coherence  of $\rho_k$ (or $\sigma_k$, respectively) is $\Theta(f(n))$ (or $\Theta(g(n))$, respectively).

\item \emph{Indistinguishability}: For any polynomial $p(n)$, no poly-time quantum algorithm can distinguish between the ensembles of $\mathrm{poly}(n)$ copies 
with more than negligible probability. That is, for any poly-time quantum algorithm $\mathcal{A}$, we have that
\[\left| \Pr_{k \gets \mathcal{K}} [A(\rho_k^{\otimes \mathrm{poly}(n)}) = 1] - \Pr_{k \gets \mathcal{K}} [A(\sigma_k^{\otimes \mathrm{poly}(n)}) = 1] \right| = \operatorname{negl}(n)\,.\]
\end{enumerate}
\end{definition}

\begin{theorem}[Maximal pseudocoherence]
There exists pseudocoherent state ensembles with pseudocoherence gap $f(n)=\Theta(n)$ vs $g(n)=0$. 
\end{theorem}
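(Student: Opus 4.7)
The plan is to mirror the construction used for the pseudomagic theorem, taking the binary phase PRDM as the high-resource ensemble and the maximally mixed state as the low-resource ensemble. Concretely, I would take $\rho_k = \operatorname{tr}_m(\ket{\psi_k}\bra{\psi_k})$ with $m=\operatorname{polylog}(n)$, where $\ket{\psi_k}$ is the $(n+m)$-qubit binary phase PRS from Fact~\ref{thm:PRDMSconstruct}, and $\sigma_k \equiv I_n/2^n$ (efficiently preparable by halves of $n$ Bell pairs). Efficient preparation and indistinguishability are immediate: the first follows from Fact~\ref{thm:PRDMSconstruct}, and the second is exactly Thm.~\ref{thm:PRDM_indisting}, since $m=\operatorname{polylog}(n)=\omega(\log n)$.

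The only nontrivial step is to show $C(\rho_k)=\Theta(n)$ with overwhelming probability over $k$, together with $C(I_n/2^n)=0$. The latter is trivial because $I_n/2^n$ is already diagonal, so $\Delta[I_n/2^n]=I_n/2^n$ and the two entropies cancel. For the former, I would exploit a special structural feature of the binary phase construction: writing $\ket{\psi_k}=2^{-(n+m)/2}\sum_{x\in\{0,1\}^n,\,j\in\{0,1\}^m}(-1)^{f_k(x,j)}\ket{x}\ket{j}$ and tracing the last $m$ qubits, a direct computation gives
\begin{equation}
\Delta[\rho_k]\;=\;2^{-(n+m)}\sum_{x,j}\ket{x}\bra{x}\;=\;I_n/2^n,
\end{equation}
so $S(\Delta[\rho_k])=n$ deterministically, independent of the key $k$. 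On the other side, $\rho_k$ has rank at most $2^m$, hence $S(\rho_k)\le m$, giving
\begin{equation}
C(\rho_k)\;=\;S(\Delta[\rho_k])-S(\rho_k)\;\ge\;n-m\;=\;\Theta(n)
\end{equation}
for $m=\operatorname{polylog}(n)$. This holds with probability $1$ over $k$, which comfortably satisfies the pseudocoherence condition in Def.~\ref{def:pseudocoherent ensemble}.

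Combining the three items, $(\rho_k,\sigma_k)$ is a pseudocoherent ensemble with gap $f(n)=\Theta(n)$ versus $g(n)=0$. Optimality is automatic: for any $n$-qubit state $C(\rho)\le n$ and $g(n)\ge 0$, so the gap cannot asymptotically exceed $\Theta(n)$ versus $0$. I expect no serious obstacle—the main subtlety is simply noticing that the binary phase structure forces the diagonal to be exactly maximally mixed, which sidesteps any need for concentration arguments like the GHSE bound in~\eqref{eq:concentrationCoherence} or independence assumptions on $f_k$ beyond those already required to instantiate the PRS.
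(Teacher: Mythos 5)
Your proposal is correct and follows essentially the same route as the paper: the same binary phase PRDM with $m=\operatorname{polylog}(n)$ versus the maximally mixed state, the rank bound $S(\rho_k)\le m$, and the key observation that the dephased reduced state is exactly $I_n/2^n$ (the paper derives this by noting $\Delta[\ket{\psi_k}\bra{\psi_k}]=I_{n+m}/2^{n+m}$ and that partial trace commutes with the dephasing channel, while you compute the diagonal of $\rho_k$ directly — the same fact). Your direct computation even streamlines the paper's argument slightly, and your remark that no concentration over $k$ is needed is accurate.
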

\begin{proof}
    The second ensemble is simply the maximally mixed state $I_n/2^n$. 
    The maximally mixed state has coherence $C(I_n/2^n)=g(n)=0$.

    We construct the first ensemble as $\chi_{n,m} = \{\operatorname{tr}_m(\ket{\psi_k})\}_{k\in\mathcal{K}}$
    where we choose $m=\text{polylog}(n)$ and define $\chi\equiv \chi_{n,\text{polylog}(n)}$.
    Here, $\mathcal{K}$ is the key space, $\ket{\psi_k}=2^{-(n+m)/2}\sum_x (-1)^{f_k(x)}\ket{x}$ the $n+m$ qubit binary phase state with pseudorandom function $f_k(x): \{0,1\}^{n+m}\rightarrow\{0,1\}$.
    The PRDM can be efficiently constructed using the PRS construction of Ref.~\cite{brakerski2019pseudo}.

    The coherence of $\chi_{n,m}$ is straightforward to compute. 
    In particular, 
    \begin{equation}\label{eq:coherencePRDM_sup}
        \Eset{k}[C(\operatorname{tr}_m(\ket{\psi_k}))] = \Eset{k}[S(\Delta[\operatorname{tr}_m(\ket{\psi_k})])] - \Eset{k}[S(\operatorname{tr}_m(\ket{\psi_k}))]\geq n-m .
    \end{equation}
    The last term can be upper bounded as $\Eset{k}[S(\operatorname{tr}_m(\ket{\psi_k}))]\leq m$ due to it being at most rank $m$ as shown in~\eqref{eq:neumannrank}. 
    The first term in~\eqref{eq:coherencePRDM_sup} can be directly computed. It is easy to see that 
    $S(\Delta[\ket{\psi_k}])=n+m$ as $\Delta[\ket{\psi_k}]=I_{n+m}/2^{n+m}$. 
    Thus, we have 
    $S(\text{tr}_m(\Delta[\ket{\psi_k}]))=n$. 
    Next, we note that one can interchange partial trace and diagonal operator $\Delta[\rho]$
    \begin{equation}
   \text{tr}_m(\Delta[\rho])=\Delta[\text{tr}_m(\rho)].
    \end{equation}
    We have
        \begin{equation}
        \Delta[\text{tr}_m(\rho)]=  \sum_{j}\ket{j} \bra{j}\left(\sum_{\ell}\bra{\ell}\rho\ket{\ell}\right) \ket{j}\bra{j}
    \end{equation}
    and
        \begin{equation}
        \text{tr}_m(\Delta[\rho])=\sum_{\ell} \bra{\ell} \left(\sum_{ij}\ket{ij}\bra{ij}\rho\ket{ij}\bra{ij}\right)\ket{\ell}
    \end{equation}
    where $\ell$ and $i$ is the summation over the $m$ qubits, while $j$ the summation over the $n$ qubits.

    Now, one can easily show
    \begin{gather}
        \text{tr}_m(\Delta[\rho])=\sum_{\ell} \bra{\ell} \left(\sum_{ij}\ket{ij}\bra{ij}\rho\ket{ij}\bra{ij}\right)\ket{\ell}=\sum_{\ell}  \left(\sum_{ij}\delta_{\ell i}\ket{j}\bra{ij}\rho\ket{ij}\bra{j}\delta_{\ell i}\right)=\\
        \sum_{\ell}  \left(\sum_{j}\ket{j}\bra{\ell j}\rho\ket{\ell j}\bra{j}\right)=\sum_{j}\ket{j} \bra{j}\left(\sum_{\ell}\bra{\ell}\rho\ket{\ell}\right) \ket{j}\bra{j}=\text{tr}_m(\Delta[\rho]).
    \end{gather}

    With this result and $\Delta[\ket{\psi_k}]=I_{n+m}/2^{n+m}$ we immediately get
        \begin{equation}
        S(\Delta[\text{tr}_m(\ket{\psi_k})])=n\,.
    \end{equation}
    Now, finally we choose $m=\text{polylog}(n)$ to get
    \begin{equation}
        \Eset{k}[C(\operatorname{tr}_m(\ket{\psi_k}))] = f(n)=\Theta(n)\,.
    \end{equation}
\end{proof}

\revA{
\section{Pseudoentanglement}\label{sec:pseudoentanglement}

Pseudoentanglement consists of two computationally indistinguishable ensembles of states, one with high entanglement while the other has low entanglement~\cite{bouland2022quantum}. Here, we use distillable entanglement $E_\text{D}$ as well as entanglement cost $E_\text{C}$.

We now construct an ensemble of mixed states that has a maximal gap in terms of entanglement between the two ensembles, i.e. $f(n)=\Theta(n)$ vs $g(n)=0$. Note that for pure states this gap is only $f(n)=\Theta(n)$ vs $g(n)=\text{polyog}(n)$~\cite{bouland2022quantum}.

First, we define pseudoentanglement:

\begin{definition}[Pseudoentangled state ensemble]\label{def:pseudoentangled ensemble}
A \emph{pseudoentangled state ensemble} with gap $f(n)$ vs $g(n)$ consists of two ensembles of $n$-qubit states $\rho_k$ and $\sigma_k$, indexed by a secret key $k \in \mathcal{K}$, $k\in\{0,1\}^{\mathrm{poly}(n)}$ with the following properties:

\begin{enumerate}
\item \emph{Efficient Preparation}: Given $k$, $\rho_k$ (or $\sigma_k$, respectively) is efficiently preparable by a uniform, poly-sized quantum circuit.

\item \emph{Pseudoentanglement}: With probability $\geq 1 - \frac{1}{\mathrm{poly}(n)}$ over the choice of $k$, the distillable entanglement $E_\text{D}$  of $\rho_k$ (or $\sigma_k$, respectively) is $\Theta(f(n))$ (or $\Theta(g(n))$, respectively). 

\item \emph{Indistinguishability}: For any polynomial $p(n)$, no poly-time quantum algorithm can distinguish between the ensembles of $\mathrm{poly}(n)$ copies 
with more than negligible probability. That is, for any poly-time quantum algorithm $\mathcal{A}$, we have that
\[\left| \Pr_{k \gets \mathcal{K}} [A(\rho_k^{\otimes \mathrm{poly}(n)}) = 1] - \Pr_{k \gets \mathcal{K}} [A(\sigma_k^{\otimes \mathrm{poly}(n)}) = 1] \right| = \operatorname{negl}(n)\,.\]
\end{enumerate}
\end{definition}

For entanglement, we have to choose a bipartition relative to which we measure the entanglement. Here, we assume that any bipartition $A$, $B$ with $n_A\leq n_B$ can be chosen, where the distinguisher may make the choice. Note that to get a large pseudoentanglement gap, trivially one must choose $n_A=\Theta(n)$, else the entanglement cannot be large trivially.

\begin{theorem}[Maximal pseudoentanglement]
For any chosen bipartition $A$, $B$ with $n_A\leq n_B$ with $n_A=\Theta(n)$, there exists pseudoentangled state ensembles with pseudoentanglement gap $f(n)=\Theta(n)$ vs $g(n)=0$ with high probability. 
\end{theorem}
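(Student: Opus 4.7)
The plan is to mirror the structure used for pseudomagic and pseudocoherence: take the maximally mixed state $I_n/2^n$ (with $E_\text{D}=g(n)=0$, preparable by tracing one half of $n$ Bell pairs) as the low-resource ensemble, and take the binary phase PRDM $\chi_{n,m}=\{\operatorname{tr}_m(\ket{\psi_k}\bra{\psi_k})\}_{k \in \mathcal{K}}$ with $m=\operatorname{polylog}(n)$ from Fact~\ref{thm:PRDMSconstruct} as the high-resource ensemble. Efficient preparation of $\chi_{n,m}$ is immediate from Fact~\ref{thm:PRDMSconstruct}, and computational indistinguishability from $I_n/2^n$ follows directly from Theorem~\ref{thm:PRDM_indisting}. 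Hence only the entanglement lower bound on $\chi_{n,m}$ remains.

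For the entanglement bound I would reuse the derivation of Theorem~\ref{thm:entanglement_tune}, which only needed two ingredients: (i) that $\operatorname{rank}(\rho_k)\leq 2^m$, giving $S(\rho_k)\leq m$; and (ii) an upper bound on the expected purity of the reduced state $\rho_A=\operatorname{tr}_{B\cup M}(\ket{\psi_k}\bra{\psi_k})$ on the $A$ subsystem. For Haar random states the Weingarten calculation yields $\mathbb{E}[\operatorname{tr}(\rho_A^2)]=(2^{n_A}+2^{n_B+m})/(2^{n+m}+1)$; I would replicate this bound for the binary phase construction by exploiting that the relevant moments only depend on the $4$-wise correlations of the phases $(-1)^{f_k(x)}$, and quantum-secure PRFs are indistinguishable from truly random functions, which in turn are $4$-wise independent. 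Plugging into the hashing inequality $E_\text{D}(\rho_k)\geq S(\rho_A)-S(\rho_k)$ together with $S(\rho_A)\geq -\log\operatorname{tr}(\rho_A^2)$ and Jensen then yields $\mathbb{E}_k[E_\text{D}(\rho_k)]\geq n_A-m-1=\Theta(n)$ when $n_A=\Theta(n)$ and $m=\operatorname{polylog}(n)$.

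To upgrade the expectation into a statement that holds with high probability over $k$, I would use a variance/concentration argument: the purity is Lipschitz-bounded, and its variance over binary phase states can be controlled by computing eighth-order Pauli moments which are again reproduced by the $8$-wise independence property used in earlier parts of the paper (and present in the Ref.~\cite{gu2023little} analysis of stabilizer R\'enyi entropy). A McDiarmid-type or direct variance bound then gives that a $1-1/\operatorname{poly}(n)$ fraction of keys yield $-\log \operatorname{tr}(\rho_A^2)\geq n_A-O(1)$, matching the concentration used for magic and coherence.

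The main obstacle I foresee is making the concentration argument rigorous while relying only on the computational (not statistical) pseudorandomness of the PRF: strictly speaking, computational indistinguishability is too weak to transfer entropy bounds. The cleanest workaround, which I would adopt, is to first prove the purity bound under the assumption that $f_k$ is a truly random function (where Haar-like Weingarten identities apply exactly up to $O(2^{-n-m})$ corrections) and then invoke indistinguishability against any polynomial-time distinguisher that attempts to certify high purity of $\rho_A$, noting that the purity $\operatorname{tr}(\rho_A^2)$ is a polynomial-time measurable quantity via the SWAP test. This ensures that replacing the random function by the PRF can only change the expected purity by $\operatorname{negl}(n)$, preserving the $\Theta(n)$ pseudoentanglement gap.
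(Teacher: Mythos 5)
Your skeleton coincides with the paper's: the low-resource ensemble is $I_n/2^n$, the high-resource ensemble is the binary-phase PRDM $\chi_{n,m}$ with $m=\operatorname{polylog}(n)$, computational indistinguishability follows from Theorem~\ref{thm:PRDM_indisting}, and the entanglement bound comes from the hashing inequality $E_\text{D}(\rho)\geq -\log\operatorname{tr}(\rho_A^2)-\log\operatorname{rank}(\rho)$ with $\operatorname{rank}(\rho)\leq 2^m$. The genuine gap is in the workaround you explicitly adopt for the reduced-purity bound. You prove $\mathbb{E}_k[\operatorname{tr}(\rho_A^2)]=2^{-\Theta(n)}$ for a truly random phase function and then transfer it to the PRF-based ensemble by arguing that purity is efficiently estimable via the SWAP test, so replacing the random function by the PRF changes the expected purity by at most $\operatorname{negl}(n)$. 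That transfer is too weak for the claimed gap: an additive $\operatorname{negl}(n)$ shift of an exponentially small quantity still allows $\operatorname{tr}(\rho_A^2)$ to be as large as, say, $2^{-\log^2 n}$, so your argument only certifies $-\log\operatorname{tr}(\rho_A^2)=\omega(\log n)$, hence $E_\text{D}=\omega(\log n)$ rather than the $\Theta(n)$ required for the maximal gap $f(n)=\Theta(n)$ vs $g(n)=0$. Computational indistinguishability simply cannot pin down exponentially small purities, which is precisely why this step must not be delegated to a hybrid argument.

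The paper avoids this by making the entanglement bound statistical for the actual keyed ensemble: it invokes Ref.~\cite{bouland2022quantum}, i.e., one chooses $f_k$ so that the required moment bounds hold exactly over the key distribution (in the same spirit, the pseudomagic proof uses PRFs drawn from an $8$-wise independent family, citing Ref.~\cite{gu2023little}). Your second paragraph already contains the right idea --- the purity only depends on $4$-wise (and its concentration on $8$-wise) correlations of the phases --- so if you instantiate $f_k$ from a family that is simultaneously pseudorandom and exactly $k$-wise independent and carry out the moment/variance computation there (plus, if desired, a union bound over cuts), the proof goes through and matches the paper's. But your statement that ``quantum-secure PRFs are indistinguishable from truly random functions, which in turn are $4$-wise independent'' does not by itself give the exact moment identities, and the computational fallback you then commit to does not deliver $\Theta(n)$. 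A smaller remark: L\'evy/Lipschitz concentration applies to Haar states, not to the discrete binary-phase ensemble, so the high-probability statement must also come from the higher moments, as you partly note.
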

\begin{proof}
    The second ensemble is the maximally mixed state $I_n/2^n$. 
    The maximally mixed state has distillable entanglement $E_\text{D}(I_n/2^n)=g(n)=0$, as well as entanglement cost $E_\text{C}(I_n/2^n)=0$.

    We construct the first ensemble as $\chi_{n,m} = \{\operatorname{tr}_m(\ket{\psi_k})\}_{k\in\mathcal{K}}$
    where we choose $m=\text{polylog}(n)$ and define $\chi\equiv \chi_{n,\text{polylog}(n)}$.
    Here, $\mathcal{K}$ is the key space, $\ket{\psi_k}=2^{-(n+m)/2}\sum_x (-1)^{f_k(x)}\ket{x}$ the $n+m$ qubit binary phase state with pseudorandom function $f_k(x): \{0,1\}^{n+m}\rightarrow\{0,1\}$.
    We have $\rho=\text{tr}_m(\chi_{n,m})$ and $\rho_A=\text{tr}_B(\rho)$.
    Now, we use the hashing inequality as given in~\eqref{eq:entSimple} to bound the entanglement
    \begin{equation}
        \Eset{k}[E_\text{D}(\rho)]\geq  \Eset{k}[-\log(\text{tr}(\rho_A^2))-\log(\text{rank}(\rho))]\,.
    \end{equation}
    We have $\log(\text{rank}(\chi_{n,m}))\leq m$ by construction. 
    Next, depending on the choice of $f_k(x)$, the entanglement can be tuned~\cite{bouland2022quantum}. In particular, one can choose $f_k(x)$ such that 
    \begin{equation}\label{eq:entPRDM}
        \Eset{k}[\log(\text{tr}(\rho_A^2))] =\Theta(n)
    \end{equation}
    with high probability for any choice of $A$. 
    Thus, we get for $m=\text{polylog}(n)$
    \begin{equation}
        \Eset{k}[E_\text{D}(\rho)]= \Theta(n)\,.
    \end{equation}
    As $E_\text{C}\geq E_\text{D}$, we also have $\Eset{k}[E_\text{C}(\rho)]\geq \Theta(n)$ for the entanglement cost.

\end{proof}

We note that Ref.~\cite{goulao2024pseudo} found similar pseudoentanglement gaps, but for a more restrictive definition of pseudoentanglement: In Ref.~\cite{goulao2024pseudo}, the bipartition was fixed beforehand, whereas in our definition, the pseudoentanglement gap applies to arbitrary bipartitions.
}

\section{Noise-robust EFI pairs}\label{sec:EFI}
Here, we show that PRDMs can be used to construct EFI pairs.
An EFI pair is a pair of efficient quantum algorithms which prepare states that are statistically far but computationally indistinguishable:
\begin{definition}[EFI pairs~\cite{brakerski2022computational}]
    We call $\nu= \left(\nu_{b, \kappa}\right)$ a pair of EFI states if it satisfies the following criteria:
    \begin{enumerate}
        \item \emph{Efficient generation}: There exists efficient quantum algorithm $\mathcal{G}$ that on input $\left(1^\kappa, b\right)$ for some integer security parameter $\kappa$ and $b \in\{0,1\}$, outputs the mixed state $\nu_{b, \kappa}$.
        \item \emph{Statistically far}: $\Vert\nu_{0, \kappa}- \nu_{1, \kappa}\Vert_1=\Omega(1/\text{poly}(\kappa))$.
        \item \emph{Computational indistinguishability}: $\left(\nu_{0, \kappa}\right)_\kappa$ is computationally indistinguishable to $\left(\nu_{1, \kappa}\right)_\kappa$.
    \end{enumerate}
\end{definition}

\subsection{EFI pairs from PRDM}

We now show that PRDM  and maximally mixed states form EFI pairs for a security parameter $\kappa$ that scales linearly in $n$:
\begin{theorem}[EFI pairs with PRDM]
    PRDMs constructed by tracing out $\omega(\log n)< m<\frac{n}{2}(1-c)-\frac{1}{2}$ qubits from an $n+m$ qubit PRS with security parameter $\kappa=c(n+m)$ and constant $0<c<1$, and the maximally mixed state $I_n/2^n$ form EFI pairs.
\end{theorem}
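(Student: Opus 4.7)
The plan is to check each of the three EFI conditions in turn; the first two are essentially free, and only the statistical-distance bound requires work. Efficient preparability is immediate, since each $\rho_k$ is produced by running the efficient PRS generator on key $k$ and then tracing out $m$ qubits (Fact~\ref{thm:PRDMSconstruct}), while $I_n/2^n$ is prepared trivially (for instance from one half of a maximally entangled state on $2n$ qubits). Computational indistinguishability is precisely the statement of Thm.~\ref{thm:PRDM_indisting}, which applies because $m = \omega(\log n)$.

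For the statistical distance, I would argue via an entropy gap combined with the Fannes--Audenaert inequality, as flagged in the main text. Writing $\nu_0 = \mathbb{E}_{k \in \mathcal{K}}[\rho_k]$ for the PRDM state averaged over a uniform key and $\nu_1 = I_n/2^n$, we have $S(\nu_1) = n$ exactly, while $\nu_0$ is a convex combination of $2^\kappa$ states each of rank at most $2^m$, so by the standard bound on the entropy of a classical--quantum mixture, $S(\nu_0) \leq m + \kappa = m + c(n+m)$. The hypothesis $m < \tfrac{n}{2}(1-c) - \tfrac{1}{2}$ is exactly what is needed to push the entropy gap $S(\nu_1) - S(\nu_0)$ up to $\Omega(n)$. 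Feeding this into Fannes--Audenaert, whose right-hand side is at most $Tn + 1$ for $n$-qubit states, converts the gap into a trace distance $T$ that is bounded below by a positive constant, well beyond the $1/\mathrm{poly}(\kappa)$ threshold required.

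The main obstacle---really the only delicate step---is the entropy bound $S(\nu_0) \leq m + \kappa$: one must be careful to interpret $\nu_0$ as the mixture over uniformly random keys, so that the mixing entropy contribution is indeed at most $\kappa$ bits, and then to verify that the stated parameter window is self-consistent. The lower bound $m = \omega(\log n)$ is exactly what secures the computational side via Thm.~\ref{thm:PRDM_indisting}, while the upper bound $m < \tfrac{n}{2}(1-c) - \tfrac{1}{2}$ is calibrated to keep $S(\nu_0)$ sufficiently below $S(\nu_1) = n$; the two bounds carve out precisely the regime in which both EFI conditions hold simultaneously.
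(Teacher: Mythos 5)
Your proposal is correct and follows essentially the same route as the paper: efficient preparation and computational indistinguishability are inherited from the PRDM construction and Thm.~\ref{thm:PRDM_indisting} (valid since $m=\omega(\log n)$), and statistical farness comes from bounding the key-averaged state's entropy by $S(\nu_0)\le m+\kappa$ (rank at most $2^m$ plus at most $\kappa$ bits of mixing entropy) and feeding the resulting entropy gap into the Fannes--Audenaert inequality, with the window $m<\tfrac{n}{2}(1-c)-\tfrac12$ guaranteeing the required trace-distance lower bound. The only cosmetic difference is that you conclude a constant trace-distance gap while the paper only records the weaker $1/\mathrm{poly}(n)$ threshold needed for the EFI definition; the parameter condition is sufficient rather than tight in both treatments.
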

\begin{proof}
We choose $\nu_0$ to be a PRDM, and $\nu_1$ the maximally mixed state.
Here, we note that the distinguishing algorithm (both for the statistical and computational case) only knows the integer security parameter $\kappa$, however does not know the specific key $k\in\{0,1\}^{\kappa}$.
Thus, the distinguisher only sees the ensemble average over all keys $k$. The corresponding ensemble density matrix for the distinguisher is given by  $\nu_1=I_n/2^n$, and $\nu_0=2^{-\kappa}\sum_{k\in\{0,1\}^{\kappa}} \rho_k$ with $\rho_k\in\zeta_{n,m}$, where we have the PRDM construction $\zeta_{n,m}=\{\text{tr}_m(\ket{\psi_k} \bra{\psi_k})\}_{k\in\{0,1\}^{\kappa}}$ from PRS $\ket{\psi_k}$.

Efficient generation and computationally indistinguishability has already been shown in the main text, which gives us the lower bound $m=\omega(\log n)$. 

Now, statistical distinguishability follows from the Fannes-Audenaert inequality~\cite{audenaert2007sharp}:
\begin{equation}
    |S(\nu_1) - S(\nu_0)| \leq T \log(2^n - 1) + H(\{T,1-T\}),
\label{eq:fannes}
\end{equation}
where $S(\rho)$ is the von Neumann entropy of $\rho$,
\begin{equation}
    T = \text{TD}(\nu_1, \nu_0)=\frac{1}{2}\Vert \nu_1-\nu_0\Vert_1
\end{equation}
is the trace distance between $\nu_1$ and $\nu_0$, and $H(\{T,1-T\})$ is the binary entropy of the probability distribution $\{T, 1-T\}$. 
Since $S(\nu_1) = n$ and $H(\{T,1-T\}) \leq 1$, rearranging~\eqref{eq:fannes} yields
\begin{equation}\label{eq:td_lowerbound_fannes}
    T \geq 1 - \frac{S(\nu_0)+1}{n}.
\end{equation}
To achieve distinguishability we have to achieve a lower bound of $T \geq 1/\text{poly}(n)$. 
Now, we show how to choose $S(\nu_0)$ and $m$ to achieve this.

Now, let us compute $S(\nu_0)$. 
Remember we have
$\nu_0=2^{-\kappa}\sum_{k\in\{0,1\}^{\kappa}}\text{tr}_m(\ket{\psi_k} \bra{\psi_k})$ with PRS $\ket{\psi_k}$. 
First, we bound the von-Neumann entropy of each state of the ensemble 
\begin{equation}
    S(\text{tr}_m(\ket{\psi_k} \bra{\psi_k}))\leq m
\end{equation}
due to the rank being maximally $m$ after partial trace. Now, the ensemble average over the key space can increase the entropy by at most $\kappa$~\cite{morimae2022quantum}, and we finally get
\begin{equation}\label{eq:constraint_Svn_EFI}
S(\nu_0)\leq \kappa +S(\text{tr}_m(\ket{\psi_k} \bra{\psi_k})) \leq \kappa +m \,.
\end{equation}
Inserting into~\eqref{eq:td_lowerbound_fannes} we get
\begin{equation}
    T \geq 1 - \frac{\kappa +m+1}{n}.
\end{equation}
Let us now assume a security parameter scaling as $\kappa=O(m+n)$. Then, we make the ansatz $\kappa=c(n+m)$ with some constant $0<c<1$~\cite{morimae2022quantum}
\begin{equation}
    T \geq 1 - \frac{cn+m(c+1)+1}{n}\,.
\end{equation}
We now demand $T\geq 1/\text{poly}(n)$, which is fulfilled when
\begin{equation}
     1 - \frac{cn+m(c+1)+1}{n} \geq 1/\text{poly}(n)\,,
\end{equation}
which can be satisfied for
\begin{equation}
    m < \frac{n}{2}(1-c)-\frac{1}{2}\,.
\end{equation}

\end{proof}

For general PRDMs we require $m=\omega(\log n)$ and $m< \frac{n}{2}(1-c)-\frac{1}{2}$. However, the upper bound comes from the upper bound of the entropy of PRDM states due to the partial trace over $m$ qubits. However, there are PRDM constructions with much lower entropy and thus wider upper bounds in terms of $m$. In particular, particular, we show how to achieve scaling $m<a n$, with arbitrary constant $a>0$ by using PRDM construction via the PRS proposed for pseudoentanglement~\cite{bouland2022quantum}. In Ref.~\cite{bouland2022quantum}, a PRS with 1D pseudo-area law was given, which has entanglement entropy of only $\text{polylog}(n+m)$ across every bipartition:
\begin{fact}[EFI pair with pseudoentanglement]
    PRDMs constructed by tracing out $m$ qubits from the $(n+m)$-qubit PRS with 1D pseudo-area law entanglement entropy  $\text{polylog}(n+m)$ across every bipartite cut, and the maximally mixed state $I_n/2^n$ form an EFI pair for $m=\frac{1-c}{c}n-\text{polylog}(n)$, security parameter $\kappa=c(n+m)$ and constant $0<c<1$.
\end{fact}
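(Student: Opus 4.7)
The plan is to follow the same three-step template (efficient generation, computational indistinguishability, statistical distance via Fannes--Audenaert) used in the proof of Theorem~\ref{thm:EFI}, and simply replace the crude entropy bound by a tighter one coming from the pseudo-area-law structure. Efficient generation and computational indistinguishability are inherited: the Bouland--et--al.\ pseudoentangled PRS is by assumption efficiently preparable, so tracing out $m$ qubits is efficient; and the resulting PRDM is computationally indistinguishable from $I_n/2^n$ whenever $m=\omega(\log n)$ by Thm.~\ref{thm:PRDM_indisting}. The only remaining task is to choose $m$ large enough for indistinguishability but small enough for statistical distinguishability.

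The new ingredient is the bound on the von Neumann entropy of a single PRDM state. In the proof of Thm.~\ref{thm:EFI} one used the rank bound $S(\operatorname{tr}_m(\ket{\psi_k}\bra{\psi_k}))\le m$, which is tight for generic PRS. Here, the pseudo-area-law PRS has entanglement entropy $\operatorname{polylog}(n+m)$ across every 1D bipartite cut, in particular across the cut separating the $m$ traced-out qubits from the remaining $n$ qubits. Hence one gets the much stronger bound
\begin{equation}
S(\operatorname{tr}_m(\ket{\psi_k}\bra{\psi_k}))\le \operatorname{polylog}(n+m),
\end{equation}
and, by the same key-mixing argument as in~\eqref{eq:constraint_Svn_EFI}, $S(\nu_0)\le \kappa+\operatorname{polylog}(n+m)$.

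Plugging this into the Fannes--Audenaert lower bound $T\ge 1-(S(\nu_0)+1)/n$ with $\kappa=c(n+m)$ gives
\begin{equation}
T\;\ge\;1-\frac{c(n+m)+\operatorname{polylog}(n+m)+1}{n}.
\end{equation}
Demanding $T\ge 1/\operatorname{poly}(n)$ reduces to $cm\le (1-c)n-\operatorname{polylog}(n+m)-1$, i.e., $m\le \tfrac{1-c}{c}n-\operatorname{polylog}(n)$, where the polylog absorbs all lower-order terms (and the factor $1/c$, since $c$ is a constant). Combined with the lower bound $m=\omega(\log n)$ from computational indistinguishability, this yields the claimed range.

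The only point requiring any care is the identification of the relevant cut: the area-law bound must apply to the specific partition induced by the traced-out $m$ qubits. Since the Bouland et al.\ construction gives $\operatorname{polylog}(n+m)$ entanglement across \emph{every} bipartite cut, this is automatic---e.g.\ place the traced-out qubits contiguously at the end of the 1D chain. Thus there is no real obstacle, and the result follows as a near-immediate strengthening of Thm.~\ref{thm:EFI}.
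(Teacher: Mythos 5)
Your proposal is correct and follows essentially the same route as the paper: it reuses the Fannes--Audenaert template of Thm.~\ref{thm:EFI}, replaces the rank bound $S\le m$ by the pseudo-area-law bound $S\le\operatorname{polylog}(n+m)$ so that $S(\nu_0)\le\kappa+\operatorname{polylog}(n+m)$, and with $\kappa=c(n+m)$ derives the same condition $m<\frac{1-c}{c}n-\operatorname{polylog}(n)$, with computational indistinguishability and efficient preparation inherited exactly as in the paper. The only cosmetic difference is your explicit remark about which bipartite cut is used, which the paper covers implicitly by invoking the ``every bipartition'' guarantee of the pseudoentangled PRS.
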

\begin{proof}
It is known that pseudoentangled PRS can be constructed from $n+m$-qubit binary phase PRS. They are guaranteed to have$S(\text{tr}_m(\ket{\psi}\bra{\psi}))=\Theta(\text{poly} \text{log} (n+m))$ entanglement across every bipartition , as measured by the von Neumann entropy~\cite{bouland2022quantum}. 
This gives us now a lower bound on $S(\nu_0)$ than in general case, where we have according to~\eqref{eq:constraint_Svn_EFI}
\begin{equation}
S(\nu_0)\leq \kappa +S(\text{tr}_m(\ket{\psi_k} \bra{\psi_k})) \leq \kappa + \text{poly} \text{log} (n+m) \,.
\end{equation}
Inserting into~\eqref{eq:td_lowerbound_fannes} we get
\begin{equation}
    T \geq 1 - \frac{\kappa + \text{poly} \text{log} (n+m)+1}{n}.
\end{equation}
We again make the ansatz $\kappa=c(n+m)$ for the security parameter with constant $0<c<1$
\begin{equation}
    T \geq 1 - \frac{cn+mc+\text{poly} \text{log}(n+m)+1}{n}\,.
\end{equation}
We now demand $T\geq 1/\text{poly}(n)$, which is fulfilled when
\begin{equation}
     1 - \frac{cn+mc+\text{poly} \text{log}(n+m)+1}{n} > 1/\text{poly}(n)\,.
\end{equation}
This is satisfied when
\begin{equation}
    m < \frac{1-c}{c} n -\text{polylog}(n)\,.
\end{equation}
Thus, by choosing $c$ arbitrarily close to $0$, i.e. $c\rightarrow 0$, one can have $m$ scale linear in $n$ with arbitrary prefactor.
\end{proof}

\subsection{Proof of noise-robustness}
We can also show that EFI pairs constructed from PRDMs and maximally mixed state are noise-robust. To this end, let us consider a general mixed unitary noise channel
\begin{equation}
    \Phi(\rho) = \sum_{i=1}^{r} p_i U_i \rho U_i^\dag,
\label{eq:mixed_unitary_channel}
\end{equation}
where $p_i$ are probabilities that sum to unity, $U_i$ are unitary operators and $r$ is the mixed-unitary rank of the channel $\Phi$.
\begin{theorem}[Noise-robust EFI pairs]
    PRDMs constructed by tracing out $m=\omega(\log n)$ qubits from $(n+m)$ qubit PRS  with security parameter $\kappa=cn$ with $0<c<1$, and the maximally mixed state $I_n/2^n$ remain EFI pairs under application of efficient mixed unitary channel $\Phi(\rho)= \sum_{i=1}^{r} p_i U_i \rho U_i^\dag$ whenever the Shannon entropy of its probabilities is bounded as
    \begin{equation}
        H(\{p_i\}_i)\leq n(1-c)-m-2\,.
    \end{equation}
\label{thm:noise_robust_EFI_theorem}
\end{theorem}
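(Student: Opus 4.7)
The plan is to verify each of the three EFI conditions (efficient generation, computational indistinguishability, statistical distance) for the ensembles $\Phi(\nu_0)$ and $\Phi(\nu_1) = \Phi(I_n/2^n)$, where $\nu_0$ is the key-averaged PRDM ensemble. Efficient generation is immediate because $\Phi$ is assumed to be efficiently implementable and the original PRDM and maximally mixed state have efficient generators. Computational indistinguishability reduces to the noise-free case: any efficient distinguisher $\mathcal{A}$ between $\Phi(\nu_0)$ and $\Phi(\nu_1)$ yields an efficient distinguisher $\mathcal{A} \circ \Phi^{\otimes t}$ between $\nu_0$ and $\nu_1$, contradicting the already-established computational indistinguishability of the noise-free EFI pair.

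The heart of the proof is showing that the statistical distance remains $1/\text{poly}(n)$. Since $\Phi$ is unital, $\Phi(I_n/2^n) = I_n/2^n$, so we only need to bound the trace distance between $\Phi(\nu_0)$ and $I_n/2^n$. I would follow the same Fannes-Audenaert strategy used in the noise-free theorem: from \eqref{eq:fannes} one obtains
\begin{equation}
\mathrm{TD}(\Phi(\nu_0), I_n/2^n) \geq 1 - \frac{S(\Phi(\nu_0)) + 1}{n}.
\end{equation}
The task then reduces to upper bounding $S(\Phi(\nu_0))$. Using concavity of the von Neumann entropy and unitary invariance, a mixed unitary channel satisfies
\begin{equation}
S(\Phi(\rho)) = S\!\left(\sum_i p_i\, U_i \rho U_i^\dagger\right) \leq H(\{p_i\}_i) + \sum_i p_i\, S(U_i \rho U_i^\dagger) = H(\{p_i\}_i) + S(\rho).
\end{equation}
Combining this with the bound $S(\nu_0) \leq \kappa + m$ already derived in the noise-free proof (from the rank-$2^m$ structure of each PRDM state plus the $\kappa$-bit key ensemble), I get $S(\Phi(\nu_0)) \leq \kappa + m + H(\{p_i\}_i)$.

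Substituting back into the Fannes-Audenaert bound gives
\begin{equation}
\mathrm{TD}(\Phi(\nu_0), I_n/2^n) \geq 1 - \frac{\kappa + m + H(\{p_i\}_i) + 1}{n},
\end{equation}
and setting $\kappa = cn$ shows this is $\Omega(1/\text{poly}(n))$ precisely when $H(\{p_i\}_i) \leq n(1-c) - m - 2$, yielding the claimed threshold. I do not anticipate a major obstacle here: the computational reduction is standard, unitality gives $\Phi(\nu_1) = \nu_1$ for free, and the only nontrivial input is the concavity-based entropy inequality, which is classical. The one subtlety worth flagging is that the $S(\nu_0) \leq \kappa + m$ bound requires the key-averaging argument of Ref.~\cite{morimae2022quantum} to remain valid after applying $\Phi$, but since $\Phi$ commutes with the classical mixture over $k$, this simply reuses the noise-free bound.
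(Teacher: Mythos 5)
Your proposal is correct and follows essentially the same route as the paper: unitality fixes the maximally mixed state, efficiency of $\Phi$ gives the computational-indistinguishability reduction, and the Fannes-Audenaert bound combined with $S(\Phi(\nu_0)) \leq H(\{p_i\}_i) + S(\nu_0) \leq H(\{p_i\}_i) + \kappa + m$ yields the stated threshold. The only nitpick is terminological: the inequality $S\bigl(\sum_i p_i \rho_i\bigr) \leq H(\{p_i\}_i) + \sum_i p_i S(\rho_i)$ is the standard upper bound on the entropy of a mixture, not a consequence of concavity (which gives the reverse direction), though the step itself is valid.
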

\begin{proof}
First, we note that the maximally mixed state is invariant under mixed unitary noise channels as it is its fixed point, i.e. $\Phi(\nu_0)=\nu_0$.
Next, we consider the PRDM under noise.
First, we note that PRDMs remain PRDMs after application of efficiently implementable unital channels. 
Now, we need to show that the PRDM remains statistically far.
Denoting $\nu_0^\prime = \Phi(\nu_0)$, we can bound the von Neumann entropy of the noisy state $\nu_0^\prime$ by
\begin{equation}
    S(\nu_0^\prime) = S\left(\sum_{i=1}^{r} p_i U_i \nu_0 U_i^\dag \right) \leq \sum_{i=1}^{r} p_i S(U_i \nu_0 U_i^\dag) + H(\{p_i\}_i) = S(\nu_0) + H(\{p_i\}_i),
\end{equation}
where $H(\{p_i\}_i)$ is the Shannon entropy for the probability distribution $\{p_i\}_i$. The analogous equation to~\eqref{eq:td_lowerbound_fannes} for the new trace distance $T^\prime = \text{TD}(\nu_0^\prime ,\nu_1)$ is
\begin{equation}
    T^\prime \geq 1 - \frac{S(\nu_0) + H(\{p_i\}_i) + 1}{n}.
\end{equation}
Noise robustness is achieved when $T^\prime=\Omega(1/\text{poly}(n))$. Thus, it is sufficient to have
\begin{equation}
    S(\nu_0) + H(\{p_i\}_i) \leq n - 1 - \frac{1}{\text{poly}(n)}.
\end{equation}
This sets a bound on the probability distribution for the Kraus operators,
\begin{equation}
    H(\{p_i\}_i) \leq n - S(\nu_0) - 1 - \frac{1}{\text{poly}(n)}.
\end{equation}
Since $S(\nu_0) \leq m+\kappa$ and we consider $m < n-1$, the above condition can be achieved when
\begin{equation}
    H(\{p_i\}_i) < n -m-\kappa- 2\,.
\end{equation}
We now choose $\kappa=c n$ with $0<c<1$, where we find
\begin{equation}
    H(\{p_i\}_i) < n(1-c) -m- 2\,.
\end{equation}

\end{proof}
For $m = \text{polylog}(n)$, the bound on $H(\{p_i\}_i)$ is asymptotically linear with the number of qubits $n$.

A noise model highly relevant in physical systems and commonly used to model noise in near-term and quantum error correction models is the local depolarizing noise $\Lambda_p^{\otimes n}(\rho)^{\otimes n}$ acting on all $n$ qubits. It consists of the local depolarizing channel $\Lambda_p(\rho)=p/4\sum_{\alpha\in\{x,y,z\}}\sigma^\alpha \rho \sigma^\alpha+(1-3p/4)\rho$, Pauli operators $\sigma^\alpha$ with $\alpha\in\{x,y,z\}$, and the depolarizing probability $p$.
This noise model has maximal rank of $H(\{p_i\}_i)$ with $r = 4^n$. Nonetheless, we show that EFI pairs remain robust against such noise even for relatively high constant noise rate $p$. 

\begin{corollary}[Noise-robust EFI pairs against local depolarizing noise]
PRDMs constructed by tracing out $m=\omega(\log n)$ qubits from $(n+m)$ qubit PRS  with security parameter $\kappa=c(n+m)$ with $0<c<1$ remain EFI pairs after application of local depolarizing channel on all $n$ qubits $\Lambda_p^{\otimes n}(\rho)$, when $H(\{1-3p/4,p/4,p/4,p/4\}) \leq (1-c) -m/n- 2/n$. In particular, when $m 
= \operatorname{polylog}(n)$ and $c=10^{-4}$, we have $p < \frac{1}{4} -O(\operatorname{polylog}(n)/n)$.
\end{corollary}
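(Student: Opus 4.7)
The plan is to apply Theorem~\ref{thm:noise_robust_EFI_theorem} directly to the local depolarizing channel $\Lambda_p^{\otimes n}$, noting that this channel is a mixed unitary channel with Kraus operators given by tensor products of Pauli operators. First I would write the tensor product channel in explicit mixed-unitary form: for each $n$-bit Pauli string $P = \sigma^{\alpha_1} \otimes \cdots \otimes \sigma^{\alpha_n}$ with $\alpha_j \in \{I,x,y,z\}$, the associated probability is $\prod_j q_{\alpha_j}$, where $q_I = 1-3p/4$ and $q_x = q_y = q_z = p/4$. Hence the probability distribution $\{p_i\}_i$ over the $4^n$ Kraus operators is the $n$-fold product of the single-qubit distribution $\{1-3p/4, p/4, p/4, p/4\}$.

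Next I would invoke additivity of the Shannon entropy under product distributions, which gives
\begin{equation}
    H(\{p_i\}_i) = n \cdot H\left(\left\{1-\tfrac{3p}{4}, \tfrac{p}{4}, \tfrac{p}{4}, \tfrac{p}{4}\right\}\right).
\end{equation}
Plugging this into the sufficient condition $H(\{p_i\}_i) \leq n(1-c)-m-2$ from Theorem~\ref{thm:noise_robust_EFI_theorem} and dividing through by $n$ yields the first claimed inequality $H(\{1-3p/4, p/4, p/4, p/4\}) \leq (1-c) - m/n - 2/n$.

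Finally, to derive the explicit threshold, I would specialize to $c=10^{-4}$ and $m=\operatorname{polylog}(n)$, so the right-hand side becomes $1 - 10^{-4} - O(\operatorname{polylog}(n)/n)$. A direct evaluation gives $H(\{13/16,1/16,1/16,1/16\}) = 4 - (13/16)\log_2 13 \approx 0.9934$, which is strictly less than $1-10^{-4}$. Since the single-qubit entropy is continuous and monotonically increasing in $p$ on $[0, 3/4]$, by a first-order expansion around $p=1/4$ the admissible range for $p$ extends to at least $p < 1/4 - O(\operatorname{polylog}(n)/n)$, matching the stated conclusion.

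I do not anticipate a substantial obstacle: the entire argument is a specialization of Theorem~\ref{thm:noise_robust_EFI_theorem} combined with the factorization of Shannon entropy over product distributions. The only mildly delicate step is verifying numerically that $H$ evaluated at $p=1/4$ lies safely below $1-c$ for the chosen $c=10^{-4}$, which leaves enough slack to absorb the $O(\operatorname{polylog}(n)/n)$ correction; beyond that, all estimates are routine.
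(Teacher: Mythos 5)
Your proposal is correct and follows essentially the same route as the paper: factorize the local depolarizing channel as a product mixed-unitary channel, use additivity of Shannon entropy to get $H(\{p_i\}_i)=nH(\{1-3p/4,p/4,p/4,p/4\})$, plug into the sufficient condition of the noise-robust EFI theorem, and then check numerically that the single-qubit entropy at $p=1/4$ (about $0.9934$) lies below $1-c$ with enough slack to absorb the $O(\operatorname{polylog}(n)/n)$ term. No gaps.
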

\begin{proof}
    Since the local depolarizing channel acts independently on the $n$ qubits, the Shannon entropy of the probability distribution in~\eqref{eq:mixed_unitary_channel} is
    \begin{equation}
        H(\{p_i\}_{i=1}^{4^n}) = n H(\{1-3p/4,p/4,p/4,p/4\}).
    \end{equation}
Using Theorem~\ref{thm:noise_robust_EFI_theorem}, noise robustness is guaranteed when
\begin{equation}
    H(\{1-3p/4,p/4,p/4,p/4\}) \leq (1-c) -\frac{m-2}{n}\,.
\end{equation}
Now, we choose $m = \text{polylog}(n)$ and choose a security parameter $\kappa=cn$ that scales linearly in $n$, but with constant $c$.
If we choose arbitrarily small $c\rightarrow0$, we find that noise-robustness is fulfilled for any $p < 0.252 -O(\text{polylog}(n)/n)$. 
Choosing $c=10^{-4}$, we have $p < 0.25 -O(\text{polylog}(n)/n)$, while for $c=1/12$ we have $p < 0.22 -O(\text{polylog}(n)/n)$.

\end{proof}

\section{PRS without quantum memory}\label{Pseudorandomness without quantum memory}
We define the notion of memoryless PRS which are indistinguishable to Haar random states for any efficient algorithm without access to quantum memory. 
In particular, the observer has access to polynomial many copies of the state, however can only perform efficient measurements on one copy at a time, where the measurements can be chosen adaptively on previous measurements outcomes.

We note that the previously introduced single-copy PRS is a special case of PRS without access to quantum memory~\cite{morimae2022quantum,chen2024power}. Single-copy PRS is indistinguishable for any observer with access to only a single copy of the state. Any PRS without quantum memory is a single-copy PRS, while there are single-copy PRS which are not PRS without quantum memory. A simple example is the ensemble of all computational basis states $\{\ket{j}\}_{j=1}^{2^n}$: It is a single-copy PRS as it is indistinguishable from Haar random states for a single copy, but not a PRS without quantum memory as one can distinguish it from multiple copies of Haar random states by testing the coherence, which can be done without the need of quantum memory. \revA{In particular, one measures the states in the computational basis and check whether there are collisions of bitstrings}~\cite{haug2023pseudorandom}.

First, let us define algorithms without quantum memory:
\begin{definition}[Learning without quantum memory~\cite{chen2022exponential}]\label{def:memoryless}
An algorithm $\mathcal{W}$ without quantum memory obtains classical
data from an oracle that prepares $\rho$ by performing arbitrary POVM measurements on $\rho$. For each access to the oracle, $\mathcal{W}$ can select a POVM $\{F_s\}_s$ that can depend on previous outcomes, and obtain the classical outcome $s$ with probability
$\operatorname{tr}(F_s\rho)$. After $T$
oracle accesses, $\mathcal{W}$ predicts the properties of $\rho$.
\end{definition}

As we will see, memoryless PRS are robust to noise for algorithms without quantum memory:
\begin{definition}[Memoryless PRS]
    Let $\lambda$ be the polynomial sized security parameter and $\mathcal{K}$ be the key space respectively dependent on the security parameter. 
    Then, a keyed family of pure quantum states $\{|\phi_k\rangle\}_{k \leftarrow \mathcal{K}}$ is called memoryless PRS, which is secure to algorithms $\mathcal{W}$ without quantum memory if:
    \begin{enumerate}
        \item Efficient Generation: There exist an efficient quantum algorithm $S$ such that $S(k, 1^{\lambda}) = |\phi_k\rangle$.
        \item Computational Indistinguishability without quantum memory: For a random key $k \in K$, given $t = poly(\lambda)$ copies of $|\phi_k\rangle$ are computationally indistinguishable from $t$ copies of Haar random states for any quantum polynomial time algorithm $\mathcal{W}$ with no quantum memory:
        \begin{equation}
            \left|\Pr_{k \leftarrow K}[D(|\phi_k\rangle^{\otimes m}) = 1] - \Pr_{|\psi\rangle \leftarrow \eta_H}[D(|\psi\rangle^{\otimes m}) = 1]\right| = \operatorname{negl}(\lambda)
        \end{equation}
    \end{enumerate}
\end{definition}
Any PRS is also a memoryless PRS, making memoryless PRS a weaker notion of pseudorandomness than PRS.

We describe the computational model of the memoryless adversary to be such that the adversary has black-box access to oracles that either prepare PRS or Haar random state. The main idea is to use the tree method~\cite{chen2022exponential} to perform many vs one distinguishability task by comparing the probability distributions over the leaf nodes. We will denote $p^{\rho}(l)$ as the probability of getting to leaf $l$ when the given state is $\rho$. For the given many vs one distinguishability task, we are then interested in the comparison of the expectation of this probability of the ensemble and the given state. Note that the POVM measurements that we are going to consider will be Rank-1 POVM measurements which with post-processing are equivalent to general POVM for the case when we are not interested in post-measurement quantum state as shown in~\cite{chen2022exponential}. Formally, we put it below.

\begin{definition}[Tree method for learning without quantum memory~\cite{chen2022exponential}] The tree method consists of a rooted tree $\mathcal{T}$ with the following features:
\begin{enumerate}
    \item At every depth of tree, the learner chooses a positive operator valued measurement(POVM) $\{2^n c^{u}_{s} |\psi^{u}_{s}\rangle \langle \psi^{u}_{s}|\}_s$ where $u$ represents the depth of tree to measure the given copy and record its output in the classical memory.
    \item Each node in the tree represents the state of classical memory and is assigned a probability which represents the probability of reaching that node from the root by successive adaptive POVM measurements. As an example, for a child node $v$ of $w$ connected through an edge corresponding to POVM element $2^n c^{u}_{s} |\psi^{u}_{s}\rangle \langle \psi^{u}_{s}|$
    \begin{equation}
        p(v) = p(w)2^n c^u_s \tr [\rho  |\psi^{u}_{s}\rangle \langle \psi^{u}_{s}|]
    \end{equation}
    where $p(v)$ and $p(w)$ are probabilities associated with nodes $v$ and $w$ respectively.
\end{enumerate}
\end{definition}

Note that, by definition, the root node will have probability $p^{\rho}(r) = 1$, and the probability distribution on leaf nodes is given by $\{p^{\rho}(l)\}_l$ which operationally represents the state of probability distribution over the state of classical memory. Now, we use this distribution over states of classical memory to compare the given ensembles. It is easy to see that the expected probability of reaching a leaf node $l$ on a given ensemble $\rho_H$ is given as the product of probabilities of nodes that are in the path from the root to the leaf $l$ i.e.
\begin{equation}\label{LeafDistribution}
    p^{\rho_H}(l) = \mathbb{E}_{\rho_H} \prod_{t = 1}^T 2^n c^{u_t}_{s_t} \tr [\rho_H |\psi^{u_t}_{s_t}\rangle \langle \psi^{u_t}_{s_t}|]
\end{equation}
where the path from root to leaf $l$ is given by nodes corresponding to POVM elements $\{2^n c^{u}_{s} |\psi^{u_t}_{s_t}\rangle \langle \psi^{u_t}_{s_t}|\}_{t =1}^T$ as $T$ is depth of the tree.

\begin{fact}[Le Cam one sided bound~\cite{chen2022exponential}]\label{LeCamOneside} For learning without quantum memory described using a rooted tree $\mathcal{T}$, if for all leaves of the tree $l$,
\begin{equation}
    \frac{\mathbb{E_{\rho}}p^{\rho}(l)}{p^{I/2^n}(l)} \geq 1 - \delta
\end{equation}
then the probability of distinguishing the ensemble $\{\rho\}$ from the maximally mixed state is upper bounded by $\delta$.
\end{fact}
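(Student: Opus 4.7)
The plan is to reduce the memoryless distinguishing problem to a purely classical two-hypothesis testing problem on the leaf distributions of the tree $\mathcal{T}$, and then apply the standard Le Cam / Neyman--Pearson identity. Concretely, any distinguisher (with or without auxiliary coins) is without loss of generality a function of the classical transcript recorded at the end of the adaptive protocol; that transcript corresponds to a leaf $l \in \text{leaves}(\mathcal{T})$. Let $P_0(l) \defeq p^{I/2^n}(l)$ and $P_1(l) \defeq \mathbb{E}_{\rho}[p^{\rho}(l)]$. Because the expectation of the leaf probabilities can be pulled outside the product in \eqref{LeafDistribution}, $P_1$ is a genuine probability distribution on the leaves. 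So the original ``many versus one'' task is exactly the task of distinguishing a sample from $P_0$ versus $P_1$.

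Next, I would invoke the classical Le Cam / Helstrom identity for distributions: the optimal distinguishing advantage over all (possibly randomized) two-outcome decision rules equals the total variation distance,
\begin{equation}
\text{Adv} \;=\; \mathrm{TV}(P_0,P_1) \;=\; \sum_{l \,:\, P_0(l) > P_1(l)} \bigl(P_0(l) - P_1(l)\bigr),
\end{equation}
so it suffices to upper bound this one-sided sum by $\delta$. Here ``probability of distinguishing'' is interpreted in the usual advantage sense (the bias beyond $1/2$ times two); the claim then reduces to showing $\mathrm{TV}(P_0,P_1) \le \delta$.

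For the bound itself, the hypothesis of the fact says exactly that $P_1(l) \ge (1-\delta)\, P_0(l)$ for every leaf $l$. Rearranging gives $P_0(l) - P_1(l) \le \delta\, P_0(l)$ pointwise, where we may harmlessly restrict to leaves with $P_0(l) > 0$ (leaves with $P_0(l)=0$ give $P_0-P_1 \le 0$ since $P_1 \ge 0$, so they contribute nothing to the positive-part sum). Summing over the at-most-bad leaves,
\begin{equation}
\mathrm{TV}(P_0, P_1) \;\le\; \sum_{l} \delta\, P_0(l) \;=\; \delta,
\end{equation}
which is the desired conclusion.

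The only conceptual subtlety, and the main thing to justify carefully, is the reduction from an adaptive quantum distinguisher with rank-one POVMs at every node to a classical test between $P_0$ and $P_1$; the tree formalism already encodes adaptivity in the choice of POVM at each internal node, so this step is essentially definitional once one notes that the distinguisher's final answer is a (randomized) function of the leaf. After that, the rest is linearity of expectation and a one-line inequality, so I do not expect a genuinely hard step — the work is in stating the reduction cleanly.
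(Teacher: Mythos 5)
Your proof is correct. The paper does not actually prove this statement --- it is imported as a Fact with a citation to Ref.~\cite{chen2022exponential} --- and your argument is exactly the standard one from that reference: view the memoryless distinguisher as a (randomized) function of the leaf reached, so its advantage is at most $\mathrm{TV}(P_0,P_1)$ with $P_0(l)=p^{I/2^n}(l)$ and $P_1(l)=\mathbb{E}_{\rho}[p^{\rho}(l)]$, and then bound the one-sided sum pointwise via $P_0(l)-P_1(l)\le \delta P_0(l)$, handling $P_0(l)=0$ leaves as you do. One small quibble: $P_1$ is a genuine probability distribution not because ``the expectation can be pulled outside the product'' in \eqref{LeafDistribution}, but simply because the POVM at each internal node is complete, so $\sum_l p^{\rho}(l)=1$ for every fixed $\rho$, and linearity of expectation preserves this normalization; your conclusion there is right even though the stated reason is slightly off.
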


\begin{fact}[~\cite{chen2022exponential}]\label{Permutation product}
    For any collection of pure states $\{|\psi_i\rangle \}_i$, we have:
    \begin{equation}
        \sum_{\pi \in S_T} \tr \Big[\pi (\otimes_{i = 1}^T |\psi_i\rangle \langle \psi_i|)\Big] \geq 1
    \end{equation}
    where $S_T$ represents all permutation over $T$-copies.
\end{fact}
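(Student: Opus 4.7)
The plan is to identify the sum with the permanent of the Gram matrix of the $\ket{\psi_i}$ and then invoke a classical inequality of Schur.

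First I will rewrite each summand. Setting $\ket{\Psi} = \bigotimes_{i=1}^T \ket{\psi_i}$, we have $\bigotimes_{i=1}^T \ket{\psi_i}\bra{\psi_i} = \ketbra{\Psi}{\Psi}$, so that
\begin{equation*}
\tr\big[\pi \, \ketbra{\Psi}{\Psi}\big] = \bra{\Psi} \pi \ket{\Psi} = \prod_{i=1}^T \braket{\psi_i}{\psi_{\pi^{-1}(i)}},
\end{equation*}
where $\pi$ acts on $(\mathbb{C}^d)^{\otimes T}$ in the standard way by permutation of tensor factors. Reindexing by $\sigma = \pi^{-1}$, which is a bijection of $S_T$, the total sum becomes
\begin{equation*}
\sum_{\pi \in S_T} \tr\big[\pi \, \ketbra{\Psi}{\Psi}\big] = \sum_{\sigma \in S_T} \prod_{i=1}^T G_{i,\sigma(i)} = \operatorname{per}(G),
\end{equation*}
where $G_{ij} = \braket{\psi_i}{\psi_j}$ is the Gram matrix of the $\ket{\psi_i}$.

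The key step will then be to invoke Schur's inequality for permanents of positive semidefinite matrices: for any Hermitian PSD matrix $A$ one has $\operatorname{per}(A) \geq \prod_i A_{ii}$. Since every Gram matrix is PSD and the states are normalized so that $G_{ii}=1$, this immediately yields $\operatorname{per}(G) \geq 1$, which is precisely the claimed bound.

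Most of the work here is the bookkeeping identification of the sum as a permanent; once that identification is made, the inequality follows directly from Schur's classical theorem, and I do not anticipate a serious obstacle. An alternative, more operator-theoretic route would be to note that $\sum_{\pi} P_\pi = T! \, \Pi_{\mathrm{sym}}$, where $\Pi_{\mathrm{sym}}$ is the projector onto the symmetric subspace of $(\mathbb{C}^d)^{\otimes T}$, yielding $\operatorname{per}(G) = T! \, \|\Pi_{\mathrm{sym}} \ket{\Psi}\|^2$; proving $\|\Pi_{\mathrm{sym}} \ket{\Psi}\|^2 \geq 1/T!$ directly, however, does not appear any easier than appealing to Schur and would essentially reprove the same inequality by other means.
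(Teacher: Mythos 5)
Your argument is correct, but it is worth noting that the paper does not prove this statement at all: it is imported as a Fact from Ref.~\cite{chen2022exponential}, so your write-up supplies an actual proof rather than mirroring one in the text. The identification $\sum_{\pi\in S_T}\tr\big[\pi\,(\otimes_i \ket{\psi_i}\bra{\psi_i})\big]=\sum_{\sigma\in S_T}\prod_i \braket{\psi_i}{\psi_{\sigma(i)}}=\operatorname{per}(G)$ is carried out correctly (and is convention-independent once you sum over all permutations), and the reduction to a permanent is genuinely needed: one cannot simply discard the non-identity terms, since $\bra{\Psi}P_\pi\ket{\Psi}$ factors over cycles into products of overlaps that are complex in general, so the bound has real content. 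The one point to fix is attribution of the key inequality: $\operatorname{per}(A)\geq \prod_i A_{ii}$ for Hermitian positive semidefinite $A$ is the permanental analogue of Hadamard's inequality due to Marcus, not Schur; Schur's classical inequality is $\operatorname{per}(A)\geq \det(A)$, which would \emph{not} suffice here because $\det(G)$ can vanish (e.g.\ repeated states), whereas Marcus's theorem with $G_{ii}=1$ gives exactly $\operatorname{per}(G)\geq 1$. With that citation corrected, your proof is complete, and your closing remark is also accurate: the symmetric-subspace identity $\operatorname{per}(G)=T!\,\|\Pi_{\mathrm{sym}}\ket{\Psi}\|^2$ restates the claim as $\|\Pi_{\mathrm{sym}}\ket{\Psi}\|^2\geq 1/T!$, which is equivalent rather than easier.
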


\begin{theorem}\label{memory less PRS proof 1}
A memoryless adversary requires exponential copies to distinguish noisy Haar and maximally mixed state when noise is represented by a general unital CPTP map $\mathcal{E}[\rho] = \sum_i E_i \rho E_i^{\dagger}$.
\end{theorem}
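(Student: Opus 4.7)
The plan is to prove this using the tree method of Ref.~\cite{chen2022exponential} combined with the one-sided Le Cam bound (Fact~\ref{LeCamOneside}). First, I would model any memoryless adversary receiving $T = \mathrm{poly}(n)$ copies as a rooted tree $\mathcal{T}$ of depth $T$, where each internal node of depth $u$ picks a rank-one POVM $\{2^n c^{u}_{s}|\psi^{u}_{s}\rangle\langle\psi^{u}_{s}|\}_s$ adaptively on the classical history. For a leaf $l$ reached along the path with POVM elements $\{2^n c^{u_t}_{s_t}|\psi^{u_t}_{s_t}\rangle\langle\psi^{u_t}_{s_t}|\}_{t=1}^T$, the leaf probability on state $\rho$ is $p^\rho(l)=\prod_t 2^n c^{u_t}_{s_t}\operatorname{tr}[\rho\,|\psi^{u_t}_{s_t}\rangle\langle\psi^{u_t}_{s_t}|]$, so in particular $p^{I/2^n}(l)=\prod_t c^{u_t}_{s_t}$.

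The key step is to handle the noise channel by pushing it onto the measurement via the adjoint. For the noisy Haar state $\mathcal{E}[|\psi\rangle\langle\psi|]$, I would use $\operatorname{tr}[\mathcal{E}(|\psi\rangle\langle\psi|)\,|\psi^{u_t}_{s_t}\rangle\langle\psi^{u_t}_{s_t}|]=\langle\psi|A_t|\psi\rangle$ where $A_t\defeq \mathcal{E}^*[|\psi^{u_t}_{s_t}\rangle\langle\psi^{u_t}_{s_t}|]$. Since $\mathcal{E}$ is unital and CPTP, its adjoint $\mathcal{E}^*$ is CP and trace-preserving, so every $A_t$ is a bona fide density matrix. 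Haar integration over $\ket{\psi}$ then gives
\begin{equation}
\Eset{\psi\leftarrow\mu_n}\Big[\prod_{t=1}^{T}\langle\psi|A_t|\psi\rangle\Big]=\frac{(2^n-1)!}{(2^n+T-1)!}\sum_{\pi\in S_T}\operatorname{tr}\!\Big[\hat\pi\bigotimes_{t=1}^{T}A_t\Big].
\end{equation}

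To lower-bound the permutation sum for general $A_t$ (not rank one), I would decompose each $A_t=\sum_{k}p^{(t)}_{k}|\phi^{(t)}_{k}\rangle\langle\phi^{(t)}_{k}|$ and apply Fact~\ref{Permutation product} term-by-term, which yields $\sum_{\pi}\operatorname{tr}[\hat\pi\bigotimes_t A_t]\geq\sum_{k_1,\dots,k_T}\prod_t p^{(t)}_{k_t}=1$. Combining this with the identity above and factoring out $(2^n)^{-T}$ gives
\begin{equation}
\frac{\Eset{\psi}\,p^{\mathcal{E}[|\psi\rangle\langle\psi|]}(l)}{p^{I/2^n}(l)}\geq\prod_{k=0}^{T-1}\frac{2^n}{2^n+k}\geq 1-\frac{T(T-1)}{2\cdot 2^n}.
\end{equation}
Applying Fact~\ref{LeCamOneside} with $\delta=O(T^2/2^n)$ then bounds the distinguishing probability by $O(T^2/2^n)$, which is negligible unless $T=2^{\Omega(n)}$, proving the claim.

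The main obstacle I anticipate is the extension of Fact~\ref{Permutation product} (which as stated in Ref.~\cite{chen2022exponential} handles only pure states) to the operators $A_t=\mathcal{E}^*[|\psi^{u_t}_{s_t}\rangle\langle\psi^{u_t}_{s_t}|]$, which are in general mixed. The convex decomposition argument above resolves this cleanly, but it is essential that $\mathcal{E}$ is unital; otherwise $A_t$ would not be a subnormalized density matrix with $\operatorname{tr}(A_t)=1$, and one could not factor out the $(2^n)^{T}$ cleanly against $p^{I/2^n}(l)$. This is exactly where unitality of the noise channel is used, matching the hypothesis of the theorem.
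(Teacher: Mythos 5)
Your proposal is correct and follows essentially the same route as the paper's proof: the tree method with the one-sided Le Cam bound, moving the unital channel onto the measurements via its adjoint (whose trace preservation follows exactly from unitality, as you note), a convex decomposition of the resulting density matrices into pure states so that the permutation-sum fact applies, and the Haar moment computation yielding a ratio at least $1-O(T^2/2^n)$. The only differences are cosmetic, such as bounding $\prod_{k=0}^{T-1} 2^n/(2^n+k)$ directly rather than via $(1-T/2^n)^T$.
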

\begin{proof}
    From Fact \ref{LeCamOneside}, we need to look at the expression $\frac{\mathbb{E}_{\rho \leftarrow \mu_n}p^{\rho}(l)}{p^{I/2^n}(l)}$ for every $l$ where $\mu_n$ represents Haar measure on $n$ qubit pure states. Using \eqref{LeafDistribution}, we rewrite the expression as:
    \begin{equation}
        \begin{split}
            \mathbb{E}_{\rho \leftarrow \mu_n}\Big(\prod_{t = 1}^T\frac{2^n c_t \langle \psi_{s_t}^{u_t} | \sum_i E_i \rho E_i^{\dagger} |\psi_{s_t}^{u_t}\rangle}{2^n c_t \langle \psi_{s_t}^{u_t}| \frac{I}{2^n}|\psi_{s_t}^{u_t}\rangle}\Big) &= \mathbb{E}_{\rho \leftarrow \mu_n}\Big(\prod_{t = 1}^T2^n \langle \psi_{s_t}^{u_t} | \sum_i E_i \rho E_i^{\dagger} |\psi_{s_t}^{u_t}\rangle \Big) \\
            &= \mathbb{E}_{\rho \leftarrow \mu_n} 2^{nT} \tr \Big [\Big(\sum_i E_i \rho E_i^{\dagger}\Big)^{\otimes T} \Big(\otimes_{t = 1}^T |\psi_{s_t}^{u_t}\rangle \langle \psi_{s_t}^{u_t}|\Big)\Big].
        \end{split}
    \end{equation}
    Now, using the cyclic property of trace, $\tr [\sum_i E_i \rho E_i^{\dagger} |\psi_{s_t}^{u_t}\rangle \langle \psi_{s_t}^{u_t}|] = \tr [\rho \sum_i E_i^{\dagger} |\psi_{s_t}^{u_t}\rangle \langle \psi_{s_t}^{u_t}| E_i]$. Using this, we get:
    \begin{equation}
        \begin{split}
            \mathbb{E}_{\rho \leftarrow \mu_n} 2^{nT} \tr \Big [\Big(\sum_i E_i \rho E_i^{\dagger}\Big)^{\otimes T} \Big(\otimes_{t = 1}^T |&\psi_{s_t}^{u_t}\rangle \langle \psi_{s_t}^{u_t}|\Big)\Big] = \mathbb{E}_{\rho \leftarrow \mu_n} 2^{nT} \tr \Big [\Big(\rho \Big)^{\otimes T} \Big(\otimes_{t = 1}^T \sum_i E_i^{\dagger}|\psi_{s_t}^{u_t}\rangle \langle \psi_{s_t}^{u_t}|E_i\Big)\Big].
        \end{split}
    \end{equation}
    Since the CPTP map $\mathcal{E}$ is unital thus we can define the new map $\mathcal{F}[\rho] = \sum_j E^{\dagger}_j \rho E_j$ which will also be CPTP. Using this fact, we can write $\sum_i E_i^{\dagger} |\psi_{s_t}^{u_t}\rangle \langle \psi_{s_t}^{u_t}|E_i = \sum_j p_{j_t} |\psi^t_j\rangle \langle \psi^t_j|$ for every $t$ where $\{|\psi_j^t\rangle\}_j$ are some pure states and $\sum_{j_t} p_{j_t} = 1$. Then we can write:
    \begin{equation}
        \begin{split}
            \mathbb{E}_{\rho \leftarrow \mu_n} 2^{nT} \tr \Big [\Big(\rho \Big)^{\otimes T}& \Big(\otimes_{t = 1}^T \sum_i E_i^{\dagger}|\psi_{s_t}^{u_t}\rangle \langle \psi_{s_t}^{u_t}|E_i\Big)\Big] \\
            &= \mathbb{E}_{\rho \leftarrow \mu_n} 2^{nT} \tr \Big [\Big(\rho \Big)^{\otimes T} \Big(\otimes_{t = 1}^T \sum_j p_{j}^t |\psi^t_j\rangle \langle \psi^t_j|\Big)\Big] \\
            & = \mathbb{E}_{\rho \leftarrow \mu_n} 2^{nT}\sum_{j_1, j_2 \ldots j_T} p_{j_1}p_{j_1}\ldots p_{j_T} \tr \Big [\Big(\rho \Big)^{\otimes T} \Big(\otimes_{t = 1}^T  |\psi^t_{j}\rangle \langle \psi^t_{j}|\Big)\Big] \\
            &= 2^{nT}\sum_{j_1, j_2 \ldots j_t} p_{j_1}p_{j_1}\ldots p_{j_T} {2^n + T - 1 \choose T} \frac{1}{T!} \sum_{\pi \in S_T} \tr \Big [\pi \Big(\otimes_{t = 1}^T  |\psi^t_{j}\rangle \langle \psi^t_{j}|\Big)\Big] ,
        \end{split}
    \end{equation}
    where in the last equality, we used the fact that $\mathbb{E}_{\rho \leftarrow \mu_n} \rho^{\otimes T} = {2^n + T - 1 \choose T}\frac{1}{T!}\sum_{\pi \in S_T} \pi $ where $\pi = \sum_{\bar{x}} |\pi(\bar{x})\rangle \langle \bar{x}|$. Now, using Fact \ref{Permutation product}, we have:
    \begin{equation}
    \begin{split}
        2^{nT}\sum_{j_1, j_2 \ldots j_t} p_{j_1}p_{j_1}\ldots p_{j_T} {2^n + T - 1 \choose T} \frac{1}{T!}& \sum_{\pi \in S_T} \tr \Big [\pi \Big(\otimes_{t = 1}^T  |\psi^t_{j}\rangle \langle \psi^t_{j}|\Big)\Big]\\ & \geq 2^{nT} {2^n + T - 1 \choose T}\frac{1}{T!} \prod_{t = 1}^T (\sum_{j_t} p_{j_t})\\
       & = \frac{2^{nT}}{(2^n) \ldots (2^n + T-1)}\\
        & \geq  \Big ( 1 - \frac{T}{2^n}\Big )^T .
    \end{split}
    \end{equation}
    Assume that $\frac{T^2}{2^n} = \operatorname{negl}(n)$ which also covers the case when $T = \text{poly}(n)$. In this case, we have:
    \begin{equation}
        \mathbb{E}_{\rho \leftarrow \eta_H}\Big(\prod_{t = 1}^T\frac{2^n c_t \langle \psi_{s_t}^{u_t} | \sum_i E_i \rho E_i^{\dagger} |\psi_{s_t}^{u_t}\rangle}{2^n c_t \langle \psi_{s_t}^{u_t}| \frac{I}{2^n}|\psi_{s_t}^{u_t}\rangle}\Big) \geq \Big ( 1 - \frac{T^2}{2^n}\Big) = 1 - \operatorname{negl}(n)
    \end{equation}    
    where we use binomial approximation to obtain the first inequality. Then, using this, we have the probability to distinguish noisy Haar and maximally mixed state according to Le Cam one point method is:
    \begin{equation}
        \delta \leq 1 - \mathbb{E}_{\rho \leftarrow \eta_H}\Big(\prod_{t = 1}^T\frac{2^n c_t \langle \psi_{s_t}^{u_t} | \sum_i E_i \rho E_i^{\dagger} |\psi_{s_t}^{u_t}\rangle}{2^n c_t \langle \psi_{s_t}^{u_t}| \frac{I}{2^n}|\psi_{s_t}^{u_t}\rangle}\Big) = \operatorname{negl}(n).
    \end{equation}
    But to efficiently distinguish the given ensembles, we require $\delta \geq 1/2 + \Omega(n^{-c})$, which is not possible for the case when $T^2/2^n = \operatorname{negl}(n)$. Thus, $T = \Omega(2^{n/2})$.
\end{proof}

\begin{fact}[Maximally mixed state and Haar random state are indistinguishable without quantum memory~\cite{chen2022exponential}] \label{CCHL_theorem}
In the absence of quantum memory, any learning algorithm requires $T = \Omega(2^{n/2})$ samples to distinguish whether it is sampled from Haar ensemble or singleton maximally mixed state.
\end{fact}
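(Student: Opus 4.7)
The proof is essentially immediate from the preceding Theorem~\ref{memory less PRS proof 1}: since the identity channel $\mathcal{E}(\rho)=\rho$ is trivially unital, specializing that theorem to the noise-free case yields exactly the required bound $T=\Omega(2^{n/2})$ for distinguishing $|\psi\rangle\langle\psi|$ with $|\psi\rangle \leftarrow \mu_n$ from $I/2^n$ with a memoryless algorithm. My plan is to present it this way as a one-line corollary, but for a self-contained reading I would also sketch the underlying tree-method argument specialized to the noiseless setting.

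First I would model the learner, following Definition~\ref{def:memoryless}, as a rooted tree $\mathcal{T}$ of depth $T$ whose outgoing edges at each node are indexed by outcomes of an adaptively chosen rank-$1$ POVM $\{2^n c^u_s |\psi^u_s\rangle\langle\psi^u_s|\}_s$, and whose leaf probabilities are $p^\rho(l)=\prod_{t=1}^T 2^n c^{u_t}_{s_t}\langle\psi^{u_t}_{s_t}|\rho|\psi^{u_t}_{s_t}\rangle$ as in Eq.~\eqref{LeafDistribution}. In particular $p^{I/2^n}(l)=\prod_t c^{u_t}_{s_t}$, so the likelihood ratio that controls Le Cam's bound becomes
\begin{equation*}
\frac{\mathbb{E}_{|\psi\rangle\leftarrow\mu_n}\,p^{|\psi\rangle\langle\psi|}(l)}{p^{I/2^n}(l)}
= 2^{nT}\,\mathrm{tr}\!\left[\,\mathbb{E}_{|\psi\rangle\leftarrow\mu_n}(|\psi\rangle\langle\psi|)^{\otimes T}\;\bigotimes_{t=1}^T |\psi^{u_t}_{s_t}\rangle\langle\psi^{u_t}_{s_t}|\right].
\end{equation*}

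Next I would substitute the Haar $T$-moment identity $\mathbb{E}(|\psi\rangle\langle\psi|)^{\otimes T}=\binom{2^n+T-1}{T}^{-1}\tfrac{1}{T!}\sum_{\pi\in S_T}\hat\pi$ and apply Fact~\ref{Permutation product} to lower bound $\sum_{\pi}\mathrm{tr}[\hat\pi\,\bigotimes_t |\psi^{u_t}_{s_t}\rangle\langle\psi^{u_t}_{s_t}|]\geq 1$, giving
\begin{equation*}
\frac{\mathbb{E}\,p^{|\psi\rangle\langle\psi|}(l)}{p^{I/2^n}(l)}
\;\geq\; \frac{2^{nT}}{2^n(2^n+1)\cdots(2^n+T-1)}
\;\geq\; \left(1-\tfrac{T}{2^n}\right)^{T}
\;\geq\; 1-\tfrac{T^2}{2^n},
\end{equation*}
uniformly over every leaf $l$ of the tree. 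Applying Le Cam's one-sided bound (Fact~\ref{LeCamOneside}) with $\delta=T^2/2^n$ then shows that the distinguishing advantage of any memoryless learner is at most $T^2/2^n$, so constant-advantage discrimination forces $T=\Omega(2^{n/2})$.

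The main subtlety, rather than any hard computation, is ensuring that the tree formalism correctly accounts for adaptive POVMs in the many-vs-one discrimination setting --- but this is precisely what the Le Cam tree machinery of Ref.~\cite{chen2022exponential} is built for and is already used inside Theorem~\ref{memory less PRS proof 1}. Since the noiseless case corresponds to the identity Kraus decomposition $\{E_1\}=\{I\}$, for which the step in that theorem that invokes unitality ($\sum_i E_i^{\dagger}|\psi\rangle\langle\psi|E_i$ being a mixture of pure states) is trivial, no additional argument is needed and the bound $T=\Omega(2^{n/2})$ follows.
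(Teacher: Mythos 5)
Your proposal is correct, and it is non-circular: the paper's proof of Theorem~\ref{memory less PRS proof 1} relies only on Fact~\ref{LeCamOneside} and Fact~\ref{Permutation product}, not on Fact~\ref{CCHL_theorem}, so specializing that theorem to the identity (trivially unital) channel legitimately yields the noiseless statement, and your expanded sketch is exactly that proof with $\mathcal{E}=\mathrm{id}$: the same leaf-likelihood ratio, the same Haar $T$-th moment over the symmetric group, the same use of the permutation-sum lower bound, and the same $(1-T/2^n)^T\geq 1-T^2/2^n$ estimate feeding into Le Cam. The only real difference from the paper is organizational rather than mathematical: the paper does not prove this fact at all, importing it directly from Ref.~\cite{chen2022exponential}, whereas you make the appendix self-contained by observing that the paper's own noisy generalization already subsumes it. One cosmetic remark: your statement of the Haar moment, $\mathbb{E}(\ket{\psi}\bra{\psi})^{\otimes T}=\binom{2^n+T-1}{T}^{-1}\frac{1}{T!}\sum_{\pi\in S_T}\hat{\pi}$, has the binomial coefficient correctly in the denominator; the paper's displayed version of this identity inside Theorem~\ref{memory less PRS proof 1} places it in the numerator, which is a typo, so your version is the one consistent with the subsequent product $2^{nT}/[2^n(2^n+1)\cdots(2^n+T-1)]$.
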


\begin{theorem}\label{thm:noise_robust_memoryless_PRS} An adversary without quantum memory cannot efficiently distinguish between PRS affected by arbitrary unital noise channels, and Haar random states.
\end{theorem}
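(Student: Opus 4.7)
The plan is to prove the statement by a three-step hybrid (triangle) argument, linking noisy PRS to Haar random states through two intermediate ensembles: noisy Haar random states and the maximally mixed state. Writing $\Phi$ for the unital noise channel applied (independently) to each copy, $\{|\phi_k\rangle\}_k$ for the PRS ensemble, and $\mathcal W$ for an arbitrary efficient memoryless distinguisher receiving $T=\mathrm{poly}(n)$ copies, we define the hybrids
\begin{equation*}
H_0 = \Phi(|\phi_k\rangle\!\langle\phi_k|)^{\otimes T},\quad H_1 = \Phi(|\psi\rangle\!\langle\psi|)^{\otimes T},\quad H_2 = (I/2^n)^{\otimes T},\quad H_3 = |\psi\rangle\!\langle\psi|^{\otimes T},
\end{equation*}
where $|\psi\rangle\leftarrow\mu_n$. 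A triangle inequality on the acceptance probabilities of $\mathcal W$ then reduces the theorem to showing $|\Pr[\mathcal W(H_0)=1]-\Pr[\mathcal W(H_{i+1})=1]|=\mathrm{negl}(n)$ for $i=0,1,2$.

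For the first hybrid step $H_0\to H_1$, I would invoke the defining PRS property. Since $\Phi$ is fixed and can be regarded as pre-processing applied per copy, any memoryless distinguisher $\mathcal W$ between $H_0$ and $H_1$ can be simulated by an adversary $\mathcal W'$ that first applies $\Phi$ to each received copy and then runs $\mathcal W$; this is still an efficient algorithm on $T=\mathrm{poly}(n)$ copies of the bare PRS vs Haar, so computational indistinguishability of PRS from Haar forces the distinguishing advantage to be negligible. (If $\Phi$ is not assumed efficient, one instead appeals to the data-processing inequality, which already gives statistical indistinguishability downstream of the channel.) For the second step $H_1\to H_2$, I would apply Theorem~\ref{memory less PRS proof 1}, which shows exactly that any memoryless learner requires $\Omega(2^{n/2})$ copies to distinguish Haar states passed through a unital channel from the maximally mixed state; with $T=\mathrm{poly}(n)$ this advantage is $\mathrm{negl}(n)$. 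For the third step $H_2\to H_3$, I would invoke Fact~\ref{CCHL_theorem}, which gives the same memoryless lower bound for distinguishing Haar from the maximally mixed state.

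Summing the three negligible contributions via the triangle inequality yields
\begin{equation*}
\Bigl|\Pr_{k\leftarrow\mathcal K}[\mathcal W(\Phi(|\phi_k\rangle)^{\otimes T})=1]-\Pr_{|\psi\rangle\leftarrow\mu_n}[\mathcal W(|\psi\rangle^{\otimes T})=1]\Bigr|=\mathrm{negl}(n),
\end{equation*}
which is precisely the claim. The main subtlety, and the only real obstacle, is the first step: one must ensure that the noise channel can be legitimately folded into the distinguisher so that the reduction to the PRS security game goes through. This is immediate when $\Phi$ is efficiently implementable per copy (the natural physical model), and otherwise is handled by the data-processing inequality applied in the computational setting where the noise is treated as part of the state preparation. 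All other ingredients are already established Haar-vs-maximally-mixed bounds tailored to the memoryless oracle model.
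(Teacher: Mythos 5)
Your proposal takes essentially the same route as the paper's proof: the same three pairwise comparisons---noisy PRS vs.\ noisy Haar via the PRS security game with the channel folded into the distinguisher, noisy Haar vs.\ the maximally mixed state via Theorem~\ref{memory less PRS proof 1}, and maximally mixed vs.\ Haar via Fact~\ref{CCHL_theorem}---combined by the triangle inequality; the paper merely orders the steps differently (it first concludes noisy Haar $\approx$ Haar for memoryless adversaries, then appends the computational step from Haar to PRS).

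One caveat: your parenthetical fallback for a non-efficient channel via the data-processing inequality does not work. The PRS ensemble and the Haar ensemble are only \emph{computationally} indistinguishable and are statistically far apart over $\mathrm{poly}(n)$ copies, so data processing gives no useful bound on the trace distance between their images under $\Phi$; the reduction in the first hybrid step genuinely requires the channel to be efficiently implementable so that it can be absorbed into the distinguisher. This is precisely why the paper's ``weaker version,'' Theorem~\ref{memoryless PRS proof 2}, restricts to efficient unital channels, and the paper's own proof of Theorem~\ref{thm:noise_robust_memoryless_PRS} glosses over the same point when it asserts computational indistinguishability of noisy PRS and noisy Haar states for an arbitrary unital channel. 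So your main line of argument matches the paper (and is sound for efficiently implementable noise, the physically relevant case), but the claimed extension to arbitrary channels is not salvaged by data processing---neither in your write-up nor, strictly speaking, in the paper's.
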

\begin{proof}
    Using theorem \ref{memory less PRS proof 1}, we get that noisy Haar random state ensemble is indistinguishable from maximally mixed state for a memoryless adversary. Now, using Fact \ref{CCHL_theorem} and triangle inequality, we get that noisy Haar random states are statistically indistinguishable from noiseless Haar random states for memoryless adversary.  Now, this translates to PRS, from the fact that noisy Haar random state ensemble is computationally indistinguishable from noisy PRS thus providing computational indistinguishability of noisy PRS and Haar random states. 
\end{proof}
The above lemma gives us other notion of noise-robust pseudorandomness. Note that, the above proof is information theoretic which implies computational indistinguishability, but we also provide a weaker version of the above theorem simply by computational indistinguishability arguments based on the fact that maximally mixed state is indistinguishable from Haar ensemble in the many vs one distinguishability task without quantum memory.

\begin{theorem}[Weaker version of theorem \ref{thm:noise_robust_memoryless_PRS}]\label{memoryless PRS proof 2}
    Any adversary without quantum memory cannot efficiently distinguish between PRS affected by efficient unital noise channels $\Gamma$, and Haar random states.
\end{theorem}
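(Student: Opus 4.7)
The plan is to derive this weaker theorem via a purely computational hybrid argument, using as a black box only the statistical indistinguishability (without quantum memory) between Haar random states and the maximally mixed state from Fact~\ref{CCHL_theorem}, rather than re-running the tree/Le Cam calculation of Theorem~\ref{memory less PRS proof 1}. The key observation is that the maximally mixed state $I_n/2^n$ is a fixed point of every unital channel, so $\Gamma(I_n/2^n)=I_n/2^n$ for any $\Gamma$ in scope.

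First I would set up three ensembles to be compared: (i) $\Gamma(\ket{\phi_k}\bra{\phi_k})$ with $k\leftarrow\mathcal{K}$ (noisy PRS), (ii) $\Gamma(\ket{\psi}\bra{\psi})$ with $\ket{\psi}\leftarrow\mu_n$ (noisy Haar), and (iii) $\ket{\psi}\bra{\psi}$ with $\ket{\psi}\leftarrow\mu_n$ (noise-free Haar). Let $\mathcal{W}$ be any memoryless distinguisher running in quantum polynomial time and taking $t=\mathrm{poly}(n)$ copies. I would then argue indistinguishability of (i) from (iii) by passing through two hybrids.

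For the step (i) vs (ii), since $\Gamma$ is efficiently implementable, the memoryless distinguisher $\mathcal{W}$ composed with $\Gamma$ applied copy-by-copy is itself an efficient (and still memoryless, since $\Gamma$ acts locally on each incoming copy) quantum algorithm $\mathcal{W}'$ of the same sample complexity; the PRS guarantee immediately gives
\begin{equation}
\Big|\Pr_{k\leftarrow\mathcal{K}}[\mathcal{W}(\Gamma(\ket{\phi_k}\bra{\phi_k})^{\otimes t})=1]-\Pr_{\ket{\psi}\leftarrow\mu_n}[\mathcal{W}(\Gamma(\ket{\psi}\bra{\psi})^{\otimes t})=1]\Big|=\operatorname{negl}(n).
\end{equation}
For the step (ii) vs (iii), I would use Fact~\ref{CCHL_theorem} twice together with the unital fixed-point property: without quantum memory, the noise-free Haar ensemble is $\operatorname{negl}(n)$-close to $I_n/2^n$ in the $t$-copy sense, and the noisy Haar ensemble is also $\operatorname{negl}(n)$-close to $\Gamma(I_n/2^n)^{\otimes t}=(I_n/2^n)^{\otimes t}$ since $\Gamma$ fixes the maximally mixed state (formally, applying $\Gamma$ on each copy is a CPTP post-processing, so by the data-processing inequality for the trace distance the bound from Fact~\ref{CCHL_theorem} transfers unchanged). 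A triangle inequality then gives that noisy Haar and noise-free Haar are statistically indistinguishable for memoryless $\mathcal{W}$.

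Combining the two hybrids by the triangle inequality yields the claim. The only subtle step is ensuring that the memoryless structure is preserved when composing with $\Gamma$: I would emphasize that because $\Gamma$ is applied per copy (each copy arrives fresh from the oracle), the composed algorithm $\mathcal{W}\circ\Gamma^{\otimes\cdot}$ still performs single-copy adaptive POVMs in the sense of Definition~\ref{def:memoryless}, so invoking the PRS indistinguishability against it is legitimate. There is no genuine obstacle beyond this bookkeeping, which is precisely why this proof is weaker than Theorem~\ref{thm:noise_robust_memoryless_PRS}: it only gives computational indistinguishability from Haar rather than information-theoretic indistinguishability.
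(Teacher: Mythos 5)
Your proposal is correct and follows essentially the same route as the paper's proof: a hybrid argument through noisy Haar and the maximally mixed state, using the unital fixed-point property $\Gamma(I_n/2^n)=I_n/2^n$ together with Fact~\ref{CCHL_theorem} for the Haar-vs-mixed step and the PRS guarantee (with $\Gamma$ efficiently implementable, applied copy-by-copy) for the noisy-PRS-vs-noisy-Haar step. The only difference is bookkeeping: you merge the paper's two hybrid comparisons (noisy Haar vs.\ mixed, and mixed vs.\ Haar) into a single statistical step via data processing, which is the same reduction the paper phrases as building a new distinguisher $\mathcal{A}'$ that pre-applies $\Gamma$.
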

\begin{proof}
    Let the PRS family be $\{|\phi_k\rangle\}_{k \in \mathcal{K}}$ and let the noise in PRS be given by the unital CPTP channel $\Gamma$. Then, we create 3 hybrids as follows:
    \begin{enumerate}
        \item \textbf{Hybrid 1}: $t$ copies of $\Gamma(|\phi_k\rangle)$ with $k \leftarrow \mathcal{K}$ and provide adversary with one copy at a time. 
        \item \textbf{Hybrid 2}: $t$ copies of $\Gamma(|\psi\rangle)$ with $|\psi\rangle \leftarrow \mu_{n}$ and provide adversary with one copy at a time. 
        \item \textbf{Hybrid 3}: $t$ copies of maximally mixed state on $n$ qubits: $\frac{I}{2^n}$ and provide adversary with one copy at a time. 
        \item \textbf{Hybrid 4}: $t$ copies of Haar random state $|\psi\rangle \leftarrow \mu_n$ and provide adversary with one copy at a time. 
    \end{enumerate}
    Now computational indistinguishability of $H_3$ and $H_4$ follows from Fact \ref{CCHL_theorem} as shown in~\cite{chen2022exponential}. To see the computational indistinguishability of $H_2$ and $H_3$, we first assume that the hybrids are distinguishable and there is a quantum polynomial time distinguisher $\mathcal{A}$ which given only $t = \text{poly}(n)$ copies of hybrids, can distinguish between them. Then we can design a new distinguisher $\mathcal{A}^{\prime}$ which first acts the efficient unital noise channel $\Gamma$ and then use the distinguisher $\mathcal{A}$ as subroutine to distinguish hybrids $H_3$ and $H_4$ in polynomial time with $t = \text{poly}(n)$ samples which contradicts with computational indistinguishability of hybrids $H_3$ and $H_4$. Thus, hybrid $H_2$ and $H_4$ are also computationally indistinguishable. Finally computational indistinguishability of $H_1$ and $H_2$ follows from definition of PRS and from the fact that $\Gamma$ is efficiently implementable.
\end{proof}

\begin{lemma}\label{examples of memoryless PRS that are not standard PRS}
        PRDMs are also memoryless PRS, but PRDMs is not the same as PRS.
\end{lemma}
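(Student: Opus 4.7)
The plan is to establish the two halves of the lemma via a triangle-inequality hybrid argument for the first, and an explicit separating distinguisher for the second. The key inputs are Theorem~\ref{thm:PRDM_indisting} (PRDMs with $m=\omega(\log n)$ are computationally indistinguishable from $I_n/2^n$) and Fact~\ref{CCHL_theorem} (maximally mixed and Haar random states are statistically indistinguishable without quantum memory from $\mathrm{poly}(n)$ copies).

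For the first claim, I would interpret a PRDM ensemble $\{\rho_{k,m}\}_{k\in\mathcal{K}}$ with $m=\omega(\log n)$ as a keyed family of (mixed) states and verify the memoryless-PRS indistinguishability condition. Introduce the hybrids $H_1=\rho_k^{\otimes t}$ for $k\leftarrow\mathcal{K}$, $H_2=(I_n/2^n)^{\otimes t}$, and $H_3=\ket{\psi}\bra{\psi}^{\otimes t}$ for $\ket{\psi}\leftarrow\mu_n$, with $t=\mathrm{poly}(n)$. Theorem~\ref{thm:PRDM_indisting} gives $|H_1-H_2|\le\operatorname{negl}(n)$ against any efficient quantum algorithm, and in particular against any efficient algorithm without quantum memory. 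Fact~\ref{CCHL_theorem} gives $|H_2-H_3|\le\operatorname{negl}(n)$ information-theoretically for such memoryless algorithms. Combining the two bounds via the triangle inequality yields computational indistinguishability of $H_1$ and $H_3$ for memoryless distinguishers, which is exactly the defining property of memoryless PRS.

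For the second claim, I would exhibit an efficient distinguisher \emph{with} quantum memory that separates a PRDM from a Haar random state, which by definition rules out PRS-ness. The canonical witness is the SWAP test on two copies~\cite{barenco1997stabilization}: its acceptance probability is $\tfrac{1}{2}+\tfrac{1}{2}\mathrm{tr}(\rho^2)$. For Haar random states $\mathrm{tr}(\rho^2)=1$, whereas for any PRDM with $m=\omega(\log n)$ one has $\mathrm{rank}(\rho_{k,m})\le 2^m$ and hence $\mathrm{tr}(\rho_{k,m}^2)\le 2^{-m}=\operatorname{negl}(n)$. The resulting gap in acceptance probability is $\tfrac{1}{2}-\operatorname{negl}(n)$, so the SWAP test (which crucially uses a coherent ancilla, i.e.\ quantum memory) distinguishes the two ensembles with overwhelming advantage. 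Hence PRDMs fail the PRS definition, while still satisfying the weaker memoryless variant.

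The only subtlety — and the point worth highlighting rather than a genuine obstacle — is that Def.\ of memoryless PRS was phrased for a keyed family of pure states $\{\ket{\phi_k}\}$, whereas PRDMs are mixed. The statement should therefore be read as the natural mixed-state generalization: replace $\ket{\phi_k}\bra{\phi_k}$ with $\rho_k$ in the indistinguishability condition. Under this reading the hybrid argument goes through verbatim, and the SWAP-test separation shows that this mixed-state memoryless notion is strictly weaker than PRS, establishing the second half of the lemma.
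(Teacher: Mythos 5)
Your proposal is correct and follows essentially the same route as the paper: indistinguishability of the PRDM from the maximally mixed state combined (by transitivity/triangle inequality) with the memoryless indistinguishability of the maximally mixed state from Haar random states for the first claim, and the two-copy SWAP test exploiting the purity gap for the second. Your version is simply more explicit about the hybrids, the $m=\omega(\log n)$ regime, and the pure-vs-mixed reading of the memoryless-PRS definition, all of which the paper leaves implicit.
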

\begin{proof}
 To prove that PRDM is also a memoryless PRS, it is easy to see that we can obtain $n$ qubit PRDM from $n+m$ qubits Haar ensemble by partial trace on $m$ qubits, and as partial trace is computationally efficient, we will have that $n$ qubit PRDM is close to $n$ qubit maximally mixed state. Then, by using transitivity of computational indistinguishability, it will also be computationally close to memoryless PRS. Furthermore, PRDM is not the same as PRS which can be seen by the fact that one can distinguish two copies of PRDM from two copies of PRS using the SWAP test. 
\end{proof}

\section{PRS with noisy quantum memory}\label{Pseudorandomness with noisy quantum memory}
Similar to standard PRS, we can define PRS with noisy quantum memory. Firstly, we define the notion of noisy quantum memory as inspired from~\cite{chen2023complexity}. The authors in~\cite{chen2023complexity} considered the adversary's quantum memory with two registers: state loading register $A$ with $n$ qubits and workspace register  $B$ with $\text{poly}(n)$ qubits. Then, the adversary can query the state preparing oracle $O$ and the noise channel acts on both state loading and memory register of adversary just after the query. Between any two queries, the adversary has access to noiseless operations and measurements. We allow the adversary to receive two copies of state from the oracle in one query. Then, an entangling noise $D_q$ acts on the adversary's quantum memory. We mention that this allows us to look at more general noise, particulalry entangling noise channels then just one copy noise channels.

\begin{definition}[Adversary with noisy quantum memory]\label{def:noisy_quantum_memory}
    An adversary with noisy quantum memory has an initial state $\sigma$ on $n^{\prime} \geq 2n$ qubits and it can query the oracle $O_i$ which prepares the state:
    \begin{equation}
        O_i(\sigma) = \rho_i \otimes \rho_i \otimes \tr_2 \sigma
    \end{equation}
    where $\tr_2\sigma$ is partial trace on state loading register of state $\sigma$. Then, the noise channel $D_q$ with noise probability $q$ acts on all qubits of adversary just after querying the oracle and prepares the state:
    \begin{equation}
        D_qO_i(\sigma) = D_q(\rho_i \otimes \rho_i \otimes \tr_2 \sigma).
    \end{equation}
\end{definition}

We can now formally define pseudorandom states with noisy quantum memory as below.
\begin{definition}[PRS with noisy quantum memory]
    Let $\lambda$ be the security parameter and let $\mathcal{H}$ and $\mathcal{K}$ be the Hilbert space and key space respectively both dependent on the security parameter. Then, a keyed family of pure quantum states $\{|\phi_k\rangle\}_{k \leftarrow \mathcal{K}}$ is called a pseudorandom state family with noisy quantum memory if:
    \begin{enumerate}
        \item Efficient Generation: There exist an efficient quantum algorithm $S$ such that $S(k, 1^{\lambda}) = |\phi_k\rangle$.
        \item Computational Indistinguishability: For a random key $k \in K$, given $t = poly(\lambda)$ copies of $|\phi_k\rangle$ are computationally indistinguishable from t copies of Haar random states for any quantum polynomial time algorithm $B$ with noisy quantum memory as defined in Def. \ref{def:noisy_quantum_memory}:
        \begin{equation}
            |\Pr_{k \leftarrow K}[B(|\phi_k\rangle^{\otimes m}) = 1] - \Pr_{|\psi\rangle \leftarrow \eta_H}[B(|\psi\rangle^{\otimes m}) = 1]| = \operatorname{negl}(\lambda).
        \end{equation}
    \end{enumerate}
\end{definition}
It is trivial to see that standard PRS is an example of PRS with noisy quantum memory when the noise probability is $O(\operatorname{negl}(n))$. Thus, this notion of PRS is much more general. Now, we will see that even this does not help in noise robustness. Particularly, we show it using Holevo Helstrom bound~\cite{bae2015quantum} for distinguishability on two copies of PRS subject to noise. 
For simplicity, we assume only $2n$ qubit quantum memory for the adversary, but it can be readily generalized to any $n^{\prime} \geq 2n$. We will denote the noise channel representing noise in the quantum memory using $D_q$ and the noise channel representing noise in the PRS as $\Gamma(\rho)$ in following results.

\begin{theorem}\label{noisymemory}
    The probability of distinguishing a noisy PRS from noiseless PRS for a noisy quantum memory adversary is lower bounded by $1/2 + \Omega(n^{-c})$ for the probability of noise $\Theta(1/poly(n))$ in the quantum memory.
\end{theorem}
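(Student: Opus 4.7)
The plan is to exhibit an explicit attack based on the SWAP test and argue that a polynomially small amount of memory noise cannot wipe out the distinguishing signal. The underlying observation is that a noiseless PRS $\ket{\psi_k}$ has purity exactly $1$, whereas a PRS corrupted by a nontrivial channel $\Gamma$ has purity $\tr(\Gamma(\ket{\psi_k}\!\bra{\psi_k})^2) \leq 1 - \Omega(1/\operatorname{poly}(n))$ on average (assuming $\Gamma$ is nontrivial in the relevant sense, as is implicit in the noise-robustness question). This purity gap is exactly what the SWAP test reads out.

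The concrete algorithm I would analyze is the following. The adversary makes a single query that loads two copies into the state-loading register, producing $\rho \otimes \rho \otimes \tr_2(\sigma)$ where $\rho$ is either $\ket{\psi_k}\!\bra{\psi_k}$ or $\Gamma(\ket{\psi_k}\!\bra{\psi_k})$. The memory noise channel $D_q$ then acts on all $2n$ qubits, yielding $D_q(\rho \otimes \rho \otimes \tr_2\sigma)$. The adversary then runs the standard SWAP test between the two loaded registers, using a fresh ancilla prepared in the noiseless workspace, and outputs the ancilla outcome. Because the SWAP test is a constant-depth procedure acting on $O(n)$ qubits with $O(1)$ multiqubit gates beyond the controlled-SWAP, I only need to track how $D_q$ shifts the acceptance probability.

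The key steps are then: (i) compute the noiseless acceptance probability of the SWAP test on $\rho\otimes\rho$ as $\frac{1}{2}(1 + \tr\rho^2)$; (ii) bound the perturbation induced by $D_q$ on the joint state $\rho\otimes\rho$ in trace distance by $O(q)$ using the triangle inequality and the contractivity of quantum channels, so that the acceptance probability shifts by at most $O(q) = O(1/\operatorname{poly}(n))$ with an appropriate choice of the polynomial; (iii) combine this with the purity gap $\tr(\ket{\psi_k}\!\bra{\psi_k}^2) - \tr(\Gamma(\ket{\psi_k}\!\bra{\psi_k})^2) = \Omega(1/\operatorname{poly}(n))$ to obtain a net difference in acceptance probabilities of $\Omega(n^{-c})$; (iv) invoke the Helstrom--Holevo bound $P_\text{discr} = \tfrac{1}{2} + \tfrac{1}{2}\lVert p_0 - p_1 \rVert$ on the classical output distributions to conclude the stated lower bound $\tfrac{1}{2} + \Omega(n^{-c})$.

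The main obstacle is step (ii): the noise $D_q$ is allowed to be entangling across the state-loading and workspace registers, so one must be careful that the noise cannot conspire with the workspace contents to cancel the SWAP-test signal. I would handle this by choosing the workspace input $\sigma$ to be a fixed product state (e.g.\ $\ket{0}^{\otimes n}$) so that $D_q$ acting on $\rho\otimes\rho\otimes\ket{0}\!\bra{0}^{\otimes}$ can be compared to the noiseless ideal via a single application of the data-processing inequality, giving trace-distance shift at most a constant multiple of $q$. Once $q = \Theta(1/\operatorname{poly}(n))$ is assumed, this shift is strictly smaller than the purity gap for a suitable polynomial, and the Helstrom bound gives the conclusion. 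A secondary subtlety is ensuring that the purity gap for the \emph{average} PRS under $\Gamma$ is genuinely $\Omega(1/\operatorname{poly}(n))$; this follows from the argument already used in Lemma~\ref{general PRDM noise robustness} applied to the PRS ensemble, since any nontrivial efficient noise channel reduces purity by at least an inverse polynomial on typical pseudorandom states.
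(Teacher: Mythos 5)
Your strategy is viable and, where it applies, it buys something the paper's proof does not: by running the SWAP test explicitly after the memory noise you exhibit a concrete, efficient distinguisher, whereas the paper's proof of Theorem~\ref{noisymemory} is purely information-theoretic --- it decomposes the memory channel as $D_q=q\,\mathrm{id}+(1-q)\mathcal{E}$, lower-bounds the post-noise trace distance via the reverse triangle inequality and contractivity, $\Vert D_q(A)-D_q(B)\Vert_1\ge q\Vert A-B\Vert_1-(1-q)\Vert\mathcal{E}(A)-\mathcal{E}(B)\Vert_1\ge(2q-1)\Vert A-B\Vert_1$, and then invokes Holevo--Helstrom, leaving an explicit efficient attack to the subsequent lemma on local depolarizing noise. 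Both arguments rest on the same premise you flag, namely that the PRS noise $\Gamma$ opens an inverse-polynomial two-copy gap (the paper phrases this as SWAP-test distinguishability of $\mathbb{E}\,\rho^{\otimes 2}$ from $\mathbb{E}\,\Gamma(\rho)^{\otimes 2}$); your aside that \emph{any} nontrivial efficient channel reduces purity by an inverse polynomial is too strong as stated (unitary channels do not), but since this is the operative assumption in both proofs it is not the substantive issue.

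The substantive shortfall is quantitative and sits in your step (ii). You treat the memory noise additively: the acceptance probability of your SWAP test shifts by at most $O(q_{\mathrm{noise}})$, so your signal survives only when the memory-noise rate is chosen \emph{below} the purity gap induced by $\Gamma$, i.e.\ you must coordinate the two polynomials, and your argument gives nothing if, say, the memory noise is $n^{-1}$ while $\Gamma$ depresses the purity only by $n^{-10}$. The paper's multiplicative bound avoids this entirely: since the same $D_q$ acts on both hypotheses, the memory noise can shrink the pre-noise distance only by the factor $(2q-1)$, so the conclusion holds for \emph{every} memory-noise probability $\Theta(1/\mathrm{poly}(n))$ (indeed for any rate bounded below $1/2$ by an inverse polynomial), irrespective of how weak the PRS noise is --- which is the regime the theorem statement claims. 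To recover the full strength you should either lower-bound the post-noise trace distance by the paper's decomposition before committing to a measurement, or, if you want to keep the explicit SWAP test, sharpen your error term to $q_{\mathrm{noise}}\,\bigl|\tr\bigl[\Pi\bigl(\mathcal{E}(A)-\mathcal{E}(B)\bigr)\bigr]\bigr|\le q_{\mathrm{noise}}\,\mathrm{TD}(A,B)$; the latter rescues the explicit attack for depolarizing-type $\Gamma$, where the two-copy trace distance and the purity gap are polynomially comparable, but not for channels whose trace-distance effect greatly exceeds their purity effect, which is why the trace-distance route is the cleaner fix.
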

\begin{proof}
     We compute the lower bound on TD$[D_qO_0(\sigma), D_qO_1(\rho)]$ where $D_q$ is general noise channel of form $D_q(\rho) = q\rho + (1-q)\mathcal{E}(\rho)$ where $\mathcal{E}$ is general CPTP map and $O_i$ prepares the two copies of noiseless PRS, $\rho \otimes \rho $ and two copies of noisy PRS, $\Gamma (\rho) \otimes \Gamma (\rho)$ for $i = 0$ and $i = 1$ respectively. Now,
     \begin{equation}
        \begin{split}
            \text{TD}[D_qO_0(\sigma), D_qO_1(\sigma)] = \frac{1}{2}\tr \Big{|}& q\mathbb{E}_{\rho \leftarrow \eta_H}\rho \otimes \rho + (1-q) \mathbb{E}_{\rho \leftarrow \eta_H}\mathcal{E}(\rho \otimes \rho) \\
            &- q \mathbb{E}_{\rho \leftarrow \eta_H}\Gamma(\rho) \otimes \Gamma(\rho) - (1-q)\mathbb{E}_{\rho \leftarrow \eta_H} \mathcal{E}(\Gamma(\rho)\otimes \Gamma(\rho)) \Big{|}.
        \end{split}
    \end{equation}
    Now, using the reverse triangle inequality given by $\tr |x - y| \geq | \tr |x| - \tr |y| | \geq \tr |x| - \tr |y|$, we obtain 
    \begin{equation}
        \begin{split}
            \text{TD}[D_qO_0(\sigma), D_qO_1(\sigma)] \geq \frac{1}{2}\Big[ & q \tr |\mathbb{E}_{\rho \leftarrow \eta_H}\rho\otimes \rho - \mathbb{E}_{\rho \leftarrow \eta_H}\Gamma(\rho) \otimes \Gamma(\rho)| \\
            & - (1-q) \tr |\mathbb{E}_{\rho \leftarrow \eta_H}\mathcal{E}(\rho \otimes \rho) - \mathbb{E}_{\rho \leftarrow \eta_H}\mathcal{E}(\Gamma(\rho) \otimes \Gamma(\rho))| \Big].
        \end{split}
    \end{equation}
    Now, we will use the contractivity of trace distance under completely positive trace-preserving maps: $\text{TD}(\mathcal{E}(\rho), \mathcal{E}(\sigma)) \leq \text{TD}(\rho, \sigma)$ to obtain:
    \begin{equation}\label{eq:tr_dist_noisy_noiseless_PRS_noisy_memory}
        \begin{split} 
            \text{TD}[D_qO_0(\sigma), D_qO_1(\sigma)] \geq \frac{1}{2}\Big[ &q \tr |\mathbb{E}_{\rho \leftarrow \eta_H}\rho\otimes \rho - \mathbb{E}_{\rho \leftarrow \eta_H}\Gamma(\rho) \otimes \Gamma(\rho)| \\
            & - (1-q) \tr |\mathbb{E}_{\rho \leftarrow \eta_H}\rho \otimes \rho - \mathbb{E}_{\rho \leftarrow \eta_H}\Gamma(\rho) \otimes \Gamma(\rho)| \Big].
        \end{split}
    \end{equation}
Now, we can just recombine the terms and use the fact that two copies of noisy PRS for probability of noise $p = \Omega(n^{-c})$, are distinguishable from two copies of noiseless PRS  using the SWAP test with high probability. From efficient distinguishability and the Holevo-Helstrom bound, we get, $\tr|\mathbb{E}_{\rho \leftarrow \eta_H}\rho \otimes \rho - \mathbb{E}_{\rho \leftarrow \eta_H} \Gamma(\rho) \otimes \Gamma(\rho)| = \Omega(n^{-c})$. Putting this back in \eqref{eq:tr_dist_noisy_noiseless_PRS_noisy_memory}
\begin{equation}
    \begin{split}
        \text{TD}[D_qO_0(\sigma), D_qO_1(\sigma)] &\geq \frac{1}{2}\Big[ (2q - 1) \tr |\mathbb{E}_{\rho \leftarrow \eta_H}\rho \otimes \rho- \mathbb{E}_{\rho \leftarrow \eta_H}\Gamma(\rho) \otimes \Gamma(\rho) | \Big] \\
        & = \frac{1}{2} \Big[ (2q - 1) \Omega(n^{-c})\Big].
    \end{split}
\end{equation}
Now, for $q = 1 - \Theta\Big(\frac{1}{\text{poly}(n)}\Big)$, we get that $\text{TD}[D_qO_0(\sigma), D_qO_1(\sigma)] = \Omega(n^{-c})$ for some $c > 0$. Thus, using Holevo-Helstrom bound~\cite{HOLEVO1973337, Helstrom1969-ny,bae2015quantum},
\begin{equation}
    \Pr = \frac{1}{2} + \Omega(n^{-c}),
\end{equation}
which suggests efficient distinguishability.
\end{proof}
We further give an example to show the distinguishability for the case when the noise in quantum memory acts on one copy only. For simplicity, we assume $2n$ qubit memory:
\begin{lemma}
   An adversary with noisy quantum memory can distinguish noisy PRS from noiseless PRS if the noise in PRS and memory is modeled as local depolarisation noise with noise probability $r$, 
   \begin{equation}
       \Lambda_r(\rho) = r\rho + \frac{1-r}{2}(I_1 \otimes \tr_1 \rho)
   \end{equation}
   where $r = \{p, q\}$ for noise in PRS and noise in memory respectively.
\end{lemma}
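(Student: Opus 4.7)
The plan is to construct an explicit two-copy distinguisher based on the SWAP test, in the spirit of Ref.~\cite{haug2023pseudorandom}, and to show that a single-qubit depolarizing memory noise with $q$ close to $1$ cannot wash out the purity gap between a noiseless PRS copy and one degraded by $\Lambda_p$. Concretely, the adversary loads two copies into its quantum memory, lets the memory noise $\Lambda_q$ act on the state-loading register of one of them, and then performs a SWAP test on the resulting pair, outputting the test outcome.

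Using $\Pr[\text{SWAP accepts}\mid \rho\otimes\sigma]=\tfrac{1}{2}(1+\tr(\rho\sigma))$, one gets in the noiseless case (two copies of $\ket{\psi}\bra{\psi}$) an acceptance probability $P_0=\tfrac{1}{2}(1+\tr[\Lambda_q(\ket{\psi}\bra{\psi})\,\ket{\psi}\bra{\psi}])$, and in the noisy case (two copies of $\Lambda_p(\ket{\psi}\bra{\psi})$) an acceptance probability $P_1=\tfrac{1}{2}(1+\tr[\Lambda_q(\Lambda_p(\ket{\psi}\bra{\psi}))\,\Lambda_p(\ket{\psi}\bra{\psi})])$. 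Expanding $\Lambda_q(\rho)=q\rho+\tfrac{1-q}{2}(I_1\otimes\tr_1\rho)$ reduces each quantity to a linear combination of the purity $\tr\rho^2$ and the reduced purity $\tr[(\tr_1\rho)^2]/2$, with coefficients $q$ and $1-q$.

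The second step is to evaluate the resulting purity gap $|P_0-P_1|$. For noiseless PRS the relevant state is pure, so both contributing purities are $1$ and $\tr[(\tr_1\ket{\psi}\bra{\psi})^2]$ respectively; for noisy PRS the corresponding quantities are strictly smaller by an amount controlled by $p$, because $\Lambda_p$ strictly decreases purity on any state that is not diagonal on qubit $1$. By computational indistinguishability of PRS from Haar, I can carry out the calculation in expectation over the Haar measure, where standard second-moment formulas give explicit numerics for $\mathbb{E}_\psi\tr[(\tr_1\ket{\psi}\bra{\psi})^2]$ and for the analogous expectation on $\Lambda_p(\ket{\psi}\bra{\psi})$; Levy's lemma promotes these averages to typical values with overwhelming probability. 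For $p=\Omega(1/\text{poly}(n))$ and $q=1-O(1/\text{poly}(n))$ this yields $|P_0-P_1|=\Omega(1/\text{poly}(n))$.

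Finally, I would invoke the Holevo--Helstrom bound exactly as in the proof of Theorem~\ref{noisymemory} to translate this gap into a distinguishing probability $\tfrac{1}{2}+\Omega(n^{-c})$, establishing the lemma. The main obstacle I anticipate is not the SWAP argument itself but making the second-moment Haar calculation clean enough to cleanly separate the $q$ and $1-q$ contributions so that the $q\to 1$ limit clearly recovers the noiseless SWAP attack while keeping the correction terms manageable; passing from the Haar expectation to individual PRS keys via Levy concentration is standard but must be stated with care because the Lipschitz constants involve the reduced-state purity, not just the full purity.
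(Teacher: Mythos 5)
Your proposal is correct and follows essentially the same route as the paper: realize the SWAP test inside the noisy-memory model, compare the acceptance statistics $\tr[\rho\,\Lambda_q(\rho)]$ versus $\tr[\Lambda_p(\rho)\,\Lambda_{pq}(\rho)]$ by expanding the depolarizing channels into full-purity and reduced-purity terms, and conclude a gap of order $q\,|1-p^2|$ that Holevo--Helstrom turns into a $\tfrac12+\Omega(n^{-c})$ distinguishing probability. The extra Haar second-moment/Levy step you add to pin down the reduced purity is fine (and the paper implicitly relies on the same typical value); just keep the convention straight that $r$ in $\Lambda_r$ is the \emph{keep} probability, so the gap vanishes only when $q=\operatorname{negl}(n)$ or $p=1-\operatorname{negl}(n)$.
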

\begin{proof}
    We prove this using the SWAP test~\cite{barenco1997stabilization, garcia2013swap, beckey2023multipartite}. In this model, we can realize the SWAP test by first querying the oracle, using the SWAP operation to swap the workspace register and memory register, then querying the oracle again, and finally performing the SWAP test. Now, we assume a single copy noise in the quantum memory here, a local depolarisation channel acts on the copy in the quantum memory. Hence, we will have a state like $\rho_i \otimes \Lambda_q (\rho_i)$ and in the SWAP test, we will obtain $\tr (\rho_i \Lambda_q (\rho_i))$. In other words,
    \begin{equation}
        \sigma \xrightarrow{O_i} \rho_i \otimes \rho_i \xrightarrow{\Lambda_q} \rho_i \otimes \Lambda_q(\rho_i) \xrightarrow{\text{SWAP test}} \tr (\rho_i\Lambda_q(\rho_i)).
    \end{equation}
     It is easy to check for the oracles $O_0(\sigma) = \rho \otimes \tr \sigma$, preparing noisless PRS and $O_1(\sigma) = \Lambda_p(\rho) \otimes \tr \sigma$, preparing noisy PRS, we will have $\tr (\rho \Lambda_q(\rho))$ and $\tr (\Lambda_p(\rho) \Lambda_{pq}(\rho))$ respectively and the absolute value of average difference between these for $\Lambda$ being local depolarisation channel is $|p^2q - q|$ up to a constant multiplicative factor. This term is negligible if and only if either $q = \operatorname{negl}(n)$ or $p = 1 - \operatorname{negl}(n)$ where the former case would mean very high noise in memory and the latter case mean very low noise in PRS.
\end{proof}

Thus, even for simple unital noise models such as local depolarizing noise, we find that PRS where the observer has noisy quantum memory are not noise robust.

\section{Noise robustness of private key quantum money based on PRS} \label{sec:quantummoney}

\revA{Here, we describe our noise-robust private key quantum money scheme based on PRS and our modified completeness amplification scheme.

First, let us define private key quantum money and its security:
\begin{definition}[Private key quantum money~\cite{ji2018pseudorandom}]
   Let $\lambda$ be the security parameter. A private key quantum money scheme consists of following algorithms:
    \begin{enumerate}
        \item $\mathsf{KeyGen}$: Takes in the unary $1^{\lambda}$ and outputs a key uniformly randomly.
        \item $\mathsf{Bank}$: Takes a key as input and generates a quantum state, called banknote in this context.
        \item $\mathsf{Ver}$: Takes a key and an alleged bank note and either it accepts or rejects.
    \end{enumerate}
\end{definition}

Let $\mathsf{Count}$ be the money counter algorithm which takes in private key $k$ and $t^{\prime}$ copies of the alleged banknotes, calls the algorithm $\mathsf{Ver}$ on each copy and returns the number of times that it accepts.

\begin{definition}[Security of private key quantum money scheme~\cite{ji2018pseudorandom}] Let $\lambda$ be the security parameter. Given a private key money scheme $\mathcal{A}$ as defined above,
\begin{enumerate}
    \item Completeness error: The private key quantum money scheme $\mathcal{A}$ has completeness error $\epsilon$ if $ \mathsf{Ver}(k, \rho_k)$ accepts with
probability at least $1 - \epsilon$ for all valid banknotes $\rho_k$.
    \item Soundness error:  The private key quantum money scheme $\mathcal{A}$ is defined to have soundness error $\delta$ if for any polynomial time counterfeiter algorithm $C$, which maps $t$ banknotes to $t^{\prime} \geq t$ banknotes, 
    \begin{equation}
        \Pr[\mathsf{Count}(k, C(\rho_1, \rho_2, \ldots, \rho_t)) > t] \leq \delta .
    \end{equation}
\end{enumerate}
\end{definition}
A private key quantum money scheme is said to be secure if it has soundness error $\delta = \operatorname{negl}(\lambda)$. Further, we demand to have completeness error $\epsilon =\text{negl}(n)$ such that valid money is nearly always accepted. Note that in the original definition, a completeness error of up to $\epsilon<1/2$ is allowed, however this implies that valid quantum money is often rejected by the bank, and money effectively looses value.

\subsection{Noise-robust quantum money}

Now, we illustrate our scheme that has negligible completeness error $\epsilon=\text{negl}(n)$ even under noise.
The bank issues banknote 
\begin{equation}
    \$=(s,\ket{\psi_{q_1}},\ket{\psi_{q_2}},\dots,\ket{\psi_{q_L}})\,,
\end{equation}
which is composed of serial number $s$ and $L=\text{poly}(n)$ PRS $\ket{\psi_{q_i}}$ with bitstring $q_i$. The $q_i$ are generated using a secret master key $k$ (of size $\text{poly}(n)$): We generate  $q_1,q_2,\dots,q_L=f_k(s)$ from a (quantum-secure) pseudorandom function $f_k: \{0,1\}^{\text{poly}(n)}\rightarrow \{0,1\}^{\text{poly}(n)}$. 
Crucially, we choose $s$ large enough such that the probability that two banknotes $\$_1$, $\$_2$ have the same $s_1$, $s_2$ is negligible. Note that one can generate $s$ via a pseudorandom number generator.
For verification of $\$$, the bank projects each of the $L$ PRS with the projector $\ket{\psi_{q_i}}\bra{\psi_{q_i}}$ which can be efficiently done when knowing key $k$.

When
$T\geq (F_\text{min}+\eta)L$ of the projections succeed (with $\eta=1/\text{poly}(n)$, then the bank accepts the banknote, else rejects it. Here, we assume each PRS of the composite banknote is subject to noise channel $\Gamma(.)$ and
\begin{equation}
F_\text{min}=\min_{i}F(\Gamma(\ket{\psi_{q_i}}),\ket{\psi_{q_i}})
\end{equation}
is the minimal fidelity of all the noisy PRS with the noise-free ones. Here, we demand that $F_\text{min}>1/2+1/\text{poly}(n)$.

First, we consider the completeness error:
When $L=1$, the bank would only accept a valid noisy banknote with probability $F_\text{min}$, i.e. the money loses value with increasing noise with completeness error $\epsilon=1-F_\text{min}$. How to choose $L$ and $T$ such that the completeness error is $\epsilon=\text{negl}(n)$?
The trick is that by compositing the banknote into $L$ different PRS, we perform $L$ projections, where at least $LF_\text{min}$ succeed on average, where the law of large numbers guarantees that the composite banknote is accepted with high probability when $T\geq (F_\text{min}+\eta)L$ projections suceed.

Formally, we perform in total $L$ projections on the composite banknote 
for verification, with outcome $X_i=1$ when the projection succeeds, and $X_i=0$ otherwise.
The Chernoff bound for the average over $L$ Bernoulli trials $\hat{X}=\frac{1}{L}\sum_{i=1}^L X_i$  is given by
\begin{equation}
    \epsilon\equiv P(\hat{X}\leq \mu- \eta)\le P(\vert\hat{X} - \mu\vert \ge \eta) \le \exp(-\frac{L\eta^2}{3\mu})\,,
\end{equation}
where $\mu=F_\text{min}$ the average fraction of the projections succeeding and $\eta$ is the deviation from the average. 
The number of projections succeeding is given by $T\ge (F_\text{min}+\eta)L$.
To have negligible completeness error, we choose
\begin{equation}
    \epsilon< \exp(-\frac{L\eta^2}{3F_\text{min}})=\text{negl}(n)
\end{equation}
which we achieve by 
\begin{equation}
    L=\frac{3F_\text{min}}{\eta^2}\text{polylog}(n)\,.
\end{equation}
In particular, by choosing $\eta=1/\text{poly}(n)$ and $L=\text{poly}(n)$ we get $\epsilon=\text{negl}(n)$.

Next, we regard the soundness error $\delta$, i.e. whether there exist an efficient counterfeit algorithm $C$ that from $t$ banknotes can create additional counterfeit banknotes which are accepted by the bank. 
As the banknotes are PRS, they cannot be cloned~\cite{ji2018pseudorandom}, which holds true even under noise. Thus, there is no efficient algorithm that can create counterfeit smaller banknotes, as long as $F_\text{min}>1/2$. Further, from a single composite banknote no counterfeit money can be generated as shown in Ref.~\cite{aaronson_quantum_money_2012_arxiv}. We also find security against an embezzling attack that breaks the original completeness amplification scheme by Ref.~\cite{aaronson_quantum_money_2012_arxiv}. While we did not find a general proof of security against all possible attacks involving many banknotes, we believe generic security should hold.

\subsection{Embezzling attack}
We now describe an embezzling strategy that aims to forge fake banknotes. As we will see, this strategy does not work for our scheme, but works for the original completeness amplification scheme by Ref.~\cite{aaronson_quantum_money_2012_arxiv}.
In this attack, we are are given a valid (noise-free) composite banknote $\$$, and switch one of its smaller banknotes $\ket{\psi_{q_i}}$ with some random state $\rho$.  As the banknote consists of PRS, $\rho$ will have negligible fidelity with the PRS. 
However, the composite banknote remains valid (for $T\leq L-1$), but one has embezzled a smaller banknote with PRS $\ket{\psi_{q_i}}$. %
We now assume that we have in total $L$ different composite banknotes available, and apply our attack on all $L$. We gain $L$ embezzled smaller banknotes, which we composite into a counterfeit composite banknote $\$_\text{cf}=(s',\ket{\psi_{q_1}},\ket{\psi_{q_2}},\dots,\ket{\psi_{q_L}})$, where each $\ket{\psi_{q_i}}$ comes from a different composite banknote with different serial number. 
However, as the counterfeiter cannot efficiently compute the correct serial number $s'$ that matches the embezzled PRS (this would require knowledge of secret key $k$), $\$_\text{cf}$ has an negligible acceptance probability by the bank.

We highlight that the original amplification protocol by Ref.~\cite{aaronson_quantum_money_2012_arxiv} has a major vulnerability: In particular, it breaks under our previously described embezzling strategy:
In Ref.~\cite{aaronson_quantum_money_2012_arxiv}, the composite banknote $\$_\text{a}=((s_1,\ket{\psi_{s_1}}),(s_2,\ket{\psi_{s_2}}),\dots,(s_L,\ket{\psi_{s_L}}))$ consists of $L$ smaller banknotes, but where each smaller banknote has their individual serial number $s_i$ and state $\ket{\psi_{s_i}}$. Now, we take one of the smaller banknotes $(s_i,\ket{\psi_{s_i}})$, and replace it with some $(s_i,\rho)$. The resulting composite banknote $\$_\text{a}'$ is still accepted by the bank (for $T\leq L-1$), but now the counterfeiter has  gained a valid smaller banknote $(s_i,\ket{\psi_{s_i}})$. 
The counterfeiter now performs this strategy on $L$ valid composite banknotes, gaining $L$ embezzled smaller banknotes. These $L$ smaller banknotes into a counterfeit composite banknote $\$_\text{a,cf}$ which is accepted by the bank with probability $1$, effectively turning $L$ valid banknotes into $L+1$ banknotes accepted by the bank. 

Note that this counterfeit strategy does not work when replacing the individual $s_i$ with a single serial number $s$ that is valid only for the full composite banknote, as we cannot determine the correct serial number of the embezzled composite banknote.
}

\revA{
\section{Verifiable PRDM}\label{sec:VPRDM}
In this section, we give a short review on the verifiable PRDM (VPRDM), which were introduced after this work. 
VPRDMs are PRDMs, with the additional property that their correct preparation can be efficiently verified using key $k$~\cite{haug2024pseudorandom}. This property is for example important for applying EFI pairs for bit commitment.

We reproduce the formal definition of VPRDM in the followng:
\begin{definition}[VPRDM~\cite{haug2024pseudorandom}]\label{def:VPRDM_sup}
  Let $\lambda=\operatorname{poly}(n)$ be the security parameter with keyspace $\mathcal{K}=\{0,1\}^{\lambda}$. A keyed family of $n$-qubit density matrices $\{\rho_{k,m}\}_{k \in \mathcal{K}}$ is a VPRDM when:
    \begin{enumerate}
        \item {Efficiently preparable}: There exists an efficient quantum algorithm $\mathcal{G}$ such that $\mathcal{G}(1^{\lambda}, k,m) = \rho_{k,m}$.
        \item {Computational Indistinguishability}: $t=\mathrm{poly}(n)$ copies of $\rho_{k,m}$ are computationally indistinguishable from the GHSE $\eta_{n,m}$. In particular, for any efficient quantum algorithm $\mathcal{A}$ we have
        \begin{equation}
            \Big{|}\Pr_{k \leftarrow \mathcal{K}}[\mathcal{A}(\rho_{k,m}^{\otimes t}) = 1] - \Pr_{\rho \leftarrow \eta_{n,m}}[\mathcal{A}(\rho^{\otimes t}) = 1]\Big{|} = \operatorname{negl}(\lambda).
        \end{equation}
    \item {Efficient verification}: There is an efficient quantum algorithm $\mathcal{V}(\rho, k,m)$ to verify that $\rho_{k,m}$ is indeed the correct VPRDM generated by key $k$. In particular, it must fulfill the completeness condition (i.e. correct states are accepted)
    \begin{equation}
    \mathcal{V}(\rho_{k,m}, k,m)=1
     \end{equation}
     and soundness condition (i.e. wrong states are rejected with high probability)
     \begin{equation}
            \Pr_{k' \leftarrow \mathcal{K}/\{k\}}[\mathcal{V}(\rho_{k,m}, k',m) = 1] = \operatorname{negl}(\lambda)\,.
    \end{equation}
    \end{enumerate}
\end{definition}
Note that VPRDMs differ from PRDMs only by the additional third condition. 

An efficient construction of VPRDMs is known via~\cite{haug2024pseudorandom}
\begin{equation}\label{eq:VPRDM_construction}
\{\rho_{k,m}\}=\{U_k (\ket{0}\bra{0})^{\otimes n-m}\otimes \sigma_m U_k^\dagger\}_{k\in\mathcal{K}}\,,
\end{equation}
where $\sigma_m=I_m/2^m$ is the maximally mixed state, $\{U_k\}_{k\in\mathcal{K}}$ is PRU with keyspace $\mathcal{K}=\{0,1\}^{\mathrm{poly}(n)}$, and $m<n-\mathrm{polylog}(n)$.

For this VPRDM construction, the efficient verification is performed by applying the inverse $U_k^\dagger$ and projecting onto $(\ket{0}\bra{0})^{\otimes n-m}$, i.e. 
\begin{equation}
\mathcal{V}(\rho, k,m)=\text{tr}((\ket{0}\bra{0})^{\otimes n-m}\operatorname{tr}_{m}(U_k^\dagger \rho U_k))\,,
\end{equation}
where $\operatorname{tr}_{m}(\cdot)$ traces out the last $m$ qubits.

}
\end{document}